\pdfoutput = 1
\documentclass[11pt]{article}
\usepackage{fullpage}
\usepackage[english]{babel}
\usepackage[utf8x]{inputenc}
\usepackage[T1]{fontenc}
\usepackage{amsmath}
\usepackage{amssymb}
\usepackage{amsthm}
\usepackage{bbm}
\usepackage{bbold}
\usepackage{parskip}
\usepackage{thm-restate}
\usepackage{booktabs}
\usepackage{array}

\usepackage{hyperref}
\hypersetup{
	colorlinks=true,
	linkcolor=blue!70!black,
	citecolor=blue!70!black,
	urlcolor=blue!70!black
}
\usepackage{cleveref}
\usepackage{graphicx}

\usepackage[lined,boxed,ruled,norelsize,algo2e,linesnumbered]{algorithm2e}
\hypersetup{
	colorlinks=true,
	linkcolor=blue!70!black,
	citecolor=blue!70!black,
	urlcolor=blue!70!black
}

\newtheorem{lemma}{Lemma}

\newtheorem{fact}{Fact}

\newtheorem{definition}{Definition}
\newtheorem{corollary}{Corollary}
\newtheorem{proposition}{Proposition}
\newtheorem{remark}{Remark}

\newtheorem{assumption}{Assumption}

\newcommand{\defeq}{:=}

\newcommand{\norm}[1]{\left\lVert#1\right\rVert}
\newcommand{\norms}[1]{\lVert#1\rVert}
\newcommand{\normop}[1]{\left\lVert#1\right\rVert_{\textup{op}}}
\newcommand{\normf}[1]{\left\lVert#1\right\rVert_{\textup{F}}}

\newcommand{\normsop}[1]{\lVert#1\rVert_{\textup{op}}}
\newcommand{\normsf}[1]{\lVert#1\rVert_{\textup{F}}}

\newcommand{\inprod}[2]{\left\langle#1, #2\right\rangle}
\newcommand{\inprods}[2]{\langle#1, #2\rangle}
\newcommand{\eps}{\epsilon}
\newcommand{\lam}{\lambda}
\newcommand{\sig}{\sigma}

\DeclareMathOperator*{\argmin}{arg\,min}
\newcommand{\R}{\mathbb{R}}

\newcommand{\N}{\mathbb{N}}
\newcommand{\diag}[1]{\textbf{\textup{diag}}\left(#1\right)}
\newcommand{\half}{\frac{1}{2}}

\newcommand{\ind}{\mathbb{I}}

\newcommand{\proj}{\boldsymbol{\Pi}}

\newcommand{\Nor}{\mathcal{N}}

\newcommand{\Tr}{\textup{Tr}}
\newcommand{\opt}{\textup{OPT}}

\newcommand{\ball}{\mathbb{B}}
\newcommand{\id}{\mathbf{I}}

\newcommand{\dd}{\textup{d}}
\usepackage{xcolor}
\definecolor{burntorange}{rgb}{0.8, 0.33, 0.0}

\newcommand{\nnz}{\textup{nnz}}

\newcommand{\poly}{\textup{poly}}
\newcommand{\polylog}{\textup{polylog}}
\newcommand{\Par}[1]{\left(#1\right)}
\newcommand{\Brack}[1]{\left[#1\right]}
\newcommand{\Brace}[1]{\left\{#1\right\}}
\newcommand{\Abs}[1]{\left|#1\right|}

\newcommand{\oracle}{\mathcal{O}}

\newcommand{\mzero}{\mathbf{0}}
\newcommand{\Sym}{\mathbb{S}}
\newcommand{\PSD}{\Sym_{\succeq \mzero}}

\newcommand{\ma}{\mathbf{A}}
\newcommand{\mb}{\mathbf{B}}
\newcommand{\mc}{\mathbf{C}}
\newcommand{\md}{\mathbf{D}}

\newcommand{\mg}{\mathbf{G}}
\newcommand{\mh}{\mathbf{H}}
\newcommand{\mi}{\mathbf{I}}
\newcommand{\mj}{\mathbf{J}}
\newcommand{\mk}{\mathbf{K}}
\newcommand{\ml}{\mathbf{L}}
\newcommand{\mm}{\mathbf{M}}
\newcommand{\mn}{\mathbf{N}}

\newcommand{\mmp}{\mathbf{P}}
\newcommand{\mq}{\mathbf{Q}}
\newcommand{\mr}{\mathbf{R}}
\newcommand{\ms}{\mathbf{S}}
\newcommand{\mt}{\mathbf{T}}
\newcommand{\mmu}{\mathbf{U}}
\newcommand{\mv}{\mathbf{V}}

\newcommand{\my}{\mathbf{Y}}
\newcommand{\mz}{\mathbf{Z}}
\newcommand{\msig}{\boldsymbol{\Sigma}}
\newcommand{\mproj}{\boldsymbol{\Pi}}

\newcommand{\0}{\mathbf{0}}
\newcommand{\1}{\mathbf{1}}
\newcommand{\va}{\mathbf{a}}
\newcommand{\vb}{\mathbf{b}}
\newcommand{\vc}{\mathbf{c}}

\newcommand{\ve}{\mathbf{e}}
\newcommand{\vf}{\mathbf{f}}
\newcommand{\vg}{\mathbf{g}}

\newcommand{\vl}{\mathbf{l}}

\newcommand{\vq}{\mathbf{q}}
\newcommand{\vr}{\mathbf{r}}
\newcommand{\vs}{\mathbf{s}}
\newcommand{\vt}{\mathbf{t}}
\newcommand{\vu}{\mathbf{u}}
\newcommand{\vv}{\mathbf{v}}
\newcommand{\vw}{\mathbf{w}}
\newcommand{\vx}{\mathbf{x}}
\newcommand{\vy}{\mathbf{y}}
\newcommand{\vz}{\mathbf{z}}
\newcommand{\tva}{\tilde{\va}}
\newcommand{\vlam}{\boldsymbol{\lambda}}
\newcommand{\vsig}{\boldsymbol{\sigma}}
\newcommand{\vtau}{\boldsymbol{\tau}}

\newcommand{\vmu}{\boldsymbol{\mu}}
\newcommand{\bvx}{\bar{\vx}}
\newcommand{\Span}{\textup{Span}}
\newcommand{\rank}{\textup{rank}}
\newcommand{\conv}{\textup{conv}}

\newcommand{\calA}{\mathcal{A}}
\newcommand{\calB}{\mathcal{B}}

\newcommand{\calE}{\mathcal{E}}
\newcommand{\calF}{\mathcal{F}}

\newcommand{\calL}{\mathcal{L}}

\newcommand{\calP}{\mathcal{P}}

\newcommand{\calS}{\mathcal{S}}
\newcommand{\calT}{\mathcal{T}}

\newcommand{\calW}{\mathcal{W}}

% New macros here
\newcommand{\event}{\calE}
\newcommand{\Schur}{\textup{SC}}
\newcommand{\tma}{\widetilde{\ma}}
\newcommand{\tml}{\widetilde{\ml}}
\newcommand{\vdelta}{\boldsymbol{\delta}}
\newcommand{\hvdelta}{\hat{\vdelta}}
\newcommand{\tmv}{\mathcal{T}_{\textup{mv}}}
\newcommand{\OMDR}{\mathsf{OracleMDR}}
\newcommand{\SOCPack}{\mathsf{SOCPacking}}
\newcommand{\simiid}{\sim_{\textup{i.i.d.}}}
\newcommand{\simunif}{\sim_{\textup{unif.}}}
\newcommand{\net}{\mathcal{N}}
\newcommand{\cop}{C_{\textup{op}}}
\newcommand{\tmy}{\widetilde{\my}}
\newcommand{\bmy}{\overline{\my}}
\newcommand{\tvg}{\tilde{\vg}}
\newcommand{\step}{\boldsymbol{\delta}}

\newcommand{\codeStyle}[1]{{\bfseries #1} }
\newcommand{\codeInput}{\codeStyle{Input:}}	
\newcommand{\codeOutput}{\codeStyle{Output:}}	
\newcommand{\codeReturn}{\codeStyle{Return:}}

\usepackage{mathtools}

\title{Radial Isotropic Position via an Implicit Newton's Method}
\author{
Arun Jambulapati\thanks{Independent, \texttt{jmblpati@gmail.com}. Work completed while visiting the University of Texas at Austin.} \and 
Jonathan Li\thanks{University of Texas at Austin, \texttt{jli@cs.utexas.edu}}\and 
Kevin Tian\thanks{University of Texas at Austin, \texttt{kjtian@cs.utexas.edu}}
}
\date{}
\allowdisplaybreaks
\begin{document}
\pagenumbering{gobble}
\maketitle
\begin{abstract}
Placing a dataset $A = \{\va_i\}_{i \in [n]} \subset \R^d$ in \emph{radial isotropic position}, i.e., finding an invertible $\mr \in \R^{d \times d}$ such that the unit vectors $\{(\mr \va_i) \norms{\mr \va_i}_2^{-1}\}_{i \in [n]}$ are in isotropic position, is a powerful tool with applications in functional analysis, communication complexity, coding theory, and the design of learning algorithms. When the transformed dataset has a second moment matrix within a $\exp(\pm \eps)$ factor of a multiple of $\id_d$, we call $\mr$ an $\eps$-approximate \emph{Forster transform}.

We give a faster algorithm for computing approximate Forster transforms, based on optimizing an objective defined by Barthe \cite{Barthe98}. When the transform has a polynomially-bounded aspect ratio, our algorithm uses $O(nd^{\omega - 1}(\frac n \eps)^{o(1)})$ time to output an $\eps$-approximate Forster transform with high probability, when one exists. This is almost the natural limit of this approach, as even evaluating Barthe's objective takes $O(nd^{\omega - 1})$ time. Previously, the state-of-the-art runtime in this regime was based on cutting-plane methods, and scaled at least as $\approx n^3 + n^2 d^{\omega - 1}$.
We also provide explicit estimates on the aspect ratio in the \emph{smoothed analysis} setting, and show that our algorithm similarly improves upon those in the literature.

To obtain our results, we develop a subroutine of potential broader interest: a reduction from almost-linear time sparsification of graph Laplacians to the ability to support almost-linear time matrix-vector products. We combine this tool with new stability bounds on Barthe's objective to implicitly implement a box-constrained Newton's method \cite{CohenMTV17, Allen-ZhuLOW17}.
\end{abstract}
\thispagestyle{empty}
\newpage
\tableofcontents
\thispagestyle{empty}
\newpage
\pagenumbering{arabic}
\section{Introduction}\label{sec:intro}

Transforming a dataset $A = \{\va_i\}_{i \in [n]} \subset \R^d$ into a canonical representation enjoying a greater deal of regularity is a powerful idea that has had myriad applications throughout computer science, statistics, and related fields. Examples of common such representations include the following.
\begin{itemize}
	\item \textbf{Normalization:} Replacing each $\va_i$ with the unit vector $\tva_i \defeq \va_i \norm{\va_i}_2^{-1}$ in the same direction. Such a transformation exists whenever all of the $\{\va_i\}_{i \in [n]}$ are nonzero vectors.
	\item \textbf{Isotropic position:} Replacing each $\va_i$ with $\tva_i \defeq \mr \va_i$ for an invertible $\mr \in \R^{d \times d}$, such that $\sum_{i \in [n]} \tva_i \tva_i^\top = \id_d$. Such a transformation exists whenever the $\{\va_i\}_{i \in [n]}$ span $\R^d$.
\end{itemize}
Recently, a common generalization of both of these representations known as \emph{radial isotropic position} has emerged as a desirable data processing step in many settings. 

\begin{definition}[Radial isotropic position]\label{def:rip}
	Let $\vc \in (0, 1]^n$ satisfy $\norm{\vc}_1 = d$, and let $\eps \in (0, 1)$. We say that $\ma \in \R^{n \times d}$ with rows $\{\va_i^\top\}_{i \in [n]}$ is in \emph{$(\vc, \eps)$-radial isotropic position} (or, $(\vc, \eps)$-\emph{RIP}) if 
	\begin{equation}\label{eq:rip_def} \exp(-\eps) \id_d \preceq \sum_{i \in [n]} \vc_i \cdot \frac{\va_i\va_i^\top}{\norm{\va_i}_2^2} \preceq \exp(\eps) \id_d.\end{equation}
	If $\eps$ is omitted then $\eps = 0$ by default, and if $\vc$ is omitted then $\vc = \frac d n \1_n$ by default. For an invertible matrix $\mr \in \R^{d \times d}$, we say that $\mr$ \emph{is a $(\vc, \eps)$-Forster transform of $\ma$} if $\ma \mr^\top$ is in $(\vc, \eps)$-RIP:
	\begin{equation}\label{eq:scale_rip_def}
		\exp(-\eps) \id_d \preceq \sum_{i \in [n]} \vc_i \cdot \frac{\Par{\mr\va_i}\Par{\mr\va_i}^\top}{\norm{\mr\va_i}_2^2} \preceq \exp(\eps) \id_d.
	\end{equation}
\end{definition}

In other words, $\mr$ is a $\vc$-Forster transform of $\ma \in \R^{n \times d}$ representing the dataset $A = \{\va_i\}_{i \in [n]} \subset \R^d$, if the transformed-and-normalized vectors $\{(\mr \va_i)\norms{\mr \va_i}_2^{-1}\}_{i \in [n]}$ are in isotropic position. Note that $\norm{\vc}_1 = d$ in Definition~\ref{def:rip} is necessary as $\eps \to 0$, by taking traces of \eqref{eq:rip_def}. For example, $\vc = \frac d n \1_n$ induces an empirical second moment matrix with uniform weights. After applying a Forster transform, the new dataset then exhibits desirable properties that are useful in downstream applications.

Notably, although the concepts of radial isotropic position and Forster transforms first arose in early work on algebraic geometry \cite{GelfandGMS87} and functional analysis \cite{Barthe98}, they have since enabled many surprising results in algorithms and complexity. For example, Forster transforms played a pivotal role in breakthroughs spanning disparate areas such as communication complexity \cite{Forster02}, subspace recovery \cite{HardtM13}, coding theory \cite{DvirSW14}, frame theory \cite{HamiltonM19}, active and noisy learning of halfspaces \cite{HopkinsKLM20, DiakonikolasKT21, DiakonikolasTK23}, and robust statistics \cite{Cherapanamjeri24}. 

\paragraph{Alternate characterizations of RIP.} In fact, many of these results \cite{Forster02, HardtM13, DvirSW14, HopkinsKLM20, DiakonikolasKT21, DiakonikolasTK23} leverage alternative characterizations of RIP, which themselves have powerful implications. We mention two here as they are relevant to our development. 

First, \cite{GelfandGMS87, Barthe98} (see also \cite{CarlenLL04, HopkinsKLM20}) give a tight characterization of when a $\vc$-Forster transform of $\ma$ exists. In brief, one exists if and only if $\vc$ belongs to the \emph{basis polytope} of the independence matroid induced by $A$ (see Propositions~\ref{prop:scaling_subspace} and~\ref{prop:scaling_polytope}). A more intuitive and equivalent way of phrasing this result is that every $k$-dimensional linear subspace $V \subseteq \R^d$ must have\begin{equation}\label{eq:light_subspace_intro}\sum_{\substack{i \in [n] \\ \va_i \in V}} \vc_i \le k.\end{equation}
The necessity of \eqref{eq:light_subspace_intro} is straightforward: if there exists a ``heavy subspace'' containing too many points (according to their weight by $\vc$), then any transformation of the form $\va_i \to (\mr \va_i)\norm{\mr \va_i}_2^{-1}$ retains the existence of such a heavy subspace. This in turn rules out the possibility of $\vc$-RIP, because \eqref{eq:rip_def} cannot hold on the heavy subspace, simply by a trace argument. Further developments by e.g., \cite{CarlenLL04, HopkinsKLM20} showed that this is in fact the only barrier to Forster transforms.

Another dual viewpoint on Forster transforms is from the perspective of scaling the dataset to induce certain \emph{leverage scores}. More precisely, \cite{DadushR24} observed that finding $\vs \in \R^{n}_{> 0}$ such that
\begin{equation}\label{eq:scaling_lev_intro}\vtau\Par{\ms \ma} = \vc,\text{ where } \ms \defeq \diag{\vs},\end{equation}
implies that $\mr = (\ma^\top \ms^2 \ma)^{-\half}$ is a $\vc$-Forster transform of $\ma$. We recall this result in Lemma~\ref{lem:lev_rip}. Here, $\vtau(\ma) \in \R^n$ denotes the \emph{leverage scores} of a full-rank matrix $\ma \in \R^{n \times d}$ (defined in \eqref{eq:leverage}), a standard notion of the relative importance of points in a dataset. Thus, while Forster transforms are \emph{right scalings} $\mr \in \R^{d \times d}$ putting $\ma$ in isotropic position, we can equivalently find a \emph{left scaling} $\vs \in \R^n_{> 0}$ that balances $\ma$'s rows to have target leverage scores $\vc$ of our choice.

\paragraph{Computing an approximate Forster transform.} The goal of our work is designing efficient algorithms for computing a $(\vc, \eps)$-Forster transform of $\ma \in \R^{n \times d}$, whenever one exists. This goal is inspired by advancements in the complexity of simpler, but related, dataset transformation problems called \emph{matrix balancing and scaling}, for which \cite{CohenMTV17, Allen-ZhuLOW17} achieved nearly-linear runtimes in well-conditioned regimes. Indeed, as the list of applications of radial isotropic position grows, so too does the importance of designing efficient algorithms for finding them.

Given the algorithmic significance of Forster transforms, it is perhaps surprising that investigations of their computational complexity are relatively nascent. Previous strategies for obtaining polynomial-time algorithms can largely be grouped under two categories.

\textit{Optimizing Barthe's objective.} In a seminal work on radial isotropic position \cite{Barthe98}, Barthe observed that the minimizer of the convex objective $f: \R^n \to \R$ defined as
\begin{equation}\label{eq:barthe_intro}f(\vt) \defeq -\inprod{\vc}{\vt} + \log\det\Par{\sum_{i \in [n]} \exp\Par{\vt_i} \va_i \va_i^\top},\end{equation}
induces a $\vc$-Forster transform whenever it exists. Concretely, letting $\vt^\star$ minimize $f$, we have that $\vs \defeq \exp(\vt^\star)$, where $\exp$ is applied entrywise, satisfies \eqref{eq:scaling_lev_intro} (cf.\ Proposition~\ref{prop:barthe}). Thus, to obtain a Forster transform it is enough to efficiently optimize \eqref{eq:barthe_intro}. 

This approach was followed by \cite{HardtM13} (see also discussion in \cite{HamiltonM19, Cherapanamjeri24}), who proceeded via cutting-plane methods (CPMs), and \cite{ArtsteinKS20}, who used first-order methods (e.g., gradient descent). 
However, it is somewhat challenging to quantify the accuracy needed in solving \eqref{eq:barthe_intro} to induce a $(\vc, \eps)$-Forster transform for $\eps > 0$, because Barthe's objective is not strongly convex. 
For example, combining Lemmas B.6, B.9 of \cite{HardtM13} with Corollary 4 of \cite{HamiltonM19} gives an estimate of $\approx \eps \exp(-nd)$ additive error sufficing. Our work drastically improves this estimate  (cf.\ Lemma~\ref{lem:termination}), showing it is enough to obtain an additive error that is polynomial in $\eps$, and $\min_{i \in [n]} \vc_i$. 

An optimistic bound on \cite{HardtM13}'s runtime scales as\footnote{We use $\omega < 2.372$ to denote the exponent of the square matrix multiplication runtime \cite{AlmanDWXXZ25}. For all notation used throughout the paper, see Section~\ref{ssec:notation}.} $\approx n^2 d^{\omega - 1} + n^3$ (using the state-of-the-art CPM \cite{JiangLSW20}), where additional $\poly(n, d)$ factors are saved using our improved error bounds. Incomparably, \cite{ArtsteinKS20} gave runtimes for first-order methods depending polynomially on either the inverse target accuracy $\frac 1 \eps$ (and hence precluding high-accuracy solutions), or the inverse strong convexity of Barthe's objective, which is data-dependent but can lose $\exp(d)$ factors or worse.

\textit{Iterative scaling methods.} Instead of optimizing Barthe's objective, \cite{DiakonikolasTK23, DadushR24} recently gave alternative approaches that either iteratively refine a right-scaling $\mr \in \R^{d \times d}$ to satisfy \eqref{eq:scale_rip_def}, or refine a left-scaling $\vs \in \R^n_{> 0}$ to satisfy \eqref{eq:scaling_lev_intro}. These algorithms have the advantage of running in \emph{strongly polynomial time}, i.e., the number of arithmetic operations needed only depends on $n$ and $\frac 1 \eps$, rather than problem conditioning notions such as bit complexity. Designing strongly polynomial time algorithms is an interesting and important goal in its own right. For instance, \cite{DiakonikolasTK23} was motivated by the connection of Forster transforms to learning halfspaces with noise \cite{DiakonikolasKT21}, a robust generalization of linear programming, which is a basic problem for which strongly polynomial time algorithms are unknown. In a different direction, \cite{DadushR24} showed that a strongly polynomial algorithm for matrix scaling by \cite{LinialSW00} could be adapted to Forster transforms.

Unfortunately, the resulting runtimes from these direct iterative methods that sidestep Barthe's objective are quite large. For example, \cite{DiakonikolasTK23} claim a runtime of at least $\approx n^5 d^{11} \eps^{-5}$, and a recent improvement in \cite{DadushR24} still requires at least $\approx n^4 d^{\omega - 1}\log(\frac 1 \eps)$ time. 

\textit{Outlook.} There are a few other approaches to polynomial-time computation of approximate Forster transforms based on more general formulations of the problem, see e.g., \cite{Allen-ZhuGLOW18, StraszakV19}. We discuss these algorithms in more detail in Section~\ref{ssec:related}, but note that they appear to lack explicit runtime bounds, and we believe they are subsumed by those described thus far.

In summary, existing methods for computing $(\vc, \eps)$-Forster transforms have runtimes at least $\approx n^2 d^{\omega -1 } + n^3$ (weakly polynomial) or $\approx n^4 d^{\omega - 1}$ (strongly polynomial). On the other hand, for related problems such as matrix scaling, 
near-optimal runtimes are known in well-conditioned regimes, via structured optimization methods that more faithfully capture the geometry of relevant objectives \cite{CohenMTV17, Allen-ZhuLOW17}.
This state of affairs prompts the natural question: can we obtain substantially faster algorithms for computing a $(\vc, \eps)$-Forster transform?

\subsection{Our results}\label{ssec:results}

Our main contribution is to design such algorithms, primarily specialized to two settings which we call the \emph{well-conditioned} and \emph{smoothed analysis} regimes. We note that the distinction between well-conditioned and poorly-conditioned instances is a common artifact of fast algorithms for scaling problems, see e.g., discussions in \cite{CohenMTV17, Allen-ZhuLOW17, BurgisserLNW20}. In particular, analogous works to ours for matrix scaling and balancing \cite{CohenMTV17, Allen-ZhuLOW17} obtain nearly-linear runtimes in well-conditioned regimes, and polynomial runtime improvements in others.

For a fixed pair $\ma \in \R^{n \times d}$ and $\vc \in (0, 1]^n$ satisfying $\norm{\vc}_1=d$, we use the following notion of conditioning for the associated problem of computing a $\vc$-Forster transform of $\ma$.

\begin{assumption}\label{assume:simplify}
	For $f$ defined in \eqref{eq:barthe_intro}, there is $\vt^\star \in \argmin_{\vt \in \R^n} f(\vt)$ satisfying $\norm{\vt^\star}_\infty \le \log(\kappa)$. 
\end{assumption}

To justify this, recall that $\vt^\star \in \argmin_{\vt \in \R^n} f(\vt)$ induces the optimal left scaling, in the sense of \eqref{eq:scaling_lev_intro}, via $\vs(\vt^\star) = \exp(\half \vt)$, entrywise.
Further, Barthe's objective is invariant to translations by $\1_n$:
\begin{equation}\label{eq:invariant_ones}
	\begin{aligned}
		f\Par{\vt + \alpha \1_n} &= -\inprod{\vc}{\vt + \alpha \1_n} + \log\det\Par{\mz\Par{\vt + \alpha\1_n}} \\
		&= -\inprod{\vc}{\vt} - \alpha d + \log\det\Par{\mz(\vt)} + \log\det\Par{\exp(\alpha)\id_d} = f(\vt).
	\end{aligned}
\end{equation}

Thus, we can always shift any minimizing $\vt^\star$ so that its extreme coordinates average to $0$, which achieves the tightest $\ell_\infty$ bound on $\vt^\star$ via shifts by $\1_n$. This shows that $\kappa$ in Assumption~\ref{assume:simplify} is the ratio of the largest and smallest entries of the optimal scaling $\vs \in \R^n_{>0}$ achieving \eqref{eq:scaling_lev_intro}.

\paragraph{Radial isotropic position.} We now state our main result on computing Forster transforms.

\begin{restatable}{theorem}{restatemain}\label{thm:main}
	Let $\ma \in \R^{n \times d}$, $\vc \in (0, 1]^n$ satisfy Assumption~\ref{assume:simplify}, and let $\delta, \eps \in (0, 1)$. There is an algorithm that computes $\mr$, a $(\vc, \eps)$-Forster transform of $\ma$, with probability $\ge 1 - \delta$, in time
	\[O\Par{nd^{\omega - 1}\log\Par{\kappa}\Par{\frac{n\log(\kappa)}{\delta\eps\vc_{\min}}}^{o(1)}}, \text{ where } \vc_{\min} \defeq \min_{i \in [n]} \vc_i.\]
\end{restatable}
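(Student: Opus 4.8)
The plan is to implement a box-constrained Newton's method on Barthe's objective $f$ from \eqref{eq:barthe_intro}, in the spirit of \cite{CohenMTV17, Allen-ZhuLOW17}, but with all the linear-algebraic steps performed \emph{implicitly} so that each iteration costs $O(nd^{\omega-1})$ up to subpolynomial factors rather than $\poly(n,d)$. First I would set up the optimization: by Assumption~\ref{assume:simplify} we may restrict to the box $\{\vt : \norm{\vt}_\infty \le \log\kappa\}$, which contains a minimizer $\vt^\star$. The gradient of $f$ is $\nabla f(\vt) = -\vc + \vtau(\mathbf{S}(\vt)\ma)$ where $\mathbf{S}(\vt) = \diag{\exp(\vt/2)}$, i.e.\ the residual between current and target leverage scores; the Hessian is $\nabla^2 f(\vt) = \mathbf{\Lambda}(\vt)$, the Laplacian-like matrix whose off-diagonal structure is governed by the leverage ``cross-scores'' $\vtau_{ij}^2 = \langle \va_i, (\ma^\top\mathbf{S}^2\ma)^{-1}\va_j\rangle^2$ scaled appropriately — this is exactly a (weighted, self-loop-augmented) graph Laplacian on $n$ vertices. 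A box-constrained Newton step then amounts to approximately minimizing a quadratic form in this Laplacian over an $\ell_\infty$ ball; invoking the machinery of \cite{CohenMTV17, Allen-ZhuLOW17}, $O(\log\kappa)$ such steps (each requiring only a constant-factor spectral approximation of the Hessian) drive $f$ to within the additive accuracy that, by Lemma~\ref{lem:termination}, certifies a $(\vc,\eps)$-Forster transform; the target accuracy is $\poly(\eps,\vc_{\min})$, contributing only the claimed $o(1)$ exponent.

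The core technical work, and the main obstacle, is executing each Newton iteration in almost-linear time. Two subroutines are needed per step: (i) computing the gradient, i.e.\ leverage scores of $\mathbf{S}(\vt)\ma$, and (ii) producing a spectral sparsifier of the Hessian Laplacian $\mathbf{\Lambda}(\vt)$ on which the box-constrained quadratic solve can be run cheaply. Both reduce to fast \emph{matrix-vector products}: a query $\vx \mapsto (\ma^\top\mathbf{S}^2\ma)^{-1}\ma^\top \vx$ (and its symmetric square-root analogue) can be implemented in $O(nd^{\omega-1})$ time by forming and factoring the $d\times d$ matrix $\ma^\top\mathbf{S}^2\ma$ once per iteration, after which Johnson–Lindenstrauss sketching recovers all leverage scores to constant multiplicative accuracy, and — this is the new reduction the abstract advertises — the same matrix-vector primitive can be fed into an almost-linear-time Laplacian sparsification routine to build a sparse proxy for $\mathbf{\Lambda}(\vt)$. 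I would state and prove this reduction (almost-linear-time Laplacian sparsification from almost-linear-time mat-vec access) as a standalone lemma, likely by combining a recursive/sampling-based sparsification scheme with the observation that effective-resistance or leverage-type estimates only require applying the implicit operator to a polylogarithmic number of random vectors. The delicate point is stability: the sparsifier and the leverage-score estimates are only approximate, so I must show (new stability bounds on Barthe's objective, as promised) that a Newton step computed against a constant-factor-approximate Hessian and an approximate gradient still yields geometric progress on $f$ — this is where the Hessian-stability structure of $f$ (its Lipschitzness in a suitable local norm, analogous to the second-order-robustness arguments in \cite{CohenMTV17, Allen-ZhuLOW17}) is essential.

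Finally I would assemble the pieces: initialize at $\vt = \0$ (which lies in the box and has $f(\0) - f(\vt^\star) = O(d\log\kappa + \ldots)$ bounded), run $O(\log\kappa)$ approximate box-constrained Newton steps each costing $O(nd^{\omega-1}) \cdot (n\log(\kappa)/(\delta\eps\vc_{\min}))^{o(1)}$ — the $\delta$ dependence entering only through the failure probability of the randomized sketching and sparsification steps, boosted by a union bound over iterations — and output $\mr = (\ma^\top\mathbf{S}(\vt)^2\ma)^{-1/2}$, which by Lemma~\ref{lem:lev_rip} and the termination criterion is a $(\vc,\eps)$-Forster transform with probability $\ge 1-\delta$. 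The potential subtleties I would watch for are: ensuring the $d\times d$ solves and matrix roots are themselves computed to enough bits (handled by Assumption~\ref{assume:simplify} bounding all relevant quantities by $\poly$-in-$\kappa$), verifying that the leverage cross-scores defining the Hessian are nonnegative so that $\mathbf{\Lambda}(\vt)$ is genuinely a Laplacian to which sparsification applies, and propagating the multiplicative sketch errors through the $\log\det$ term in $f$ without blowing up the additive accuracy budget.
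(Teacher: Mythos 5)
Your proposal follows essentially the same route as the paper: restrict to $\ball_\infty(\log\kappa)$ via Assumption~\ref{assume:simplify}, use $\ell_\infty$-Hessian stability of $f$ to run an approximation-tolerant box-constrained Newton's method \`a la \cite{CohenMTV17}, observe that $\nabla^2 f$ is a graph Laplacian admitting $O(nd^{\omega-1})$-time matrix-vector products, sparsify it implicitly, solve the box-constrained quadratic on the sparsifier via \cite{ChenPW21}, and terminate once the suboptimality is $\poly(\eps,\vc_{\min})$ by Lemma~\ref{lem:termination}. This is the paper's proof of Theorem~\ref{thm:main} almost step for step, including the initialization bound $f(\0_n)-f(\vt^\star)=O(d\log^2\kappa)$ and the union bound over iterations for $\delta$.

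Two of your implementation sketches would need repair, though. First, you propose recovering the leverage scores (i.e., the gradient) only to \emph{constant multiplicative accuracy} via Johnson--Lindenstrauss; near the optimum the gradient entries are small differences $-\vc_i+\vtau_i$, so a constant-factor error in $\vtau_i$ can flip signs and break the Newton step analysis. The fix is that no sketching is needed: the gradient is computable exactly in $O(nd^{\omega-1})$ time by forming $\ma\mr(\vt)$ (Lemma~\ref{lem:grad_compute}). Second, your sketched route to the sparsification-from-matvec lemma --- estimating effective resistances by applying the implicit operator to polylogarithmically many random vectors --- is circular: effective-resistance estimation requires applying $\ml^\dagger$ (a Laplacian solver), and building a fast solver is exactly what the sparsifier is for. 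This reduction is the paper's main technical innovation (Theorem~\ref{thm:implicit_sparsify}) and goes through a homotopy method, packing SDPs, and a sum-of-cliques representation rather than resistance sampling; relatedly, the resulting sparsifier is only an $n^{o(1)}$-factor (not constant-factor) spectral approximation, which inflates the iteration count from $O(\log\kappa)$ to $\log(\kappa)\cdot(\tfrac{n\log\kappa}{\delta\eps\vc_{\min}})^{o(1)}$ --- absorbed into the claimed runtime, but not matching your "constant-factor approximation, $O(\log\kappa)$ steps" accounting.
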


In the well-conditioned regime where $\kappa = \poly(n)$, Theorem~\ref{thm:main} improves upon the state-of-the-art runtimes for radial isotropic position by a factor of $\approx \max(n, n^2 d^{1 - \omega})$, up to a subpolynomial overhead in problem parameters.\footnote{As discussed earlier, to our knowledge, even that CPMs \cite{HardtM13} obtain runtimes of $\approx n^2d^{\omega - 1} + n^3$ for well-conditioned instances was unknown previously. This is enabled by our improved error tolerance analysis in Lemma~\ref{lem:termination}.} Moreover, Theorem~\ref{thm:main}
approaches natural limits for computing Forster transforms. For example, using current techniques, it takes $\approx nd^{\omega - 1}$ time to perform basic relevant operations such as evaluating Barthe's objective \eqref{eq:barthe_intro}, or verifying that a given right scaling $\mr \in \R^{d \times d}$ or left scaling $\vs \in \R^n_{> 0}$ places a dataset in radial isotropic position.

Interestingly, we prove Theorem~\ref{thm:main} by adapting the box-constrained Newton's method of \cite{CohenMTV17, Allen-ZhuLOW17} to Barthe's objective, discussed further in Section~\ref{ssec:techniques}. In fact, such an approach had been considered previously by \cite{CelisKV20} for a more general problem of computing \emph{maximum-entropy distributions}. We discuss this related result in more detail in Section~\ref{ssec:related}, but briefly mention that based on the runtime analysis in \cite{CelisKV20}, the resulting complexity is significantly slower than CPMs, and we require several new structural observations and algorithmic insights to improve the efficiency of this approach. Indeed, as one example, it is perhaps surprising that Theorem~\ref{thm:main}'s runtime depends linearly on $n$, given that it is a second-order method: merely writing down the Hessian of Barthe's objective takes $n^2$ time, which dominates the runtime of Theorem~\ref{thm:main} for $n \gg d$. 

We discuss the key algorithmic innovation that enables our result subsequently, but mention here that its use leads to the subpolynomial overheads in Theorem~\ref{thm:main}. By using explicit Hessian evaluations, we obtain an alternate runtime of
\[O\Par{n^2 d^{\omega - 2} \log\Par{\kappa}\polylog\Par{\frac{n\log(\kappa)}{\delta \eps \vc_{\min}}}},\]
as described more formally in Lemma~\ref{lem:grad_compute} and Remark~\ref{rem:log_delta_eps}, which yields (improved) polylogarithmic dependences on $\frac 1 \delta$, $\frac 1 \eps$, and $\frac 1 {\vc_{\min}}$, at the cost of a
multiplicative overhead of $\approx \frac n d$.

\paragraph{Sparsification via matrix-vector products.} Our fastest runtimes are obtained by using a technical tool of potential independent interest that we develop. To explain its relevance to our setting, while the Hessian of Barthe's objective $\nabla^2 f$ is $n \times n$ and fully dense (a formula is given in Fact~\ref{fact:barthe_derivs}), its structure is appealing in several regards. While we do not know how to compute $\nabla^2 f$ faster than in $ \approx n^2 d^{\omega - 2}$ time, we can access it via matrix-vector products in $O(nd^{\omega - 1})$ time (cf.\ Lemma~\ref{lem:grad_compute}). In addition, $\nabla^2 f$ is actually a \emph{graph Laplacian}, i.e., it belongs to a family of matrices that have enabled many powerful algorithmic primitives, such as \emph{spectral sparsification}. For example, breakthroughs by \cite{SpielmanS11, SpielmanT14} show that any $n \times n$ graph Laplacian $\ml$ admits constant-factor spectral approximations with only $\approx n$ nonzero entries.

In this work, we add a new primitive to the graph Laplacian toolkit. We consider the following problem, which to our knowledge has not been explicitly studied before: given an (implicit) Laplacian $\ml$ accessible only via a matrix-vector product oracle, how many queries are needed to produce an (explicit) spectral sparsifier of $\ml$? The sparsifier can then be used as a preconditioner, enabling faster second-order methods. Our main result to this end is the following.

\begin{restatable}{theorem}{restateimplicitsparsify}\label{thm:implicit_sparsify}
	Let $\ml$ be an $n \times n$ graph Laplacian, and let $\oracle: \R^n \to \R^n$ be an oracle that returns $\ml \vv$ on input $\vv \in \R^n$. Let $\delta \in (0, 1)$, $\Delta \in (0, \Tr(\ml))$, and let $\mproj \defeq \id_n - \frac 1 n \1_n \1_n^\top$ be the projection matrix to the subspace of $\R^n$ orthogonal to $\1_n$. There is an algorithm that takes as inputs $(\oracle, \delta, \Delta)$ and with probability $\ge 1 - \delta$, it returns $\tml$, an $n \times n$ graph Laplacian satisfying
	\begin{equation}\label{eq:sparse_quality}\begin{gathered}\ml + \Delta \mproj \preceq \tml \preceq \Par{\frac{n\Tr(\ml)}{\Delta \delta}}^{o(1)}\Par{\ml + \Delta \mproj},\; \nnz(\tml) = n \cdot \Par{\frac{n\Tr(\ml)}{\Delta\delta}}^{o(1)},\end{gathered}\end{equation}
	using $(\frac{n\Tr(\ml)}{\Delta\delta})^{o(1)} $ queries to $\oracle$, and $n \cdot (\frac{n\Tr(\ml)}{\Delta\delta})^{o(1)} $ additional time.
\end{restatable}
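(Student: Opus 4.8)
The plan is to reduce implicit sparsification to the ability to compute effective resistances (leverage scores of the edge-incidence matrix) approximately, and to bootstrap accuracy through a sequence of progressively better sparsifiers. Recall the classical pipeline: if $\ml = \mb^\top \mb$ where $\mb \in \R^{\binom{n}{2} \times n}$ is the weighted edge-incidence matrix of the graph underlying $\ml$, then sampling each edge $e$ with probability proportional to its leverage score $w_e \vb_e^\top \ml^+ \vb_e$ and reweighting yields a spectral sparsifier with $\tO(n)$ edges by matrix Chernoff \cite{SpielmanS11, SpielmanT14}. The obstruction in our setting is that we cannot touch the edges of $\ml$ explicitly, nor apply $\ml^+$ exactly; we only have matrix-vector access to $\ml$. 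To estimate effective resistances from matrix-vector products, I would use the Johnson–Lindenstrauss sketch of Spielman–Srivastava: project the incidence matrix through a random $\pm 1$ (or Gaussian) matrix $\mq$ with $O(\log n)$ rows, so that $\norms{\mq \mb \ml^+ \vb_e}_2^2 \approx w_e^{-1}\cdot(\text{effective resistance of } e)$; each of the $O(\log n)$ vectors $\mq \mb$ requires one application of $\mb^\top$, and then applying $\ml^+$ to them is where the real work is.

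The central difficulty — and I expect this to be the main obstacle — is applying $\ml^+$ (or a good spectral approximation of it) to a vector using only a few matrix-vector products with $\ml$ and near-linear extra time, since the standard Laplacian solvers \cite{SpielmanT14} need the edge list explicitly. Here I would run a \emph{recursive / iterative refinement} scheme in the spirit of repeated-squaring or of building a chain of sparsifiers: maintain a current explicit sparse Laplacian $\tml^{(t)}$ that is an $n^{o(1)}$-factor spectral approximation of $\ml + \Delta\mproj$; use $\tml^{(t)}$ (which, being explicit and sparse, admits a fast Laplacian solver) as a preconditioner for a Chebyshev/Richardson iteration that applies $(\ml + \Delta\mproj)^+$ to a vector using $n^{o(1)}$ matrix-vector queries to $\oracle$ (the regularization by $\Delta\mproj$ controls the condition number and handles the kernel $\1_n$, which is why the theorem restricts to $\Delta \in (0,\Tr(\ml))$ and states the guarantee relative to $\ml + \Delta\mproj$). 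With this approximate solve in hand, run the Spielman–Srivastava JL sketch to get approximate leverage scores, sample to obtain $\tml^{(t+1)}$ with $n \cdot n^{o(1)}$ edges and a better approximation factor, and iterate $O(\log\log(\cdot))$ or $O(1)$ times (or until the $n^{o(1)}$ factor is reached) to drive the distortion down to the claimed bound. The base case $\tml^{(0)}$ can be taken proportional to $\Delta\mproj$ plus a crude estimate of the diagonal of $\ml$, obtained by probing $\oracle$ with a few random sign vectors (Hutchinson-style diagonal estimation) to get a $\poly(n)$-factor start; the logarithmic number of refinement rounds then collapses $\poly(n)$ distortion to $n^{o(1)}$.

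To assemble the final bound, I would track three quantities through the iteration: (i) the spectral distortion factor, which contracts geometrically (e.g.\ squaring-down, so $O(\log\log)$ rounds from a $\poly$ start to an $n^{o(1)}$ target), (ii) the edge count, which stays $n \cdot n^{o(1)}$ by the matrix-Chernoff sampling bound at each round with the failure probability per round set to $\delta / (\text{number of rounds})$ and union-bounded, and (iii) the query count, which is $n^{o(1)}$ per approximate solve times $n^{o(1)}$ solves times a polylogarithmic iteration count, hence $n^{o(1)}$ total; the extra arithmetic is $n \cdot n^{o(1)}$ since every explicit object has $n \cdot n^{o(1)}$ nonzeros and the fast Laplacian solver on $\tml^{(t)}$ runs in $\nnz(\tml^{(t)}) \cdot n^{o(1)}$ time. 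The dependence of the $n^{o(1)}$ terms on $\log(n\Tr(\ml)/(\Delta\delta))$ enters through: the condition number $\approx \Tr(\ml)/\Delta$ of the regularized system (affecting the number of Chebyshev steps per solve, hence the exponent inside $n^{o(1)}$ after recursion), the $\log(1/\delta)$ overhead in the JL and matrix-Chernoff sampling, and the $\log$ of the initial distortion. The one genuinely delicate point to get right in the writeup is that the approximate solves are only $(1\pm n^{-o(1)})$-accurate, so the JL estimates are leverage scores of $\ml + \Delta\mproj$ with respect to a \emph{perturbed} inverse; I would absorb this by noting that a constant-factor-accurate solve suffices for constant-factor-accurate leverage score overestimates, which is all matrix-Chernoff needs, and that overestimating scores only inflates the sample count by the overestimation factor — benign here.
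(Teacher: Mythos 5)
Your proposal takes a genuinely different route from the paper (which reduces to a sequence of packing SDPs solved by matrix multiplicative weights, with iterates forced into a ``sum-of-cliques'' form, combined with a homotopy in the regularization level), but it has a gap that I believe is fatal as written: the Spielman--Srivastava sampling step is not implementable from matrix-vector access alone. Effective-resistance sampling draws each \emph{edge} $e=(u,v)$ of the graph underlying $\ml$ with probability proportional to its leverage score $w_e R_e$ and reweights by $w_e/p_e$; this requires knowing which of the $\binom{n}{2}$ pairs are actually edges and what their weights $w_e$ are. The oracle $\oracle$ does not reveal this within budget: recovering the support of $\ml$ exactly costs $n$ queries ($\ml\ve_1,\dots,\ml\ve_n$), far beyond the $(\cdot)^{o(1)}$ query allowance. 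Your JL sketch does let you estimate the effective resistances $R_{uv}$ for \emph{all} pairs as squared distances between $n$ points in $O(\log n)$ dimensions (the paper itself uses exactly this device), but $R_{uv}$ is not the leverage score: without $w_{uv}$ you can neither form the sampling distribution nor reweight a sampled pair, and sampling all pairs proportional to $R_{uv}$ fails badly (for a path, $\sum_{u<v}R_{uv}=\Theta(n^3)$ while $\sum_{e}w_eR_e=n-1$, so the sample count blows up polynomially, and most sampled pairs have $w_{uv}=0$ anyway). This is precisely the obstruction the paper names --- ``one entry per candidate edge \dots without a priori information on which edges exist in our implicit Laplacian'' --- and its entire sum-of-cliques/packing machinery exists to avoid ever identifying the edges of $\ml$: the output $\tml$ is a Laplacian that spectrally sandwiches $\ml+\Delta\mproj$ but is built from cliques and bicliques that need not be supported on $\ml$'s edge set at all.

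A secondary issue: even granting the sampling step, your base case is a $\poly(n)$-distortion preconditioner, and a preconditioned Chebyshev solve at relative condition number $\poly(n)$ costs $n^{\Theta(1)}$ matrix-vector queries for the very first refinement round, already violating the query budget. The fix is the one the paper uses: a homotopy in the regularizer, starting from $\ml+\Delta 2^{p}\mproj$ (where a multiple of $\mproj$ is itself a constant-factor preconditioner) and halving the regularization each phase so that each phase's output preconditions the next at constant relative condition number. You would need to add this to your scheme explicitly; ``probe the diagonal and iterate'' does not suffice. The first gap, however, is the substantive one, and I do not see how to close it without importing something like the paper's implicit-iterate machinery or a support-recovery primitive that matrix-vector access does not obviously provide.
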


For $\delta = \poly(\frac 1 n)$ and $\poly(n)$-well conditioned graph Laplacians, Theorem~\ref{thm:implicit_sparsify} produces a spectral sparsifier of a Laplacian $\ml$ using $n^{o(1)}$ matrix-vector products and $n^{1 + o(1)}$ additional time. The approximation quality of the sparsifier is somewhat poor, i.e., $n^{o(1)}$, but in algorithmic contexts (such as that of Theorem~\ref{thm:main}), this is sufficient for use as a low-overhead preconditioner. 

We believe Theorem~\ref{thm:implicit_sparsify} may be of independent interest to the graph algorithms and numerical linear algebra communities, as it enhances the flexibility of existing Laplacian-based tools; we discuss its connection to known results in more depth in Section~\ref{ssec:related}. We are optimistic that its use can extend the reach of fast second-order methods for combinatorially-structured optimization problems.

\paragraph{Assumption~\ref{assume:simplify} in the smoothed regime.} Our third main contribution is to provide explicit bounds on the problem conditioning $\kappa$ in Assumption~\ref{assume:nd}, for ``beyond worst-case'' inputs $\ma$. We specialize our result to the \emph{smoothed analysis} setting, a well-established paradigm for beyond worst-case analysis in the theoretical computer science community \cite{SpielmanT04, Roughgarden20}. In our smoothed setting, we perturb entries of our input by Gaussian noise at noise level $\sig > 0$. This is a standard smoothed matrix model used in the study of linear programming algorithms \cite{SpielmanT04, SankarST06}. 

Here, we state the basic variant of our conditioning bound in the smoothed analysis regime.

\begin{restatable}{theorem}{restatesmoothed}\label{thm:smoothed}
	Let $\ma \in \R^{n \times d}$ have rows $\{\va_i\}_{i \in [n]}$ such that $\norm{\va_i}_2 = 1$ for all $i \in [n]$, let $\vc \defeq \frac d n \1_n$, let $\delta \in (0, 1)$, and let $\sig \in (0, \frac{\delta}{10nd})$. Let $\tma \defeq \ma + \mg$, where $\mg \in \R^{n \times d}$ has entries $\simiid \Nor(0, \sig^2)$. Then with probability $\ge 1 - \delta$, if $n > Cd$ where $C$ is any constant larger than $1$, Assumption~\ref{assume:simplify} holds for Barthe's objective $f$ defined with respect to $(\tma, \vc)$, where
	\[\log(\kappa) = O\Par{d\log\Par{\frac 1 {\sig}}}.\]
\end{restatable}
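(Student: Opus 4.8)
The plan is to bound $\norm{\vt^\star}_\infty$ for a suitably-normalized minimizer $\vt^\star$ of Barthe's objective $f$ associated to $(\tma, \vc)$, where $\vc = \frac{d}{n}\1_n$. By the translation invariance \eqref{eq:invariant_ones}, it suffices to exhibit \emph{some} minimizer whose coordinates all lie in an interval of length $O(d\log(1/\sigma))$. I would obtain this by combining three ingredients: (i) a first-order optimality characterization relating $\vt^\star$ to the leverage scores of $\diag{\exp(\tfrac12\vt^\star)}\tma$, as in Proposition~\ref{prop:barthe}; (ii) a quantitative lower bound on how ``spread out'' the rows of the perturbed matrix $\tma$ are, via anti-concentration of Gaussian perturbations; and (iii) a matching upper bound on the largest singular value of $\tma$, which holds with high probability for Gaussian matrices. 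The smoothed analysis machinery of \cite{SpielmanT04, SankarST06} gives exactly the tools for (ii) and (iii): with probability $\ge 1-\delta$, every $d\times d$ submatrix of $\tma$ (or more precisely, every relevant minor / singular value) is bounded away from $0$ by roughly $\poly(\sigma/(nd))$ and bounded above by $\poly(n)$, so that all ``combinatorial'' quantities built from $\tma$ — determinants of row subsets, subspace gaps, etc. — are controlled to within $\exp(\pm O(d\log(1/\sigma)))$ factors.

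Concretely, here is the order I would carry things out. First, reduce to showing that the optimal left scaling $\vs^\star = \exp(\tfrac12\vt^\star)$ has bounded aspect ratio, using the discussion after Assumption~\ref{assume:simplify} that identifies $\kappa$ with $\max_i \vs^\star_i / \min_i \vs^\star_i$. Second, invoke existence: since $\vc = \frac d n \1_n$ has all coordinates equal to $\frac{d}{n} < 1$ (as $n > Cd$), and since with high probability the perturbed rows $\{\tva_i\}$ are in general position so that no $k$-dimensional subspace contains more than $k$ of them, the subspace condition \eqref{eq:light_subspace_intro} holds strictly, hence a $\vc$-Forster transform (and a minimizer $\vt^\star$) exists. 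Third — the quantitative heart — I would bound the coordinates of $\vt^\star$. One clean route is through the dual / polytope picture: $\vt^\star$ is determined by which row subsets are ``tight'', and standard linear-algebraic identities express $\vt^\star$ (equivalently, $\vs^\star$) in terms of ratios of subdeterminants of $\tma$ (this is the same flavor of computation underlying Cramer's rule applied to the KKT system $\vtau(\diag{\vs}\tma) = \vc$). Each such subdeterminant is, with probability $\ge 1-\delta$, at least $(\sigma/(nd))^{O(d)}$ in absolute value (Gaussian anti-concentration of determinants, e.g.\ the smallest-singular-value bounds of \cite{SankarST06}) and at most $(\sqrt n)^{O(d)}$; taking logs of the ratio gives $\norm{\vt^\star}_\infty = O(d\log(1/\sigma))$ after absorbing $\log(nd)$ into $\log(1/\sigma)$ using $\sigma < \delta/(10nd)$.

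The main obstacle I anticipate is making step three rigorous without a clean closed form for $\vt^\star$: because $f$ is not strongly convex, the minimizer need not be unique, and the naive ``express $\vt^\star$ via Cramer's rule'' argument only works at a fully-determined vertex of the relevant polytope. I would handle this by a perturbation/limiting argument — first slightly perturb $\vc$ (or add a tiny strongly-convex regularizer $\mu\norm{\vt}_2^2$) to force a unique, fully-determined minimizer, bound that minimizer's $\ell_\infty$ norm by the subdeterminant-ratio estimate above, and then take $\mu \to 0$ (or undo the $\vc$ perturbation), using continuity of the (normalized) minimizer and the fact that the determinant bounds are stable under such small perturbations — to transfer the bound to a genuine minimizer of the original $f$. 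A secondary technical point is uniformizing the high-probability event over all exponentially-many row subsets; this is a routine union bound since each individual anti-concentration failure probability can be driven below $\delta \cdot 2^{-n}$ at the cost of only a $\log(2^n) = n$ factor inside the log, which is again absorbed into $\log(1/\sigma)$.
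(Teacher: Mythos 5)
There is a genuine gap, and it sits exactly where you flag ``the quantitative heart.'' The optimality condition $\vtau(\diag{\vs}\tma) = \vc$ is a highly nonlinear system in $\vs$ (via Cauchy--Binet it says a determinantal measure with weights $\propto \prod_{j \in S}\vs_j^2\det(\tma_{S:})^2$ has marginals $\vc$), and there is no Cramer's-rule-type identity expressing $\vt^\star$ as the log of a ratio of two subdeterminants of $\tma$ --- not even after perturbing $\vc$ or adding a regularizer, since the minimizer is an interior critical point of a smooth convex function, not a vertex of a polytope. The closest rigorous statement in this direction is Lemma B.6 of \cite{HardtM13} (discussed in Section~\ref{ssec:results}), which for rows in general position gives $\log(\kappa) = O(n\log(\frac 1 D))$ with $D$ the smallest nonsingular $d\times d$ subdeterminant; with $D \ge (\sigma/(nd))^{O(d)}$ this yields $O(nd\log(\frac 1 \sigma))$, a factor of $n$ worse than the theorem. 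Your union-bound bookkeeping also fails: to survive a union bound over $\binom{n}{d}$ subsets using a smallest-singular-value tail whose exponent is $O(1)$, the threshold must shrink by $\binom{n}{d}^{-\Omega(1)}$, contributing an additive $\Omega(d\log(\frac n d))$ (or worse) inside each $\log(\frac 1 D)$; and an additive $n$ inside a logarithm is \emph{not} absorbed into $\log(\frac 1 \sigma)$, since the hypothesis $\sigma < \frac{\delta}{10nd}$ only guarantees $\log(\frac 1 \sigma) \gtrsim \log(nd)$, not $\gtrsim n$.

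The paper's route is structurally different and avoids any closed form for $\vt^\star$. It extends \cite{ArtsteinKS20}: define $\vc$ to be $(\eta,\Delta)$-deep if for every $k$-dimensional subspace $E$ at most $(1-\eta)\frac{kn}{d}$ rows lie within distance $\Delta$ of $E$ (Definition~\ref{def:deep}). Lemma~\ref{lem:diam_bound} converts deepness with constant $\eta$ directly into $\norm{\vt^\star}_\infty \le O(d\log(\frac{1}{\Delta}))$ by telescoping the ratios $\vlam_{d-k}/\vlam_{d-k+1}$ of consecutive eigenvalues of the optimal $\mr(\vt^\star)$, each ratio being controlled by the exact isotropy identity projected onto eigenspaces --- this is what replaces your determinant ratio and is where the factor $d$ (rather than $n$) comes from. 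Deepness is then certified via Lemma~\ref{lem:small_sv}: a violation forces some $m\times d$ submatrix ($m = \lceil(1-\eta)kn/d\rceil$) to have $\vsig_{k+1} \le \sqrt m \Delta$. The crucial point that makes the union bound over $\binom{n}{m}$ subsets work is that the relevant singular value gap is $j = m-k = \Omega(m)$ (or $d-k = \Omega(d)$), so the tail bound of Facts~\ref{fact:gaussian_sv} and~\ref{fact:smoothed_sv} has exponent $j^2 = \Omega(m^2)$, which dominates $\log\binom{n}{m} = O(m\log n)$ with $\Delta = (\frac{\delta\sigma}{n})^{O(1)}$ independent of $d$. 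Your proposal has no ingredient playing the role of either the eigenvalue-telescoping step or the large-gap singular value bound, and without them the claimed $O(d\log(\frac 1 \sigma))$ does not follow.
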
 

That is, $\ma$ in Theorem~\ref{thm:smoothed} is a ``base worst-case instance'' that is smoothed into a more typical instance $\tma$, which our conditioning bound of $\kappa \approx (\frac 1 \sig)^{O(d)}$ applies to. 

The assumption in Theorem~\ref{thm:smoothed} that $\ma$ has unit norm rows is relatively mild; rescaling rows does not affect the (base) Forster transform problem, and our result still applies if row norms are in a $\poly(n)$ multiplicative range. Further, while Theorem~\ref{thm:smoothed} is stated for uniform marginals $\vc = \frac d n \1_n$, we show in Corollary~\ref{cor:nonuniform} that as long as the marginals $\vc$ are bounded away from $1$ entrywise by a constant, the conditioning estimate in Theorem~\ref{thm:smoothed} still holds for sufficiently large $n$. The requirement that $n > Cd$ for $C > 1$ is a minor bottleneck of our approach, discussed in Remark~\ref{rem:superlinear}.

We are aware of few explicit conditioning bounds for Forster transforms such as Theorem~\ref{thm:smoothed}, so we hope it (and techniques used in establishing it) become useful in future studies. Among conditioning bounds that exist presently, Lemma B.6 of \cite{HardtM13} (cf.\ discussion in Corollary 4, \cite{HamiltonM19}) shows that for $\ma$ with rows in \emph{general position}, we have $\log(\kappa) = O(n\log(\frac 1 D))$, where $D$ is the smallest determinant of a nonsingular $d \times d$ submatrix of $\ma$. In particular, $D$ can be inverse-exponential in $d$ (or worse) for poorly-behaved instances. A crude lower bound of $D \gtrsim \exp(-d^3)$ was provided in \cite{Cherapanamjeri24} for essentially the smoothed model we consider in Theorem~\ref{thm:smoothed}.

On the other hand, \cite{DiakonikolasTK23, DadushR24}, who respectively design strongly polynomial methods for iteratively updating an approximate Forster transform $\mr \in \R^{d \times d}$ or dual scaling $\vs \in \R^n_{> 0}$, bound related conditioning quantities. Both papers contain results (cf.\ Section 5, \cite{DiakonikolasTK23} and Section 4, \cite{DadushR24}) showing that any iterate has a ``nearby'' iterate in bounded precision, that does not significantly affect some potential function of interest. These results do not appear to directly have implications for Assumption~\ref{assume:simplify}, and provide rather large bounds on the bit complexity (focusing on worst-case instances). Nonetheless, exploring connections in future work could be fruitful.

Most relatedly, Theorem 1.5 of \cite{ArtsteinKS20} proves that for target marginals $\vc$ that are ``deep'' inside the basis polytope for independent sets of $\ma$'s rows, $\log(\kappa) \lesssim d$. However, \cite{ArtsteinKS20} does not give estimates on the deepness of marginals in concrete models, and indeed our approach to proving Theorem~\ref{thm:smoothed} is to provide such explicit bounds in the smoothed analysis regime.

Directly combining Theorems~\ref{thm:main} and~\ref{thm:smoothed} shows that for smoothed instances, the complexity of computing an approximate Forster transform is at most $\approx nd^{\omega}$, up to a subpolynomial factor. While it is worse than our well-conditioned runtime, our method in Theorem~\ref{thm:main} still improves upon state-of-the-art algorithms based on CPMs by a factor of $\approx \max(\frac n d, n^3 d^{-\omega})$ in this regime.

\paragraph{Computational model.} This paper works in the real RAM model, where we bound the number of basic arithmetic operations. Prior work on optimizing Barthe's objective \cite{HardtM13, ArtsteinKS20} also worked in this model, and we give a comparison on how these results are affected under finite-precision arithmetic in Appendix~\ref{app:numerical}. A more detailed investigation of the numerical stability of Forster transforms is an important direction for future work, but is outside our scope.

\subsection{Our techniques}\label{ssec:techniques}

In this section, we overview our approaches to proving Theorems~\ref{thm:main},~\ref{thm:implicit_sparsify}, and~\ref{thm:smoothed}.

\paragraph{Optimizing Barthe's objective.} Our algorithm for optimizing Barthe's objective \eqref{eq:barthe_intro} is a variant of the \emph{box-constrained Newton's method} of \cite{CohenMTV17, Allen-ZhuLOW17}, originally developed for approximate matrix scaling and balancing. We were inspired to use this tool by noticing similarities between the derivative structure of Barthe's objective (Fact~\ref{fact:barthe_derivs}) and the \emph{softmax} function, which can be viewed as the one-dimensional case of Barthe's objective. Previously, the softmax function was known to be \emph{Hessian stable} in the $\ell_\infty$ norm (Definition~\ref{def:hessian_stable}), enabling local optimization oracles that can be implemented via Newton's method \cite{CarmonJJJLST20} over $\ell_2$ or $\ell_\infty$ norm balls. 

It is much more challenging to prove that Barthe's objective is Hessian stable, as the proof in \cite{CarmonJJJLST20} does not naturally extend to non-commuting variables. Nonetheless, we give a different proof inspired by Kadison's inequality in operator algebra \cite{Kadison52} to establish Hessian stability of Barthe's objective in Section~\ref{ssec:barthe_stable}. Our Hessian stability bound directly improves the best previously known in the literature, due to Lemma D.2 in \cite{CelisKV20}, by a factor of $n$, making it dimension-independent. This reflects in a multiplicative $O(n)$ savings in our final runtime.

We complement this result in Section~\ref{ssec:terminate} with bounds on the additive error on Barthe's objective required to obtain a $(\vc, \eps)$-Forster transform, for a tolerance $\eps > 0$. By using the leverage score characterization \eqref{eq:scaling_lev_intro} of exact Forster transforms, and performing a local perturbation analysis at the optimizer, we show $\poly(\eps, \min_{i \in [n]} \vc_i)$ error suffices. This significantly sharpens prior error  bounds from \cite{HardtM13, Cherapanamjeri24}, which scaled exponentially in a polynomial of the problem parameters.

With these stability bounds in place, the rest of Section~\ref{sec:newton} makes small modifications to the \cite{CohenMTV17} analysis. We show that by using fast matrix multiplication, each Hessian can be computed in $\approx n^2 d^{\omega - 2}$ time (Lemma~\ref{lem:grad_compute}), and that box-constrained Newton steps can be efficiently implemented using the constrained optimization methods from \cite{ChenPW21}. This gives our basic runtime in Remark~\ref{rem:log_delta_eps}, which is sped up via sparsifiers provided by Theorem~\ref{thm:implicit_sparsify}.

\paragraph{Implicit sparsification.} We next describe our approach to proving Theorem~\ref{thm:implicit_sparsify}, a reduction from $n^{o(1)}$-approximate sparsification to $n^{o(1)}$ accesses of an implicit Laplacian via matrix-vector products. Our algorithm is an adaptation and extension of previous work \cite{JambulapatiLMSST23}, which gave an algorithm for recovering a $(1+\epsilon)$-spectral sparsifier of a graph $G$ on $n$ vertices, using $O(\polylog(n))$ matrix-vector products with $G$'s (pseudo)inverse Laplacian matrix, and $\approx n^2$ extra time. 

The approach taken by \cite{JambulapatiLMSST23} was to reduce the problem to a more general setting known as \emph{matrix dictionary recovery}. Here, we are given matrix-vector access to $\mb$ (e.g., an implicit Laplacian), and a dictionary of $\{ \ma_i \}_{i \in [m]} \in \PSD^{n \times n}$ with the promise that there exist weights $\vw^\star \in \R^n_{\ge 0}$ such that $\sum_{i \in [m]} \vw^\star_i \ma_i = \mb$. Our goal is to compute weights $\vw \in \R^m_{\ge 0}$ such that $\mb \preceq \sum_{i \in [m]} \vw_i \ma_i \preceq C \mb$ for some approximation factor $C > 1$. This was done in \cite{JambulapatiLMSST23} by reducing the two-sided recovery problem to a small number of one-sided \emph{packing} semidefinite programs (SDPs):
\begin{equation}\label{eq:packing_intro}
	\min_{\substack{\vx \geq 0 \\ \sum_{i \in [m]} \vx_i \ma_i \preceq \mb}} \vc^\top \vx.
\end{equation}
Then, \cite{JambulapatiLMSST23} applies packing SDP solvers from the literature \cite{Allen-ZhuLO16, PengTZ16, JambulapatiLT20} to solve these problems. Combining this with a homotopy scheme yields their full algorithm.

A key limitation of this approach in the graph setting is simply the size of the matrix dictionary: in particular, we have one entry per candidate edge.
For graphs on $n$ vertices, maintaining an internal representation requires manipulating potentially-dense graphs on $m \approx n^2$ edges. This appears hard to bypass, without a priori information on which edges exist in our implicit Laplacian.

We circumvent this issue by enforcing that our intermediate Laplacians, while remaining dense combinations of our matrix dictionary, have a greater deal of structure that enables key primitives such as \emph{sparsification} and \emph{matrix-vector multiplication} in $n^{1 + o(1)}$ time. We introduce a family of such compatible graphs, with a representation property we call \emph{sum-of-cliques} (Definition~\ref{def:soc}). The key challenge is designing solvers for packing SDPs \eqref{eq:packing_intro} that take few iterations, while maintaining that the solver's iterates are compatible with our sum-of-cliques machinery.

We make a crucial observation that standard matrix multiplicative weights (MMW)-based solvers for packing SDPs require gradient computations of the form
\begin{equation}\label{eq:grad_intro}\inprod{\mm}{\ml_e} = \norm{\mm^{\half }(\ve_u - \ve_v)}_2^2 \text{ for all } e = (u, v) \in [n] \times [n], \end{equation}
where $\ml_e \defeq (\ve_u - \ve_v)(\ve_u - \ve_v)^\top$ is an edge Laplacian, and $\mm$ is a response matrix given by the MMW updates. These gradients are then used to update a current iterate.

By noticing that gradient entries \eqref{eq:grad_intro} can be viewed as distance computations between columns of $\mm^{\half}$, we apply metric embedding and hashing tools to coarsely discretize our gradients in a way that induces appropriate clique structures. In Definition~\ref{def:step}, we further isolate sufficient conditions for this coarse discretization scheme to implement a long-step packing SDP solver, while maintaining that iterates are unions of $n^{o(1)}$ sums-of-cliques.  We combine these pieces in Section~\ref{ssec:packing} to give our main algorithmic innovation: an $n^{o(1)}$-approximate solver for \eqref{eq:packing_intro} that returns a Laplacian sparsifiable in $n^{1 + o(1)}$ time, when the constraint $\mb$ is an unknown graph Laplacian. 

We finally show in the rest of Section~\ref{sec:laplacian} that the remaining pieces of \cite{JambulapatiLMSST23}, i.e., the two-sided to one-sided reduction and the homotopy method, are compatible with our new packing solver.

\paragraph{Smoothed analysis of conditioning.} To prove Theorem~\ref{thm:smoothed}, our main result in the smoothed analysis setting, in Definition~\ref{def:deep} we first extend an approach of \cite{ArtsteinKS20} that defines a notion of \emph{deepness} of marginal vectors $\vc$ inside the basis polytope induced by $\ma$'s independent row subsets. As we recall in Section~\ref{ssec:deep}, \cite{ArtsteinKS20} argues that if $\vc$ has deepness of $\eta = \Omega(1)$, then we can obtain a conditioning bound of $\log(\kappa) \approx d$ in Assumption~\ref{assume:simplify}. In the case of $\vc = \frac d n \1_n$, this roughly translates to a robust variant of \eqref{eq:light_subspace_intro} that says: for all subspaces $E \subseteq \R^d$ of dimension $k$, at most a $\approx \frac k d$ fraction of $\ma$'s rows (after smoothing by Gaussian noise) should lie at distance $\poly(\frac 1 n)$ from $E$. The rest of Section~\ref{sec:smoothed} proves this deepness result for smoothed matrices $\tma = \ma + \mg$. 

The key challenge is to avoid union bounding over a net of all possible subspaces $E$; for $\dim(E) = \Theta(d)$, this na\"ive approach would require taking $n \gtrsim d^2$ samples, as nets of $\Theta(d)$-dimensional subspaces have cardinality $\approx \exp(d^2)$. We instead show in Lemma~\ref{lem:small_sv} that deepness is implied by submatrices of $\tma$ of appropriate size (dictated by the subspace dimension $k$) having at least $k + 1$ large singular values, allowing us to apply union bounds to a smaller number of \emph{data-dependent} subspaces. We combine this observation with singular value estimates from the random matrix theory literature to prove Theorem~\ref{thm:smoothed}. Our argument requires some casework on the subspace dimension; we handle wide and near-square submatrices in Section~\ref{ssec:wide} and tall submatrices in Section~\ref{ssec:tall}.

\subsection{Related work}\label{ssec:related}

\paragraph{Forster transforms via maximum entropy.} An alternative characterization of Forster transforms was followed by \cite{StraszakV19}, who studied certain \emph{maximum-entropy distribution} representations of specified marginals $\vc$ with respect to an index set $\calS$, which we briefly explain for context. In our setting of finding a $\vc$-Forster transform of $\ma \in \R^{n \times d}$, the index set $\calS$ consists of all $S \subseteq [n]$ with $|S| = d$, and the underlying $\pi(S)$ is the \emph{determinantal measure} with $\pi(S) \propto \det([\ma^\top \ma]_{S : S})$. Then, Section 8.3 of \cite{StraszakV19} applies the Cauchy-Binet formula to show that
\begin{equation}\label{eq:max_entropy}
	\begin{aligned}
		\min_{\vt \in \R^n} -\inprod{\vc}{\vt} + \log\det\Par{\sum_{i \in [n]} \exp\Par{\vt_i} \va_i \va_i^\top} &= \min_{\vt \in \R^n} \log\Par{\sum_{S \in \calS} \exp\Par{\inprod{\1_S - \vc}{\vt}} \det\Par{\Brack{\ma^\top\ma}_{S:S}}} \\
		&= \min_{\vt \in \R^n} \log\Par{\sum_{S \in \calS} \pi\Par{S}\exp\Par{\inprod{\1_S - \vc}{\vt}} } + Z, \\
		\text{where } Z &\defeq \log\Par{\sum_{S \in \calS} \det\Par{\Brack{\ma^\top \ma}_{S:S}}},
	\end{aligned}
\end{equation}
and the starting expression is Barthe's objective \eqref{eq:barthe_intro}. This shows that computing Forster transforms falls within the framework of Section 7 in \cite{StraszakV19}, which exactly gives polynomial-time algorithms for optimizing functions in the form of the ending expression above. The runtime of \cite{StraszakV19} is not explicit, and we believe it is significantly slower than more direct approaches, e.g., CPMs \cite{HardtM13}.
Similarly, \cite{BurgisserLNW20} develop interior-point methods for solving maximum entropy optimization problems of the form \eqref{eq:max_entropy}, but with runtimes scaling as $\poly(|\calS|)$, which in our setting is $\approx n^d$. 

More closely related to our work, \cite{CelisKV20} also consider a box-constrained Newton's method for solving maximum-entropy distributions over the hypercube $\{0, 1\}^n$ (which generalizes Barthe's objective, due to \eqref{eq:max_entropy}). They prove that in this setting, maximum-entropy optimization problems  \eqref{eq:max_entropy} are $(r, O(rn))$-Hessian stable for all $r > 0$ with respect to $\norm{\cdot}_\infty$, which is a factor $n$ worse than Proposition~\ref{prop:hessian_stable_barthe}. Moreover, they do not provide an explicit runtime for solving box-constrained subproblems (relying on black-box quadratic programming solvers), and do not analyze the required accuracy for termination (e.g., Lemma~\ref{lem:termination}). We believe their approach, as implemented for Barthe's objective, results in a much slower algorithm than CPMs; indeed, their stated runtime in a preprint \cite{CelisKYV19} scales at least as $\approx n^{4.5}$. To obtain our runtime improvements, we require several novel structural insights on Barthe's objective and its interaction with optimization methods, as well as new algorithmic primitives (e.g., Theorem~\ref{thm:implicit_sparsify}) capable of harnessing said structure.

\paragraph{Forster transforms via operator scaling.} In another direction, \cite{GargGOW17} discovered a nontrivial connection between computing Forster transforms (phrased in an equivalent way of computing Brascamp-Lieb constants, see Proposition 1.8, \cite{GargGOW17}) and a related problem called \emph{operator scaling}. This implies that polynomial-time algorithms for operator scaling \cite{GargGOW16, GargGOW17, IvanyosQS17, Allen-ZhuGLOW18} apply to our problem as well. However, none of the aforementioned algorithms for operator scaling have an explicitly specified polynomial, and a crude analysis results in fairly substantial blowups. Also, some of these algorithms have more explicit and stronger variants analyzed in \cite{DiakonikolasTK23, DadushR24}, so  we believe they are subsumed by our existing discussion.

More generally, there is an active body of research on generalizations of operator scaling and Forster transforms \cite{GargGOW17, Franks18, BurgisserFGOWW18, BurgisserFGOWW19}, for which several state-of-the-art results are via variants of Newton's method, broadly defined. It would be interesting to explore if the ideas developed in this paper could extend to those settings as well.

\paragraph{Reductions between graph primitives.} Our work on implicit sparsification (Theorem~\ref{thm:implicit_sparsify}) fits into a line of work that aims to characterize which fast (i.e., $n^{1 + o(1)}$-time) graph primitives imply others by reduction. This theme was explicitly considered by \cite{AlmanCSS20} (see also related work by \cite{Quanrud21}), who studied these primitives for graphs implicitly defined by low-dimensional kernels. Among the three primitives of (1) fast matrix-vector multiplication, (2) fast spectral sparsification, and (3) fast Laplacian system solving, it was known previously that (3) reduces to (1) and (2) \cite{SpielmanT04}, and that (1) reduces to (2) and (3) \cite{AlmanCSS20}. Our work makes progress on this reduction landscape, as it shows (2) reduces to (1) (and hence, (3) also reduces to (1)). We mention that Theorem 5 in \cite{JambulapatiLMSST23} gives a related, but slower, $\approx n^2$-time reduction from (2) to (3).

Our work is also thematically connected to prior work on spectral sparsification under weak graph access, e.g., in streaming and dynamic settings \cite{KapralovLMMS17, AbrahamDKKP16}. Specifically, several of the rounding-via-sketching tools used to prove Theorem~\ref{thm:implicit_sparsify} are inspired by \cite{KapralovMMMNST20}. Their result is incomparable to ours, as we are unable to directly access a sketch of the graph, so we instead use these tools to speed up an optimization method rather than identify the sparsifier in one shot.
\section{Preliminaries}\label{sec:prelims}

In Section~\ref{ssec:notation}, we give notation used throughout the paper, as well as several key linear algebraic definitions. In Section~\ref{ssec:rip}, we provide preliminaries on the radial isotropic position scaling problem that we study, as well as Barthe's objective \cite{Barthe98} used in computing Forster transforms.

\subsection{Notation}\label{ssec:notation}

\paragraph{General notation.} We denote matrices in capital boldface and vectors in lowercase boldface throughout. By default, vectors are $d \times 1$ matrices. We let $\0_d$ and $\1_d$ denote the all-zeroes and all-ones vectors in $\R^d$, and $\0_{m \times n}$ is the all-zeroes $m \times n$ matrix. If $S \subseteq [d]$ and a dimension $d$ is clear from context, we let $\1_S$ be the $0$-$1$ indicator vector of $S$. We let $\vv \circ \vw$ denote the entrywise product of vectors $\vv, \vw$ of the same dimension. For $n \in \N$ we define $[n] \defeq \{i \in \N \mid i \le n\}$. For $p \ge 1$ and $p = \infty$ we let $\norm{\cdot}_p$ denote the $\ell_p$ norm of a vector argument, as well as the Schatten-$p$ norm of a matrix argument. For $\bvx \in \R^d$ and $r > 0$, we let $\ball_p(\bvx, r) \defeq \{\vx \in \R^d \mid \|\bvx - \vx\|_p \le r\}$ denote the $\ell_p$ norm ball of radius $r$ centered at $\bvx$; if $\bvx$ is unspecified then $\bvx = \0_d$ by default.
We let $\Nor(\vmu, \msig)$ denote the multivariate Gaussian with specified mean and covariance. When $d$ is clear from context, $\ve_i \in \R^d$ for $i \in [d]$ denotes the $i^{\text{th}}$ standard basis vector. For $S \subset \R^d$, we use $\conv(S)$ to denote the convex hull of $S$. For an event $\event$, we let $\ind_\event$ denote the $0$-$1$ indicator variable. 

For a multilinear form $\mt \in \R^{d_1 \times \ldots \times d_\ell}$ and vectors $\vv_i \in \R^{d_i}$ for all $i \in [\ell]$, we denote 
\[\mt\Brack{\vv_1, \ldots, \vv_\ell} \defeq \sum_{\substack{i_1 \in [d_1] \\ \ldots \\ i_\ell \in [d_\ell]}} \mt_{i_1\ldots i_\ell} [\vv_1]_{i_1} \ldots [\vv_\ell]_{i_\ell}.\]
For example, when $\mt$ is an $n \times d$ matrix, $\mt[\vu,\vv] = \sum_{(i, j) \in [n] \times [d]} \mt_{ij}\vu_i\vv_j = \vu^\top \mt \vv$.

\paragraph{Matrices.} The $d \times d$ identity matrix is denoted $\id_d$. The span of a set of vectors $\{\va_i\}_{i \in [n]}$ is denoted $\Span(\{\va_i\}_{i \in [n]})$; when $\ma$ is a matrix, we overload $\Span(\ma)$ to mean the span of its columns. We similarly use $\rank(\{\va_i\}_{i \in [n]})$, $\rank(\ma)$ to denote the dimension of the aforementioned subspaces. 

We denote the $j^{\text{th}}$ column of $\ma \in \R^{n \times d}$ by $\ma_{:j} \in \R^n$ for all $j \in [d]$, and we similarly denote the $i^{\text{th}}$ row of $\ma$ (viewed as a column vector) by $\ma_{i:} \in \R^{d}$ for all $i \in [n]$. For row and column subsets $S \subseteq [n]$, $T \subseteq [d]$, the corresponding submatrix is denoted $\ma_{S:T}$; $S = [n]$ and $T = [d]$ by default if excluded. We denote the Frobenius and ($2 \to 2$) operator norms of a matrix argument by $\normf{\cdot}$ and $\normop{\cdot}$. For a vector $\vv \in \R^d$ we let $\diag{\vv}$ be the $d \times d$ diagonal matrix with $\vv$ along the diagonal. We use $\nnz(\mm)$ to denote the number of nonzero entries of a matrix $\mm$, and we use $\tmv(\mm)$ to mean the time required to compute $\mm \vv$ for an arbitrary vector $\vv$ of appropriate dimension. We denote the projection matrix onto a subspace $E \subseteq \R^d$ by $\mproj_E \in \PSD^{d \times d}$.

We denote the set of symmetric $d \times d$ matrices by $\Sym^{d \times d}$, and the positive semidefinite (PSD) cone by $\PSD^{d \times d} \subset \Sym^{d \times d}$. We equip $\PSD^{d \times d}$ with the Loewner partial ordering $\preceq$.  We define the induced seminorm of $\mm \in \PSD^{d \times d}$ by $\norms{\vv}_{\mm}^2 \defeq \vv^\top \mm \vv$. For $\mm \in \Sym^{d \times d}$ we let $\mm^\dagger$ denote its pseudoinverse, which satisfies $\mm\mm^\dagger = \mm^\dagger \mm$ is the projection matrix onto $\Span(\mm)$. The eigenvalues of $\mm \in \Sym^{d \times d}$ are denoted $\vlam(\mm) \in \R^d$, where our convention is to order $\vlam_1(\mm) \ge \vlam_2(\mm) \ge \ldots \ge \vlam_d(\mm)$. We similarly denote the (monotone nonincreasing) singular values of $\ma \in \R^{n \times d}$ by $\vsig(\ma) \in \R^{\min(n, d)}$.

The trace of $\mm \in \Sym^{d \times d}$ is denoted $\Tr(\mm)$. For $\ma, \mb \in \R^{n \times d}$, we define the matrix inner product by $\inprod{\ma}{\mb} \defeq \Tr(\ma^\top \mb) = \sum_{(i, j) \in [n] \times [d]} \ma_{ij}\mb_{ij}$. We use the following notion of multiplicative approximation between PSD matrices: for $\ma, \mb \in \PSD^{d \times d}$ and $\eps > 0$, we write
\begin{equation}\label{eq:approx_mat}\ma \approx_\eps \mb \iff \exp\Par{-\eps} \mb \preceq \ma \preceq \exp\Par{\eps}\mb.\end{equation}

For nonnegative scalars $a, b \in \R_{\ge 0}$ we write $a \approx_\eps b$ to be the $1$-d specialization of \eqref{eq:approx_mat}, and $\vu \approx_\eps \vv$ is an entrywise definition for nonnegative vectors $\vu, \vv \in \R^d_{\ge 0}$.

We define the \emph{leverage scores} of a matrix $\ma \in \R^{n \times d}$ with rows $\{\va_i^\top\}_{i \in [n]}$ by
\begin{equation}\label{eq:leverage}\vtau_i(\ma) \defeq \va_i^\top \Par{\ma^\top \ma}^\dagger \va_i.\end{equation}
Leverage scores are a common measure of the relative importance of rows for preserving spectral information in numerical linear algebra, and are a crucial concept used throughout this paper. We summarize several basic facts about leverage scores here, see e.g., \cite{CohenLMMPS15} for proofs.

\begin{fact}\label{fact:lev}
For all $\ma \in \R^{n \times d}$, we have $\vtau(\ma) \in [0, 1]^{n}$, and $\sum_{i \in [n]} \vtau_i(\ma) = \rank(\ma)$.
\end{fact}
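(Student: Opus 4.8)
The plan is to identify the leverage scores as the diagonal entries of an orthogonal projection matrix and then read off both claims from standard properties of projections. Concretely, set $\mmp \defeq \ma(\ma^\top \ma)^\dagger \ma^\top \in \R^{n \times n}$; then $\vtau_i(\ma) = \va_i^\top (\ma^\top\ma)^\dagger \va_i = \ve_i^\top \mmp \ve_i = \mmp_{ii}$ for every $i \in [n]$ (where $\ve_i \in \R^n$), so it suffices to understand $\mmp$.

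First I would show that $\mmp$ is the orthogonal projection onto the column span $\Span(\ma)$. The cleanest way is via the thin SVD: write $\ma = \mmu \msig \mv^\top$ where $r \defeq \rank(\ma)$, the matrices $\mmu \in \R^{n \times r}$ and $\mv \in \R^{d \times r}$ have orthonormal columns, and $\msig \in \R^{r \times r}$ is diagonal with strictly positive diagonal. Then $\ma^\top \ma = \mv \msig^2 \mv^\top$, hence $(\ma^\top\ma)^\dagger = \mv \msig^{-2} \mv^\top$, and substituting gives $\mmp = \mmu \mmu^\top$. In particular $\mmp = \mmp^\top$, $\mmp^2 = \mmp$, $\mmp \in \PSD^{n \times n}$, $\Span(\mmp) = \Span(\mmu) = \Span(\ma)$, and the eigenvalues of $\mmp$ are all $0$ or $1$ (with exactly $r$ equal to $1$), so that $\0_{n \times n} \preceq \mmp \preceq \id_n$. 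Alternatively, one can avoid the SVD and verify $\mmp^\top = \mmp$ and $\mmp^2 = \mmp$ directly from the pseudoinverse identities $\mm^\dagger \mm \mm^\dagger = \mm^\dagger$ and $\mm \mm^\dagger \mm = \mm$ applied to $\mm = \ma^\top \ma$.

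The two claims are now immediate. Since $\0_{n \times n} \preceq \mmp \preceq \id_n$, for each $i \in [n]$ we get $0 = \ve_i^\top \0_{n \times n} \ve_i \le \ve_i^\top \mmp \ve_i \le \ve_i^\top \id_n \ve_i = 1$, i.e., $\vtau_i(\ma) \in [0,1]$, so $\vtau(\ma) \in [0,1]^n$. For the sum, cyclicity of the trace yields $\sum_{i \in [n]} \vtau_i(\ma) = \sum_{i \in [n]} \mmp_{ii} = \Tr(\mmp) = \Tr(\mmu \mmu^\top) = \Tr(\mmu^\top \mmu) = \Tr(\id_r) = r = \rank(\ma)$.

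There is essentially no substantive obstacle here; the only step worth stating carefully is the identity $\mmp = \mmu\mmu^\top$ — equivalently, that $\mmp$ is a genuine orthogonal projection rather than merely resembling one — which the SVD computation settles without any rank-deficiency casework.
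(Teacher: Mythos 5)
Your proof is correct. The paper does not actually prove Fact~\ref{fact:lev} — it defers to a citation (\cite{CohenLMMPS15}) — and your argument is precisely the standard one that reference uses: recognize $\vtau_i(\ma)$ as the $i^{\text{th}}$ diagonal entry of the hat matrix $\ma(\ma^\top\ma)^\dagger\ma^\top$, verify via the thin SVD that this matrix equals $\mmu\mmu^\top$ and is therefore an orthogonal projection of rank $\rank(\ma)$, and read off both claims from $\0_{n\times n} \preceq \mmp \preceq \id_n$ and $\Tr(\mmp) = \rank(\ma)$. All steps check out, including the pseudoinverse computation $(\ma^\top\ma)^\dagger = \mv\msig^{-2}\mv^\top$, which handles the rank-deficient case with no extra work.
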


Finally, we let $\omega < 2.372$ denote the exponent of the square matrix multiplication runtime \cite{AlmanDWXXZ25}.

\paragraph{Optimization.} For $k$-times differentiable $f: \R^n \to \R$, we let $\nabla^k f$ denote the $k^{\text{th}}$ derivative tensor of $f$, e.g., $\nabla f$ and $\nabla^2 f$ are the gradient and Hessian of $f$. We use the following notion of multiplicative stability for analyzing Newton's method, patterned off \cite{CohenMTV17, KarimireddySS18, CarmonJJJLST20}.

\begin{definition}[Hessian stability]\label{def:hessian_stable}
We say that twice-differentiable $f: \R^n \to \R$ is $(r, \eps)$-\emph{Hessian stable with respect to norm $\norm{\cdot}$} if for all $\vx, \vy \in \R^n$ with $\norm{\vx - \vy} \le r$, we have following \eqref{eq:approx_mat} that
\[\nabla^2 f(\vx) \approx_\eps \nabla^2 f(\vy).\]
\end{definition}

We specifically will use the $\norm{\cdot}_\infty$ case of Definition~\ref{def:hessian_stable} to design our algorithms, via toolkits provided by \cite{CohenMTV17}, who called this property ``second-order robustness,'' and \cite{ChenPW21}.

\subsection{Radial isotropic position}\label{ssec:rip}

Our main contribution is an algorithm to scale a feasible matrix $\ma \in \R^{n \times d}$ to approximately satisfy a strong linear algebraic condition known as \emph{radial isotropic position} (Definition~\ref{def:rip}). This condition was introduced to the theoretical computer science community by \cite{Barthe98} to study inequalities in functional analysis, but a similar definition arose earlier in algebraic geometry \cite{GelfandGMS87}. 

To briefly demystify Definition~\ref{def:rip}, let $\ma \in \R^{n \times d}$ with rows $\{\va_i^\top\}_{i \in [n]}$ have $\rank(\ma) = d$ (so $n \ge d$). Then, it is well-known that $\ma$ can be scaled by invertible $\mr \in \R^{d \times d}$ so that 
\begin{equation}\label{eq:c_isotropy}\mr \ma^\top \diag{\vc} \ma \mr^\top = \sum_{i \in [n]} \vc_i \Par{\mr \va_i}\Par{\mr \va_i}^\top = \id_d. \end{equation}
Indeed, choosing $\mr = (\ma^\top \diag{\vc} \ma)^{-1/2}$ suffices. The condition \eqref{eq:c_isotropy} is sometimes referred to as being scaled to be in $\vc$-isotropic position, and there are natural $\eps$-approximate generalizations.

Similarly, as long as all $\va_i \neq \0_d$, there is a diagonal scaling $\ms$ so that $\ms\ma$ has unit-norm rows: let 
\begin{equation}\label{eq:radial}\ms = \diag{\vs} \text{ where } \vs_i = \frac 1 {\norm{\va_i}_2} \text{ for all } i \in [n] \implies \norm{\Brack{\ms \ma}_{i:}}_2 = 1 \text{ for all } i \in [n]. \end{equation}
Each of the transformations \eqref{eq:c_isotropy} and \eqref{eq:radial} is used in many applications to improve the regularity of a point set given by viewing the rows of $\ma$ as points in $\R^d$. The purpose of $\vc$-radial isotropic position (Definition~\ref{def:rip}) is to give a Forster transform matrix $\mr \in \R^{d \times d}$ inducing a scaling $\ms = \diag{\vs}$ via $\vs_i^{-1} = \norm{\mr \va_i}_2$ for all $i \in [n]$, such that the left-and-right scaled matrix $\ms \ma \mr^\top$ simultaneously has unit-norm rows, and is in $\vc$-isotropic position. Our goal is to efficiently approximate $\mr$.

\paragraph{Existence of Forster transform.} Not all point sets admit a $\vc$-Forster transform. A sequence of works \cite{GelfandGMS87, Barthe98, CarlenLL04, DvirSW14, HopkinsKLM20} gave two useful characterizations of feasibility.

\begin{proposition}[Lemma 4.19, \cite{HopkinsKLM20}]\label{prop:scaling_subspace}
Given a point set $A \defeq \{\va_i\}_{i \in [n]} \subset \R^d$ and $\vc \in (0, 1]^n$ satisfying $\norm{\vc}_1 = d$, the following conditions are equivalent where $\ma \in \R^{n \times d}$ has rows $A$.
\begin{enumerate}
    \item For any $\eps > 0$ there exists $\mr \in \R^{d \times d}$, a $(\vc, \eps)$-Forster transform of $\ma$.
    \item For every $k \in [d]$, every $k$-dimensional linear subspace $V \subseteq \R^d$ satisfies
    \begin{equation}\label{eq:light_subspace}\sum_{\substack{i \in [n] \\ \va_i \in V}} \vc_i \le k.\end{equation}
\end{enumerate}
\end{proposition}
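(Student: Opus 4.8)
The plan is to prove the two implications separately: necessity ($1 \Rightarrow 2$) is the easy direction, and sufficiency ($2 \Rightarrow 1$) is the substantial one. For necessity, suppose some $k$-dimensional subspace $V$ violates \eqref{eq:light_subspace}, i.e.\ $\sum_{i : \va_i \in V} \vc_i > k$. The key observation is that normalization and left/right scaling do not move a point out of the span it lies in: if $\va_i \in V$, then $(\mr \va_i)\norms{\mr\va_i}_2^{-1} \in \mr V$, which is again $k$-dimensional. So for any invertible $\mr$, writing $\tva_i = (\mr\va_i)\norms{\mr\va_i}_2^{-1}$ and $\mproj$ for the orthogonal projection onto $\mr V$, we have $\inprod{\mproj}{\tva_i \tva_i^\top} = 1$ whenever $\va_i \in V$ and $\ge 0$ always. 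Taking the inner product of the upper bound in \eqref{eq:scale_rip_def} against $\mproj$ gives $\sum_{i \in [n]} \vc_i \inprod{\mproj}{\tva_i\tva_i^\top} \le \exp(\eps) \Tr(\mproj) = \exp(\eps) k$, but the left side is at least $\sum_{i : \va_i \in V} \vc_i > k$. Choosing $\eps$ small enough that $\exp(\eps) k < \sum_{i : \va_i \in V}\vc_i$ yields a contradiction, so no $(\vc,\eps)$-Forster transform exists for such $\eps$, contradicting condition~1.

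For sufficiency, the cleanest route is to invoke Barthe's objective \eqref{eq:barthe_intro} together with Proposition~\ref{prop:barthe} (which the paper has set up and which says a minimizer of $f$ induces an exact Forster transform): it suffices to show that condition~2 implies $f$ attains its infimum. Since $f$ is convex and invariant under translation by $\1_n$ (by \eqref{eq:invariant_ones}), we may restrict to the subspace $\{\vt : \inprod{\1_n}{\vt} = 0\}$ and show $f$ is coercive there, i.e.\ $f(\vt) \to \infty$ as $\norm{\vt}_2 \to \infty$ within that subspace. Decompose a direction $\vt$ by its ordered level sets: write the distinct values of $\vt$ and let $S_j$ be the set of coordinates where $\vt_i$ is at least the $j$-th largest value; then $\log\det(\sum_i \exp(\vt_i)\va_i\va_i^\top)$ grows, along a ray $\vt = r\vu$ as $r \to \infty$, at a rate governed by $\sum_j (\text{value gap}) \cdot \rank(\{\va_i : i \in S_j\})$, while $-\inprod{\vc}{\vt}$ contributes $-\sum_j (\text{value gap})\cdot \vc(S_j)$ where $\vc(S) := \sum_{i \in S}\vc_i$. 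Condition~2 (applied to $V = \Span\{\va_i : i \in S_j\}$, whose dimension is exactly $\rank(\{\va_i : i \in S_j\})$, noting that if $\va_i$ lies in this span it need not be in $S_j$, so the inequality $\vc(S_j) \le \rank(\{\va_i : i \in S_j\})$ still follows) ensures each such term is nonnegative, and one shows at least one is strictly positive for $\vu \neq c\1_n$, giving coercivity. I would either carry out this ray-analysis directly or, more slickly, recall the standard fact that $\log\det(\sum_i \exp(\vt_i)\va_i\va_i^\top)$ is asymptotically $\max_{S \in \calS}\big(\inprod{\1_S}{\vt} + \log\det[\ma^\top\ma]_{S:S}\big)$ via Cauchy–Binet (as in \eqref{eq:max_entropy}), reducing coercivity to: for every $\vt$ with $\inprod{\1_n}{\vt}=0$ and $\vt \neq \0$, there is a basis $S$ with $\inprod{\1_S - \vc}{\vt} > 0$ — which is exactly the statement that $\vc$ lies in the relative interior of the basis polytope $\conv\{\1_S : S \in \calS\}$, and this polytope membership is precisely what condition~2 encodes via the matroid rank inequalities \eqref{eq:light_subspace}.

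The main obstacle is the sufficiency direction, and within it the matroid-theoretic step: translating the subspace condition \eqref{eq:light_subspace} into the statement that $\vc$ lies in (the relative interior of) the basis polytope of the linear matroid on $\ma$'s rows. This requires the characterization of matroid base polytopes by rank inequalities — that $\conv\{\1_S : S \text{ a basis}\}$ is cut out by $\vc(U) \le \rank(U)$ for all flats/subsets $U$ together with $\vc([n]) = d$ — plus a careful treatment of the relative-interior (strict-inequality) version needed for coercivity rather than mere boundedness of the sublevel sets. I would handle the boundary case (when $\vc$ is on the boundary of the polytope, so some subspace inequality is tight) by noting that then $f$ is still bounded below and condition~1 only asks for $(\vc,\eps)$-transforms with $\eps > 0$, which can be obtained from near-minimizers; alternatively one can cite the known equivalence from \cite{Barthe98, CarlenLL04, HopkinsKLM20} directly. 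A secondary technical point is justifying the Cauchy–Binet asymptotics (or the ray decomposition) rigorously, but this is routine convex analysis of log-sum-exp / log-det functions.
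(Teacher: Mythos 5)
Your proposal is correct in outline, but it is worth noting that the paper does not actually prove this proposition: it is imported as Lemma~4.19 of \cite{HopkinsKLM20}, with only a two-sentence sketch of the easy direction in the surrounding discussion. Your necessity argument (project the upper bound in \eqref{eq:scale_rip_def} onto $\mr V$ and take traces) is exactly that sketch, made precise. For sufficiency, you take a genuinely different route from the one the paper attributes to \cite{HopkinsKLM20} (an iterative construction via basis-polytope decompositions): you go through Barthe's variational argument, reducing condition~2 to membership of $\vc$ in the basis polytope $\calP(A)$ and then to boundedness/coercivity of $f$ via Cauchy--Binet. This is essentially the \cite{Barthe98, CarlenLL04} route, and the matroid step you flag as the main obstacle is precisely Proposition~\ref{prop:scaling_polytope} in the paper (itself cited from \cite{CarlenLL04}), so your proof is not more self-contained than the paper's citation unless you also prove the rank-inequality description of the base polytope and the reduction from arbitrary subsets to flats/subspaces.

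Two substantive points in your favor. First, you correctly identify that attainment of $\min f$ corresponds to $\vc$ lying in the \emph{relative interior} of $\calP(A)$, whereas condition~2 with non-strict inequalities only gives membership; e.g., for $\{\ve_1,\ve_2,\ve_1+\ve_2\}$ with $\vc=(1,\tfrac12,\tfrac12)$, condition~2 holds but the infimum of $f$ is not attained. (The paper's own statement of Proposition~\ref{prop:barthe} glosses over this distinction.) Your fix for the boundary case --- $f$ is still bounded below when $\vc\in\calP(A)$, and near-minimizers yield $(\vc,\eps)$-Forster transforms for every $\eps>0$ --- is sound, but the step ``near-minimizers suffice'' is not automatic and needs an argument; the paper's Lemma~\ref{lem:termination} supplies exactly this, and its local-perturbation proof uses only $\inf f$ rather than an attained minimizer, so it applies verbatim. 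Second, your ray/level-set computation of the asymptotic slope $\max_B\inprods{\1_B-\vc}{\vu}=\sum_j(v_j-v_{j+1})(\rank(S_j)-\vc(S_j))$ is correct (via the greedy algorithm for max-weight bases), including the observation that $\vc(S_j)\le\rank(S_j)$ follows from \eqref{eq:light_subspace} applied to $\Span\{\va_i\}_{i\in S_j}$ even though $S_j$ need not be a flat; just note that coercivity holds only modulo the lineality space of $f$, which can be strictly larger than $\Span(\1_n)$, so ``attainment'' should be argued on the quotient.
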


One direction of Proposition~\ref{prop:scaling_subspace} is straightforward: if a $k$-dimensional subspace $V$ is too ``heavy'' (i.e., \eqref{eq:light_subspace} is violated) then there still exists a heavy subspace under any transform $\mr$. Taking the trace of both sides of the definition \eqref{eq:scale_rip_def} restricted to this heavy subspace yields a contradiction for sufficiently small $\eps$. In the case of $\vc = \frac d n \1_n$, \eqref{eq:light_subspace} simply translates to no $k$-dimensional subspace containing more than $\frac{k}{d} \cdot n$ of the points. One simple way for this condition to hold is if $A$ is in \emph{general position}. Note that by taking $V = \Span(A)$, \eqref{eq:light_subspace} implies that $n \ge d$ and $\ma$ has full rank.

The other direction is significantly more challenging, and \cite{HopkinsKLM20} gives an iterative construction based on decompositions with respect to the \emph{basis polytope} of $A = \{\va_i\}_{i \in [n]}$. This is the central object in the next characterization we will use, so we define it here.

\begin{definition}[Basis polytope]\label{def:basis_polytope}
Consider a point set $A = \{\va_i\}_{i \in [n]}$.
Let $\calB \subseteq 2^{[n]}$ be the set of subsets $B \subseteq [n]$ such that $\{\va_i\}_{i \in B}$ is a basis of $\R^d$, i.e., it is a linearly-independent set that spans $\R^d$. Letting $\1_B \in \{0, 1\}^n$ denote the $0$-$1$ indicator vector of each $B \in \calB$, we let 
\[\calP(A) \defeq \conv\Par{\Brace{\1_B}_{B \in \calB}}\]
denote the \emph{basis polytope} corresponding to the independent set matroid induced by $A$. 
\end{definition}

Another convenient reformulation of the necessary and sufficient condition in Proposition~\ref{prop:scaling_subspace} was given by \cite{CarlenLL04}; see also \cite{Barthe98, DvirSW14, HopkinsKLM20} for interpretations of this condition.

\begin{proposition}[Theorem 4.4, \cite{CarlenLL04}]\label{prop:scaling_polytope}
The conditions in Proposition~\ref{prop:scaling_subspace} hold iff $\vc \in \calP(A)$.
\end{proposition}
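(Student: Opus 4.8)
Plan for proving Proposition~\ref{prop:scaling_polytope} (the statement that the conditions in Proposition~\ref{prop:scaling_subspace} hold iff $\vc \in \calP(A)$).

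The plan is to show the equivalence between the "light subspace" condition \eqref{eq:light_subspace} and membership $\vc \in \calP(A)$ purely by matroid/polyhedral combinatorics, since Proposition~\ref{prop:scaling_subspace} already supplies the link to Forster transforms. The independent sets $\{\va_i\}_{i \in S}$ form a linear matroid $M$ on ground set $[n]$ of rank $d$, with rank function $\rho(S) = \rank(\{\va_i\}_{i \in S})$. The polytope $\calP(A)$ is exactly the matroid \emph{base polytope} of $M$. The first step is to recall the standard polyhedral description of the base polytope: by Edmonds' theorem, $\calP(A) = \{\vx \in \R^n_{\ge 0} : \inprod{\1_n}{\vx} = d,\ \inprod{\1_S}{\vx} \le \rho(S)\ \forall S \subseteq [n]\}$. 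Since any $\vc$ under consideration already satisfies $\vc \in (0,1]^n$ and $\norm{\vc}_1 = d$, membership in $\calP(A)$ reduces to checking $\sum_{i \in S}\vc_i \le \rho(S)$ for every $S \subseteq [n]$.

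Next I would reconcile this with \eqref{eq:light_subspace}. The condition \eqref{eq:light_subspace} ranges over linear subspaces $V$; the matroid condition ranges over subsets $S$. The bridge is: for a $k$-dimensional subspace $V$, let $S_V \defeq \{i \in [n] : \va_i \in V\}$; then $\rho(S_V) \le k$, and conversely for any $S$ with $\rho(S) = k$, taking $V = \Span(\{\va_i\}_{i\in S})$ gives a $k$-dimensional subspace containing exactly the points indexed by (a superset of) $S$, with $\sum_{i \in S}\vc_i \le \sum_{i : \va_i \in V}\vc_i$. So the family of inequalities indexed by subspaces and the family indexed by subsets cut out the same region once we intersect with $\{\vc \ge 0,\ \norm{\vc}_1 = d\}$: it suffices to check the rank inequality on "closed" sets $S$ (flats), and these are in bijection with subspaces spanned by subsets of $A$. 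Thus \eqref{eq:light_subspace} holding for all $V$ is equivalent to $\inprod{\1_S}{\vc} \le \rho(S)$ for all $S$, which by Edmonds' description is equivalent to $\vc \in \calP(A)$.

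For the one remaining gap — justifying Edmonds' polyhedral description of the base polytope — I would either cite it as a classical result (Edmonds; see also standard references on matroid optimization) or, if a self-contained argument is wanted, prove it via the greedy algorithm / submodular-function LP duality: the "only if" (every $\1_B$ satisfies the inequalities, hence so does their convex hull) is immediate from $\rho$ being monotone with $\rho([n]) = d$; the "if" direction (every point satisfying the inequalities is a convex combination of bases) follows because the LP $\max \inprod{\vw}{\vx}$ over that polyhedron is solved by the matroid greedy algorithm at an indicator vector $\1_B$ for every objective $\vw$, so all vertices of the polyhedron are base indicators. The main obstacle here is not any single hard computation but rather making sure the reduction from arbitrary subspaces $V$ to subsets $S$ (and the role of the normalization $\norm{\vc}_1 = d$, which promotes the independence-polytope inequalities to base-polytope inequalities) is stated cleanly; I expect the authors simply invoke Proposition~\ref{prop:scaling_subspace} together with the classical matroid base polytope theorem and dispatch the proof in a few lines.
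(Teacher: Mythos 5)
The paper gives no proof of this proposition: it is imported wholesale as Theorem 4.4 of \cite{CarlenLL04}, so there is no internal argument to compare against. Your sketch is a correct, standard way to supply one. Using Proposition~\ref{prop:scaling_subspace} to reduce to the light-subspace condition \eqref{eq:light_subspace}, the remaining content is purely matroid-theoretic: $\calP(A)$ is the base polytope of the rank-$d$ linear matroid with rank function $\rho(S) = \rank(\{\va_i\}_{i \in S})$, Edmonds' description says that under $\vc \ge \0_n$ and $\norm{\vc}_1 = d = \rho([n])$ membership is equivalent to $\sum_{i \in S} \vc_i \le \rho(S)$ for all $S \subseteq [n]$, and your dictionary between subsets and subspaces ($S \mapsto \Span(\{\va_i\}_{i \in S})$ one way, $V \mapsto \{i : \va_i \in V\}$ the other, with $\vc \ge \0_n$ letting you pass from $S$ to the larger index set of its spanned subspace) correctly shows the two inequality families cut out the same region. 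Two points to make explicit in a full write-up: (i) Edmonds' base polytope theorem must be cited or proved; your greedy-algorithm/LP-duality route is the standard self-contained argument. (ii) The degenerate case $\va_i = \0_d$: \eqref{eq:light_subspace} quantifies only over $k \ge 1$, so a zero row is not directly excluded by any subspace inequality, whereas $S = \{i\}$ with $\rho(S) = 0$ does exclude it from $\calP(A)$; you should assume all rows are nonzero (as the paper implicitly does throughout, since RIP normalizes by $\norm{\va_i}_2$). The full-rank requirement, by contrast, is handled automatically on both sides ($V = \Span(A)$ versus emptiness of $\calB$), as you note implicitly.
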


\paragraph{Scaling via leverage scores.} There is a primal-dual viewpoint of Definition~\ref{def:rip}, as described in our derivation \eqref{eq:c_isotropy}, \eqref{eq:radial}. In particular, $\vc$-RIP can equivalently be viewed as being induced by a pair $(\mr, \vs)$, where the diagonal scaling $\vs$ is an implicit function of the Forster transform $\mr$.

We may ask if this correspondence goes the other direction; are there conditions on $\vs \in \R^n_{> 0}$ such that one can deduce $\mr \in \R^{d \times d}$ that scales $\ma$ to be in $\vc$-RIP? The following observation, patterned from \cite{DadushR24}, shows the answer is yes: it is enough for $\vtau(\ms\ma) = \vc$, where $\ms \defeq \diag{\vs}$.

\begin{lemma}\label{lem:lev_rip}
Given a point set $A \defeq \{\va_i\}_{i \in [n]} \subset \R^d$ and $\vc \in (0, 1]^n$ satisfying $\norm{\vc}_1 = d$, suppose \eqref{eq:light_subspace} holds. Then letting $\ma \in \R^{n \times d}$ have rows $A$, if some $\vs \in \R^n_{> 0}$ satisfies $\vtau(\ms\ma) = \vc$, where $\ms \defeq \diag{\vs}$, then
\begin{equation}\label{eq:lev_implies_rip}
\sum_{i \in [n]} \vc_i \cdot \frac{\Par{\mr\va_i}\Par{\mr\va_i}^\top}{\norm{\mr\va_i}_2^2} = \id_d \text{ for } \mr \defeq \Par{\ma^\top \ms^2 \ma}^{-\half}.
\end{equation}
More generally, for any $\eps > 0$, if $\vtau(\ms\ma) \approx_\eps \vc$, then
\begin{equation}\label{eq:lev_implies_approx_rip}
\sum_{i \in [n]} \vc_i \cdot \frac{\Par{\mr\va_i}\Par{\mr\va_i}^\top}{\norm{\mr\va_i}_2^2} \approx_{\eps} \id_d \text{ for } \mr \defeq \Par{\ma^\top \ms^2 \ma}^{-\half}.
\end{equation}
\end{lemma}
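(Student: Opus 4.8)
The plan is to verify the claimed identity \eqref{eq:lev_implies_rip} by a direct computation, unwinding the definition of $\mr = (\ma^\top \ms^2 \ma)^{-\half}$ and of the leverage scores \eqref{eq:leverage}, and then to obtain the approximate version \eqref{eq:lev_implies_approx_rip} by noting that every inequality used is monotone under the entrywise approximation $\vtau(\ms \ma) \approx_\eps \vc$. First I would record that \eqref{eq:light_subspace} guarantees (via Proposition~\ref{prop:scaling_subspace}, taking $V = \Span(A)$) that $\ma$ has full column rank $d$, and since $\vs \in \R^n_{>0}$ the matrix $\mm \defeq \ma^\top \ms^2 \ma = \sum_{i \in [n]} \vs_i^2 \va_i \va_i^\top$ is positive definite; hence $\mr = \mm^{-\half}$ is well-defined and invertible, and $\mm^\dagger = \mm^{-1}$.

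The main computation is to identify $\norm{\mr \va_i}_2^2$ with the $i$-th leverage score of $\ms\ma$. Indeed, $\norm{\mr \va_i}_2^2 = \va_i^\top \mm^{-1} \va_i$, while by \eqref{eq:leverage} applied to the matrix $\ms\ma$ (whose $i$-th row is $\vs_i \va_i$), we have $\vtau_i(\ms\ma) = \vs_i^2 \va_i^\top (\ma^\top \ms^2 \ma)^{-1} \va_i = \vs_i^2 \cdot \norm{\mr \va_i}_2^2$. Therefore $\norm{\mr\va_i}_2^2 = \vs_i^{-2}\vtau_i(\ms\ma)$, which under the hypothesis $\vtau(\ms\ma) = \vc$ equals $\vs_i^{-2}\vc_i$ (in particular each $\mr\va_i \neq \0_d$, since $\vc_i > 0$, so the normalization is legitimate). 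Substituting this into the left-hand side of \eqref{eq:lev_implies_rip} gives
\[
\sum_{i \in [n]} \vc_i \cdot \frac{(\mr\va_i)(\mr\va_i)^\top}{\norm{\mr\va_i}_2^2} = \sum_{i \in [n]} \vc_i \cdot \frac{\vs_i^2}{\vc_i}\, \mr \va_i \va_i^\top \mr^\top = \mr \Par{\sum_{i \in [n]} \vs_i^2 \va_i \va_i^\top} \mr^\top = \mm^{-\half}\, \mm\, \mm^{-\half} = \id_d,
\]
which is exactly the claim. Here the cancellation of $\vc_i$ is the crucial point: it is precisely the leverage-score condition that makes the weights $\vc_i / \norm{\mr\va_i}_2^2$ collapse to the ``right'' quantity $\vs_i^2$.

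For the approximate statement, I would keep the identity $\norm{\mr\va_i}_2^2 = \vs_i^{-2}\vtau_i(\ms\ma)$ exactly, so that
\[
\sum_{i \in [n]} \vc_i \cdot \frac{(\mr\va_i)(\mr\va_i)^\top}{\norm{\mr\va_i}_2^2} = \mr \Par{\sum_{i \in [n]} \frac{\vc_i}{\vtau_i(\ms\ma)}\, \vs_i^2 \va_i \va_i^\top} \mr^\top =: \mr \tmm \mr^\top.
\]
Since $\vtau(\ms\ma) \approx_\eps \vc$ means $\exp(-\eps)\vc_i \le \vtau_i(\ms\ma) \le \exp(\eps)\vc_i$ entrywise, every coefficient $\vc_i/\vtau_i(\ms\ma)$ lies in $[\exp(-\eps), \exp(\eps)]$; because each term $\vs_i^2 \va_i\va_i^\top$ is PSD, summing termwise gives $\exp(-\eps)\mm \preceq \tmm \preceq \exp(\eps)\mm$, and conjugating by $\mr = \mm^{-\half}$ (which preserves the Loewner order) yields $\exp(-\eps)\id_d \preceq \mr\tmm\mr^\top \preceq \exp(\eps)\id_d$, i.e.\ $\mr\tmm\mr^\top \approx_\eps \id_d$, as required by \eqref{eq:approx_mat}. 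There is no real obstacle here — the only thing to be careful about is confirming invertibility/well-definedness of $\mr$ up front (using \eqref{eq:light_subspace} and $\vs > 0$) and that the $\mr\va_i$ are nonzero so the normalized rank-one terms in the definition of RIP make sense; the algebra itself is a one-line cancellation.
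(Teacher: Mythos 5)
Your proof is correct and follows essentially the same route as the paper's: the key step in both is the identity $\vtau_i(\ms\ma) = \vs_i^2\norm{\mr\va_i}_2^2$, which makes the weights $\vc_i/\norm{\mr\va_i}_2^2$ collapse to $\vs_i^2$ (exactly or up to $\exp(\pm\eps)$ factors), after which the sum telescopes to $\mr(\ma^\top\ms^2\ma)\mr = \id_d$. The only cosmetic differences are that the paper proves the approximate version first and obtains the exact one as the $\eps \to 0$ limit, while you also spell out the well-definedness of $\mr$, which the paper leaves implicit.
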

\begin{proof}
We prove \eqref{eq:lev_implies_approx_rip}, which implies \eqref{eq:lev_implies_rip} by taking $\eps \to 0$. Indeed, by using $\vtau(\ms\ma) \approx_\eps \vc$,
\begin{align*}
\vtau_i\Par{\ms \ma} = \vs_i^2 \norm{\mr \va_i}_2^2 \approx_\eps \vc_i \implies \frac{\vc_i}{\norm{\mr \va_i}_2^2} \approx_{\eps} \vs_i^2 \text{ for all } i \in [n].
\end{align*}
This directly implies that \eqref{eq:lev_implies_approx_rip} holds:
\begin{align*}
\sum_{i \in [n]} \vc_i \cdot \frac{\Par{\mr\va_i}\Par{\mr\va_i}^\top}{\norm{\mr\va_i}_2^2} \approx_\eps \sum_{i \in [n]} \vs_i^2 \Par{\mr\va_i}\Par{\mr\va_i}^\top = \mr\Par{\sum_{i \in [n]} \vs_i^2 \va_i\va_i^\top} \mr = \id_d.
\end{align*}
\end{proof}

\paragraph{Barthe's objective.} In the rest of the paper, we fix a point set $A \defeq \{\va_i\}_{i \in [n]} \subset \R^d$, that forms the rows of $\ma \in \R^{n \times d}$. We also let $\vc \in (0, 1]^n$ satisfy $\norm{\vc}_1 = d$, such that the condition \eqref{eq:light_subspace} holds. For some $\eps \in (0, 1)$, we will give an algorithm for computing a $(\vc, \eps)$-Forster transform of $\ma$.

To ease our exposition we fix the following notation throughout, for $\vt \in \R^n$:
\begin{equation}\label{eq:notation}
\begin{aligned}
\mz\Par{\vt} &\defeq \ma^\top \diag{\exp\Par{\vt}} \ma = \sum_{i \in [n]} \exp\Par{\vt_i} \va_i\va_i^\top, \\
\mr\Par{\vt} &\defeq \mz\Par{\vt}^{-\half} = \Par{\ma^\top \diag{\exp\Par{\vt}} \ma}^{-\half},\; \\
\ms\Par{\vt} &\defeq \diag{\vs(\vt)}, \text{ where } \vs(\vt) \defeq \exp\Par{\frac{\vt}{2}}, \\
\tva_i(\vt) &\defeq \mr\Par{\vt} \va_i,\text{ for all } i \in [n].
\end{aligned}
\end{equation}
In \eqref{eq:notation}, we let $\exp$ be applied to a vector argument entrywise. Note that all of the matrices and vectors in \eqref{eq:notation} correspond to those arising in our earlier discussion, after reparameterizing the problem by $\vt = 2\log(\vs)$ entrywise. This reparameterization becomes convenient shortly.

Our approach follows the seminal work \cite{Barthe98}, which gave an algorithmic proof of Propositions~\ref{prop:scaling_subspace},~\ref{prop:scaling_polytope}, by explicitly characterizing the scaling $\vs \in \R^n_{> 0}$ in Lemma~\ref{lem:lev_rip} such that $\vtau(\ms\ma) = \vc$ for $\ms \defeq \diag{\vs}$, by way of a $\vt \in \R^n$ that achieves $\vs = \vs(\vt)$. To explain, we first define Barthe's objective:
\begin{equation}\label{eq:barthe_obj}
f\Par{\vt} \defeq -\inprod{\vc}{\vt} + \log\det\Par{\mz\Par{\vt}}.
\end{equation}
Then Barthe's result can be stated as follows.

\begin{proposition}[Proposition 6, \cite{Barthe98}]\label{prop:barthe}
Following notation \eqref{eq:notation}, \eqref{eq:barthe_obj}, $f: \R^n \to \R$ is a convex function, and its minimizer is attained iff $\ma$, $\vc$ satisfy $\norm{\vc}_1 = d$ and the condition \eqref{eq:light_subspace}. Moreover, letting $\vt^\star \defeq \argmin_{\vt \in \R^n} f(\vt)$, $\mr(\vt^\star)$ is a $\vc$-Forster transform of $\ma$.
\end{proposition}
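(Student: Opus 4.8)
The plan is to verify convexity directly from the derivative structure, then to characterize the minimizer via first-order optimality and relate this back to the leverage score characterization in Lemma~\ref{lem:lev_rip}. First I would compute $\nabla f$ and $\nabla^2 f$ explicitly. Writing $\mz(\vt) = \sum_i \exp(\vt_i) \va_i \va_i^\top$, the gradient is $[\nabla f(\vt)]_i = -\vc_i + \exp(\vt_i) \va_i^\top \mz(\vt)^{-1} \va_i = -\vc_i + \vtau_i(\ms(\vt)\ma)$, using that leverage scores are invariant under column operations and that $\vtau_i(\ms\ma) = \exp(\vt_i)\va_i^\top(\ma^\top\diag{\exp(\vt)}\ma)^{-1}\va_i$ when $\mz(\vt)$ is invertible. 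For the Hessian, differentiating $\log\det \mz(\vt)$ twice (using $\partial_t \log\det \mmu(t) = \Tr(\mmu^{-1}\partial_t \mmu)$ and $\partial_t \mmu^{-1} = -\mmu^{-1}(\partial_t\mmu)\mmu^{-1}$) yields, with $\mproj(\vt) \defeq \ms(\vt)\ma\,\mz(\vt)^{-1}\ma^\top\ms(\vt)$ the orthogonal projection onto $\Span(\ms(\vt)\ma)$, that $\nabla^2 f(\vt) = \diag{\vtau(\ms(\vt)\ma)} - \mproj(\vt)\circ\mproj(\vt)$, where $\circ$ is the Hadamard (entrywise) product. This is exactly a graph Laplacian: off-diagonal entries are $-\mproj(\vt)_{ij}^2 \le 0$, and since $\mproj(\vt)$ is a projection we have $\sum_j \mproj(\vt)_{ij}^2 = \mproj(\vt)_{ii} = \vtau_i(\ms(\vt)\ma)$, so rows sum to zero. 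Being a graph Laplacian, $\nabla^2 f(\vt) \succeq 0$ for all $\vt$, which gives convexity of $f$ on all of $\R^n$.

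Next, for the existence of a minimizer: one direction is the easy necessity argument sketched after Proposition~\ref{prop:scaling_subspace}. If some $k$-dimensional subspace $V$ violates \eqref{eq:light_subspace}, let $S = \{i : \va_i \in V\}$ so $\sum_{i \in S}\vc_i > k$. Take $\vt = \tau \1_S$ and send $\tau \to +\infty$: then $\mz(\tau\1_S)$ splits (up to lower-order terms) into a piece of order $\exp(\tau)$ supported on $V$ and an $O(1)$ piece, so $\log\det\mz(\tau\1_S) = k\tau + O(1)$, while $-\inprod{\vc}{\tau\1_S} = -\tau\sum_{i\in S}\vc_i$; hence $f(\tau\1_S) \to -\infty$ and no minimizer exists. (More carefully one picks $\vt$ supported on a flag of subspaces, but the single-subspace direction suffices to show necessity.) For sufficiency, I would show that \eqref{eq:light_subspace} together with $\norm{\vc}_1 = d$ forces the sublevel sets of $f$ to be bounded (modulo the $\1_n$-translation invariance \eqref{eq:invariant_ones}), so that a minimizer exists by Weierstrass on the quotient; coercivity on $\1_n^\perp$ follows because for any direction $\vt$ with, say, coordinates taking at most finitely many distinct values, the level-set structure of $\exp(\vt_i)$ organizes $\{\va_i\}$ into a flag $V_1 \subseteq V_2 \subseteq \cdots$, and \eqref{eq:light_subspace} applied along the flag makes the $\log\det$ term grow at least as fast as the linear term — with strict growth unless $\vt \in \Span(\1_n)$. (Alternatively, one may simply cite that this equivalence is Proposition~6 of \cite{Barthe98}, or invoke Proposition~\ref{prop:scaling_polytope} to identify feasibility with $\vc \in \calP(A)$ and argue via the supporting hyperplanes of the basis polytope.)

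Finally, for the conclusion about $\mr(\vt^\star)$: since $f$ is convex and $\vt^\star$ is an unconstrained minimizer, $\nabla f(\vt^\star) = \0_n$, i.e., $\vtau_i(\ms(\vt^\star)\ma) = \vc_i$ for all $i \in [n]$ by the gradient formula above. But this is precisely the hypothesis of Lemma~\ref{lem:lev_rip} with $\vs = \vs(\vt^\star) = \exp(\thalf\vt^\star)$: applying \eqref{eq:lev_implies_rip} gives
\[\sum_{i \in [n]} \vc_i \cdot \frac{(\mr\va_i)(\mr\va_i)^\top}{\norm{\mr\va_i}_2^2} = \id_d \text{ for } \mr = \Par{\ma^\top\ms(\vt^\star)^2\ma}^{-\thalf} = \mz(\vt^\star)^{-\thalf} = \mr(\vt^\star),\]
which is exactly the statement that $\mr(\vt^\star)$ is a $\vc$-Forster transform of $\ma$ (the matrix is invertible because \eqref{eq:light_subspace} forces $\ma$ to have full column rank, so $\mz(\vt^\star) \succ 0$).

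I expect the main obstacle to be the sufficiency half of the existence claim — showing \eqref{eq:light_subspace} implies $f$ attains its minimum. Convexity and the optimality-to-Forster-transform step are essentially bookkeeping once the derivative formulas are in hand, but the coercivity argument requires care in handling arbitrary unbounded directions $\vt$, not just those taking finitely many values; the clean way is to reduce to the boundary behavior of $\log\det\mz(\vt)$ as $\vt$ escapes to infinity along $\1_n^\perp$, sort coordinates to induce a flag, and invoke \eqref{eq:light_subspace} term-by-term along the flag. Since this equivalence is exactly Proposition~6 of \cite{Barthe98}, it is legitimate to cite it, but I would include at least the necessity direction and a sketch of sufficiency for completeness.
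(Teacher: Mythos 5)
Your proposal is correct and follows essentially the same route the paper takes: first-order optimality $\vtau(\ms(\vt^\star)\ma) = \vc$ combined with Lemma~\ref{lem:lev_rip} for the Forster-transform conclusion, and the existence claim deferred to \cite{Barthe98} (where you add a correct necessity argument and a reasonable sketch of sufficiency). The only cosmetic difference is that the paper's main discussion derives convexity from Kadison's inequality applied to $\sum_{i\in[n]}\mm_i(\vt)=\id_d$, whereas you use the graph-Laplacian/diagonal-dominance structure of $\nabla^2 f$ — but the paper itself records that alternative proof in Lemma~\ref{lem:barthe_laplacian}, so the two are interchangeable.
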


Proposition~\ref{prop:barthe} can be somewhat demystified by computing the derivatives of Barthe's objective. We introduce one additional piece of notation here:
\begin{equation}\label{eq:mi_def}
\mm_i(\vt) \defeq \vs_i(\vt)^2 \tva_i(\vt)\tva_i(\vt)^\top = \mr\Par{\vt}\Par{\exp\Par{\vt_i} \va_i\va_i^\top}\mr\Par{\vt},\text{ for all } i \in [n].
\end{equation}

\begin{fact}\label{fact:barthe_derivs}
Following notation \eqref{eq:notation}, \eqref{eq:barthe_obj}, we have for all $(i, j) \in [n] \times [n]$ that
\begin{equation}\label{eq:barthe_derivs}
\begin{aligned}
\nabla_i f\Par{\vt} &= -\vc_i + \Tr\Par{\mm_i(\vt)}, \\
\nabla^2_{ij} f\Par{\vt}
 &= \Tr\Par{\mm_i(\vt)}\ind_{i = j} - \Tr\Par{\mm_i(\vt)\mm_j(\vt)}.
\end{aligned}
\end{equation}
\end{fact}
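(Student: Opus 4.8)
The plan is to compute the gradient and Hessian of $f(\vt) = -\inprod{\vc}{\vt} + \log\det(\mz(\vt))$ directly, using standard matrix-calculus identities for $\log\det$ together with the chain rule, and then re-express the answer in terms of the matrices $\mm_i(\vt)$ of \eqref{eq:mi_def}. The linear term $-\inprod{\vc}{\vt}$ contributes $-\vc_i$ to $\nabla_i f$ and nothing to the Hessian, so the work is entirely in differentiating $g(\vt) := \log\det(\mz(\vt))$ where $\mz(\vt) = \sum_{k\in[n]}\exp(\vt_k)\va_k\va_k^\top$.

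First I would record the two ingredients. (i) For a smooth curve of invertible symmetric matrices $\mx(s)$, $\tfrac{\dd}{\dd s}\log\det(\mx(s)) = \Tr(\mx(s)^{-1}\dot{\mx}(s))$, and $\tfrac{\dd}{\dd s}\mx(s)^{-1} = -\mx(s)^{-1}\dot{\mx}(s)\mx(s)^{-1}$. (ii) $\partial_{\vt_j}\mz(\vt) = \exp(\vt_j)\va_j\va_j^\top$, since only the $j$-th summand depends on $\vt_j$ and $\tfrac{\dd}{\dd \vt_j}\exp(\vt_j) = \exp(\vt_j)$. Combining these, $\nabla_j g(\vt) = \Tr(\mz(\vt)^{-1}\exp(\vt_j)\va_j\va_j^\top) = \exp(\vt_j)\va_j^\top\mz(\vt)^{-1}\va_j$. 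Now observe that $\mz(\vt)^{-1} = \mr(\vt)^2$ by definition of $\mr(\vt)$ in \eqref{eq:notation}, so this equals $\exp(\vt_j)\norm{\mr(\vt)\va_j}_2^2 = \exp(\vt_j)\norm{\tva_j(\vt)}_2^2 = \vs_j(\vt)^2\norm{\tva_j(\vt)}_2^2 = \Tr(\mm_j(\vt))$, using $\vs_j(\vt)^2 = \exp(\vt_j)$ and the definition of $\mm_j(\vt)$; this gives the stated formula for $\nabla_i f(\vt)$.

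For the Hessian I would differentiate $\nabla_j g(\vt) = \exp(\vt_j)\Tr(\mz(\vt)^{-1}\va_j\va_j^\top)$ with respect to $\vt_i$. There are two terms by the product rule: the derivative hitting the scalar prefactor $\exp(\vt_j)$ is nonzero only when $i=j$ and yields $\exp(\vt_j)\Tr(\mz(\vt)^{-1}\va_j\va_j^\top)\ind_{i=j} = \Tr(\mm_j(\vt))\ind_{i=j}$; the derivative hitting $\mz(\vt)^{-1}$ gives, via identity (i), $\exp(\vt_j)\Tr\big(-\mz(\vt)^{-1}(\exp(\vt_i)\va_i\va_i^\top)\mz(\vt)^{-1}\va_j\va_j^\top\big) = -\exp(\vt_i)\exp(\vt_j)\Tr(\mz^{-1}\va_i\va_i^\top\mz^{-1}\va_j\va_j^\top)$. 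Using $\mz(\vt)^{-1} = \mr(\vt)^2$ and cyclicity of trace, the second term is $-\Tr\big((\exp(\vt_i)\mr\va_i\va_i^\top\mr)(\exp(\vt_j)\mr\va_j\va_j^\top\mr)\big) = -\Tr(\mm_i(\vt)\mm_j(\vt))$, by the second expression for $\mm_i(\vt)$ in \eqref{eq:mi_def}. Adding the two contributions and recalling that $-\inprod{\vc}{\vt}$ has vanishing Hessian gives $\nabla^2_{ij}f(\vt) = \Tr(\mm_i(\vt))\ind_{i=j} - \Tr(\mm_i(\vt)\mm_j(\vt))$, as claimed. Symmetry in $i,j$ of the off-diagonal term is automatic from cyclicity of trace and serves as a consistency check.

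There is no real obstacle here — the statement is a routine derivative computation — so the only thing to be careful about is bookkeeping: correctly tracking the scalar factors $\exp(\vt_i)$, consistently using the identity $\mz(\vt)^{-1} = \mr(\vt)^2$ to pass between the "$\mz$-form" and the "$\mm_i$-form", and making sure the $\ind_{i=j}$ term is attributed only to the differentiation of the prefactor. One could alternatively verify the formulas by checking the directional derivatives $\nabla f(\vt)[\vh]$ and $\nabla^2 f(\vt)[\vh,\vh]$ against $\tfrac{\dd}{\dd s}$ and $\tfrac{\dd^2}{\dd s^2}$ of $s\mapsto f(\vt+s\vh)$, which is the same computation organized slightly differently and makes the convexity statement of Proposition~\ref{prop:barthe} transparent as well (the Hessian quadratic form becomes $\sum_i\vh_i^2\Tr(\mm_i) - \Tr((\sum_i\vh_i\mm_i)^2)$, manifestly $\succeq 0$ since $\sum_i\mm_i = \id_d$ and the $\mm_i$ are PSD, by Cauchy–Schwarz/convexity of $\mx\mapsto\mx^2$).
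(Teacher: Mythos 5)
Your computation is correct: the paper states Fact~\ref{fact:barthe_derivs} without proof, and your derivation via the standard identities $\tfrac{\dd}{\dd s}\log\det(\mx)=\Tr(\mx^{-1}\dot{\mx})$ and $\tfrac{\dd}{\dd s}\mx^{-1}=-\mx^{-1}\dot{\mx}\mx^{-1}$, followed by rewriting in terms of $\mm_i(\vt)=\mr(\vt)(\exp(\vt_i)\va_i\va_i^\top)\mr(\vt)$ using $\mz(\vt)^{-1}=\mr(\vt)^2$ and cyclicity of trace, is exactly the intended routine verification. The bookkeeping of the $\exp(\vt_j)$ prefactor producing the $\ind_{i=j}$ term is handled correctly.
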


From Fact~\ref{fact:barthe_derivs} we can glean several different parts of Proposition~\ref{prop:barthe}. For example, the fact that $\sum_{i \in [n]} \mm_i(\vt) = \mr(\vt) \mz(\vt) \mr(\vt) = \id_d$, combined with Kadison's inequality \cite{Kadison52} (see also Theorem 2.3.2, \cite{Bhatia07}), shows that for all vectors $\vv \in \R^d$,
\begin{align*}
\Par{\sum_{i \in [n]} \vv_i \mm_i(\vt)}^2 \preceq \sum_{i \in [n]} \vv_i^2 \mm_i(\vt).
\end{align*}
In particular, taking a trace of both sides above shows
\begin{align*}
\nabla^2 f(\vt)[\vv, \vv] = \Tr\Par{\sum_{i \in [n]} \vv_i^2 \mm_i(\vt)} - \Tr\Par{\Par{\sum_{i \in [n]} \vv_i \mm_i(\vt)}^2} \ge 0,
\end{align*}
which implies that $f$ is convex. Similarly, letting $\vt^\star$ minimize $f$, we have from \eqref{eq:barthe_derivs} that
\begin{equation}\label{eq:optimality}
\vc_i = \Tr\Par{\mm_i(\vt^\star)} = \vtau_i\Par{\ms\Par{\vt^\star} \ma} \text{ for all } i \in [n].
\end{equation}
Using the characterization of $\ms(\vt^\star)$ in \eqref{eq:optimality} as obtaining the leverage scores $\vc$, and applying Lemma~\ref{lem:lev_rip}, we have shown that $\mr(\vt^\star)$ is indeed a $\vc$-Forster transform of $\ma$, as stated in Proposition~\ref{prop:barthe}.
\section{Optimizing Barthe's Objective via Newton's Method}\label{sec:newton}

In this section, we give our algorithm for computing approximate Forster transforms. Our algorithm is a variant of the box-constrained Newton's method of \cite{CohenMTV17}, which solves box-constrained quadratics to optimize a Hessian-stable function in $\norm{\cdot}_\infty$ to high precision. 

We first make our key technical observation in Section~\ref{ssec:barthe_stable}: that Barthe's objective is Hessian-stable with respect to $\norm{\cdot}_\infty$. We then give a termination condition in Section~\ref{ssec:terminate} that suffices for $\vt \in \R^n$ to induce an $\eps$-Forster transform $\mr(\vt)$. In Section~\ref{ssec:box_newton}, we leverage our implicit Laplacian sparsification algorithm from Section~\ref{sec:laplacian} to implement the iteration of the \cite{CohenMTV17} Newton's method. We put all the pieces together in Section~\ref{ssec:main_proof} to give our main result.

Throughout this section, we fix a pair $\ma \in \R^{n \times d}$ and $\vc \in (0, 1]^n$ satisfying $\norm{\vc}_1 = d$ and \eqref{eq:light_subspace}. We follow the notation outlined in Section~\ref{ssec:rip}, in particular, \eqref{eq:notation}, \eqref{eq:barthe_obj}, and \eqref{eq:mi_def}. We also will state our results under the diameter bound in Assumption~\ref{assume:simplify}.

\subsection{Hessian stability of Barthe's objective}\label{ssec:barthe_stable}

In this section, we prove the following key structural result enabling our approach.

\begin{restatable}{proposition}{restatehsb}\label{prop:hessian_stable_barthe}
For all $r > 0$, $f$ is $(r, 2r)$-Hessian stable with respect to $\norm{\cdot}_\infty$.
\end{restatable}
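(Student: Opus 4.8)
The plan is to show that for any $\vt, \vt' \in \R^n$ with $\norm{\vt - \vt'}_\infty \le r$, we have $\nabla^2 f(\vt) \approx_{2r} \nabla^2 f(\vt')$, following \eqref{eq:approx_mat}. By integrating along the segment from $\vt$ to $\vt'$ and using $\norm{\vt-\vt'}_\infty \le r$, it suffices to establish the infinitesimal version: a bound of the form $-2 \nabla^2 f(\vt)[\vv,\vv] \le \nabla^3 f(\vt)[\vv,\vv,\vw] \le 2\nabla^2 f(\vt)[\vv,\vv]$ whenever $\norm{\vw}_\infty \le 1$, for all $\vv \in \R^n$. (More precisely, one shows $\frac{\dd}{\dd s}\log\inprod{\vv}{\nabla^2 f(\vt + s\vw)\vv}$ is bounded in absolute value by $2\norm{\vw}_\infty$, which upon integrating from $s=0$ to $s=1$ with $\vt' = \vt + \vw$ gives the claim; the one-dimensional softmax analogue in \cite{CarmonJJJLST20} proceeds this way.)

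\textbf{Computing the third derivative.} First I would differentiate the Hessian formula from Fact~\ref{fact:barthe_derivs}. Recall $\nabla^2_{ij} f(\vt) = \Tr(\mm_i)\ind_{i=j} - \Tr(\mm_i \mm_j)$ where $\mm_i = \mm_i(\vt)$. The key subcomputation is $\frac{\dd}{\dd s}\mm_i(\vt + s\vw)\big|_{s=0}$. Writing $\mm_i = \mr(\vt)(\exp(\vt_i)\va_i\va_i^\top)\mr(\vt)$ with $\mr(\vt) = \mz(\vt)^{-1/2}$ and $\mz(\vt) = \sum_k \exp(\vt_k)\va_k\va_k^\top$, and using $\frac{\dd}{\dd s}\mz = \sum_k \vw_k \exp(\vt_k)\va_k\va_k^\top$, the derivative of $\mm_i$ decomposes into a ``diagonal'' term $\vw_i \mm_i$ coming from differentiating $\exp(\vt_i)$, and a ``coupling'' term coming from differentiating $\mr(\vt)$ twice (once on each side). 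Modulo the subtlety that $\mz^{-1/2}$ does not have as clean a derivative as $\mz^{-1}$, the upshot I expect is an identity of the shape $\nabla^3 f(\vt)[\vv,\vv,\vw] = \sum_i \vw_i\left( \vv_i^2 \Tr(\mm_i) - 2\vv_i \Tr(\mm_i \mn) + (\text{terms in } \Tr(\mm_i\mm_j\cdot)) \right)$ where $\mn \defeq \sum_j \vv_j \mm_j$; the precise bookkeeping is routine but must be done carefully. An alternative and cleaner route, avoiding $\mz^{-1/2}$ entirely, is to work with $g(\vt) \defeq \frac{1}{2}f(2\vt)$ or to exploit that $\log\det$ derivatives live in the trace calculus of $\mm_i$'s directly: the identity $\nabla^2 f(\vt)[\vv,\vv] = \Tr(\sum_i \vv_i^2 \mm_i) - \Tr(\mn^2)$ with $\sum_i \mm_i = \id_d$ already packages everything, and differentiating this expression in the direction $\vw$ using $\frac{\dd}{\dd s}\mm_i = \vw_i\mm_i - \mm_i\mb - \mb\mm_i$ with $\mb \defeq \sum_k \vw_k\mm_k$ (the correct first-order variation, since $\sum_i \frac{\dd}{\dd s}\mm_i = 0$ as required by $\sum_i\mm_i = \id_d$) gives $\nabla^3 f[\vv,\vv,\vw]$ as an explicit sum of traces of products of three $\mm_i$'s.

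\textbf{The operator-algebra inequality.} The heart of the argument — and the step I expect to be the main obstacle — is bounding $|\nabla^3 f(\vt)[\vv,\vv,\vw]|$ by $2\norm{\vw}_\infty \nabla^2 f(\vt)[\vv,\vv]$. The paper's introduction signals that this uses Kadison's inequality: the $\mm_i \succeq 0$ with $\sum_i \mm_i = \id_d$ behave like a positive operator-valued measure / unital completely positive map $\Phi(\vx) \defeq \sum_i \vx_i \mm_i$, and Kadison's inequality $\Phi(\vx)^2 \preceq \Phi(\vx^{\circ 2})$ (already quoted in the excerpt in the form $(\sum_i \vv_i\mm_i)^2 \preceq \sum_i \vv_i^2 \mm_i$) is the convexity tool. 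The plan is: WLOG normalize $\norm{\vw}_\infty \le 1$, so $-\id \preceq \mb \preceq \id$ where $\mb = \Phi(\vw)$. Then rewrite the third-derivative trace expression so that the desired two-sided bound becomes a statement that some explicit quadratic form in the $\mm_i$'s is dominated, in both directions, by $\nabla^2 f[\vv,\vv] = \Tr(\Phi(\vv^{\circ 2})) - \Tr(\Phi(\vv)^2) = \Tr(\Phi(\vv^{\circ 2}) - \mn^2) \ge 0$ where $\mn = \Phi(\vv)$. Concretely, after differentiating I expect terms like $\sum_i \vw_i\vv_i^2\Tr(\mm_i)$, $\Tr(\mb\mn^2)$, and $\Tr(\mb \Phi(\vv^{\circ 2}))$; grouping them and using $-\id \preceq \mb \preceq \id$ together with $\Phi(\vv^{\circ 2}) - \mn^2 \succeq 0$ (Kadison) and $\Phi(\vv^{\circ 2}) \succeq 0$ should yield both inequalities with the clean constant $2$. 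The delicate points will be (i) getting the variation $\frac{\dd}{\dd s}\mm_i$ exactly right including the $\mm_i\mb + \mb\mm_i$ symmetrization, and (ii) arranging the trace terms so the Kadison bound applies to a manifestly PSD combination rather than to an indefinite one; I anticipate needing the slightly stronger ``operator Kadison'' statement $\Phi(\vv)\Phi(\vv^{\circ 2})^{-1}\Phi(\vv) \preceq \Phi(\vv^{\circ 2})$-style manipulations, or a cyclicity-of-trace rearrangement, to close the gap. Once the infinitesimal bound $\big|\frac{\dd}{\dd s}\log \nabla^2 f(\vt+s\vw)[\vv,\vv]\big| \le 2\norm{\vw}_\infty$ is in hand, integrating over $s \in [0,1]$ and taking $\vt' = \vt + \vw$ immediately gives $\nabla^2 f(\vt') \approx_{2\norm{\vw}_\infty} \nabla^2 f(\vt)$, which is exactly $(r,2r)$-Hessian stability since $\norm{\vw}_\infty \le r$.
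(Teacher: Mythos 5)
Your route is genuinely different from the paper's, which never touches the third derivative: the paper embeds the two Hessians into $(n+1)d \times (n+1)d$ block matrices built from the $\exp(\vt_i)\va_i\va_i^\top$, shows those are $\approx_r$, pushes the approximation through a Schur complement (via a monotonicity lemma for Schur complements) and a Kronecker product $\vv\vv^\top \otimes \mz(\vt)^{-1}$, and finally pairs the two resulting $\approx_r$ facts in a trace inner product to incur the factor $2$. Kadison's inequality is only the inspiration there, not an ingredient. Your quasi-self-concordance plan (bound $|\nabla^3 f[\vv,\vv,\vw]|$ by $2\norm{\vw}_\infty \nabla^2 f[\vv,\vv]$ and integrate the logarithm of the quadratic form) is precisely the approach the paper says does not naturally extend to non-commuting variables; it does in fact go through, so a completed version would be a legitimate and arguably more standard alternative proof. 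As written, however, it has two concrete gaps.

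First, your variation formula $\frac{\dd}{\dd s}\mm_i = \vw_i\mm_i - \mm_i\mb - \mb\mm_i$ is wrong: summing over $i$ gives $\mb - 2\mb = -\mb \neq \0_{d \times d}$, contradicting the sanity check you yourself invoke. The correct symmetrization (valid inside traces, where one sidesteps the Fr\'echet derivative of $\mz^{-1/2}$ by working with $\mz^{-1}$, $\mc_i = \exp(\vt_i)\va_i\va_i^\top$, and cyclicity) is $\vw_i\mm_i - \tfrac12(\mm_i\mb + \mb\mm_i)$; your version doubles two of the four trace terms in the third derivative. Second, and more importantly, the central inequality is left as an expectation, and the strategy you describe does not close it. The correct third derivative is
\[\nabla^3 f(\vt)[\vv,\vv,\vw] = \sum_{i\in[n]}\vv_i^2\vw_i\Tr(\mm_i) - \Tr\Par{\mb\sum_{i\in[n]}\vv_i^2\mm_i} - 2\sum_{i\in[n]}\vv_i\vw_i\Tr(\mm_i\mn) + 2\Tr\Par{\mb\mn^2},\]
with $\mn \defeq \sum_i \vv_i\mm_i$ and $\mb \defeq \sum_i \vw_i\mm_i$, and bounding these pieces separately using $-\id_d \preceq \mb \preceq \id_d$, Kadison, and positivity of $\sum_i\vv_i^2\mm_i$ leaves a surplus of $4\Tr(\mn^2)$ on the wrong side of the target $2\norm{\vw}_\infty(\Tr(\sum_i\vv_i^2\mm_i) - \Tr(\mn^2))$. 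The missing idea is the regrouping
\[\nabla^3 f(\vt)[\vv,\vv,\vw] = \sum_{i\in[n]}\vw_i\Tr\Par{\mm_i\Par{\vv_i\id_d - \mn}^2} + \Tr\Par{\mb\Par{\mn^2 - \sum_{i\in[n]}\vv_i^2\mm_i}},\]
where each summand of the first sum is nonnegative and the unweighted sum equals exactly $\nabla^2 f(\vt)[\vv,\vv]$, while the second term is bounded in magnitude by $\norm{\vw}_\infty\nabla^2 f(\vt)[\vv,\vv]$ via Kadison; each bracket then contributes one factor of $\norm{\vw}_\infty\nabla^2 f(\vt)[\vv,\vv]$, yielding the constant $2$. (One must also dispose of the degenerate case $\vv^\top\nabla^2 f \vv = 0$ along the segment, e.g.\ by Gr\"onwall.) Without this identity and with the erroneous variation formula, the proposal is a plausible plan but not yet a proof.
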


A similar result to Proposition~\ref{prop:hessian_stable_barthe} was previously established for the softmax objective 
\[\vt \to \log\Par{\sum_{i \in [n]} \exp(\vt_i)},\]
in Lemma 14, \cite{CarmonJJJLST20}, using more elementary techniques, i.e., directly establishing that the softmax satisfies a third-order regularity property called \emph{quasi-self-concordance}. 

Due to complications arising from Barthe's objective being defined with respect to potentially non-commuting matrices, we follow a substantially different approach in this section. We need the following helper lemma, where we define the \emph{Schur complement}
\begin{equation}\label{eq:sc_def}
\Schur\Par{\mm, S} \defeq \mm_{S:S} - \mm_{S:S^c} \mm_{S^c: S^c}^\dagger \mm_{S^c: S},
\end{equation}
for any square matrix $\mm \in \R^{d \times d}$, subset $S \subseteq [d]$, and $S^c \defeq [d] \setminus S$.

\begin{lemma}\label{lem:schur_approx}
If $\mm, \mn \in \PSD^{d \times d}$ and $\mm \approx_\eps \mn$, then 
\[\Schur\Par{\mm, S} \approx_\eps \Schur\Par{\mn, S} \text{ for all } S \subseteq [d].\]
\end{lemma}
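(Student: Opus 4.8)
The plan is to reduce the claim to two elementary properties of the Schur complement map $\mm \mapsto \Schur(\mm, S)$ — Loewner monotonicity and positive homogeneity — both of which follow transparently from the variational (energy-minimization) characterization of the Schur complement. So I would first establish that identity. For $\mm \in \PSD^{d\times d}$, writing $\tvect{\vv}{\vw} \in \R^d$ for the vector whose $S$-coordinates are $\vv \in \R^{S}$ and whose $S^c$-coordinates are $\vw \in \R^{S^c}$, I claim that for every $\vv \in \R^{S}$,
\[
\vv^\top \Schur(\mm, S)\vv \;=\; \min_{\vw \in \R^{S^c}} \tvect{\vv}{\vw}^\top \mm \, \tvect{\vv}{\vw}.
\]
To prove this I would expand $\tvect{\vv}{\vw}^\top \mm\, \tvect{\vv}{\vw} = \vv^\top \mm_{S:S}\vv + 2\vv^\top \mm_{S:S^c}\vw + \vw^\top \mm_{S^c:S^c}\vw$ and complete the square in $\vw$: since $\mm \succeq \0$, one has $\mathrm{range}(\mm_{S^c:S}) \subseteq \mathrm{range}(\mm_{S^c:S^c})$, so the objective depends on $\vw$ only through its component in $\mathrm{range}(\mm_{S^c:S^c})$, the minimizer is $\vw = -\mm_{S^c:S^c}^\dagger \mm_{S^c:S}\vv$, and substituting back recovers exactly $\vv^\top\big(\mm_{S:S} - \mm_{S:S^c}\mm_{S^c:S^c}^\dagger \mm_{S^c:S}\big)\vv = \vv^\top \Schur(\mm,S)\vv$ by \eqref{eq:sc_def}. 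The same computation also shows $\Schur(\mm,S) \succeq \0$, so the relation $\approx_\eps$ between two Schur complements is meaningful.

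With the identity in hand the lemma is immediate. Write $\mm \approx_\eps \mn$ as $\exp(-\eps)\mn \preceq \mm \preceq \exp(\eps)\mn$. For each fixed $\vv \in \R^S$ and $\vw \in \R^{S^c}$ the scalar inequality $\tvect{\vv}{\vw}^\top \mm\, \tvect{\vv}{\vw} \le \exp(\eps)\,\tvect{\vv}{\vw}^\top \mn\, \tvect{\vv}{\vw}$ holds; taking $\min_{\vw}$ on both sides (the minimum of a pointwise-dominated family is still dominated, and both minima are attained) yields $\vv^\top \Schur(\mm,S)\vv \le \exp(\eps)\,\vv^\top \Schur(\mn,S)\vv$, and the reverse bound $\vv^\top \Schur(\mm,S)\vv \ge \exp(-\eps)\,\vv^\top \Schur(\mn,S)\vv$ is symmetric. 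Letting $\vv$ range over $\R^S$ gives $\Schur(\mm,S) \approx_\eps \Schur(\mn,S)$. Equivalently, the variational identity exhibits $\mm \mapsto \vv^\top\Schur(\mm,S)\vv$ as a pointwise infimum of Loewner-monotone linear functionals of $\mm$, hence monotone, and it is clearly positively homogeneous, so it carries the sandwich $\exp(-\eps)\mn \preceq \mm \preceq \exp(\eps)\mn$ through.

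The only genuinely delicate step — and the part I would write out most carefully — is the pseudoinverse bookkeeping in the variational identity when $\mm_{S^c:S^c}$ is rank-deficient: one must check that the completed-square minimizer is attained and that the null-space component of $\vw$ cannot decrease the objective (both handled by the PSD range containment $\mathrm{range}(\mm_{S^c:S}) \subseteq \mathrm{range}(\mm_{S^c:S^c})$), and that $\Schur(\mm,S) \succeq \0$. These are standard facts about the generalized Schur complement of a PSD block matrix, so I would state them with a brief verification (or cite a standard reference, e.g.\ \cite{Bhatia07}) and keep the written proof essentially to the two displays above.
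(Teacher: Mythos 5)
Your proof is correct and follows essentially the same route as the paper: both rest on the variational characterization $\vv^\top \Schur(\mm,S)\vv = \min_{\vw} \tvect{\vv}{\vw}^\top \mm\, \tvect{\vv}{\vw}$, which yields Loewner monotonicity of the Schur complement, combined with positive homogeneity to carry the two-sided bound $\exp(-\eps)\mn \preceq \mm \preceq \exp(\eps)\mn$ through. The only difference is cosmetic — you prove the variational identity by completing the square (with the pseudoinverse caveats handled), whereas the paper cites it from a standard reference.
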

\begin{proof}
It suffices to show that if $\ma, \mb \in \PSD^{d \times d}$,
\begin{equation}\label{eq:schur_preserve}
\ma \succeq \mb \implies \Schur\Par{\ma, S} \succeq \Schur\Par{\mb, S}.
\end{equation}
The claim then follows by applying \eqref{eq:schur_preserve} with $(\ma, \mb) \gets (\exp(\eps) \mm, \mn)$ and $\gets (\exp(\eps)\mn, \mm)$, since $\Schur(\alpha \mm, S) = \alpha \Schur(\mm, S)$ for any scaling coefficient $\alpha \in \R$. 

We now establish \eqref{eq:schur_preserve}. It is well-known (see, e.g., Appendix A.5.5 of \cite{BoydV04}) that 
\begin{align*}
\vx^\top \Schur\Par{\ma, S} \vx = \min_{\vy \in \R^{S^c}} \begin{pmatrix} \vx \\ \vy \end{pmatrix}^\top \ma \begin{pmatrix} \vx \\ \vy \end{pmatrix}
\end{align*}
for all $S \subseteq [d]$, $\vx \in \R^S$. Now \eqref{eq:schur_preserve} follows from
\begin{align*}
\vx^\top \Schur\Par{\ma, S} \vx = \min_{\vy \in \R^{S^c}} \begin{pmatrix} \vx \\ \vy \end{pmatrix}^\top \ma \begin{pmatrix} \vx \\ \vy \end{pmatrix} \ge \min_{\vy \in \R^{S^c}} \begin{pmatrix} \vx \\ \vy \end{pmatrix}^\top \mb \begin{pmatrix} \vx \\ \vy \end{pmatrix} = \vx^\top \Schur\Par{\mb, S} \vx.
\end{align*}
\end{proof}

We are now ready to prove Proposition~\ref{prop:hessian_stable_barthe}.

\begin{proof}[Proof of Proposition~\ref{prop:hessian_stable_barthe}]
Throughout, fix $\vt, \vt' \in \R^n$ with $\norm{\vt - \vt'}_\infty \le r$. Our goal is to show
\[\nabla^2 f(\vt) \approx_{2r} \nabla^2 f(\vt').\]
We follow the notation \eqref{eq:notation}, \eqref{eq:mi_def}, and whenever the argument is dropped, it is implied to be at $\vt$; we will use a superscript $'$ whenever the argument is at $\vt'$. So, for example, $\mm_i \equiv \mm_i(\vt)$ and $\mm'_i \equiv \mm_i(\vt')$ for all $i \in [n]$. Also, to ease notation in this proof we define
\[\mc_i \defeq \exp\Par{\vt_i} \va_i\va_i^\top,\; \mc'_i \defeq \exp\Par{\vt'_i} \va_i\va_i^\top,\]
so that $\mm_i = \mr \mc_i \mr$ for all $i \in [n]$. We first claim that for all $i \in [n]$,
\begin{equation}\label{eq:block_stable}
\begin{pmatrix}
        \mc_i&\0&\cdots&\0&\mc_i&\0&\cdots&\0\\
        \0&\0&\cdots&\0&\0&\0&\cdots&\0\\
        \vdots&\vdots&\ddots&\vdots&\vdots&\vdots&\ddots&\vdots\\
        \0&\0&\cdots&\0&\0&\0&\cdots&\0\\
        \mc_i&\0&\cdots&\0&\mc_i&\0&\cdots&\0\\
        \0&\0&\cdots&\0&\0&\0&\cdots&\0\\
        \vdots&\vdots&\ddots&\vdots&\vdots&\vdots&\ddots&\vdots\\
        \0&\0&\cdots&\0&\0&\0&\cdots&\0\\
        \end{pmatrix} \approx_r \begin{pmatrix}
        \mc'_i &\0&\cdots&\0&\mc'_i &\0&\cdots&\0\\
        \0&\0&\cdots&\0&\0&\0&\cdots&\0\\
        \vdots&\vdots&\ddots&\vdots&\vdots&\vdots&\ddots&\vdots\\
        \0&\0&\cdots&\0&\0&\0&\cdots&\0\\
        \mc'_i &\0&\cdots&\0&\mc'_i &\0&\cdots&\0\\
        \0&\0&\cdots&\0&\0&\0&\cdots&\0\\
        \vdots&\vdots&\ddots&\vdots&\vdots&\vdots&\ddots&\vdots\\
        \0&\0&\cdots&\0&\0&\0&\cdots&\0\\
        \end{pmatrix},
\end{equation}
where both matrices in \eqref{eq:block_stable} have dimensions $(n + 1)d \times (n + 1)d$, and only the $(1, 1)$, $(i + 1, 1)$, $(1, i + 1)$, and $(i + 1, i + 1)$-indexed $d \times d$ blocks are nonzero. We can verify \eqref{eq:block_stable} by direct expansion with respect to a $2d$-dimensional test vector with blocks $\vx, \vy$, which reduces the claim to
\[(\vx + \vy)^\top \mc_i (\vx + \vy) \approx_r (\vx + \vy)^\top \mc'_i (\vx + \vy) \impliedby \mc_i \approx_r \mc'_i,\]
where the latter fact above follows from $\norm{\vt - \vt'}_\infty \le r$. Summing \eqref{eq:block_stable} for all $i \in [n]$ shows
\begin{align*}
\ml \approx_r \ml', \text{ where } \ml &\defeq \begin{pmatrix}
    \sum_{i\in[n]} \mc_i & \mc_1 & \mc_2 & \cdots & \mc_n \\
    \mc_1 & \mc_1 & \0 & \cdots & \0 \\
    \mc_2 & \0 & \mc_2 & \cdots & \0 \\
    \vdots & \vdots & \vdots & \ddots & \vdots \\
    \mc_n & \0 & \0 & \cdots & \mc_n \\
    \end{pmatrix},\\
\ml' &\defeq \begin{pmatrix}
    \sum_{i\in[n]} \mc'_i & \mc'_1 & \mc'_2 & \cdots & \mc'_n \\
    \mc'_1 & \mc'_1 & \0 & \cdots & \0 \\
    \mc'_2 & \0 & \mc'_2 & \cdots & \0 \\
    \vdots & \vdots & \vdots & \ddots & \vdots \\
    \mc'_n & \0 & \0 & \cdots & \mc'_n \\
    \end{pmatrix}.
\end{align*}
Next, using Lemma~\ref{lem:schur_approx} to take Schur complements of $\ml, \ml'$ onto the index set $[(n + 1)d] \setminus [d]$ shows 
\begin{align*}
\mk \approx_r \mk', \text{ where } \mk &\defeq \begin{pmatrix}
    \mc_1 - \mc_1 \mz^{-1} \mc_1 & \cdots & -\mc_1 \mz^{-1} \mc_n \\
    -\mc_2 \mz^{-1} \mc_1 & \cdots & -\mc_2 \mz^{-1} \mc_n \\
    \vdots & \ddots & \vdots \\
    -\mc_n \mz^{-1} \mc_1 & \cdots & \mc_n - \mc_n \mz^{-1} \mc_n \\
    \end{pmatrix},\\
\mk' &\defeq \begin{pmatrix}
    \mc'_1 - \mc'_1 (\mz')^{-1} \mc'_1 & \cdots & -\mc'_1 (\mz')^{-1} \mc'_n \\
    -\mc'_2 (\mz')^{-1} \mc'_1 & \cdots & -\mc'_2 (\mz')^{-1} \mc'_n \\
    \vdots & \ddots & \vdots \\
    -\mc'_n (\mz')^{-1} \mc'_1 & \cdots & \mc'_n - \mc'_n (\mz')^{-1} \mc'_n \\
    \end{pmatrix},
\end{align*}
where we used that $\mz = \sum_{i \in [n]} \mc_i$ and $\mz' = \sum_{i \in [n]} \mc'_i$. Finally, fix some vector $\vv \in \R^n$. Let
\begin{align*}
\mj \defeq \vv\vv^\top \otimes \mz^{-1} = \begin{pmatrix}
    \vv_1^2 \mz^{-1} & \vv_1 \vv_2 \mz^{-1} & \vv_1 \vv_3 \mz^{-1} & \cdots & \vv_1 \vv_n \mz^{-1} \\
    \vv_1 \vv_2 \mz^{-1} & \vv_2^2 \mz^{-1} & \vv_2 \vv_3 \mz^{-1} & \cdots & \vv_2 \vv_n \mz^{-1} \\
    \vv_1 \vv_3 \mz^{-1} & \vv_2 \vv_3 \mz^{-1} & \vv_3^2 \mz^{-1} & \cdots & \vv_3 \vv_n \mz^{-1} \\
    \vdots & \vdots & \vdots & \ddots & \vdots \\
    \vv_1 \vv_n \mz^{-1} & \vv_2 \vv_n \mz^{-1} & \vv_3 \vv_n \mz^{-1} & \cdots & \vv_n^2 \mz^{-1} \\
    \end{pmatrix},
\end{align*}
where $\otimes$ denotes the Kronecker product. Similarly define $\mj' \defeq \vv\vv^\top \otimes (\mz')^{-1}$. Because we established each $\mc_i \approx_r \mc'_i$, we also have $\mz \approx_r \mz'$ and thus $\mz^{-1} \approx_r (\mz')^{-1}$. By well-known properties of the Kronecker product (cf.\ Theorem 2.3, \cite{Schacke13}), we conclude that $\mj \approx_r \mj'$. We have thus shown:
\[\mk \approx_r \mk',\; \mj \approx_r \mj'.\]
Finally, it is standard that if $\ma \preceq \mb$ and $\mc \preceq \md$ then $\inprod{\ma}{\mc} \le \inprod{\mb}{\md}$, if all of $\ma, \mb, \mc, \md$ are PSD matrices of the same dimension. Thus, $\inprod{\mk}{\mj} \approx_{2r} \inprod{\mk'}{\mj'}$. The conclusion follows upon realizing
\begin{align*}\inprod{\mk}{\mj} &= \sum_{i \in [n]} \vv_i^2 \Tr\Par{\mz^{-1} \mc_i} - \sum_{(i, j) \in [n] \times [n]} \vv_i\vv_j \Tr\Par{\mz^{-1}\mc_i\mz^{-1}\mc_j} \\
&= \sum_{i \in [n]} \vv_i^2 \Tr\Par{\mm_i} - \sum_{(i, j) \in [n] \times [n]} \vv_i\vv_j \Tr\Par{\mm_i\mm_j} = \vv^\top \nabla^2 f(\vt)\vv, \end{align*}
and similarly, $\inprod{\mk'}{\mj'} = \vv^\top \nabla^2 f(\vt') \vv$, by 
comparing to Fact~\ref{fact:barthe_derivs} and using the cyclic property of trace. This establishes $\vv^\top \nabla^2 f(\vt) \vv \approx_{2r} \vv^\top \nabla^2 f(\vt') \vv$ for all $\vv \in \R^n$, as desired.
\end{proof}

\subsection{Termination condition}\label{ssec:terminate}

In this section, we quantify the suboptimality gap (with respect to Barthe's objective) needed for $\vt \in \R^n$ to induce a $(\vc, \eps)$-Forster transform $\mr(\vt)$, as a function of $\eps$ and problem parameters. Our proof makes use of local adjustments and is inspired by a similar technique in \cite{CohenMTV17}.

\begin{lemma}\label{lem:termination}
Let $\eps \in (0, 1)$, and suppose $\vt \in \R^n$ satisfies
\begin{equation}\label{eq:error_bound}f\Par{\vt} - f\Par{\vt^\star} \le \frac{\eps^2\min_{i \in [n]} \vc_i^2} 2 \end{equation}
where $\vt^\star \in \argmin_{\vt \in \R^n} f(\vt)$. Then $\mr(\vt)$ is a $(\vc, \eps)$-Forster transform.
\end{lemma}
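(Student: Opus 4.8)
The plan is to translate a small suboptimality gap in Barthe's objective $f$ into a bound on how far the leverage scores $\vtau(\ms(\vt)\ma)$ are from the target $\vc$, and then invoke Lemma~\ref{lem:lev_rip}. The starting point is that by Fact~\ref{fact:barthe_derivs}, $\nabla_i f(\vt) = \Tr(\mm_i(\vt)) - \vc_i = \vtau_i(\ms(\vt)\ma) - \vc_i$, so controlling $\nabla f(\vt)$ is exactly controlling the deviation of the leverage scores from $\vc$. The subtle point is that $f$ is not strongly convex (it is invariant under $\vt \mapsto \vt + \alpha\1_n$ by \eqref{eq:invariant_ones}), so a small gap $f(\vt) - f(\vt^\star)$ does not by itself force $\vt$ to be near $\vt^\star$; hence we cannot directly bound $\norm{\nabla f(\vt)}$ by $\norm{\vt - \vt^\star}$ via smoothness. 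Instead, I would argue by contradiction using a \emph{local adjustment}: if some coordinate gradient $\nabla_i f(\vt)$ is large, then moving $\vt$ a small amount in the direction $\pm\ve_i$ decreases $f$ by more than the allotted gap, contradicting \eqref{eq:error_bound}.

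Concretely, suppose for contradiction that $\abs{\nabla_i f(\vt)} \ge \eps \vc_{\min}$ for some $i$ (the target being $\vtau(\ms\ma) \approx_\eps \vc$ entrywise, which combined with $\vtau_i = \Tr(\mm_i) \ge 0$ and the fact that both sides have $\ell_1$-norm $d$, we need to be a bit careful — but a deviation bound of the form $\abs{\vtau_i - \vc_i} \le \eps\vc_i$ for all $i$, i.e.\ $\vtau(\ms\ma)\approx_\eps\vc$, suffices via \eqref{eq:lev_implies_approx_rip}). Consider the restriction $g(\eta) \defeq f(\vt + \eta\ve_i)$. This is a one-dimensional function whose second derivative is $\nabla^2_{ii}f(\vt+\eta\ve_i) = \Tr(\mm_i) - \Tr(\mm_i^2) \le \Tr(\mm_i) = \nabla_i f + \vc_i \le 1 + \vc_i \le 2$ near $\eta = 0$ (using Fact~\ref{fact:lev} that leverage scores lie in $[0,1]$, so $\vc_i \le 1$); more carefully one uses Hessian stability (Proposition~\ref{prop:hessian_stable_barthe}) or the crude bound $\Tr(\mm_i)\le 1$ after observing $\sum_i \mm_i = \id_d$ implies each $\mm_i \preceq \id_d$, hence $\Tr(\mm_i) \le \min(d, \text{something}) $ — actually $\Tr(\mm_i(\vt))=\vtau_i(\ms(\vt)\ma)\in[0,1]$ directly by Fact~\ref{fact:lev}, so $g''\le 1$ everywhere. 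Then by Taylor expansion, minimizing the quadratic upper bound $g(\eta) \le g(0) + g'(0)\eta + \thalf\eta^2$ at $\eta = -g'(0)$ gives $f(\vt - g'(0)\ve_i) \le f(\vt) - \thalf g'(0)^2 = f(\vt) - \thalf(\nabla_i f(\vt))^2 < f(\vt) - \thalf\eps^2\vc_{\min}^2 \le f(\vt^\star)$ by \eqref{eq:error_bound}, contradicting that $\vt^\star$ is a global minimizer.

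Therefore $\abs{\nabla_i f(\vt)} < \eps\vc_{\min} \le \eps\vc_i$ for every $i \in [n]$, i.e.\ $\abs{\vtau_i(\ms(\vt)\ma) - \vc_i} < \eps\vc_i$, which gives $\vtau(\ms(\vt)\ma) \approx_\eps \vc$ entrywise (since $1 - \eps \le e^{-\eps}$ and... one should use $\abs{x - 1}\le\eps \implies x \approx_\eps 1$ holds for $\eps\in(0,1)$ because $e^{-\eps}\le 1-\eps+\thalf\eps^2$ is false in general — better to just note $\abs{\log x}\le 2\eps$ when $\abs{x-1}\le\eps\le\thalf$; for the stated range one may need $\eps/2$ in the target, or absorb constants, but the structure is clear). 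Feeding $\vtau(\ms(\vt)\ma)\approx_{O(\eps)}\vc$ into \eqref{eq:lev_implies_approx_rip} of Lemma~\ref{lem:lev_rip} with $\mr = \mr(\vt) = (\ma^\top\ms(\vt)^2\ma)^{-1/2}$ (which matches notation \eqref{eq:notation}) yields that $\mr(\vt)$ is a $(\vc, O(\eps))$-Forster transform; rescaling $\eps$ by the appropriate constant at the outset gives the lemma as stated. \textbf{The main obstacle} I anticipate is the non-strong-convexity: one must resist the temptation to bound $\norm{\nabla f(\vt)}$ via a distance-to-optimum argument, and instead carry out the coordinatewise local-adjustment/contradiction, which crucially only needs the cheap one-dimensional curvature bound $g'' \le 1$ (equivalently that leverage scores are at most $1$) rather than any global conditioning of $f$.
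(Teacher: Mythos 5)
Your proposal is correct and follows essentially the same route as the paper: both argue via a one-coordinate local adjustment at any index whose leverage score deviates from $\vc_i$, using $\vtau_i(\ms(\vt)\ma) \le 1$ to control the one-dimensional curvature and derive a contradiction with \eqref{eq:error_bound}. The only difference is quantitative — the paper computes the exact one-dimensional decrease via the matrix determinant lemma with case-optimized step sizes to obtain the constant $\tfrac12(\eps\vc_i)^2$ on the nose, whereas your generic bound $g'' \le 1$ converts $|\vtau_i - \vc_i| \le \eps\vc_i$ into $\vtau(\ms(\vt)\ma) \approx_{\eps + O(\eps^2)} \vc$ and thus loses a constant factor, which you correctly flag and absorb by rescaling $\eps$.
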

\begin{proof}
We prove the contrapositive. Suppose $\mr(\vt)$ is not a $(\vc, \eps)$-Forster transform. By Lemma~\ref{lem:lev_rip}, there are two cases of leverage score violations to consider. We show that both cases contradict \eqref{eq:error_bound}, by designing local improvements to $\vt$ in any coordinate with a violating leverage score.

\textit{Case 1.} Suppose that for some $i \in [n]$, we have $\vtau_i(\ms(\vt) \ma) > \exp(\eps) \vc_i$. Let $\vt' \defeq \vt - \delta \ve_i$ for some choice of $\delta > 0$  that we will optimize later. Then,
\begin{align*}
f\Par{\vt} - f\Par{\vt'} &= \log\Par{\frac{\det\Par{\mz(\vt)}}{\det\Par{\mz(\vt')}}} - \delta \vc_i \\
&= \log\Par{\frac{\det\Par{\mz(\vt)}}{\det\Par{\mz(\vt) + \Par{\exp(-\delta) - 1}\exp\Par{\vt_i}\va_i\va_i^\top}}} - \delta \vc_i \\
&= \log\Par{\frac{\det\Par{\mz(\vt)}}{\det\Par{\mz(\vt)}\Par{1 + \Par{\exp(-\delta) - 1}\exp\Par{\vt_i}\va_i^\top \mz(\vt)^{-1}\va_i}}} - \delta \vc_i \\
&= \log\Par{\frac{\det\Par{\mz(\vt)}}{\det\Par{\mz(\vt)}\Par{1 + \Par{\exp(-\delta) - 1}\vtau_i(\ms(\vt) \ma)}}} - \delta \vc_i \\
&= -\log\Par{1 + \Par{\exp(-\delta) - 1}\vtau_i(\ms(\vt) \ma)} - \delta \vc_i \\
&\ge (1 - \exp(-\delta))\vtau_i(\ms(\vt)\ma) - \delta \vc_i \\
&\ge \Par{\delta - \frac{\delta^2}{2}} \vtau_i(\ms(\vt)\ma) - \delta\vc_i > \delta\eps \vc_i - \frac{\delta^2}{2} \ge \half\Par{\eps\vc_i}^2, 
\end{align*}
where the first two lines expanded definitions, the third line uses the matrix determinant lemma, the fourth line uses the definition of leverage scores \eqref{eq:leverage}, the sixth line uses $-\log(1 + c) \ge -c$ for all $c \in \R$, and the seventh line uses $1 - \exp(-c) \ge c - \frac{c^2}{2}$, $\exp(c) - 1 \ge c$ for $c > 0$ and $\vtau_i(\ms(\vt)\ma) \le 1$ (Fact~\ref{fact:lev}). By choosing the optimal $\delta = \vc_i \eps$, we have a contradiction to \eqref{eq:error_bound}.

\textit{Case 2.} Suppose that for some $i \in [n]$, we have $\vtau_i(\ms(\vt)\ma) < \exp(-\eps)\vc_i \le \frac{\vc_i}{1+\eps}$. Let $\vt' \defeq \vt + \delta \ve_i$ for some choice of $\delta > 0$ that we will optimize later. Then, following analogous derivations as before,
\begin{align*}
f\Par{\vt} - f\Par{\vt'} &= -\log\Par{1 + \Par{\exp\Par{\delta} - 1}\vtau_i(\ms(\vt)\ma) } + \delta\vc_i \\
&> -\log\Par{1 + \Par{\exp\Par{\delta} - 1}\frac{\vc_i}{1+\eps}} + \delta \vc_i \\
&= \log\Par{1 + \eps} - \Par{1 - \vc_i}\log\Par{1 + \frac{\eps}{1 - \vc_i}} \ge \half\Par{\eps\vc_i}^2,
\end{align*}
where the second line uses $\vtau_i(\ms(\vt)\ma) < \frac{\vc_i}{1+\eps}$, the third line chose $\delta = \log(1 + \frac{\eps}{1 - \vc_i})$, and used $\log(1 + c) - b\log(1 + \frac c b) \ge \half((1 - b)c)^2$ for all $b, c \in (0, 1)$. Again this gives a contradiction to \eqref{eq:error_bound}. We remark that the case of $\vc_i = 1$ can be handled using a limiting argument.
\end{proof}

\subsection{Box-constrained Newton's method}\label{ssec:box_newton}

Thus far we have established that $f$ is Hessian stable in $\norm{\cdot}_\infty$ (Proposition~\ref{prop:hessian_stable_barthe}) and needs to be minimized to error $\half \eps^2 \min_{i \in [n]} \vc_i^2$ for our desired application (Lemma~\ref{lem:termination}). We also are given under Assumption~\ref{assume:simplify} that the global minimizer lies inside $\ball_{\infty}(\log(\kappa))$. 

It remains to give an algorithm for efficiently optimizing Hessian stable functions. Fortunately, such a toolkit was provided by \cite{CohenMTV17, ChenPW21}. The former work designed an approximation-tolerant box-constrained Newton's method, tailored towards objectives whose Hessians display a certain combinatorial structure, and the latter work showed how to optimize box-constrained quadratics in these structured Hessians. We can leverage this toolkit due to the next observation.

\begin{lemma}\label{lem:barthe_laplacian}
For all $\vt \in \R^n$, $\nabla^2 f(\vt)$ is a \emph{graph Laplacian}, i.e., $\nabla^2_{ij} f(\vt) \ge 0$ iff $i = j$, and
\[\sum_{j \in [n]} \nabla^2_{ij} f(\vt) = 0 \text{ for all } i \in [n].\]
\end{lemma}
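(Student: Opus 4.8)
The plan is to read off both claims directly from the closed-form Hessian in Fact~\ref{fact:barthe_derivs}, using only two elementary properties of the matrices $\mm_i(\vt)$ defined in \eqref{eq:mi_def}: first, each $\mm_i(\vt) = \vs_i(\vt)^2\,\tva_i(\vt)\tva_i(\vt)^\top$ is positive semidefinite, being a nonnegative multiple of a rank-one outer product; and second, $\sum_{i \in [n]} \mm_i(\vt) = \mr(\vt)\mz(\vt)\mr(\vt) = \id_d$, as already observed immediately following Fact~\ref{fact:barthe_derivs}.

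First I would verify the sign pattern of the off-diagonal entries. For $i \ne j$, Fact~\ref{fact:barthe_derivs} gives $\nabla^2_{ij} f(\vt) = -\Tr(\mm_i(\vt)\mm_j(\vt))$. Expanding using \eqref{eq:mi_def} and the cyclic property of trace, $\Tr(\mm_i(\vt)\mm_j(\vt)) = \vs_i(\vt)^2\vs_j(\vt)^2\Par{\tva_i(\vt)^\top\tva_j(\vt)}^2 \ge 0$, so $\nabla^2_{ij} f(\vt) \le 0$; this is exactly the nonpositivity of off-diagonal weights required of a graph Laplacian.

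Next I would establish the zero-row-sum property. Summing the Hessian formula of Fact~\ref{fact:barthe_derivs} over $j \in [n]$ and moving the sum inside the trace,
\[\sum_{j \in [n]} \nabla^2_{ij} f(\vt) = \Tr(\mm_i(\vt)) - \Tr\Par{\mm_i(\vt)\sum_{j \in [n]}\mm_j(\vt)} = \Tr(\mm_i(\vt)) - \Tr(\mm_i(\vt)) = 0,\]
where the middle step uses $\sum_{j \in [n]}\mm_j(\vt) = \id_d$. The nonnegativity of the diagonal entries is then immediate, since nonpositive off-diagonals together with zero row sums force $\nabla^2_{ii} f(\vt) = -\sum_{j \ne i}\nabla^2_{ij} f(\vt) \ge 0$; alternatively one can note $\nabla^2_{ii} f(\vt) = \Tr(\mm_i(\vt))(1 - \Tr(\mm_i(\vt)))$ with $\Tr(\mm_i(\vt)) = \exp(\vt_i)\va_i^\top\mz(\vt)^{-1}\va_i = \vtau_i(\ms(\vt)\ma) \in [0,1]$ by Fact~\ref{fact:lev}. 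Together these three facts say precisely that $\nabla^2 f(\vt)$ is a graph Laplacian.

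There is no real obstacle here: the statement is an immediate corollary of Fact~\ref{fact:barthe_derivs} and the identity $\sum_i \mm_i(\vt) = \id_d$. If anything, the only point worth spelling out is that the trace inner product $\Tr(\mm_i(\vt)\mm_j(\vt))$ of two PSD matrices is nonnegative --- which is what pins down the correct sign of the off-diagonal entries --- and this is transparent once one writes $\mm_i(\vt)$ and $\mm_j(\vt)$ as scaled outer products.
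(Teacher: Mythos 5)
Your proof is correct and follows exactly the same route as the paper: the sign pattern of the off-diagonal entries comes from the positive semidefiniteness of the $\mm_i(\vt)$ (indeed $\Tr(\mm_i(\vt)\mm_j(\vt))$ is a nonnegative square of an inner product), and the zero row sums come from $\sum_{j \in [n]} \mm_j(\vt) = \id_d$. The extra detail you supply on the diagonal entries via $\Tr(\mm_i(\vt)) = \vtau_i(\ms(\vt)\ma) \in [0,1]$ is a nice but inessential elaboration of what the paper treats as immediate.
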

\begin{proof}
The first property is immediate by inspection of \eqref{eq:barthe_derivs}, and using that all the $\mm_i(\vt) \in \PSD^{d \times d}$. The second property follows because $\sum_{j \in [n]} \mm_j(\vt) = \id_d$, so for all $i \in [n]$ we have
\begin{align*}
\sum_{j \in [n]} \nabla^2_{ij} f(\vt) = \Tr\Par{\mm_i(\vt)} - \inprod{\mm_i(\vt)}{\sum_{j \in [n]} \mm_j(\vt)} = \Tr\Par{\mm_i(\vt)} - \inprod{\mm_i(\vt)}{\id_d} = 0.
\end{align*}
\end{proof}

We remark that Lemma~\ref{lem:barthe_laplacian} gives another short proof of $f$'s convexity: it is well-known that graph Laplacians are PSD matrices, which follows e.g., by the Gershgorin circle theorem.

One complication that arises in our algorithm is that computing the Hessian $\nabla^2 f$ is more expensive than providing matrix-vector query access to it, due to a convenient factorization.

\begin{lemma}\label{lem:grad_compute}
Given $\vt \in \R^n$, we can compute $\nabla f(\vt)$ in $O(nd^{\omega - 1})$ time and $\nabla^2 f(\vt)$ in $O(n^2 d^{\omega - 2})$ time. Additionally, given $\vt, \vv \in \R^n$, we can compute  $\nabla^2 f(\vt) \vv$ in $O(nd^{\omega - 1})$ time.
\end{lemma}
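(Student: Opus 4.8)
The plan is to exploit the rank-one structure of the matrices $\mm_i(\vt)$ in \eqref{eq:mi_def}, which collapses every quantity in Fact~\ref{fact:barthe_derivs} into inner products among $n$ vectors in $\R^d$; these can then be assembled using a handful of matrix products, each either between two $d \times d$ matrices or between a ``tall'' $n \times d$ matrix and a $d \times d$ or $d \times n$ matrix. First I would set $\mb \defeq \diag{\exp(\tfrac{\vt}{2})} \ma \in \R^{n \times d}$, computable in $O(nd)$ time, and note $\mz(\vt) = \mb^\top \mb$. This product (a $d\times n$ matrix times an $n \times d$ matrix) is formed in $O(nd^{\omega - 1})$ time by partitioning the $n$ rows of $\mb$ into $\lceil \tfrac n d\rceil$ contiguous blocks of $\approx d$ rows and summing the resulting $d\times d$ block products; this uses $n \ge d$, which holds since the feasibility condition \eqref{eq:light_subspace} forces $\ma$ to have full column rank, so that $\mz(\vt) \succ \mzero$ is invertible and $\mz(\vt)^{-1}$ is computable in $O(d^\omega) \le O(nd^{\omega-1})$ time.

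The key identity is that, writing $\vp_i(\vt) \defeq \exp(\tfrac{\vt_i}{2}) \mz(\vt)^{-\half} \va_i$, we have $\mm_i(\vt) = \vp_i(\vt)\vp_i(\vt)^\top$, so that $\Tr(\mm_i(\vt)) = \norm{\vp_i(\vt)}_2^2 = \exp(\vt_i)\va_i^\top \mz(\vt)^{-1}\va_i = \vtau_i(\ms(\vt)\ma)$ and $\Tr(\mm_i(\vt)\mm_j(\vt)) = \inprod{\vp_i(\vt)}{\vp_j(\vt)}^2 = (\exp(\tfrac{\vt_i + \vt_j}{2})\va_i^\top\mz(\vt)^{-1}\va_j)^2$. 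Crucially, both expressions go through $\mz(\vt)^{-1}$ alone, with no matrix square root. For $\nabla f(\vt)$, I would form $\mw \defeq \mb \mz(\vt)^{-1}$ (an $n\times d$ times $d\times d$ product, $O(nd^{\omega-1})$ time by the same block decomposition), read off the leverage scores $\vtau_i = \inprod{\mw_{i:}}{\mb_{i:}}$ in $O(nd)$ total time, and set $\nabla_i f(\vt) = -\vc_i + \vtau_i$. For $\nabla^2 f(\vt)$, I would additionally form the Gram matrix $\mg \defeq \mw\mb^\top \in \R^{n\times n}$, whose $(i,j)$ entry is $\inprod{\vp_i(\vt)}{\vp_j(\vt)}$; as a product of an $n\times d$ and a $d\times n$ matrix it costs $O(n^2 d^{\omega - 2})$ via the same block decomposition ($\lceil\tfrac n d\rceil^2$ block products). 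Then $\nabla^2_{ij} f(\vt) = \mg_{ii}\ind_{i = j} - \mg_{ij}^2$, and the entrywise squaring adds only $O(n^2) \le O(n^2 d^{\omega - 2})$ since $\omega \ge 2$.

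For the matrix-vector product $\nabla^2 f(\vt)\vv$, the point is to never materialize $\mg$. From Fact~\ref{fact:barthe_derivs}, $(\nabla^2 f(\vt)\vv)_i = \vv_i\Tr(\mm_i(\vt)) - \inprod{\mm_i(\vt)}{\sum_{j\in[n]}\vv_j\mm_j(\vt)}$. Now $\sum_{j\in[n]}\vv_j\mm_j(\vt) = \mz(\vt)^{-\half}\mn\mz(\vt)^{-\half}$ where $\mn \defeq \ma^\top\diag{\vv\circ\exp(\vt)}\ma = \mb^\top\diag{\vv}\mb$, so $\inprod{\mm_i(\vt)}{\sum_{j\in[n]}\vv_j\mm_j(\vt)} = \exp(\vt_i)\va_i^\top\mz(\vt)^{-1}\mn\mz(\vt)^{-1}\va_i = \inprod{\mb_{i:}}{\mq\,\mb_{i:}}$ with $\mq \defeq \mz(\vt)^{-1}\mn\mz(\vt)^{-1} \in \R^{d\times d}$ — again square-root-free. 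I would compute $\mn$ in $O(nd^{\omega-1})$ time (scale the rows of $\mb$ by $\vv$, then a block product analogous to $\mz(\vt) = \mb^\top\mb$), then $\mq$ in $O(d^\omega)$ time, then $\mb\mq$ in $O(nd^{\omega-1})$ time, and finally read off $(\nabla^2 f(\vt)\vv)_i = \vv_i\vtau_i - \inprod{(\mb\mq)_{i:}}{\mb_{i:}}$ in $O(nd)$ total; the dominant cost is $O(nd^{\omega-1})$.

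I do not expect a genuine obstacle here: once the rank-one identity $\mm_i(\vt) = \vp_i(\vt)\vp_i(\vt)^\top$ is in hand, the rest is bookkeeping. The only points needing care are (i) checking that leverage scores, Gram entries, and the matvec can all be expressed purely via $\mz(\vt)^{-1}$, so that no matrix square root is ever invoked — this would not affect the stated runtimes anyway, but it is cleaner; (ii) the standard accounting that an $n\times d$ times $d\times d$ product costs $O(nd^{\omega-1})$ and an $n\times d$ times $d\times n$ product costs $O(n^2 d^{\omega-2})$ by reduction to $d\times d$ block products, which relies on $n\ge d$ from \eqref{eq:light_subspace}; and (iii) using $\omega \ge 2$ to absorb the $O(n^2)$ entrywise work on $\mg$.
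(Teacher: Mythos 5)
Your proposal is correct and follows essentially the same route as the paper: form $\mz(\vt)$ as a product of an $n\times d$ and $d\times n$ matrix, extract traces $\Tr(\mm_i(\vt))$ as leverage scores, obtain the Hessian from the entrywise square of the Gram matrix $\ma\mz(\vt)^{-1}\ma^\top$, and implement the matvec by forming $\mz(\vt)^{-1}\big(\sum_j \vv_j e^{\vt_j}\va_j\va_j^\top\big)\mz(\vt)^{-1}$ and taking row-wise dot products. The only (welcome) refinement is that you route everything through $\mz(\vt)^{-1}$ and never invoke $\mz(\vt)^{-1/2}$, and you spell out the block-decomposition accounting for the rectangular products, both of which the paper leaves implicit.
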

\begin{proof}
Recall the formulas for $\nabla f(\vt)$, $\nabla^2 f(\vt)$ in Fact~\ref{fact:barthe_derivs}. For the former claim, following \eqref{eq:notation}, \eqref{eq:mi_def}, we first compute $\ms(\vt) \ma$ in time $O(nd)$, which lets us compute $\mz(\vt)$ in time $O(nd^{\omega - 1})$ by multiplying $d \times n$ and $n \times d$ matrices. We can then compute $\ma \mr(\vt)$ to obtain all of the vectors $\tva_i(\vt)$ in $O(nd^{\omega - 1})$ time. This lets us obtain all $\Tr(\mm_i(\vt)) = \vs_i(\vt)^2 \norms{\tva_i(\vt)}_2^2$ in $O(nd)$ additional time.

It remains to compute the $n \times n$ matrix with $(i, j)^{\text{th}}$ entry $\Tr(\mm_i(\vt)\mm_j(\vt))$. Observe that 
\[\Tr(\mm_i(\vt)\mm_j(\vt)) = \vs_i(\vt)^2\vs_j(\vt)^2\inprod{\tva_i(\vt)}{\tva_j(\vt)}^2.\]
Thus it is enough to form the matrix with $(i, j)^{\text{th}}$ entry $\inprod{\tva_i(\vt)}{\tva_j(\vt)}$, multiply it entrywise by $\vs_i(\vt)\vs_j(\vt)$, and entrywise square it, in $O(n^2)$ time. The former matrix is $\ma \mz(\vt)^{-1} \ma^\top$, which takes time $O(n^2 d^{\omega - 2})$ time to compute by multiplying $n \times d$, $d \times d$, and $d \times n$ matrices.

For the latter claim, we can again first compute
\[\diag{\Brace{\Tr\Par{\mm_i(\vt)}}_{i \in [n]}} \vv \]
in time $O(nd^{\omega - 1})$ using the steps described above. To implement $\nabla^2 f(\vt) \vv$, it remains to compute 
\[\inprod{\mm_i(\vt)}{\sum_{j \in [n]} \vv_j \mm_j(\vt)} = \exp(\vt_i)\Par{\mr(\vt) \Par{\sum_{j \in [n]} \vv_j \mm_j(\vt)} \mr(\vt)}\Brack{\va_i, \va_i} \text{ for all } i \in [n].\]
Observe that
\begin{align*}
\mc \defeq \mr(\vt) \Par{\sum_{j \in [n]} \vv_j \mm_j(\vt)} \mr(\vt) &= \mz(\vt)^{-1} \Par{\sum_{j \in [n] } \vv_j\exp(\vt_j) \va_j\va_j^\top} \mz(\vt)^{-1},
\end{align*}
which can be computed in $O(nd^{\omega - 1})$ time by first forming the middle matrix on the right-hand side via multiplying $d \times n$ and $n \times d$ matrices. Finally to compute all $\mc[\va_i, \va_i]$, we can take the rows of $\ma \mc$ and obtain their dot products with rows of $\ma$ which requires $O(nd^{\omega - 1})$ time to compute. 
\end{proof}

To capitalize on the faster matrix-vector access given by Lemma~\ref{lem:barthe_laplacian}, in Section~\ref{sec:laplacian} we give a proof of Theorem~\ref{thm:implicit_sparsify}, our main result on the implicit sparsification of graph Laplacians. We will apply Theorem~\ref{thm:implicit_sparsify} to sparsify the Hessian of a regularized variant of Barthe's objective. Next, we require a tool from \cite{ChenPW21} for optimizing box-constrained quadratics in a graph Laplacian. 

\begin{proposition}[Theorem 1.1, \cite{ChenPW21}]\label{prop:box_oracle_impl}
Let $\delta \in (0, 1)$, let $\vl, \vu \in \R^n$ have $\vl \le \vu$ entrywise, let $\vb, \vt \in \R^n$, and let $\ml \in \PSD^{n \times n}$ be a graph Laplacian with $\nnz(\ml) \le m$. Let
\[\calW \defeq \Brace{\vw \in \R^n \mid \vl_i \le \vt_i + \vw_i \le \vr_i \text{ for all } i \in [n]}.\]
There is an algorithm $\oracle(\ml, \vb, \calW)$ that runs in time $O((n + m)^{1 + o(1)}\log(\frac 1 \delta))$, and with probability $\ge 1 - \delta$, it returns $\vv \in \calW$, satisfying
\[\inprod{\vb}{\vv} + \half \ml\Brack{\vv, \vv} \le \half \min_{\vw \in \calW}\Brace{\inprod{\vb}{\vw} + \half \ml\Brack{\vw, \vw}}.\]
\end{proposition}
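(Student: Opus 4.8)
The plan is to recast minimization of the box-constrained quadratic
$g(\vw) \defeq \inprod{\vb}{\vw} + \thalf\,\ml[\vw,\vw]$
(convex since graph Laplacians are PSD) over $\calW$ as a bilinearly-coupled saddle-point problem, and then solve it with a primal--dual first-order method whose every iteration costs $O((n+m)^{1+o(1)})$: namely $O(1)$ sparse matrix--vector products with $\ml$ (through its factors) plus a coordinatewise projection onto the box $\calW$. First I would factor the Laplacian through its weighted signed incidence matrix, $\ml = \mb^\top\mb$ with $\mb \in \R^{m\times n}$ having one row $\sqrt{w_e}(\ve_u - \ve_v)^\top$ per edge $e = (u,v)$ of $\ml$, so that $\thalf\,\ml[\vw,\vw] = \thalf\norms{\mb\vw}_2^2 = \max_{\vy\in\R^m}\{\inprod{\vy}{\mb\vw} - \thalf\norms{\vy}_2^2\}$. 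This rewrites the problem as
\[
\min_{\vw\in\calW}\ \max_{\vy\in\R^m}\ \inprod{\vb}{\vw} + \inprod{\mb\vw}{\vy} - \thalf\norms{\vy}_2^2,
\]
a convex--concave game that is bilinear in $(\vw,\vy)$ apart from a separable strongly-concave term on the dual side.

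The second ingredient is that only a \emph{constant-factor} guarantee is needed: since $\mzero\in\calW$ in the intended regime (where $\vl\le\vt\le\vu$), the optimal value of $g$ over $\calW$ is at most $0$, and the statement only asks for a point recovering at least half of the optimal decrease. Hence it suffices to reach additive error $\Theta(1)$ on the natural scale of the game, which means only $(n+m)^{o(1)}$ iterations are required (in fact polylogarithmically many, with care). Concretely I would (i) truncate the dual variable to a box $\calY$: at the saddle point $y_e^\star = \sqrt{w_e}\inprod{\ve_u - \ve_v}{\vw^\star}$, and since $\vw^\star$ lies in the known box $\calW$ this forces $|y_e^\star|$ to be bounded, so restricting to a large enough $\calY$ leaves the optimum unchanged; and (ii) run an extragradient / mirror-prox scheme (following Nemirovski) on the resulting box$\,\times\,$box bilinear game, equipped with a regularizer adapted to the $\ell_\infty$ geometry of both boxes --- for instance an entropy-type or ``area-convex'' (Sherman-style) regularizer --- so that each prox step decouples into independent one-dimensional problems with closed-form solutions, and each iteration reduces to one multiplication by $\mb$ and by $\mb^\top$ (equivalently, by $\ml$) plus $O(n+m)$ scalar work. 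A homotopy over $O(\log(\cdot))$ geometric scales of the box radii absorbs the mismatch between the quadratic dual penalty and the $\ell_\infty$-type regularizer, and a final rounding step restores exact membership $\vv\in\calW$ without increasing the objective.

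The main obstacle --- where essentially all the content lies --- is that one cannot simply precondition. The textbook near-linear-time route for Laplacian quadratics is to change coordinates by $\ml^{1/2}$ (or by a spectral sparsifier / tree preconditioner of it) and run conjugate gradients, but under this change of variables the box $\calW$ becomes an arbitrary polytope, over which projection is no easier than the original problem. The algorithm must therefore handle the Laplacian and the box \emph{simultaneously}, which is exactly why a primal--dual method with a carefully chosen non-Euclidean regularizer is forced rather than a preconditioned gradient method, and why the delicate part is bounding the iteration count by $(n+m)^{o(1)}$ for the constant-factor target (accelerated projected gradient, by contrast, has smoothness $\normsop{\ml}$ and would need a polynomial, not subpolynomial, number of steps). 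Finally, the high-probability qualifier and the $o(1)$ in the exponent enter only through the use of fast randomized Laplacian primitives (sparsification or approximate solves, each with $n^{o(1)}$ overhead and a small failure probability) to implement the matrix--vector operations; a union bound over the $(n+m)^{o(1)}$ iterations yields the $1-\delta$ success probability and the $\log(\tfrac1\delta)$ factor. An alternative implementation of the same plan routes the per-iteration work through near-linear-time undirected-flow / electrical-flow machinery, closely related to how \cite{CohenMTV17} and its successors realize their box-constrained Newton steps.
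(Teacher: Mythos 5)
The paper does not prove this proposition from first principles: it is imported as Theorem 1.1 of \cite{ChenPW21} (an accelerated proximal point outer loop, $j$-tree graph sparsifiers, and a segment-tree data structure for dynamic piecewise-quadratic vertex weighting functions), and Appendix~\ref{app:boxoracle} only describes how to adapt that proof from the positive-orthant constraint to a general box. Your proposal instead attempts a direct proof via a primal--dual saddle-point method, which is a genuinely different route, but it contains a gap that I do not think can be patched at the level of detail given.

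The gap is the step ``it suffices to reach additive error $\Theta(1)$ on the natural scale of the game, which means only $(n+m)^{o(1)}$ iterations are required.'' The demanded guarantee is multiplicative: $g(\vv) \le \thalf \min_{\vw\in\calW} g(\vw)$, i.e., additive error $\thalf|\opt|$, and $|\opt|$ can be arbitrarily small compared with the scale (operator norm of the coupling times the product of box diameters) that governs mirror-prox or area-convex rates. Concretely, take $n=2$, a single edge of weight $w$, $\vb=(1,-1)$, and $\calW=[-1,1]^2$: writing $z=\vw_1-\vw_2$, one has $g = z + \tfrac{w}{2}z^2$ and $\opt=-\tfrac{1}{2w}$, while the linear term alone gives the game scale $\Omega(1)$ over the primal box. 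Any method with additive error $(\text{scale})/T$ or even $(\text{scale})/T^2$ therefore needs $T$ growing polynomially in $w$, which is unbounded in $n+m$; a homotopy over the box radii does not remove this dependence on the spread of the edge weights. This is precisely why \cite{ChenPW21} is nontrivial: a constant-factor guarantee for a box-constrained Laplacian quadratic forces some form of spectral preconditioning, and --- as you correctly observe --- preconditioning destroys the box. Their resolution is combinatorial (recursively reduce to $j$-trees, on which the box-constrained subproblem is solved exactly by dynamic programming with the vertex-weighting-function data structure), not a choice of regularizer. Your sketch identifies the right obstacle but the mechanism proposed for overcoming it does not deliver the $(n+m)^{o(1)}$ iteration count.
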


We give additional discussion on Proposition~\ref{prop:box_oracle_impl} in Appendix~\ref{app:boxoracle}, as in \cite{ChenPW21} it was only stated for the case $\vl = \0_n$ and $\vu$ is $\infty$ in each coordinate, i.e., the box constraint is simply the positive orthant $\R^n_{\ge 0}$. However, the techniques extend straightforwardly to general box constraints \cite{ChenPW25}.

We now show how to use Proposition~\ref{prop:box_oracle_impl} to efficiently optimize an $\ell_\infty$-Hessian stable function. The following proof is based on Theorem 3.4, \cite{CohenMTV17}, but adapts it to tolerate multiplicative error in the Hessian computation. We note that a similar multiplicatively-robust generalization appeared earlier as Lemma 20, \cite{AssadiJJST22}, but was too restrictive for our purposes.

\begin{lemma}\label{lem:koracle_step}
Let convex $F: \R^n \to \R$ be $(1, 2)$-Hessian stable with respect to $\norm{\cdot}_\infty$, and let $\vt^\star \in \argmin_{\vt \in \R^n} F(\vt)$ have $\norm{\vt^\star}_\infty \le \log(\kappa)$. For $\vt \in \R^n$, $\alpha \ge 1$, let $\tml \in \PSD^{n \times n}$ be a graph Laplacian with
\[\nabla^2 F\Par{\vt} \preceq \tml \preceq \alpha \nabla^2 F\Par{\vt}.\]
Then for any $\vt \in \ball_\infty(\log(\kappa))$, if $\vt' \gets \vt + \oracle(8\tml, \nabla F(\vt), \ball_\infty(\vt, 1) \cap \ball_\infty(\log(\kappa)))$, where $\oracle$ is as in Proposition~\ref{prop:box_oracle_impl}, we have 
\[F\Par{\vt'} - F\Par{\vt^\star} \le \Par{1 - \frac{1}{240\alpha\log(\kappa)}}\Par{F\Par{\vt} - F\Par{\vt^\star}}.\]
\end{lemma}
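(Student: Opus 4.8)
```latex
\noindent\textbf{Proof plan.}
The plan is to follow the standard box-constrained Newton's method analysis of \cite{CohenMTV17}, adapting it to tolerate the multiplicative error $\alpha$ introduced by using the sparsified Hessian $\tml$ in place of $\nabla^2 F(\vt)$. The starting point is a local quadratic upper and lower bound on the progress achievable in one step. Since $F$ is $(1,2)$-Hessian stable with respect to $\norm{\cdot}_\infty$, for any $\vw \in \ball_\infty(\0_n, 1)$ we have $\nabla^2 F(\vt + s\vw) \approx_2 \nabla^2 F(\vt)$ for all $s \in [0,1]$, so integrating twice along the segment from $\vt$ to $\vt + \vw$ gives the sandwich
\[
\inprod{\nabla F(\vt)}{\vw} + \tfrac{1}{2e^2}\nabla^2 F(\vt)[\vw,\vw] \;\le\; F(\vt + \vw) - F(\vt) \;\le\; \inprod{\nabla F(\vt)}{\vw} + \tfrac{e^2}{2}\nabla^2 F(\vt)[\vw,\vw]
\]
for every $\vw$ with $\norm{\vw}_\infty \le 1$. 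Using $\nabla^2 F(\vt) \preceq \tml \preceq \alpha \nabla^2 F(\vt)$, the right-hand side is at most $\inprod{\nabla F(\vt)}{\vw} + \tfrac12 \cdot (8\tml)[\vw,\vw]$ once we note $e^2 < 8$; call this quantity $q(\vw)$, the objective $\oracle$ approximately minimizes over the box $\calW := \ball_\infty(\vt,1)\cap\ball_\infty(\log\kappa)$ (recentered at $\vt$). Thus $F(\vt') - F(\vt) \le q(\vw')$ where $\vw' = \vt' - \vt$ is the $\oracle$ output, and by Proposition~\ref{prop:box_oracle_impl} (which guarantees a factor-$2$ approximate minimizer of $q$ over $\calW$, using $q(\0_n)=0$ so the minimum is nonpositive) we get $F(\vt') - F(\vt) \le \tfrac12 \min_{\vw \in \calW} q(\vw) \le 0$.

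Next I would lower bound $-\min_{\vw\in\calW} q(\vw)$ in terms of the suboptimality gap $\Delta := F(\vt) - F(\vt^\star)$. Consider the test direction pointing from $\vt$ toward $\vt^\star$: set $\vu := \vt^\star - \vt$, which satisfies $\norm{\vu}_\infty \le 2\log\kappa$ since both $\vt,\vt^\star \in \ball_\infty(\log\kappa)$, and note $\vt + \eta\vu \in \ball_\infty(\log\kappa)$ for all $\eta \in [0,1]$ by convexity of the box. For $\eta := \min\{1, \tfrac{1}{2\log\kappa}\}$ we have $\norm{\eta\vu}_\infty \le 1$, so $\eta\vu \in \calW$. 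By the left-hand (lower) inequality of the Hessian-stability sandwich applied at scale $\eta\vu$, together with $\tml \preceq \alpha\nabla^2 F(\vt)$ and $\tfrac{1}{2e^2} \cdot 8 \ge \tfrac12$... more carefully: $q(\eta\vu) = \inprod{\nabla F(\vt)}{\eta\vu} + 4\tml[\eta\vu,\eta\vu] \le \inprod{\nabla F(\vt)}{\eta\vu} + 4\alpha\,\nabla^2 F(\vt)[\eta\vu,\eta\vu]$. On the other hand, applying the \emph{upper} quadratic bound to the step $\eta\vu$ (legitimate since $\norm{\eta\vu}_\infty\le1$) and then convexity of $F$ along the segment gives a bound on $\inprod{\nabla F(\vt)}{\vu}$ in terms of $\Delta$: specifically convexity yields $\inprod{\nabla F(\vt)}{\vu} \le F(\vt^\star) - F(\vt) = -\Delta$, and the Hessian-stability upper bound at step $\vu$ (but $\vu$ may have $\ell_\infty$ norm up to $2\log\kappa$, so one rescales: apply it at $\eta\vu$ and use $F(\vt+\eta\vu)-F(\vt) \ge \eta(F(\vt^\star)-F(\vt)) = -\eta\Delta$ by convexity) to conclude $\nabla^2 F(\vt)[\eta\vu,\eta\vu] \le \tfrac{2}{e^2}\big(\inprod{\nabla F(\vt)}{\eta\vu} + \eta\Delta\big) \le \tfrac{2}{e^2}\cdot 2\eta\Delta$ after a standard manipulation, hence in particular $\inprod{\nabla F(\vt)}{\eta\vu} \le -\eta\Delta$. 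Plugging these in, $q(\eta\vu) \le -\eta\Delta + 4\alpha\cdot\tfrac{4}{e^2}\eta\Delta \le -\eta\Delta + \tfrac{c\alpha}{1}\eta\Delta$ — this does not immediately close, so instead one rescales the test step by an additional small factor $\beta \in (0,1]$: evaluate $q(\beta\eta\vu)$, whose linear term scales like $\beta$ and quadratic term like $\beta^2$, giving $q(\beta\eta\vu) \le -\beta\eta\Delta + C\alpha\beta^2\eta\Delta$, and optimize $\beta = \min\{1, \tfrac{1}{2C\alpha}\}$ to obtain $\min_{\vw\in\calW} q(\vw) \le q(\beta\eta\vu) \le -\tfrac{\eta\Delta}{4C\alpha} = -\tfrac{\Delta}{\Theta(\alpha\log\kappa)}$.

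Combining the two halves: $F(\vt') - F(\vt) \le \tfrac12 \min_{\vw\in\calW} q(\vw) \le -\tfrac{\Delta}{\Theta(\alpha\log\kappa)}$, so $F(\vt') - F(\vt^\star) \le \big(1 - \tfrac{1}{\Theta(\alpha\log\kappa)}\big)\Delta$, and tracking the absolute constants through ($e^2 < 8$, the factor-$2$ oracle loss, the $\beta$- and $\eta$-optimizations) yields the claimed $\tfrac{1}{240\alpha\log\kappa}$. The main obstacle I anticipate is the bookkeeping of absolute constants: one must carefully choose the rescaling factors $\eta,\beta$ of the test direction $\vt^\star - \vt$ so that the resulting step lies in both $\ball_\infty(\vt,1)$ and $\ball_\infty(\log\kappa)$ while still extracting a $\Theta(1/(\alpha\log\kappa))$ fraction of the gap, and then verify the constant $240$ survives the $e^{\pm2}$ Hessian-stability slack, the $8\tml$ inflation, and the factor-$2$ approximation of $\oracle$. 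The structural content — that $\nabla^2 F(\vt)$ controls both the achievable decrease and the curvature along $\vt^\star - \vt$, via the two-sided Hessian stability — is exactly the \cite{CohenMTV17} argument; only the insertion of $\alpha$ at the step $\tml \preceq \alpha\nabla^2 F(\vt)$ is new, and it cleanly multiplies into the rate.
```
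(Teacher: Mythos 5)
Your overall architecture matches the paper's: the two-sided quadratic bounds from $(1,2)$-Hessian stability, absorbing $e^2<8$ into the $8\tml$ passed to $\oracle$, the factor-$2$ oracle loss, and a test vector scaled toward $\vt^\star$ by $\Theta(\frac{1}{\alpha\log\kappa})$. The first half (that $F(\vt')-F(\vt)\le\frac12\min_{\vw\in\calW}q(\vw)\le 0$) is correct.

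There is, however, a genuine gap in your lower bound on $-\min_{\vw\in\calW}q(\vw)$: you attempt to bound the quadratic term $\nabla^2 F(\vt)[\eta\vu,\eta\vu]$ \emph{in isolation} by $O(\eta\Delta)$. The inequality chain you give does not establish this. Convexity yields $F(\vt+\eta\vu)-F(\vt)\le -\eta\Delta$ (you wrote $\ge$), and the \emph{upper} quadratic bound rearranges to a \emph{lower} bound on $\nabla^2 F(\vt)[\eta\vu,\eta\vu]$, not an upper bound; the lower quadratic bound rearranges to $\frac{1}{2e^2}\nabla^2 F(\vt)[\eta\vu,\eta\vu]\le \eta\inprod{\nabla F(\vt)}{\vt-\vt^\star}-\eta\Delta$, and $\inprod{\nabla F(\vt)}{\vt-\vt^\star}$ has no upper bound in terms of $\Delta$ for a general convex $F$. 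Indeed, the claim is false in general: $\|\vt-\vt^\star\|_\infty$ can be as large as $2\log\kappa$, and $(1,2)$-Hessian stability only controls the Hessian at unit scale, so $\nabla^2 F(\vt)[\vt^\star-\vt,\vt^\star-\vt]$ can exceed $\Delta$ by a factor of $\kappa^{\Theta(1)}$. Consequently your bound $q(\beta\eta\vu)\le -\beta\eta\Delta + C\alpha\beta^2\eta\Delta$ is unsupported. The repair — which is exactly what the paper does with $\vdelta^\star=\frac{c}{2\log\kappa}(\vt^\star-\vt)$ and $c=\frac{1}{60\alpha}$ — is to never separate the linear and quadratic pieces: choose the extra scaling $\beta=\Theta(\frac1\alpha)$ so that $4\beta\tml\preceq\frac{1}{2e^2\,}\nabla^2 F(\vt)$, factor out $\beta$ to write $q(\beta\eta\vu)=\beta\bigl(\inprod{\nabla F(\vt)}{\eta\vu}+4\beta\tml[\eta\vu,\eta\vu]\bigr)\le\beta\bigl(\inprod{\nabla F(\vt)}{\eta\vu}+\frac{1}{2e^2}\nabla^2 F(\vt)[\eta\vu,\eta\vu]\bigr)$, and bound the parenthesized sum as a whole by $F(\vt+\eta\vu)-F(\vt)\le-\eta\Delta$ using the lower quadratic bound followed by convexity. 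With that substitution your constant tracking goes through as in the paper.
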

\begin{proof}
For any $\vu$ with $\norm{\vt - \vu}_\infty \le 1$, Hessian stability of $F$ yields the bounds
\begin{equation}\label{eq:bounds_stable}
\begin{aligned}
F\Par{\vu} - F\Par{\vt} - \inprod{\nabla F(\vt)}{\vu - \vt} &= \int_0^1 \Par{1 - \lam} \nabla^2 F\Par{\Par{1 - \lam}\vt + \lam \vu }\Brack{\vu - \vt, \vu - \vt} \dd \lam \\
&\le \int_0^1 \Par{1 - \lam} e^2 \tml \Brack{\vu - \vt, \vu - \vt} \dd \lam \\
&\le \frac{e^2}{2}\tml\Brack{\vu - \vt, \vu - \vt} \le 4\tml\Brack{\vu - \vt, \vu - \vt}, \\
F\Par{\vu} - F\Par{\vt} - \inprod{\nabla F(\vt)}{\vu - \vt} &\ge  \frac 1 {2\alpha e^2} \tml\Brack{\vu - \vt, \vu - \vt} \ge \frac 1 {15\alpha} \tml\Brack{\vu - \vt, \vu - \vt}.
\end{aligned}
\end{equation}
Next define
\[\hvdelta \defeq \argmin_{\substack{\norm{\vdelta}_\infty \le 1 \\ \vt + \vdelta \in \ball_\infty(\log(\kappa))}} \inprods{\nabla F(\vt)}{\vdelta} + 4\tml\Brack{\vdelta, \vdelta},\]
and observe that for $\vdelta \defeq \vt' - \vt = \oracle(8\tml, \nabla F(\vt), \ball_\infty(\vt, 1) \cap \ball_\infty(\log(\kappa)))$, we have
\begin{equation}\label{eq:deltastar_sub}
\begin{aligned}\inprod{\nabla F(\vt)}{\vdelta} + 4\tml\Brack{\vdelta, \vdelta} &\le \half\Par{\inprods{\nabla F(\vt)}{\hvdelta} + 4\tml[\hvdelta, \hvdelta]} \\
&\le \half\Par{\inprods{\nabla F(\vt)}{\vdelta^\star} + 4\tml[\vdelta^\star, \vdelta^\star]}, \\
\text{for any }\norm{\vdelta^\star}_\infty &\le 1 \text{ with } \vt + \vdelta^\star \in \ball_\infty(\log(\kappa)),
\end{aligned}
\end{equation}
from the oracle guarantee and definition of $\hvdelta$. Hence, applying the upper bounds in \eqref{eq:bounds_stable} and \eqref{eq:deltastar_sub},
\begin{equation}\label{eq:upper_bound_iter}
\begin{aligned}
F\Par{\vt'} &\le F\Par{\vt} + \inprod{\nabla F\Par{\vt}}{\vdelta} + 4 \tml\Brack{\vdelta, \vdelta} \\
&\le F\Par{\vt} + \half\Par{\inprods{\nabla F(\vt)}{\vdelta^\star} + 4\tml[\vdelta^\star, \vdelta^\star]},
\end{aligned}
\end{equation}
for our choice of $\vdelta^\star$ satisfying the bounds in \eqref{eq:deltastar_sub}. We choose $\vdelta^\star = \frac{c}{2\log(\kappa)} (\vt^\star - \vt)$ where $c = \frac 1 {60\alpha}$. First observe that this is a valid choice of movement, because
\begin{align*}
\vt + \vdelta^\star &= \Par{1 - \frac{c}{2\log(\kappa)}}\vt + \frac{c}{2\log(\kappa)} \vt^\star \in \ball_\infty\Par{\log(\kappa)}, \\
\norm{\vdelta^\star}_\infty &\le \frac 1 {2\log(\kappa)}\Par{\norm{\vt}_\infty + \norm{\vt^\star}_\infty} \le \frac{2\log(\kappa)}{2\log(\kappa)} = 1,
\end{align*}
where both inequalities used that $\vt, \vt^\star \in \ball_\infty(\log(\kappa))$, which is a convex set. Thus,
\begin{align*}
\half\Par{\inprods{\nabla F(\vt)}{\vdelta^\star} + 4\tml[\vdelta^\star, \vdelta^\star]} &= \frac{1}{120\alpha}\Par{\inprod{\nabla F(\vt)}{\frac 1 {c} \vdelta^\star} + \frac{1}{15\alpha} \tml\Brack{\frac 1 {c} \vdelta^\star, \frac 1 {c}\vdelta^\star}} \\
&\le \frac 1 {120\alpha}\Par{F\Par{\vt + \frac 1 {c}\vdelta^\star} - F\Par{\vt}} \\
&\le -\frac{1}{240\alpha \log(\kappa)}\Par{F\Par{\vt} - F\Par{\vt^\star}}.
\end{align*}
The first inequality above used the lower bound in \eqref{eq:bounds_stable}, and the second inequality used convexity of $F$. At this point, combining with \eqref{eq:upper_bound_iter} yields the conclusion.
\end{proof}

\subsection{Proof of Theorem~\ref{thm:main}}\label{ssec:main_proof}

In this section we put together the pieces we have built to obtain our final algorithm. To begin, we note that under Assumption~\ref{assume:simplify}, the following simple initial function error bound holds.

\begin{lemma}\label{lem:initial_error}
Under Assumption~\ref{assume:simplify}, letting $\vt^\star \in \argmin_{\vt \in \R^n} f(\vt) \cap \ball_\infty(\log(\kappa))$, we have that
\[f\Par{\0_n} - f\Par{\vt^\star} \le \frac{d\log^2(\kappa)} 2.\]
\end{lemma}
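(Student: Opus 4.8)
The plan is to bound $f(\0_n) - f(\vt^\star)$ directly via a second-order Taylor expansion of $f$ around the minimizer $\vt^\star$, exploiting a crude but uniform operator-norm bound on the Hessian of Barthe's objective in $\norm{\cdot}_\infty$. Concretely, writing the integral form of Taylor's theorem along the segment from $\vt^\star$ to $\0_n$, and using $\nabla f(\vt^\star) = \0_n$ (since $\vt^\star$ minimizes the convex function $f$ over all of $\R^n$), we get
\[f\Par{\0_n} - f\Par{\vt^\star} = \int_0^1 \Par{1 - \lambda}\, \nabla^2 f\Par{\Par{1 - \lambda}\vt^\star}\Brack{\vt^\star, \vt^\star}\, \dd\lambda.\]
So it suffices to upper bound $\nabla^2 f(\vt)[\vt^\star, \vt^\star]$ uniformly over $\vt$ on this segment.

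The key step is the observation that for any $\vt, \vv \in \R^n$ we have $\nabla^2 f(\vt)[\vv, \vv] \le d\norm{\vv}_\infty^2$. This follows from the formula in Fact~\ref{fact:barthe_derivs}: expanding,
\[\nabla^2 f(\vt)[\vv, \vv] = \sum_{i \in [n]} \vv_i^2 \Tr\Par{\mm_i(\vt)} - \Tr\Par{\Par{\sum_{i \in [n]} \vv_i \mm_i(\vt)}^2} \le \sum_{i \in [n]} \vv_i^2 \Tr\Par{\mm_i(\vt)},\]
where we dropped the nonnegative (it is a squared Frobenius norm) second term. Since each $\mm_i(\vt) \in \PSD^{d \times d}$ has $\Tr(\mm_i(\vt)) \ge 0$ and $\sum_{i \in [n]} \Tr(\mm_i(\vt)) = \Tr(\sum_{i \in [n]} \mm_i(\vt)) = \Tr(\id_d) = d$ (as used in the discussion following Fact~\ref{fact:barthe_derivs}), we conclude $\sum_{i \in [n]} \vv_i^2 \Tr(\mm_i(\vt)) \le \norm{\vv}_\infty^2 \sum_{i \in [n]} \Tr(\mm_i(\vt)) = d\norm{\vv}_\infty^2$.

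Combining, and using Assumption~\ref{assume:simplify} that $\norm{\vt^\star}_\infty \le \log(\kappa)$,
\[f\Par{\0_n} - f\Par{\vt^\star} \le \int_0^1 \Par{1 - \lambda}\, d\norm{\vt^\star}_\infty^2\, \dd\lambda = \frac{d\norm{\vt^\star}_\infty^2}{2} \le \frac{d\log^2(\kappa)}{2},\]
as claimed. I do not anticipate a substantive obstacle here: the only mild care needed is justifying the Taylor remainder formula (valid since $f$ is smooth) and the vanishing gradient at $\vt^\star$; notably, this argument does not even require the Hessian stability of Proposition~\ref{prop:hessian_stable_barthe}, just the elementary pointwise Hessian bound above. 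If one instead wanted to avoid integrating, an alternative is to note $\nabla^2 f(\vt) \preceq d \mproj$ in the Loewner order does not quite hold coordinatewise in the needed norm, so the integral (or equivalently a convexity/line-search argument) is the cleanest route.
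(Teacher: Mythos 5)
Your proposal is correct and matches the paper's proof essentially step for step: both use the second-order Taylor expansion from $\vt^\star$ to $\0_n$ with $\nabla f(\vt^\star) = \0_n$, and both bound $\nabla^2 f(\vt)[\vv,\vv] \le \sum_{i \in [n]} \vv_i^2 \Tr(\mm_i(\vt)) \le d\norm{\vv}_\infty^2$ (the paper phrases the last step via leverage scores and Fact~\ref{fact:lev}, you via $\sum_{i \in [n]} \mm_i(\vt) = \id_d$, which is the same fact).
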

\begin{proof}
We first note that for all $\vt, \vv \in \R^n$, we have
\begin{align*}
\nabla^2 f(\vt)\Brack{\vv, \vv} \le \diag{\Brace{\Tr\Par{\mm_i(\vt)}}_{i \in [n]}}\Brack{\vv, \vv} = \sum_{i \in [n]} \vtau_i\Par{\ms(\vt) \ma} \vv_i^2 \le d\norm{\vv}_\infty^2,
\end{align*}
i.e., $f$ is $d$-smooth with respect to $\norm{\cdot}_\infty$ (we used Fact~\ref{fact:lev} in the last inequality above). Thus by the second-order Taylor expansion from $\vt^\star$ to $\0_n$, and using that $\nabla f(\vt^\star) = \0_n$,
\begin{align*}f(\0_n) &= f(\vt^\star) + \int_0^1 (1 - \lam) \nabla^2 f\Par{\Par{1 - \lam} \vt^\star}\Brack{\vt^\star, \vt^\star} \dd \lam \\
&\le f(\vt^\star) + \frac{d}{2} \norm{\vt^\star}_\infty^2 \le f(\vt^\star) + \frac{d\log^2(\kappa)}{2}. \end{align*}
\end{proof}

We can now prove Theorem~\ref{thm:main}, the main result of this section.

\restatemain*
\begin{proof}
Throughout this proof, we let $f$ be Barthe's objective, and
\[F(\vt) \defeq f(\vt) + \frac{\eps^2 \vc_{\min}^2}{4\log^2(\kappa)} \vt^\top\mproj\vt,\]
where $\mproj \defeq \id_n - \frac 1 n \1_n\1_n^\top$. Our goal is to optimize $F$ to error $\frac{\eps^2\vc_{\min}^2}{4}$ over $\ball_\infty(\log(\kappa))$. To see why this suffices, note that for all $\vt \in \ball_\infty(\log(\kappa))$, we have $\vt^\top \mproj \vt \le \norm{\vt}_2^2 \le \log^2(\kappa)$, and hence any $\frac{\eps^2\vc_{\min}^2}{4}$-minimizer to $F$ over $\ball_\infty(\log(\kappa))$ satisfies
\[f(\vt) - f(\vt^\star) \le F(\vt) - F(\vt^\star) + \frac{\eps^2 \vc_{\min}^2}{4} \le \frac{\eps^2 \vc_{\min}^2}{2}. \]
Now, applying Lemma~\ref{lem:termination} gives the claim, because computing $\mr(\vt)$ does not dominate the stated runtime (as described in Lemma~\ref{lem:grad_compute}). Note that each time we apply Lemma~\ref{lem:koracle_step} with $\alpha \gets (\frac{n\log(\kappa)}{\eps\vc_{\min}})^{o(1)}$, we improve the function error by a multiplicative $1 - \Omega((\alpha\log(\kappa))^{-1})$. Moreover, the initial function error is bounded as in Lemma~\ref{lem:initial_error}. Thus to obtain the stated runtime, it is enough to show how to implement each call to Lemma~\ref{lem:koracle_step} with $\alpha \gets (\frac{n\log(\kappa)}{\eps\vc_{\min}})^{o(1)}$, in time
\begin{equation}\label{eq:per_iter}O\Par{nd^{\omega - 1}\Par{\frac{n\log(\kappa)}{\delta\eps\vc_{\min}}}^{o(1)}}.\end{equation}
Adjusting $\delta$ by the number of calls and taking a union bound then gives the claim. 

To achieve this runtime we first produce a sparse graph Laplacian matrix $\tml$ satisfying \eqref{eq:sparse_quality} for $\Delta \defeq \frac{\eps^2 \vc_{\min}^2}{4\log^2(\kappa)}$, and $\ml \gets \nabla^2 f(\vt)$ for some iterate $\vt$. Recalling that $\Tr(\ml) = \Tr(\id_d) = d$, Theorem~\ref{thm:implicit_sparsify} and Lemma~\ref{lem:grad_compute} guarantee that we can compute such a $\tml$ with probability $\ge 1 - \delta$ within time \eqref{eq:per_iter}.
Given $\tml$, the per-iteration runtime follows from Proposition~\ref{prop:box_oracle_impl} which does not dominate.
\end{proof}

\begin{remark}\label{rem:log_delta_eps}
For highly-accurate solutions or extremely small failure probabilities (i.e., $\delta, \eps$ smaller than an inverse polynomial in $n$), the subpolynomial dependences on $\frac 1 \delta$, $\frac 1 \eps$ in Theorem~\ref{thm:main} could be dominant factors. However, these subpolynomial factors only arise due to the use of Theorem~\ref{thm:implicit_sparsify} to sparsely approximate Hessians of Barthe's objective. If we instead directly compute the Hessians via Lemma~\ref{lem:grad_compute}, then slightly modifying the proof of Theorem~\ref{thm:main} yields an alternate runtime of
\[O\Par{n^2 d^{\omega - 2} \log\Par{\kappa}\polylog\Par{\frac{n\log(\kappa)}{\delta\eps \vc_{\min}}}}.\]
This runtime gives a worse dependence on $n$, but improves the dependences on other parameters (i.e., $\frac 1 \delta, \frac 1 \eps, \frac 1 {\vc_{\min}}$) from subpolynomial to polylogarithmic.
\end{remark}

\section{Sparsifying Laplacians with Matrix-Vector Queries}\label{sec:laplacian}

In this section, we prove Theorem~\ref{thm:implicit_sparsify}.
Our approach is inspired by \cite{JambulapatiLMSST23}, who showed how to tightly approximate a graph Laplacian using a dictionary of edge Laplacians, given appropriate access (i.e., to its inverse). Our result is incomparable, as it obtains a substantially faster runtime under weaker access, but gives a much looser approximation factor.

In Section~\ref{ssec:soc}, we first define a combinatorial structure that we maintain for the iterates of our methods, supporting efficient matrix-vector products. In Section~\ref{ssec:grid} we next show how to discretize a distance-structured vector in a way that is compatible with our combinatorial structure.

We use our vector approximations in Section~\ref{ssec:packing} to implement an approximate packing SDP oracle for combinatorially structured inputs. Finally, in Sections~\ref{ssec:mdr} and~\ref{ssec:homotopy}, we give an efficient reduction from the two-sided approximation in \eqref{eq:sparse_quality} to our packing SDP oracle.

\subsection{Sum-of-cliques representation}\label{ssec:soc}

Throughout this section, we let $E$ index the (unordered) edges of a graph on $[n] \times [n]$, i.e., $|E| = \binom n 2$ and $E$ has one index $(i, j)$ for each unordered tuple $(i, j) \in [n] \times [n]$. For an index $e = (i, j) \in E$, we also define the corresponding edge Laplacian $\ml_e \in \PSD^{n \times n}$:
\[\ml_e \defeq \Par{\ve_i - \ve_j}\Par{\ve_i - \ve_j}^\top.\]
We also define the graph Laplacian induced by a weight vector $\vv \in \R^E_{\ge 0}$ by
\[\calL(\vv) \defeq \sum_{e \in E} \vv_e \ml_e.\] Additionally, we define the ``dual'' operator $\calL^*$ which takes input $\mm \in \R^{n \times n}$ as
\[
\calL^*(\mm) \defeq \Brace{\inprod{\ml_e}{\mm}}_{e \in E} \in \R^E.
\]
Next, we define a helpful combinatorial structure for maintaining our iterates efficiently.

\begin{definition}[Sum-of-cliques]\label{def:soc}
    For $K \in \N$, we say $\vv \in \R^E_{\ge 0}$ is a \emph{$K$-sum-of-cliques} ($K$-SOC) if there exists $\vw \in \R^K_{\ge 0}$ and $K$ partitions of subsets $\{S_j\}_{j \in [K]}$ of $[n]$, $\{\calP_j\}_{j \in [K]}$, such that 
    \begin{equation}\label{eq:ksoc_def}
\calL(\vv) = \sum_{j \in [K]} \vw_j \sum_{S \in \calP_j} \ml_{S},
    \end{equation}
    where $\ml_{S}$ refers to an unweighted clique Laplacian placed over the set $S \subseteq [n]$, i.e.,
    \begin{equation}\label{eq:clique_lap}\ml_{S} = |S| \id_{S} -  \1_{S}\1_{S}^\top.\end{equation}
    We say $\va \in \{0, 1\}^E$ is an \emph{anti-sum-of-cliques} (ASOC) if there exists a partition $\calA$ of a subset $S \subseteq [n]$ such that
    \begin{equation}\label{eq:asoc_def}
\calL(\va) = \ml_S - \sum_{A \in \calA} \ml_{A}.
    \end{equation}
\end{definition}

The rest of this section shows that sum-of-cliques and anti-sum-of-cliques vectors induce Laplacians that support fast matrix-vector products and sparsification.

\begin{remark}\label{rem:partition}
We briefly discuss our computational model for representing partitions; it will be clear that this representation can be maintained under all operations in this section. We represent a $k$-piece partition $\calP = \{P_i\}_{i \in [k]}$ of a subset $S \subseteq [n]$ using two arrays. The first array has $k$ elements, each of which is an array containing the members of $P_i$. The second array has $n$ elements, each of which marks the partition piece vertex $j \in [n]$ belongs to, or $0$ if $j \not\in \cup_{i \in [k]} P_i$.
\end{remark}

\paragraph{Matrix-vector products.} We now prove that SOCs and ASOCs induce Laplacians with fast matrix-vector products. We require a helper result on computing mutual refinements of partitions.

\begin{lemma}\label{lem:refine_partition}
Let $\calP$, $\calA$ be partitions of $[n]$. Let $\calB$ be the \emph{mutual refinement} of $\calP, \calA$, i.e., $\calB$ is a partition of $[n]$ such that if $S \in \calP$ and $T \in \calA$, then $S \cap T \in \calB$. If $\calP$ and $\calA$ are explicitly given, we can compute $\calB$ in $O(n\log(n))$ time.\footnote{There is a simple randomized $O(n)$ time solution, but it uses hashing, so we include this solution for completeness.}
\end{lemma}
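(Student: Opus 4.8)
The statement asks to compute the mutual refinement $\calB$ of two explicitly given partitions $\calP, \calA$ of $[n]$ in $O(n\log n)$ time. The key observation is that each element $i \in [n]$ lies in exactly one piece of $\calP$ and exactly one piece of $\calA$, and the piece of $\calB$ containing $i$ is determined precisely by the ordered pair (label of $i$'s piece in $\calP$, label of $i$'s piece in $\calA$). So the whole task reduces to grouping the elements of $[n]$ by this pair of labels.

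\textbf{Steps.} First, using the representation from Remark~\ref{rem:partition}, for each $i \in [n]$ read off in $O(1)$ time the index $p(i) \in [|\calP|]$ of the piece of $\calP$ containing $i$ (from $\calP$'s second array) and likewise $a(i) \in [|\calA|]$ from $\calA$'s second array; form the key $k(i) \defeq (p(i), a(i))$. This produces a list of $n$ pairs of integers, each bounded by $n$. Second, sort these $n$ keys lexicographically in $O(n\log n)$ time (e.g.\ comparison sort, or radix sort on two coordinates each in $[n]$, which is even $O(n)$); elements sharing a key are now contiguous. Third, sweep the sorted list once, starting a new piece of $\calB$ whenever the key changes, and append each $i$ to the current piece; simultaneously fill in $\calB$'s second array by recording, for each $i$, the index of the $\calB$-piece it was placed in. This sweep is $O(n)$, so the total is $O(n\log n)$. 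Finally, observe the output is correct: $i, i'$ land in the same $\calB$-piece iff $k(i) = k(i')$ iff they are in the same $\calP$-piece and the same $\calA$-piece, which is exactly the defining property $S \cap T \in \calB$ for $S \in \calP$, $T \in \calA$ (pieces of $\calB$ that would correspond to empty intersections $S \cap T$ simply never arise as keys, so $\calB$ has no empty parts, as desired for a partition of $[n]$ — note every $i$ receives a key, so $\calB$ covers all of $[n]$).

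\textbf{Main obstacle.} There is essentially no mathematical difficulty here; the only thing to be careful about is the data-structure bookkeeping — making sure both arrays of the output partition (the list of pieces and the per-vertex lookup) are produced within the claimed bound, and that the sorting step is what contributes the $\log n$ factor. As the footnote in the statement indicates, one can drop to $O(n)$ with hashing by bucketing on the keys directly instead of sorting; I would mention this but carry out the deterministic sorting-based argument since it is cleanest and suffices for all downstream uses in this section.
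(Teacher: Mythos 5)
Your proposal is correct and is essentially identical to the paper's proof: read off each element's pair of piece-labels in $\calP$ and $\calA$, sort the pairs lexicographically in $O(n\log n)$ time, and group contiguous equal keys into the pieces of $\calB$. The bookkeeping for the two-array representation of Remark~\ref{rem:partition} is handled exactly as you describe.
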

\begin{proof}
In $O(n)$ time, we compute for each element $i \in [n]$ the ordered tuple $(a, p)$, where $a$ is the partition piece in $\calA$ containing $i$, and similarly $p$ is the partition piece in $\calP$. Next we can sort the $\{(a, p)\}$ in lexicographical order to find the indices of the partition pieces in $\calB$. Finally we can assign each element of $[n]$ to the appropriate partition piece in $\calB$.
\end{proof}

\begin{lemma}\label{lem:matvec_soc}
Let $\vv$ be a $K$-SOC satisfying \eqref{eq:ksoc_def} with respect to $\vw \in \R^K_{\ge 0}$ and $\{\calP_j\}_{j \in [K]}$, partitions of $\{S_j\}_{j \in [K]}$,  and let $\va$ be an ASOC satisfying \eqref{eq:asoc_def} with respect to $\calA$, a partition of $S \subseteq [n]$. Suppose $\vw$, $\{\calP_j\}_{j \in [K]}$, and $\calA$ are explicitly given. We can compute
\[\calL(\vv \circ \va) \vu = \sum_{e \in E} \vv_e \va_e \ml_e \vu\]
for any $\vu \in \R^n$ in $O(nK\log(n))$ time.
\end{lemma}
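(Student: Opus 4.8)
The plan is to compute $\calL(\vv \circ \va)\vu$ by exploiting the observation that $\vv \circ \va$, as an edge-weight vector, decomposes into at most $K$ pieces, each of which is a \emph{nonnegative combination of clique Laplacians masked by an ASOC}. Concretely, since $\calL$ is linear in the edge weights and $\vv$ is a $K$-SOC with $\calL(\vv) = \sum_{j \in [K]} \vw_j \sum_{S \in \calP_j} \ml_S$, the entrywise product $\vv \circ \va$ satisfies
\begin{equation}\label{eq:soc_decomp}
\calL(\vv \circ \va) = \sum_{j \in [K]} \vw_j \sum_{S \in \calP_j} \calL\Par{\1_S^{(2)} \circ \va},
\end{equation}
where $\1_S^{(2)} \in \{0,1\}^E$ denotes the indicator of edges with both endpoints in $S$ (so that $\calL(\1_S^{(2)}) = \ml_S$). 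Thus it suffices to show that for a single clique-on-$S$ term, intersecting its edge set with the ASOC's edge set $\va$ still yields a Laplacian admitting a fast matrix-vector product, and that summing over $S \in \calP_j$ costs only $O(n)$ per $j$ because $\calP_j$ is a \emph{partition} (each vertex, and hence each relevant refinement piece, is touched once).

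The key step is to understand $\1_S^{(2)} \circ \va$ for a fixed $S$. Recall $\calL(\va) = \ml_T - \sum_{A \in \calA} \ml_A$ for a partition $\calA$ of some $T \subseteq [n]$; as an edge set, $\va$ is exactly the set of edges with both endpoints in $T$ but in \emph{different} pieces of $\calA$ (i.e., the complete multipartite graph on the blocks of $\calA$). Intersecting with the clique on $S$ restricts to edges with both endpoints in $S \cap T$ lying in different $\calA$-blocks. Using Lemma~\ref{lem:refine_partition} to compute, once per $j$, the mutual refinement $\calB_j$ of $\calP_j$ and $\calA$ (restricted to $S_j \cap T$) in $O(n\log n)$ time, we obtain for each $S \in \calP_j$ that
\begin{equation}\label{eq:restricted_asoc}
\calL\Par{\1_S^{(2)} \circ \va} = \ml_{S \cap T} - \sum_{\substack{B \in \calB_j \\ B \subseteq S}} \ml_{B},
\end{equation}
i.e., it is itself an ASOC on the vertex set $S \cap T$ with partition given by the blocks of $\calB_j$ contained in $S$. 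Now applying any Laplacian of the form $\ml_W = |W|\id_W - \1_W \1_W^\top$ to a vector $\vu$ costs $O(|W|)$ time: compute the sum $\sum_{k \in W}\vu_k$ and the count $|W|$, then set the output on coordinate $k \in W$ to $|W|\vu_k - \sum_{k' \in W}\vu_{k'}$. So $\ml_{S \cap T}\vu$ costs $O(|S \cap T|)$, and $\sum_{B \subseteq S} \ml_B \vu$ costs $O(\sum_{B \subseteq S}|B|) = O(|S \cap T|)$ since the blocks $B$ partition $S \cap T$; summing \eqref{eq:restricted_asoc} over all $S \in \calP_j$ costs $O(\sum_{S \in \calP_j}|S \cap T|) = O(n)$ because $\calP_j$ is a partition. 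Multiplying by $\vw_j$ and summing over $j \in [K]$, together with the $K$ refinement computations, gives the claimed $O(nK\log n)$ total runtime.

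The main obstacle I anticipate is purely bookkeeping rather than mathematical: one must be careful that \eqref{eq:restricted_asoc} holds exactly as an \emph{edge-set identity} — in particular that every edge of the clique on $S \cap T$ is either within a single $\calB_j$-block (and hence cancelled) or across two blocks (and hence retained), with no double-counting when different $S \in \calP_j$ overlap in the refinement (they do not, since $\calP_j$ is a partition, so the $S \cap T$ are disjoint). It is also worth verifying the edge cases where $T \not\subseteq S_j$, where some blocks of $\calA$ straddle $S_j$ and its complement; restricting $\calA$ to $S_j \cap T$ before refining handles this cleanly, and the restriction is computable within the partition representation of Remark~\ref{rem:partition} in $O(n)$ time. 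Once these combinatorial details are pinned down, the runtime accounting and correctness are immediate from linearity of $\calL$ and the $O(|W|)$ clique-Laplacian apply primitive.
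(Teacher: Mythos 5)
Your proof is correct and follows essentially the same route as the paper's: decompose the $K$-SOC into its $K$ single-partition pieces, compute the mutual refinement of each $\calP_j$ with $\calA$ via Lemma~\ref{lem:refine_partition} in $O(n\log n)$ time, and apply the resulting clique Laplacians in time linear in their supports. The only cosmetic difference is that the paper writes each piece as $\calL(\vv_j) - \calL(\vv_j \circ \vb)$ with $\vb$ the within-block complement of $\va$, whereas you express each masked clique directly as $\ml_{S\cap T}$ minus the refinement sub-cliques contained in $S$; these are the same computation.
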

\begin{proof}
Let $\vv_j \in \{0, 1\}^E$ be the $1$-SOC corresponding to a unit weight and a single partition $\calP_j$, so that $\vv = \sum_{j \in [K]} \vw_j \vv_j$. We will show that we can implement the matrix-vector product
\[\calL(\vv_j \circ \va) \vu = \sum_{e \in E} [\vv_j]_e \va_e \ml_e \vu\]
in $O(n\log(n))$ time. Summing over all $j \in [K]$ then gives the result.

In the rest of the proof let $\{T_1, \ldots, T_m\}$ be the partition $\calP_j$ inducing $\vv_j$. Also let $\vb = \1_F - \va$ where $F \subseteq E$ indicates the edges corresponding to the clique on $S$, which is a $1$-SOC. Observe that $\calL(\vv_j \circ \va) \vu = \calL(\vv_j) - \calL(\vv_j \circ \vb)$, and $\calL(\vv_j)\vu $ is computable in $O(n)$ time:
\[\calL(\vv_j) \vu = \sum_{T_i \in \calP_j} \ml_{T_i} \vu. \]
In particular, each $\ml_{T_i} \vu$ can be evaluated in $O(T_i)$ time using the formula \eqref{eq:clique_lap}, and $\sum_{i \in [m]} |T_i| \le n$. 

Moreover, note that $\vb \circ \vv_j$ is itself a vector in $\{0, 1\}^E$ corresponding to a $1$-SOC on the set of vertices $S \cap S_j \subseteq [n]$ (all other coordinates are $0$ in either $\vb$ or $\vv_j$). This is because each pair of vertices corresponds to a weight-$1$ edge in $\vb \circ \vv_j$ iff they lie in the same partition piece in both $\calP_j$ and $\calA$, which is an equivalence relation forming a partition of $S \cap S_j$. In fact, this partition is the mutual refinement of $\calP_j$ and $\calA$, following Lemma~\ref{lem:refine_partition} (which we call with $[n] \gets S \cap S_j$). Thus, we can compute the mutual refinement inducing $\vb \circ \vv_j$ in $O(n\log(n))$ time, and the same logic as above shows we can implement matrix-vector products through $\calL(\vv_j \circ \vb)$ in $O(n)$ time.
\end{proof}

\paragraph{Sparsification.} As a second key result, we show that it is possible to efficiently sparsify the product of a clique and an ASOC. Specifically, let $\vc \in \{0, 1\}^E_{\ge 0}$ be the indicator vector of a clique over vertices $S \subseteq [n]$ (so $\calL(\vc) = \ml_S$ defined in \eqref{eq:clique_lap}), and let $\va$ be an ASOC associated with partition $\calA$ of $A \subseteq [n]$. We give an algorithm for sparsifying $\calL(\vc \circ \va)$ down to $\approx |S|$ edges. 

We next provide some notation for capturing the structure of $\vc \circ \va$. Let the pieces of $\calA$ be denoted $\{A_i\}_{i \in [k]}$. Without loss of generality, if there are $\ell$ pieces with nonempty intersection with $S$, let them be $\{A_i\}_{i \in [\ell]}$, i.e., the first $\ell$ pieces. Finally, denote $S_i \defeq A_i \cap S$ for all $i \in [\ell]$. Then,
\begin{equation}\label{eq:biclique_sum}
\vc \circ \va = \sum_{i \in [\ell]} \sum_{j \in [i - 1]} \1_{K_{ij}},
\end{equation}
where $K_{ij}$ is the \emph{complete unweighted bipartite graph} between $S_i$ and $S_j$, and $\1_{K_{ij}} \subseteq \{0, 1\}^E$ indicates  its edges. Correspondingly we denote each complete unweighted bipartite graph Laplacian by
\[\ml_{ij} \defeq \calL(\1_{K_{ij}}), \text{ for all } 1 \le j < i \le \ell.\]
We next require a basic combinatorial fact on balanced partitions of integers.

\begin{lemma}\label{lem:balanced_partition}
Let $\{n_i\}_{i \in [\ell]} \subset \N$ and let $Z \defeq \sum_{i \in [\ell]} n_i$. Then if there exists no $i \in [\ell]$ with $n_i > \frac{Z}{3}$, there exists a partition of $[\ell]$ into $S, T = [\ell] \setminus S$ such that $\sum_{i \in S} n_i \in [\frac 1 3 Z, \frac 2 3 Z]$.
\end{lemma}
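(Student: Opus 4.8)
The plan is to prove this by a greedy prefix-sum argument. List the integers $n_1, \ldots, n_\ell$ in an arbitrary fixed order and define the partial sums $P_0 \defeq 0$ and $P_k \defeq \sum_{i \in [k]} n_i$ for each $k \in [\ell]$, so that $P_\ell = Z$. Let $k^\star$ be the smallest index with $P_{k^\star} \ge \frac 1 3 Z$, and set $S \defeq [k^\star]$ and $T \defeq [\ell] \setminus S$. I claim this partition works.

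First I would check that $k^\star$ is well-defined and satisfies $k^\star \ge 1$: it exists because $P_\ell = Z \ge \frac 1 3 Z$, and $k^\star \ge 1$ because $P_0 = 0 < \frac 1 3 Z$ (here we may assume $Z \ge 1$, as the claim is vacuous when $Z = 0$). By the minimality of $k^\star$, we have $P_{k^\star - 1} < \frac 1 3 Z$. Adding the single term $n_{k^\star}$ and invoking the hypothesis that $n_{k^\star} \le \frac 1 3 Z$ (since no $n_i$ exceeds $\frac 1 3 Z$), we get
\[\frac 1 3 Z \le P_{k^\star} = P_{k^\star - 1} + n_{k^\star} < \frac 1 3 Z + \frac 1 3 Z = \frac 2 3 Z.\]
Hence $\sum_{i \in S} n_i = P_{k^\star} \in [\frac 1 3 Z, \frac 2 3 Z]$, and consequently $\sum_{i \in T} n_i = Z - P_{k^\star} \in [\frac 1 3 Z, \frac 2 3 Z]$ as well, which yields the desired partition.

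I do not expect any real obstacle here: the entire content is the observation that the target window $[\frac 1 3 Z, \frac 2 3 Z]$ has width $\frac 1 3 Z$, which is at least as large as every increment $n_i$, so the monotone sequence of partial sums $P_0 \le P_1 \le \cdots \le P_\ell$ cannot step over the window without landing inside it. The only minor points requiring care are ensuring $k^\star \ge 1$ so that $P_{k^\star - 1}$ is defined, and dismissing the degenerate case $Z = 0$; both are immediate.
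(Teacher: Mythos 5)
Your proof is correct and uses essentially the same greedy prefix-sum idea as the paper (the paper accumulates elements until the next one would push the sum past $\frac{2Z}{3}$, while you stop at the first prefix reaching $\frac{1}{3}Z$; both rest on the observation that no single increment can jump over a window of width $\frac{1}{3}Z$). Your handling of the edge cases $k^\star \ge 1$ and $Z = 0$ is fine and slightly more careful than the paper's one-line argument.
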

\begin{proof}
Iterate over the elements, adding them to $S$, until the next element would cause the sum of $S$ to exceed $\frac{2Z}{3}$. Because each element is bounded by $\frac Z 3$, $\sum_{i \in S} n_i \in [\frac 1 3 Z, \frac 2 3 Z]$ holds at termination.
\end{proof}

We also will use a fast sparsification algorithm for complete bipartite graphs.

\begin{lemma}\label{lem:sparsify_complete_bipartite}
    Let $G$ be a complete unweighted bipartite graph with bipartition $V = L \cup R$, and let $\delta \in (0, 1)$, and let $\ml_G \in \R^{V \times V}$ denote its graph Laplacian. There is an algorithm that runs in time $O(|V| \log(\frac{|V|}{\delta}))$ and returns a graph Laplacian $\ml_H$ such that with probability $\ge 1 - \delta$,
    \[\ml_H \approx_1 \ml_G,\; \nnz\Par{\ml_H} = O\Par{|V|\log\Par{\frac{|V|}{\delta}}}.\]
\end{lemma}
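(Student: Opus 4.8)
The plan is to prove this by leverage-score (effective-resistance) sampling, exploiting the fact that a complete bipartite graph has all of its edge leverage scores equal, so that the sampling distribution is uniform and no per-edge computation is needed.

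First I would set $a \defeq |L|$, $b \defeq |R|$, and $n \defeq |V| = a + b$, discarding the trivial case in which $L$ or $R$ is empty (then $\ml_G = \0$ and we output $\ml_H = \0$). For $a, b \ge 1$ the graph $G$ is connected, so $\ml_G$ has rank $n - 1$ and $\Span(\ml_G) = \1_V^\perp$. Every edge of $G$ is a pair $\{u,v\}$ with $u \in L$, $v \in R$, and the automorphism group of $G$ acts transitively on its edges; hence all $ab$ edges have a common effective resistance, and since leverage scores sum to $\rank(\ml_G) = n-1$, each edge $e$ has leverage score $\tau_e = \frac{n-1}{ab}$. The corresponding leverage-score sampling probability is then $p_e = \frac{\tau_e}{n-1} = \frac 1 {ab}$, i.e.\ simply the uniform distribution over the $ab$ edges.

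Next I would draw $N \defeq \lceil C n \log(\tfrac n \delta)\rceil$ edges $e_1,\dots,e_N$ i.i.d.\ uniformly from $L \times R$ — each sample is an independent uniform element of $L$ paired with an independent uniform element of $R$, costing $O(1)$ — and return the weighted graph Laplacian
\[\ml_H \defeq \frac{ab}{N}\sum_{k \in [N]} \ml_{e_k},\]
collapsing repeated edges so that the stored sparse matrix has at most $2N + n = O(n\log(\tfrac n\delta))$ nonzeros; building the edge list and merging duplicates takes $O(N)$ time with hashing, which matches the claimed running time. For correctness, let $\mproj \defeq \ml_G^\dagger \ml_G$ be the orthogonal projection onto $\1_V^\perp$ and set $\mw_k \defeq \frac{ab}{N}\, \ml_G^{\dagger/2}\ml_{e_k}\ml_G^{\dagger/2}$. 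The $\mw_k$ are i.i.d.\ rank-one PSD matrices, and since $\sum_{e}\ml_G^{\dagger/2}\ml_e\ml_G^{\dagger/2} = \ml_G^{\dagger/2}\ml_G\ml_G^{\dagger/2} = \mproj$ we get $\E[\sum_k \mw_k] = \mproj$, while $\normop{\mw_k} = \frac{ab}{N}\tau_{e_k} = \frac{n-1}{N}$ because $\ml_G^{\dagger/2}\ml_{e_k}\ml_G^{\dagger/2}$ is rank one with trace $\tau_{e_k}$. A matrix Chernoff bound for sums of i.i.d.\ bounded PSD matrices then gives $(1-\eps)\mproj \preceq \sum_k \mw_k \preceq (1+\eps)\mproj$ with probability at least $1 - 2n\exp(-\Omega(\eps^2 N/(n-1)))$, which is at least $1-\delta$ for a suitable absolute constant $C$ and any fixed $\eps \in (0, 1 - e^{-1}]$. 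Conjugating this two-sided bound by $\ml_G^{1/2}$, using that each $\ml_{e_k}$ has range in $\1_V^\perp$ (so $\mproj\ml_H\mproj = \ml_H$) and that $\ml_G$ is a connected-graph Laplacian (so $\ml_G^{1/2}\mproj\ml_G^{1/2} = \ml_G$ and $\ml_G^{1/2}\ml_G^{\dagger/2} = \mproj$), turns $\sum_k \mw_k = \ml_G^{\dagger/2}\ml_H\ml_G^{\dagger/2}$ into $(1-\eps)\ml_G \preceq \ml_H \preceq (1+\eps)\ml_G$, which implies $\ml_H \approx_1 \ml_G$ since $e^{-1} \le 1 - \eps$ and $1 + \eps \le e$.

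I do not expect a genuinely hard step here. The two points that need care are (i) invoking the \emph{with-replacement} (i.i.d.) form of the matrix Chernoff / leverage-score sampling guarantee, rather than the independent-Bernoulli version more commonly quoted, and bounding each summand's operator norm by the common leverage score; and (ii) observing that it is exactly the uniformity of the leverage scores that lets the sampling run in $O(N) = O(n\log(n/\delta))$ time without ever computing an effective resistance or iterating over the $ab = \Theta(n^2)$ edges, which is what the target runtime forces.
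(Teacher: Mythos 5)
Your proposal is correct and follows essentially the same route as the paper: uniform with-replacement sampling of $O(|V|\log(|V|/\delta))$ edges with weight $|L||R|/m$ each, justified by the observation that symmetry makes all effective resistances (leverage scores) equal. The only difference is that the paper invokes Theorem 1 of \cite{SpielmanS11} as a black box, whereas you re-derive the concentration step via a matrix Chernoff bound for i.i.d.\ PSD summands; both are valid.
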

\begin{proof}
Let $m = O(|V| \log(\frac{|V|}{\delta}))$ denote the allowed sparsity parameter. For a sufficiently large constant in $m$, Theorem 1 of \cite{SpielmanS11} shows that sampling $m$ random edges of $G$ with replacement, and adding them to $H$ with weight $\frac{|L||R|}{m}$, gives a Laplacian $\ml_H$ meeting the desired criteria. Here, we used that all the effective resistances of the edges in $G$ are the same, by symmetry.
\end{proof}

We now give our overall sparsification algorithm for the graph with Laplacian $\ml(\vc \circ \va)$. The main idea is to recursively apply Lemma~\ref{lem:sparsify_complete_bipartite} across balanced partitions whenever possible.

\begin{lemma}\label{lem:sparsify_soc_asoc}
Let $\{S_i\}_{i \in [\ell]}$ be a partition of $S \subseteq [n]$, let $\delta \in (0, 1)$, and let 
\[\ml \defeq \sum_{i \in [\ell]} \sum_{j \in [i - 1]} \ml_{ij}, \]
where $\ml_{ij}$ is the graph Laplacian of the complete unweighted bipartite graph between $S_i$ and $S_j$. There is an algorithm that runs in time $O(|S|\log(n)\log(\frac n \delta))$, and returns a graph Laplacian $\tml$ such that with probability $\ge 1 - \delta$,
\[\tml \approx_1 \ml,\; \nnz(\tml) = O\Par{|S|\log(n)\log\Par{\frac n \delta}}.\]
\end{lemma}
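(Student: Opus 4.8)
The plan is to build $\tml$ by recursively decomposing the complete multipartite graph on the parts $\{S_i\}_{i \in [\ell]}$ into edge-disjoint subgraphs, each of which is either a complete bipartite graph (sparsified in one shot via Lemma~\ref{lem:sparsify_complete_bipartite}) or a complete multipartite graph on a strictly smaller vertex set (handled recursively). Write $n_i \defeq |S_i|$ and $Z \defeq \sum_{i\in[\ell]} n_i = |S|$. If $\ell = 1$ the graph has no edges and we return the zero Laplacian. If some part has $n_{i^*} > \tfrac13 Z$, I would peel it off: the edges of $\ml$ split as the disjoint union of the complete bipartite graph between $S_{i^*}$ and $\bigcup_{i \ne i^*} S_i$, together with the complete multipartite graph on $\{S_i\}_{i \ne i^*}$ (which now spans $Z - n_{i^*} < \tfrac23 Z$ vertices); sparsify the former and recurse on the latter. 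Otherwise no part exceeds $\tfrac13 Z$, so Lemma~\ref{lem:balanced_partition}, applied with $n_i = |S_i|$, yields a split $[\ell] = I_1 \sqcup I_2$ with $\sum_{i \in I_1} n_i \in [\tfrac13 Z, \tfrac23 Z]$, hence both groups span at most $\tfrac23 Z$ vertices; the edges of $\ml$ then split disjointly into the complete bipartite graph between $\bigcup_{i\in I_1} S_i$ and $\bigcup_{i \in I_2} S_i$, plus the complete multipartite graphs on $\{S_i\}_{i \in I_1}$ and on $\{S_i\}_{i \in I_2}$, which I would sparsify and recurse on respectively.

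For correctness I would induct on $Z$. The key point is that at every node the decomposition used is \emph{edge-disjoint}, so the Laplacians of the pieces add up \emph{exactly} to the Laplacian at that node, with no double counting. Since the Loewner order is additive, $\approx_1$ is preserved under sums of PSD matrices; thus replacing each complete bipartite piece by its Lemma~\ref{lem:sparsify_complete_bipartite} sparsifier and each multipartite piece by its inductively guaranteed $\approx_1$ sparsifier produces $\tml \approx_1 \ml$. Notably the approximation factor does not compound with recursion depth, precisely because no edge is ever counted in two pieces.

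For the resource bounds, note that every parent-to-child edge in the recursion tree shrinks the vertex count by a factor of at least $\tfrac23$, so a node at depth $k$ spans at most $(\tfrac23)^k |S|$ vertices; since every node spans at least one vertex, the depth is $O(\log|S|) = O(\log n)$. Moreover the subproblems at any fixed depth have pairwise-disjoint vertex sets, so their sizes sum to at most $|S|$; hence the total number of recursive calls (and of calls to Lemma~\ref{lem:sparsify_complete_bipartite}) is $O(|S|\log n)$, and the sum of node sizes over the whole tree is $O(|S|\log n)$. Setting the failure probability in each bipartite-sparsification call to $\delta' \defeq \delta/(C|S|\log n)$ for a suitable constant $C$ and union bounding gives overall success probability $\ge 1 - \delta$, while $\log(1/\delta') = O(\log(n/\delta))$. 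A node spanning $Z$ vertices spends $O(Z)$ time to detect a heavy part or run the constructive Lemma~\ref{lem:balanced_partition}, plus $O(Z\log(Z/\delta'))$ time for Lemma~\ref{lem:sparsify_complete_bipartite}, and it adds $O(Z\log(Z/\delta'))$ edges; summing over the tree gives total time and sparsity $O(|S|\log n\log(n/\delta))$.

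The main obstacle is the heavy-part case: a balanced \emph{vertex} split is only available when no part holds more than a third of the vertices, so when one does, I must instead peel its incident complete bipartite graph and argue that the leftover multipartite subproblem has lost a constant fraction of the vertices. Verifying that this peeling step is genuinely an edge-disjoint decomposition and that it strictly shrinks the size measure driving the recursion is the delicate bookkeeping; the balanced-split case and the error-composition argument are then routine.
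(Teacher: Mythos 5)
Your proposal is correct and follows essentially the same route as the paper's proof: the same two-case recursion (peel off a part heavier than $Z/3$ as a complete bipartite graph, otherwise use Lemma~\ref{lem:balanced_partition} to get a balanced vertex split), with each complete bipartite piece sparsified via Lemma~\ref{lem:sparsify_complete_bipartite} and the same $\calT(Z) = O(CZ\log Z)$ recurrence for time and sparsity. Your explicit remarks on edge-disjointness (hence no compounding of the $\approx_1$ factor) and the union bound over calls are the same bookkeeping the paper does, just spelled out slightly more.
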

\begin{proof}
Throughout the proof, let $C = \Theta(\log(\frac n \delta))$ with a large enough constant, such that when we apply Lemma~\ref{lem:sparsify_complete_bipartite} at most $n$ times, we can assume that all calls succeed, run in time $\le C|V|$ if called with input vertex set $V$, and return graphs with sparsity at most 
$C|V|$. We condition on this event for the remainder of the proof which gives the failure probability.

Our sparsification algorithm proceeds recursively in one of two ways as follows. Consider the sizes of partition pieces $\{n_i\}_{i \in [\ell]}$, where $n_i \defeq |S_i|$ for all $i \in [\ell]$, and let $Z \defeq |S| \le n$. 

If there is some partition piece $i \in [\ell]$ with $n_i > \frac Z 3$, then we form a bipartition where one side consists of $S_i$ and the other side consists of $\{S_j\}_{j \in [\ell] \setminus \{i\}}$. Note that all edges cross this bipartition are present in $\ml$, due to each term $\ml_{ij}$.
Thus, we sparsify this complete unweighted bipartite graph using Lemma~\ref{lem:sparsify_complete_bipartite}, and recurse on the remaining partition pieces $\{S_j\}_{j \in [\ell] \setminus \{i\}}$.

Otherwise, if there is no such partition piece, then by Lemma~\ref{lem:balanced_partition} there exists a balanced partition into two subsets of $S$ containing between $\frac Z 3$ and $\frac {2Z} 3$ vertices. We again sparsify the complete unweighted bipartite graph across this bipartition using Lemma~\ref{lem:sparsify_complete_bipartite}, and recurse on the two sides separately.

Overall, if we let $\calT(Z)$ denote the time it takes to implement this recursive strategy given an input partitioned set $S \subseteq [n]$ with size $Z = |S|$, we have shown the recursion
\begin{equation}\label{eq:time_recurse}
\calT(Z) \le \max_{Z' \in [\frac Z 3, \frac {2Z} {3}]}\calT\Par{Z'} + \calT\Par{Z - Z'} + CZ,
\end{equation}
because both cases of the recursion are handled by \eqref{eq:time_recurse} (in the first case discussed, there is only one recursion piece, which is dominated by the second case discussed). It is clear that \eqref{eq:time_recurse} solves to $\calT(Z) = O(CZ\log(Z))$, and the conclusion regarding the runtime follows from our choice of $C$ and using that $Z \le n$. A similar recursion holds for the sparsity bound.
\end{proof}

\subsection{Discretizing low-dimensional distances}\label{ssec:grid}

In this section, we show how to discretize a target implicit vector $\vg \in \R^E$ into a small number of SOCs or ASOCs via grid rounding, where $E$ indexes the edge set of the clique on $[n]$. Our algorithm will specifically apply to the case where $\vg$ is implicitly specified by distances between the columns $\{\vq_i\}_{i \in [n]} \subset \R^k$ of a matrix $\mq \in \R^{k \times n}$ in the following way:
\[\vg_e = \norm{\vq_u - \vq_v}_2^2, \text{ for all } e = (u, v) \in E.\]
Here, we think of $k$ as a small parameter, obtained via a low-dimensional embedding.

We give efficient algorithms for finding structured approximations of $\vg$ in the following senses. 

\begin{definition}[SOC approximation]\label{def:soc_approx}
    Let $\alpha, \beta, \gamma > 0$, $m \in \N$, and $\vg \in \R^{E}_{\ge 0}$. We say $\{\tvg^{(i)}\}_{i \in [m]} \subset \R^E_{\ge 0}$ is an $(\alpha,\beta,\gamma, m)$-SOC approximation to $\vg$ if the following hold.
    \begin{itemize}
    \item For all $i \in [m]$, $\tvg^{(i)}$ is a scalar multiple of a SOC.
        \item For $\vx \defeq \sum_{i \in [m]} \tvg^{(i)}$, we have $\0_E \leq \vx \leq \alpha \1_E$ entrywise.
        \item For any $e \in E$ where $\vg_e \leq 1$, $\vx_e \geq \beta$.
        \item For any $e \in E$ where $\vg_e > \gamma$, $\vx_e = 0$. 
    \end{itemize}
\end{definition}

\begin{definition}[ASOC approximation]\label{def:asoc_cost}
Let $\alpha, \beta, \gamma > 0$ and  for $\mq \in \R^{k \times n}$, let $\vg \in \R^E_{\ge 0}$ be defined as
\[\vg_e = \norm{\mq_{:u} - \mq_{:v}}_2^2 \text{ for all } e = (u, v) \in E.\]
We say $\{\tvg^{(i)}\}_{i \in [m]} \subset \R^E_{\ge 0}$ is a $(\alpha, \beta, \gamma, m)$-ASOC approximation to $\vg$ if the following hold.
\begin{itemize}
    \item For all $i \in [m]$, $\tvg^{(i)}$ is a scalar multiple of an ASOC.
    \item $\gamma \ge \max_{(u, v) \in E} \max_{j \in [k]} \Par{\mq_{ju} - \mq_{jv}}^2$.
    \item For all $i \in [m]$, $\tvg^{(i)} \le \beta \vg$ entrywise, and $\sum_{i \in [m]} \tvg^{(i)} \ge \vg^{(\ge \frac \gamma \alpha)}$ entrywise, where for $c > 0$,
    \[\Brack{\vg^{(\ge c)}}_{e} \defeq \sum_{j \in [k]} \Par{\mq_{ju} - \mq_{jv}}^2 \ind_{(\mq_{ju} - \mq_{jv})^2 \ge c}, \text{ for all } e = (u, v) \in E. \]
\end{itemize}
\end{definition}

\paragraph{SOC approximation.} We next describe our SOC approximation algorithm (Definition~\ref{def:soc_approx}), which uses a standard  grid rounding scheme provided below to form partitions of our point set. 

\begin{lemma}\label{lem:grid_rounding}
Let $\{\vq_i\}_{i \in [n]} \subset \R^k$ be given, and let $\rho > 0$. There is a randomized algorithm that in $O(nk\log(n))$ time returns a partition $\calP = \{S_j\}_{j \in [\ell]}$ of $[n]$, satisfying the following properties.
\begin{itemize}
    \item If $u, v \in S_j$ for some $j \in [\ell]$, then $\norms{\vq_u - \vq_v}_2 \le \rho \sqrt k$.
    \item Over the randomness of the algorithm, the probability that each pair $(u, v) \in [n] \times [n]$ do not belong to the same partition piece is at most $\frac{2\sqrt k}{\rho}\norm{\vq_u - \vq_v}_2$.
\end{itemize}
\end{lemma}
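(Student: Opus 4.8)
\textbf{Proof proposal for Lemma~\ref{lem:grid_rounding}.}
The plan is to use a randomly-shifted axis-aligned grid of side length $\rho$. The algorithm samples an independent random shift $\vb \in \R^k$ with coordinates $\vb_j \sim \textup{Unif}[0,\rho)$, and assigns to each point $\vq_i$ the cell label
\[\vc_i \defeq \Par{\Big\lfloor \tfrac{\vq_{1i}-\vb_1}{\rho}\Big\rfloor,\ \ldots,\ \Big\lfloor \tfrac{\vq_{ki}-\vb_k}{\rho}\Big\rfloor} \in \mathbb{Z}^k.\]
The partition $\calP$ is defined by the equivalence relation $u \sim v \iff \vc_u = \vc_v$, i.e.\ each piece is the set of indices landing in a common occupied grid cell. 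Computing all $n$ cell labels costs $O(nk)$ time, and the grouping is realized by lexicographically sorting the $n$ labels (each comparison of two $k$-tuples is $O(k)$, and $O(n\log n)$ comparisons suffice), for a total of $O(nk\log n)$; crucially we never enumerate empty cells. This gives the claimed runtime.

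For the first property: if $u \sim v$ then for each coordinate $j$ the reals $\rho^{-1}(\vq_{ju}-\vb_j)$ and $\rho^{-1}(\vq_{jv}-\vb_j)$ lie in a common half-open unit interval $[m,m+1)$, so $|\vq_{ju}-\vq_{jv}| < \rho$. Hence $\norms{\vq_u - \vq_v}_\infty < \rho$, and $\norms{\vq_u-\vq_v}_2 \le \sqrt k\,\norms{\vq_u-\vq_v}_\infty \le \rho\sqrt k$.

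For the second property, note $u \not\sim v$ iff $\vc_u \neq \vc_v$, which holds iff for some coordinate $j$ the interval between $\vq_{ju}$ and $\vq_{jv}$ contains a grid point of the form $\vb_j + m\rho$, $m \in \mathbb{Z}$. For a fixed $j$, as $\vb_j$ ranges uniformly over $[0,\rho)$ the grid points in that coordinate are exactly $\rho$-periodic with uniformly random phase, so the set of shifts for which this interval (of length $|\vq_{ju}-\vq_{jv}|$) meets the grid has Lebesgue measure $\min(|\vq_{ju}-\vq_{jv}|,\rho)$; hence this event has probability $\min(\rho^{-1}|\vq_{ju}-\vq_{jv}|,1) \le \rho^{-1}|\vq_{ju}-\vq_{jv}|$. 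Taking a union bound over $j \in [k]$,
\[\Pr\Brack{u \not\sim v} \le \rho^{-1}\sum_{j \in [k]} |\vq_{ju}-\vq_{jv}| = \rho^{-1}\norms{\vq_u-\vq_v}_1 \le \frac{\sqrt k}{\rho}\norms{\vq_u-\vq_v}_2 \le \frac{2\sqrt k}{\rho}\norms{\vq_u - \vq_v}_2,\]
which is the desired bound (with slack to spare).

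\textbf{Main obstacle.} This is essentially a textbook randomly-shifted-grid argument, so there is no serious difficulty; the only points requiring any care are (i) observing the grouping can be done in $O(nk\log n)$ by sorting the occupied cell labels rather than iterating over grid cells, and (ii) the per-coordinate measure computation for the separation probability, which relies precisely on each coordinate's grid being $\rho$-periodic with a uniformly random phase.
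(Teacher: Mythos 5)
Your proposal is correct and uses the same algorithm as the paper (a randomly shifted axis-aligned grid of side $\rho$, with grouping by lexicographic sort of occupied cell labels); the runtime accounting and the first property are handled identically. The one place you diverge is the separation probability: you union-bound over coordinates, getting $\Pr[u \not\sim v] \le \rho^{-1}\norms{\vq_u-\vq_v}_1 \le \frac{\sqrt k}{\rho}\norms{\vq_u-\vq_v}_2$, whereas the paper computes the exact product $\prod_{j}(1 - \rho^{-1}|[\vq_u-\vq_v]_j|)$ of per-coordinate non-separation probabilities and lower-bounds it via $1-x \ge e^{-2x}$, which forces a case split on whether $\norms{\vq_u-\vq_v}_2 \le \rho/2$ and costs the factor of $2$ in the stated bound. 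Your union bound is simpler, needs no case analysis, and actually yields a constant better than what the lemma asks for, so it is a clean (if minor) improvement on the paper's argument.
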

\begin{proof}
The algorithm works as follows: we discretize each dimension of $\R^k$ into intervals of length $\rho$, where the interval endpoints are shifted by a random amount chosen uniformly in $[0, \rho)$. We then place two points in the same partition piece if and only if they lie in the same grid box in $\R^k$. It is clear that the distance bound of $\rho\sqrt k$ holds for points in the same grid box. 

We now prove the second claim. First suppose that $\norms{\vq_u - \vq_v}_2 \le \frac \rho 2$. Then by independence, the probability that $\vq_u$ and $\vq_v$ are in the same partition piece is the product of probabilities that they are not separated in any dimension, which we can lower bound by
\begin{align*}\prod_{i \in [k]} \Par{1 - \frac{\Abs{\Brack{\vq_u - \vq_v}_i}}{\rho}} &\ge \prod_{i \in [k]}\exp\Par{-\frac{2\Abs{\Brack{\vq_u - \vq_v}_i}}{\rho}} \\
&= \exp\Par{-\frac 2 \rho \norm{\vq_u - \vq_v}_1} \\
&\ge 1 -\frac 2 \rho \norm{\vq_u - \vq_v}_1 \ge 1 -\frac {2\sqrt k} \rho \norm{\vq_u - \vq_v}_2.\end{align*}
Thus they are separated with probability at most $\frac {2\sqrt k} \rho \norm{\vq_u - \vq_v}_2$. In the other case of $\norm{\vq_u - \vq_v}_2 > \frac \rho 2$, the bound of $\frac{2\sqrt k}{\rho}\norm{\vq_u - \vq_v}_2$ is trivial as it exceeds $1$.

For the runtime, it is enough to first compute the index of each point, i.e., a $k$-tuple of integers specifying grid boxes in each dimension, in $O(nk)$ time. We can then find all unique grid boxes by sorting the indices in lexicographical order, which takes $O(nk\log(n))$ time.
\end{proof}

Given this result, our SOC approximation follows straightforwardly.

\begin{lemma}\label{lem:soc_algo}
Let $\{\vq_i\}_{i \in [n]} \subset \R^k$, $\beta > 0$, $\delta \in (0, 1)$, and for all $e = (u, v) \in E$ suppose
\[\vg_e = \norm{\vq_u - \vq_v}_2^2.\]
Then we can compute $\{\tvg^{(i)}\}_{i \in [m]}$, an $(\alpha, \beta, \gamma, m)$-SOC approximation to $\vg$, for
\[\alpha = \beta m,\; \gamma = 16k^2,\; m = \left\lceil 2\log_2\Par{\frac n \delta}\right\rceil,\]
with probability $\ge 1 - \delta$ in time $O(nk\log(n)\log(\frac n \delta))$.
\end{lemma}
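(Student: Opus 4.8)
The plan is to run the grid rounding routine of Lemma~\ref{lem:grid_rounding} $m$ times independently, each with side-length parameter $\rho \defeq 4\sqrt k$, and for the $i$-th resulting partition $\calP_i$ of $[n]$ define $\tvg^{(i)}$ to be $\beta$ times the (unit-weight, $1$-piece-index) sum-of-cliques induced by $\calP_i$; that is, $[\tvg^{(i)}]_e = \beta$ if the endpoints of $e$ lie in a common piece of $\calP_i$ and $[\tvg^{(i)}]_e = 0$ otherwise. Each $\tvg^{(i)}$ is by construction a scalar multiple of a SOC in the sense of \eqref{eq:ksoc_def} (take $K = 1$, the single subset equal to $[n]$, weight $\beta$, and partition $\calP_i$), so the first bullet of Definition~\ref{def:soc_approx} holds. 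Setting $\vx \defeq \sum_{i \in [m]} \tvg^{(i)}$, every coordinate $\vx_e$ lies in $\{0, \beta, \dots, m\beta\}$, so $\0_E \le \vx \le \beta m \1_E = \alpha \1_E$, giving the second bullet.

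For the support condition (fourth bullet) I would use that the diameter bound in Lemma~\ref{lem:grid_rounding} is deterministic: any two points sharing a partition piece are within $\ell_2$-distance $\rho\sqrt k = 4k$, i.e. squared distance at most $16 k^2 = \gamma$. Hence whenever $\vg_e > \gamma$ the endpoints of $e$ never co-occur in any $\calP_i$, so $\vx_e = 0$ with certainty. For the coverage condition (third bullet), fix $e = (u,v)$ with $\vg_e = \norm{\vq_u - \vq_v}_2^2 \le 1$. By the second property of Lemma~\ref{lem:grid_rounding}, $u$ and $v$ are separated in one rounding with probability at most $\tfrac{2\sqrt k}{\rho}\norm{\vq_u - \vq_v}_2 \le \tfrac12$; by independence they fail to co-occur in all $m$ roundings with probability at most $2^{-m} \le (\delta/n)^2$ by the choice $m = \lceil 2\log_2(n/\delta)\rceil$. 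A union bound over the at most $\binom n 2 \le n^2$ edges with $\vg_e \le 1$ then shows that with probability $\ge 1-\delta$ every such edge has $\vx_e \ge \beta$. The runtime is $m$ invocations of Lemma~\ref{lem:grid_rounding}, each $O(nk\log n)$, for a total of $O(nk\log(n)\log(n/\delta))$.

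There is no substantive obstacle here: the lemma is essentially a packaging of Lemma~\ref{lem:grid_rounding}, and the correctness of the diameter/support side is deterministic while only the coverage side requires the probabilistic union bound. The one place to exercise care is the parameter choice: $\rho$ must be large enough that unit-distance pairs are separated with probability bounded below $1$ (which needs $\rho$ a constant multiple of $\sqrt k$), yet the induced squared-diameter $\rho^2 k$ must equal the target $\gamma$; these constraints are met exactly at $\rho = 4\sqrt k$, which forces $\gamma = 16k^2$ and yields separation probability $\le \tfrac12$.
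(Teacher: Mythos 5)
Your proposal is correct and follows essentially the same route as the paper: the paper also sets $\rho = \sqrt{\gamma/k} = 4\sqrt{k}$, defines each $\tvg^{(i)}$ as $\beta$ times the indicator of co-occurrence in an independent call to Lemma~\ref{lem:grid_rounding}, obtains the support condition deterministically from the diameter bound, and gets the coverage condition from the separation probability $\le \tfrac12$ amplified over $m$ trials plus a union bound over edges.
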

\begin{proof}
Let $\rho = \sqrt{\gamma / k}$. We repeatedly call Lemma~\ref{lem:grid_rounding} with the given value of $\rho$, for $m$ times in total. For the $i^{\text{th}}$ call, we let $\tvg^{(i)}$ be defined as follows: for all $(u, v) \in E$,
\begin{equation}\label{eq:clique_def}
\tvg^{(i)}_{(u, v)} \defeq \begin{cases}
\beta    & (u, v) \text{ belong to the same partition piece}, \\
0   & \text{ otherwise}
\end{cases}.
\end{equation}
Each $\tvg^{(i)}$ defined via \eqref{eq:clique_def} is a scalar multiple of a SOC. Also, because $\vx$ is a sum of $m$ vectors from $\{0, \beta\}^E$, the bound $\alpha = \beta m$ clearly holds. This gives the first two conditions in Definition~\ref{def:soc_approx}.

To obtain the third condition in Definition~\ref{def:soc_approx}, consider some $e = (u, v) \in E$ such that $\vg_e \le 1$, so $\norms{\vq_u - \vq_v}_2 \le 1$. Then $u$ and $v$ belong to the same partition piece with probability at least 
\[1 - \frac{2\sqrt k}{\rho} = 1 - \frac{2k}{\sqrt \gamma} \ge \half.\]
Thus, repeating $m$ times guarantees that some $\tvg^{(i)}_e = \beta$ except with probability $\frac{\delta}{n^2}$, and then union bounding over all edges in $E$ gives the overall failure probability.

To obtain the last condition in Definition~\ref{def:soc_approx}, by the first guarantee in Lemma~\ref{lem:grid_rounding}, if some coordinate $\vg_e$ exceeds $\gamma$, i.e., $\norms{\vq_u - \vq_v}_2 > \sqrt \gamma = \rho \sqrt k$ where $e = (u, v)$, then $u$ and $v$ will never be partitioned together and thus $\vx_e = 0$ always. This does not contribute to the failure probability.
\end{proof}

\paragraph{ASOC approximation.} We now similarly give our ASOC approximation of distance vectors. We begin with a one-dimensional partitioning result.

\begin{lemma}\label{lem:1d_grid}
Let $\{q_i\}_{i \in [n]} \subset \R$ be given, and let $\rho > 0$. There is an algorithm that in $O(n\log(n))$ time returns a set $S \subseteq [n]$, as well as a partition $\{S_j\}_{j \in [\ell]}$ of $S$, satisfying the following properties.
\begin{enumerate}
    \item If $(u, v) \in [n] \times [n]$ have $|q_u - q_v| \in [\frac 3 2 \rho, 2\rho)$, then with probability at least $\frac 1 4$, we have that $\{u, v\} \in S$, and $u, v$ belong to different pieces of the partition.
    \item If $(u, v) \in [n] \times [n]$ have $u \neq v$ and $|q_u - q_v| \le \rho$, then either $\{u, v\} \subseteq S_j$ for some $j \in [\ell]$, or $S$ contains at most one of $u$ or $v$.
\end{enumerate}
\end{lemma}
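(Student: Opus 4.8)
The plan is to run a single randomly-shifted grid at scale $2\rho$ and keep only the points that sit safely interior to their grid cell. Concretely, I would draw a shift $\theta$ uniformly from $[0,2\rho)$, let the grid boundaries be the points $\theta + 2k\rho$ for $k \in \mathbb{Z}$, and let the cells be the half-open intervals $C_k := [\theta + 2k\rho,\; \theta + 2(k+1)\rho)$. I then take $S$ to be the set of indices $i \in [n]$ whose coordinate $q_i$ lies at distance strictly more than $\tfrac{\rho}{2}$ from every grid boundary --- equivalently, $(q_i - \theta) \bmod 2\rho \in (\tfrac{\rho}{2}, \tfrac{3\rho}{2})$ --- and let the partition $\{S_j\}_{j\in[\ell]}$ of $S$ consist of the nonempty sets of the form $S \cap C_k$. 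Computing each $q_i$'s cell index and its residue modulo $2\rho$ takes $O(n)$ arithmetic operations, and grouping the surviving points by cell index costs $O(n\log n)$ via sorting, giving the claimed runtime.

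For the second (deterministic) property, I would argue by contradiction: suppose $u \neq v$, $|q_u - q_v| \le \rho$, and both $u,v \in S$, yet $q_u$ and $q_v$ lie in different cells. Then some grid boundary $b$ lies strictly between $q_u$ and $q_v$, so $|q_u - q_v| = |q_u - b| + |q_v - b|$; since both $q_u$ and $q_v$ are at distance $> \tfrac{\rho}{2}$ from $b$ (they are in $S$), this forces $|q_u - q_v| > \rho$, a contradiction. Hence $u, v$ share a cell and therefore a partition piece, which is exactly the desired dichotomy (if not both of $u,v$ are in $S$, the alternative ``$S$ contains at most one of $u,v$'' holds vacuously).

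For the first property, fix $(u,v)$ with $d := |q_u - q_v| \in [\tfrac32\rho, 2\rho)$, say $q_u < q_v$, and track $B$, the smallest grid boundary exceeding $q_u$; as $\theta$ ranges uniformly over $[0,2\rho)$, the offset $B - q_u$ is uniform over $(0, 2\rho]$. I then consider the event $B - q_u \in (\tfrac{\rho}{2},\; d - \tfrac{\rho}{2})$, which has probability $\frac{(d-\rho/2)-\rho/2}{2\rho} = \frac{d-\rho}{2\rho} \ge \frac14$ since $d \ge \tfrac32\rho$ and the target interval is contained in $(0,2\rho]$. On this event, $B$ is the unique boundary in $(q_u, q_v)$ (its neighbours $B \pm 2\rho$ fall outside $(q_u,q_v)$ because $d < 2\rho$), so $q_u$ and $q_v$ occupy adjacent cells; moreover a short calculation using $d < 2\rho$ shows $q_u$ and $q_v$ are each at distance $> \tfrac{\rho}{2}$ from both of their nearest boundaries, hence both indices lie in $S$. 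Thus $u,v \in S$ and belong to different pieces, as required.

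The one genuine design choice --- and the only ``obstacle'' --- is calibrating the grid width and the interior margin simultaneously: the second property forces the margin to be at least $\tfrac{\rho}{2}$ (so that two surviving points across a boundary are more than $\rho$ apart), while the first property needs the ``good'' sub-interval for $B$ to have positive length for every $d \ge \tfrac32\rho$, which caps the margin at $\tfrac{\rho}{2}$ once the width is $2\rho$. The choices (width $2\rho$, margin $\tfrac{\rho}{2}$) make both constraints simultaneously tight, after which the argument reduces to elementary interval arithmetic.
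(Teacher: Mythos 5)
Your proof is correct, but the construction is genuinely different from the paper's. The paper tiles $\R$ with intervals of width $\rho$ (randomly shifted in $[0,\rho)$), colors them black/white in alternating fashion with an additional independent fair coin deciding which color is which, takes $S$ to be the points in black intervals, and lets the pieces be the black intervals. Property 2 there follows because two points within distance $\rho$ lie in the same or adjacent intervals, and adjacent intervals have opposite colors; Property 1 follows by multiplying the probability $\tfrac12$ that $q_u$'s interval is black by the probability $\ge \tfrac12$ that $q_v$ lands two intervals over. Your version replaces the coloring trick with a margin: width-$2\rho$ cells with a single uniform shift, keeping only points more than $\tfrac{\rho}{2}$ from every boundary. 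Property 2 becomes a triangle-inequality argument across the separating boundary, and Property 1 is an exact computation $\frac{d-\rho}{2\rho}\ge\frac14$. What your route buys is a single source of randomness (no auxiliary coin) and a survival criterion that is a deterministic function of the shift; what the paper's route buys is that $S$ is a union of whole intervals, so no separate ``interiority'' check is needed. Both are elementary, both hit the same $\tfrac14$ constant, and both run in $O(n\log n)$ by sorting cell indices, so either proof serves the downstream use in Lemma~\ref{lem:1d_asoc} equally well.
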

\begin{proof}
The algorithm is similar to Lemma~\ref{lem:grid_rounding}. We first tile $\R$ with intervals of length $\rho$ (with a random offset in $[0, \rho)$), and we color the intervals either ``black'' or ``white'' in alternating fashion. Then with probability $\half$ we flip black and white intervals. We let $S$ consist of the points in black intervals, and the partition pieces correspond to points falling in the same black interval. The runtime is immediate by sorting the list of interval indices. The second condition is also clear because if $|q_u - q_v| \le \rho$, then the two points lie in either the same interval or adjacent intervals.

For the first condition, it is clearly enough to lower bound the probability that $\{u, v\} \in S$, because they cannot fall in the same piece of the partition if this is the case (as $|q_u - q_v| > \rho$). Without loss of generality, suppose that $q_u = 0$ and $q_v \in [\frac 3 2 \rho, 2\rho)$. Let $s \in [0, \rho)$ be the uniformly random shift, so that $0 \in (s - \rho, s]$. With probability $\half$, this is a black interval. Moreover,
\[\Pr_{s \simunif [0, \rho)}\Brack{q_v \in (s + \rho, s + 2\rho]} = \Pr_{s \simunif [0, \rho)}\Brack{s \in [q_v - 2\rho, q_v - \rho)} \ge \frac 1 2 .\]
Thus, the probability that $(s - \rho, s]$ is black and $q_v$ falls in the next black interval is at least $\frac 1 4$.
\end{proof}

By repeatedly applying Lemma~\ref{lem:1d_grid}, we obtain an ASOC approximation when $k = 1$.

\begin{lemma}\label{lem:1d_asoc}
Let $\{q_i\}_{i \in [n]} \subset \R$ be given, let $\beta \ge 4$, $\gamma \ge \max_{(u, v) \in E} |q_{u} - q_v|^2$, $\alpha \ge 1$, $\delta \in (0, 1)$, and for all $e = (u, v) \in E$ suppose
\[\vg_e = \Par{q_u - q_v}^2.\]
Then we can compute $\{\tvg^{(i)}\}_{i \in [m]}$, an $(\alpha, \beta, \gamma, m)$-ASOC approximation to $\vg$, for
\[m = O\Par{\log\Par{\frac 1 \alpha}\log\Par{\frac n \delta}}\]
with probability $\ge 1 - \delta$ in time $O(n\log(n)\log(\frac 1 \alpha)\log(\frac n \delta))$.
\end{lemma}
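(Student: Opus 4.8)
The plan is a multi-scale covering argument: apply Lemma~\ref{lem:1d_grid} at a geometric sequence of radii $\rho$ spanning the relevant range of pairwise distances, boosting its constant success probability by independent repetition. Since the hypothesis $\gamma \ge \max_{(u,v) \in E}(q_u - q_v)^2$ guarantees every distance $|q_u - q_v|$ lies in $[0, \sqrt{\gamma}]$, and the covering requirement of Definition~\ref{def:asoc_cost} only concerns edges with $(q_u - q_v)^2 \ge \gamma/\alpha$ (distances at least $\sqrt{\gamma/\alpha}$), it suffices to cover the window $[\sqrt{\gamma/\alpha}, \sqrt{\gamma}]$. I would take $\rho_0 \defeq \frac23 \sqrt{\gamma/\alpha}$ and $\rho_\ell \defeq (4/3)^\ell \rho_0$ for $\ell = 0, 1, \dots, L$, choosing $L = O(\log \alpha)$ so that $2\rho_L > \sqrt{\gamma}$. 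Because the half-open windows $[\frac32 \rho_\ell, 2\rho_\ell)$ tile contiguously (each ends exactly where the next begins, as $\rho_{\ell+1} = \frac43 \rho_\ell$), their union is $[\sqrt{\gamma/\alpha}, 2\rho_L) \supseteq [\sqrt{\gamma/\alpha}, \sqrt{\gamma}]$, so every relevant distance falls into a unique window $[\frac32 \rho_\ell, 2\rho_\ell)$.

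At each scale $\ell$ I would run Lemma~\ref{lem:1d_grid} with radius $\rho_\ell$ independently $T = O(\log(n/\delta))$ times. The $t$-th run returns $S^{(\ell,t)} \subseteq [n]$ and a partition $\{S^{(\ell,t)}_j\}_j$ of it; I set $\tvg^{(\ell,t)} \defeq 4\rho_\ell^2 \cdot \va^{(\ell,t)}$, where $\va^{(\ell,t)} \in \{0,1\}^E$ is the ASOC with $\calL(\va^{(\ell,t)}) = \ml_{S^{(\ell,t)}} - \sum_j \ml_{S^{(\ell,t)}_j}$, i.e.\ the indicator of edges both of whose endpoints lie in $S^{(\ell,t)}$ but in different pieces. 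The output is the family of $m = (L+1)T = O(\log(\alpha)\log(n/\delta))$ vectors, each manifestly a scalar multiple of an ASOC; the partitions are stored in the array representation of Remark~\ref{rem:partition} so downstream routines can consume them, and the $\gamma$-domination bullet of Definition~\ref{def:asoc_cost} is exactly the hypothesis.

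It remains to verify the two quantitative bullets. For $\tvg^{(\ell,t)} \le \beta \vg$ entrywise: on any edge $e = (u,v)$ with $\va^{(\ell,t)}_e = 1$, the endpoints lie in $S^{(\ell,t)}$ in distinct pieces, so the contrapositive of the second guarantee of Lemma~\ref{lem:1d_grid} forces $|q_u - q_v| > \rho_\ell$, hence $\beta \vg_e = \beta(q_u-q_v)^2 > \beta \rho_\ell^2 \ge 4\rho_\ell^2 = \tvg^{(\ell,t)}_e$ using $\beta \ge 4$; on edges where $\va^{(\ell,t)}$ vanishes the inequality is trivial. For the covering bullet, fix an edge $e = (u,v)$ with $D \defeq |q_u - q_v| \ge \sqrt{\gamma/\alpha}$ (edges below the threshold contribute $0$ to $\vg^{(\ge \gamma/\alpha)}$ and need nothing). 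Then $D \in [\frac32\rho_\ell, 2\rho_\ell)$ for some $\ell$, and by the first guarantee of Lemma~\ref{lem:1d_grid} each of the $T$ independent runs at scale $\ell$ sets $\va^{(\ell,t)}_e = 1$ with probability at least $\frac14$; so with probability at least $1 - (3/4)^T \ge 1 - \delta/n^2$ some run does, whereupon $\sum_{\ell,t} \tvg^{(\ell,t)}_e \ge 4\rho_\ell^2 > D^2 = [\vg^{(\ge \gamma/\alpha)}]_e$ (using $D < 2\rho_\ell$ and nonnegativity of all terms). A union bound over the at most $n^2$ edges gives overall success probability $\ge 1 - \delta$. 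The runtime is $(L+1)T$ calls to Lemma~\ref{lem:1d_grid} at $O(n\log n)$ each, plus $O(n)$ bookkeeping per call, totaling $O(n\log(n)\log(\alpha)\log(n/\delta))$.

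I do not anticipate a genuine obstacle; the one delicate point is the joint calibration of the scales and the scalar $4\rho_\ell^2$. The entrywise cap $\tvg \le \beta\vg$ wants the scalar small relative to $\beta \rho_\ell^2$, while the covering bound wants it to exceed $D^2$ for every $D$ in the window — and the $[\frac32\rho_\ell, 2\rho_\ell)$ window width supplied by Lemma~\ref{lem:1d_grid} is precisely what makes $4\rho_\ell^2$ satisfy both (the first needs $\beta \ge 4$, the second needs $D < 2\rho_\ell$). Getting the endpoints of the geometric sequence to cover $[\sqrt{\gamma/\alpha}, \sqrt{\gamma}]$ with no gaps using only $O(\log\alpha)$ scales is the other piece of bookkeeping to handle carefully.
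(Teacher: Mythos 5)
Your proposal is correct and follows essentially the same route as the paper's proof: a geometric sequence of $O(\log\alpha)$ scales whose windows $[\tfrac32\rho,2\rho)$ cover $[\sqrt{\gamma/\alpha},\sqrt{\gamma}]$, with $O(\log(n/\delta))$ independent repetitions of Lemma~\ref{lem:1d_grid} per scale and a union bound over edges. The only (immaterial) differences are that the paper weights each ASOC by $\beta\rho_a^2$ rather than your $4\rho_\ell^2$ (both satisfy the cap via $\vg_e>\rho^2$ on active edges and the coverage via $\sqrt{\vg_e}<2\rho$) and uses ratio $1.1$ with overlapping windows instead of your exactly-tiling ratio $4/3$.
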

\begin{proof}
First, let $\calP = \{\rho_a\}_{a \in [p]} \subset \R_{> 0}$ be a set of candidate distances, such that $p = O(\log(\frac 1 \alpha))$ and for every $\vg_e \in [\frac \gamma \alpha, \gamma]$, there is some $a \in [p]$ such that $\sqrt{\vg_e} \in [\frac 3 2 \rho_a, 2\rho_a)$. This is doable by e.g., setting $\rho_1^2$ to be $\frac {4\gamma} {9\alpha}$, and then increasing $\rho_a$ by multiples of $1.1$ until $\rho_p^2 \ge \frac \gamma 4$.

Next, for each $\rho_a \in \calP$, we repeatedly call Lemma~\ref{lem:1d_grid} with $\rho \gets \rho_a$, for $\frac m p \defeq \lceil 8\log(\frac n \delta) \rceil$ times in total. Let $\{S_j\}_{j \in [\ell]}$ denote the partition pieces from the $i^{\text{th}}$ call. Then we define for all $(u, v) \in E$,
\[\tvg^{(i, a)}_{(u, v)} \defeq \begin{cases}
\beta\rho_a^2 & \{u, v\} \subset S \text{ and } u \in S_j,\; v \in S_{j'} \text{ where } j \neq j' \\
0 & \text{ else}
\end{cases}.\]
In other words, we assign a weight of $\beta \rho_a^2$ to all vertex pairs  in different partition pieces, and zero out all other weights. We return the collection of $\{\tvg^{(i, a)}\}_{a \in [p], i \in [\frac m p]}$ as our ASOC approximation. Then every $\tvg^{(i, a)}$ is a scalar multiple of an ASOC, and for all $i \in [m]$, $a \in [p]$, $\tvg^{(i, a)} \le \beta \vg$ edgewise, because the only way $\tvg^{(i, a)}_e > 0$ is if $\vg_e > \rho_a^2$ by the second claim in Lemma~\ref{lem:1d_grid}.

Finally we need to show that with probability $\ge 1 - \delta$, we have that 
\begin{equation}\label{eq:coverage}\sum_{(i, a) \in [\frac m p] \times [p]}\tvg^{(i, a)} \ge \vg^{(\ge \frac \gamma \alpha)}.\end{equation}
We argue this edgewise. Let $e = (u, v) \in E$ have $\vg_e \ge \frac \gamma \alpha$, and let $a$ be the corresponding index satisfying $\sqrt{\vg_e} \in [\frac 3 2 \rho_a, 2\rho_a]$. By the first claim in Lemma~\ref{lem:1d_grid}, after running $\frac m p$ independent trials, there is at most a $\frac \delta {n^2}$ chance that $u, v$ never fall in different black intervals. Moreover, if $u, v$ fall in different black intervals in the $i^{\text{th}}$ trial at distance $\rho_a$, then
\[\tvg^{(i, a)}_e = \beta \rho_a^2 \ge 4\rho_a^2 \ge \vg_e. \]
Thus, by union bounding over all $|E| \le n^2$ edges, \eqref{eq:coverage} holds except with probability $\delta$.
\end{proof}

We can now apply Lemma~\ref{lem:1d_asoc} multiple times to generalize our result to distance vectors in $\R^k$.

\begin{lemma}\label{lem:kd_asoc}
Let $\{\vq_i\}_{i \in [n]} \subset \R^k$, $\beta \ge 4$, $\delta \in (0, 1)$, and for all $e = (u, v) \in E$ suppose
\[\vg_e = \norm{\vq_u - \vq_v}_2^2.\]
Also, let
\[\gamma \ge \max_{(u, v) \in E} \max_{j \in [k]} \Par{\Brack{\vq_u}_j - \Brack{\vq_v}_j}^2,\; \alpha \ge 1.\]
Then we can compute $\{\tvg^{(i)}\}_{i \in [m]}$, an $(\alpha, \beta, \gamma, m)$-ASOC approximation to $\vg$, for
\[m = O\Par{k\log\Par{\frac 1 \alpha}\log\Par{\frac {nk} \delta}},\]
with probability $\ge 1 - \delta$ in time $O(nk\log(n)\log(\frac 1 \alpha)\log(\frac {nk} \delta))$.
\end{lemma}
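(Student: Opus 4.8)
The plan is to reduce the $k$-dimensional ASOC approximation problem to $k$ independent applications of the one-dimensional result in Lemma~\ref{lem:1d_asoc}, one per coordinate. For each $j \in [k]$, apply Lemma~\ref{lem:1d_asoc} to the scalar point set $\{[\vq_i]_j\}_{i \in [n]} \subset \R$, with parameters $\beta$, $\gamma$, $\alpha$ as given (noting that $\gamma$ dominates $\max_{(u,v)\in E}([\vq_u]_j - [\vq_v]_j)^2$ for every $j$, so it is a valid input), and with failure probability $\frac{\delta}{k}$ for each call. This produces, for each coordinate $j$, a collection $\{\tvg^{(i,j)}\}_{i \in [m_j]}$ that is an $(\alpha,\beta,\gamma,m_j)$-ASOC approximation to the one-dimensional distance vector $\vg^{(j)}$ defined by $\vg^{(j)}_e = ([\vq_u]_j - [\vq_v]_j)^2$ for $e = (u,v)$, where $m_j = O(\log(\frac 1 \alpha)\log(\frac{nk}{\delta}))$. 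The final output is the concatenation of all these collections over $j \in [k]$, so $m = \sum_{j} m_j = O(k\log(\frac 1 \alpha)\log(\frac{nk}{\delta}))$, and a union bound over the $k$ calls gives total failure probability $\le \delta$.

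Then I would verify the three defining properties of Definition~\ref{def:asoc_cost} for this concatenated collection. The first property (each $\tvg^{(i,j)}$ is a scalar multiple of an ASOC) is inherited directly from Lemma~\ref{lem:1d_asoc}, since each coordinate's output already satisfies it. The second property ($\gamma \ge \max_{(u,v)}\max_{j} ([\vq_u]_j - [\vq_v]_j)^2$) is exactly the hypothesis. For the third property, the per-element upper bound $\tvg^{(i,j)} \le \beta \vg^{(j)} \le \beta \vg$ entrywise follows from the 1-d guarantee $\tvg^{(i,j)} \le \beta \vg^{(j)}$ together with $\vg^{(j)}_e = ([\vq_u]_j - [\vq_v]_j)^2 \le \norm{\vq_u - \vq_v}_2^2 = \vg_e$. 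For the lower bound, the key observation is that $\vg^{(\ge \frac\gamma\alpha)}$, as defined in Definition~\ref{def:asoc_cost}, decomposes exactly as $[\vg^{(\ge \frac\gamma\alpha)}]_e = \sum_{j \in [k]} ([\vq_u]_j - [\vq_v]_j)^2 \ind_{([\vq_u]_j - [\vq_v]_j)^2 \ge \frac\gamma\alpha}$, which is precisely $\sum_{j \in [k]} [(\vg^{(j)})^{(\ge \frac\gamma\alpha)}]_e$ where $(\vg^{(j)})^{(\ge \frac\gamma\alpha)}$ is the corresponding thresholded 1-d vector. By the 1-d guarantee, $\sum_{i \in [m_j]} \tvg^{(i,j)} \ge (\vg^{(j)})^{(\ge \frac\gamma\alpha)}$ entrywise for each $j$, and summing over $j$ gives $\sum_{(i,j)} \tvg^{(i,j)} \ge \vg^{(\ge \frac\gamma\alpha)}$ entrywise, as required.

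The runtime is the sum over $j \in [k]$ of the per-coordinate cost, which by Lemma~\ref{lem:1d_asoc} (with failure probability $\frac\delta k$) is $O(n\log(n)\log(\frac 1 \alpha)\log(\frac{nk}{\delta}))$ each, for a total of $O(nk\log(n)\log(\frac 1\alpha)\log(\frac{nk}{\delta}))$.

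I do not expect a serious obstacle here: the argument is essentially bookkeeping, and the only place requiring care is confirming that the thresholded ``tail'' vector $\vg^{(\ge c)}$ in Definition~\ref{def:asoc_cost} is defined coordinatewise (it is — note it thresholds each coordinate's squared gap $([\vq_u]_j - [\vq_v]_j)^2$ separately, not the full squared norm), so that the per-coordinate lower bounds cleanly add up. This coordinatewise structure is exactly why the definition was phrased the way it was, and it is the one subtlety worth stating explicitly in the writeup.
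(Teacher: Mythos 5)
Your proposal is correct and matches the paper's proof essentially verbatim: the paper also applies Lemma~\ref{lem:1d_asoc} coordinatewise with the failure probability scaled down by a factor of $k$, takes the union of the $k$ resulting collections, and verifies the upper bound via $\vg^{(j)}_e \le \vg_e$ and the lower bound by summing the per-coordinate thresholded guarantees. Your explicit remark that $\vg^{(\ge \gamma/\alpha)}$ is defined by thresholding each coordinate's squared gap separately is the right subtlety to flag, and is exactly what makes the summation step go through.
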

\begin{proof}
We apply Lemma~\ref{lem:1d_asoc} separately to each coordinate, adjusting the failure probability by a $k$ factor, and return the union of the $k$ resulting sets as our ASOC approximation. The condition that $\tvg^{(i)} \le \beta\vg$ coordinatewise always holds because $\vg_{(u, v)} \ge |[\vq_u]_j - [\vq_v]_j|^2$ for any coordinate $j \in [k]$. Similarly, $\sum_{i \in [m]} \tvg^{(i)} \ge \vg^{(\ge \frac \gamma \alpha)}$ holds by summing over the contributions of each coordinate.
\end{proof}

\subsection{Long step packing SDP}\label{ssec:packing}

In this section, we give an algorithm for approximately solving packing semidefinite programs. Our algorithm is inspired by existing counterparts in the literature \cite{Allen-ZhuLO16, PengTZ16, JambulapatiLT20}, but differs in a few ways. On the one hand, it only gives very coarse multiplicative approximation guarantees instead of $1 + \eps$ factors for arbitrarily small $\eps$. On the other hand, it only uses highly-structured steps to update iterates, which support the efficient access tools from Sections~\ref{ssec:soc} and~\ref{ssec:grid}.

Throughout this section, we consider the following optimization problem for semidefinite matrices $\{\ma_e\}_{e \in E} \subset \PSD^{n \times n}$ for an index set $E$, and $\vc \in \{0, 1\}^E$:
\begin{equation}
\label{eq:packing}
\max_{\substack{\vx \in \R^E_{\ge 0} \\ \sum_{e \in E} \vx_e \ma_e \preceq \mi_n}} \vc^\top \vx.
\end{equation}
We are specifically interested in the specialization of \eqref{eq:packing} to
\begin{equation}\label{eq:packing_lap}
\ma_e = \mmp \ml_e \mmp \text{, where } \ml_e \defeq \Par{\ve_u - \ve_v}\Par{\ve_u - \ve_v}^\top \text{ for all } e = (u, v) \in E,
\end{equation}
where $\mmp$ supports matrix-vector query access, and $E$ indexes unordered $(u, v) \in [n] \times [n]$.
Analogously to Section~\ref{ssec:soc}, we define $\calA: \R^E \to \Sym^{n \times n}$ and $\calA^*: \Sym^{n \times n} \to \R^E$ as 
\[
\calA(\vv) = \sum_{e \in E} \vv_e \ma_e \quad \text{and} \quad \calA^*(\my) =\{ \inprod{\ma_e }{\my} \}_{e \in E}.
\] 

We split our discussion into three parts. We first propose an algorithm for solving the \emph{decision variant} of \eqref{eq:packing}, that guesses an optimal value and certifies whether it is approximately achievable. Our algorithm uses a certain step oracle (Definition~\ref{def:step}), which we then discuss how to implement using Lemma~\ref{lem:soc_algo}. Finally, we reduce the optimization variant \eqref{eq:packing} to the decision variant.

\paragraph{Decision variant: correctness.} In Algorithm~\ref{alg:packing}, we state our general solver framework for the decision variant of \eqref{eq:packing}. It uses the following type of oracle to implement its steps.

\begin{definition}[Step oracle]\label{def:step}
We say that $\mathcal{O}: \PSD^{n \times n} \to \R^E$ is an \emph{$(\alpha, \beta, \gamma,m)$-step oracle} for $\{\ma_e\}_{e \in E}$ if on input $\mz \in \PSD^{n \times n}$, $\vx = \mathcal{O}(\mz)$ is a $m$-SOC and the following hold.
    \begin{itemize}
        \item $\0_E \leq \vx \leq \alpha \1_E$ entrywise.
        \item For any $e \in E$ where $[\calA^*(\mz)]_e \leq 1$, $\vx_e \geq \beta$.
        \item For any $e \in E$ where $[\calA^*(\mz)]_e > \gamma$, $\vx_e = 0$. 
    \end{itemize}
\end{definition}

\begin{algorithm2e}
	\caption{$\SOCPack(\oracle, \vc, p, T)$}
	\label{alg:packing}
	\DontPrintSemicolon
		\codeInput $\oracle$, an $(\alpha, \beta, \gamma, m)$-step oracle for $\{\ma_e\}_{e \in E}$, $\vc \in \{0, 1\}^E$, $p \ge 2$, $T \in \N$\;
 %       \codeOutput $\vx \in \R^{E}$ approximately solving $\max_{x \geq 0, \sum_i x_i \ma_i \preceq \mi} c^\top x$.   \;
  $\vw_0 \gets \vc$\;
  \For{$0 \le t < T$}{
   %$\mz_t \gets \calA(\vw_t)$\;
  % $\mathcal{S} =$ value satisfying $\mathcal{S} \leq \norm{\mz_t}_p \leq 2\mathcal{S}$\;
   %$\my_t = \frac{\mz_t}{\cal{S}}$\;
   $\my_t \gets \frac{\calA(\vw_t)}{\norm{\calA(\vw_t)}_p}$ \; 
   $\step_t \gets \oracle(\my_t^{p-1})$\;\label{line:oracle_step}
   $\vw_{t+1} = \vw_t \circ (\1_E +\step_t)$\;
   \If{$\vc^\top \vw_{t+1} \geq \beta^{\frac T 2}$}{
        \codeReturn $\frac{\vw_{t+1}}{\vc^\top \vw_{t+1}}$\label{line:packing_return}
   }
  }
\end{algorithm2e} 

We proceed to analyze the correctness of Algorithm~\ref{alg:packing}, discussing how to implement $\oracle$ later in this section. We begin with a helper lemma.

\begin{lemma}\label{lem:taylor_step}
    For any $p \ge 2$ and $\mm \in \PSD^{n \times n}$ with $\mm \preceq \alpha \id_n$, $(\mi_n + \mm)^p \preceq \id_n + p (1+\alpha)^{p-1} \mm$. 
\end{lemma}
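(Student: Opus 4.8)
The plan is to reduce the matrix inequality to a scalar one by simultaneous diagonalization. Since $\mm \in \PSD^{n \times n}$ and $\mm \preceq \alpha \id_n$, we may write $\mm = \mmu \diag{\vlam} \mmu^\top$ for an orthogonal $\mmu$ and eigenvalues $\vlam \in [0, \alpha]^n$. Both matrices appearing in the claim, namely $(\id_n + \mm)^p$ and $\id_n + p(1+\alpha)^{p-1}\mm$, are polynomials in $\mm$, hence are diagonalized by the same $\mmu$; in the eigenbasis $\mmu$ they become $\diag{\{(1 + \vlam_i)^p\}_{i \in [n]}}$ and $\diag{\{1 + p(1+\alpha)^{p-1}\vlam_i\}_{i \in [n]}}$ respectively. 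Therefore the desired Loewner inequality is equivalent to the coordinatewise scalar bound
\[(1 + \lambda)^p \le 1 + p(1+\alpha)^{p-1}\lambda \text{ for all } \lambda \in [0, \alpha].\]

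To establish this scalar inequality, I would define $g(\lambda) \defeq 1 + p(1+\alpha)^{p-1}\lambda - (1+\lambda)^p$ on $[0, \alpha]$, note $g(0) = 0$, and compute $g'(\lambda) = p(1+\alpha)^{p-1} - p(1+\lambda)^{p-1} = p\Par{(1+\alpha)^{p-1} - (1+\lambda)^{p-1}}$. Since $p \ge 2$ gives $p - 1 \ge 1 > 0$, the map $x \mapsto x^{p-1}$ is nondecreasing on $\R_{\ge 0}$, so $g'(\lambda) \ge 0$ for all $\lambda \in [0, \alpha]$. Hence $g$ is nondecreasing on $[0, \alpha]$ and $g(\lambda) \ge g(0) = 0$ throughout, which is exactly the claimed bound. (Equivalently, one can invoke the mean value theorem: $(1+\lambda)^p - 1 = p(1+\xi)^{p-1}\lambda$ for some $\xi \in [0, \lambda] \subseteq [0, \alpha]$, and then bound $(1+\xi)^{p-1} \le (1+\alpha)^{p-1}$.)

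There is no real obstacle here; the only point requiring a word of care is the reduction step, i.e., observing that $\mm$, $(\id_n + \mm)^p$, and $\id_n + p(1+\alpha)^{p-1}\mm$ all commute (being polynomials in $\mm$), so that comparison in the Loewner order is equivalent to comparison of corresponding eigenvalues. Everything else is the elementary one-variable estimate above.
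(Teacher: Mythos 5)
Your proof is correct and follows essentially the same route as the paper: both reduce to the scalar inequality $(1+x)^p \le 1 + p(1+\alpha)^{p-1}x$ on $[0,\alpha]$ by simultaneous diagonalization and then verify it by an elementary calculus argument (the paper cites the intermediate value theorem where it means the mean value theorem, which is exactly your parenthetical variant). No issues.
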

\begin{proof}
Because $\id_n$ commutes with $\mm$, we may diagonalize both sides; the claim reduces to showing $(1+x)^p \leq 1 + p (1 + \alpha)^{p-1} x$ for $x \in [0,\alpha]$. This follows from the intermediate value theorem.
\end{proof}

We now make a simple observation on the sparsity pattern of the iterates $\vw_t$ in Algorithm~\ref{alg:packing}.
\begin{lemma}
\label{lemma:sparsity}
For any $0 \le t < T$ and $e \in E$, $[\vw_{t}]_e$ is nonzero if and only if $\vc_e$ is. 
\end{lemma}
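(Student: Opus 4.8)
The plan is to induct on $t$. The base case $t = 0$ is immediate: $\vw_0 = \vc$ by construction, so $[\vw_0]_e$ is nonzero exactly when $\vc_e$ is. For the inductive step, suppose $[\vw_t]_e \ne 0 \iff \vc_e \ne 0$. The update in Algorithm~\ref{alg:packing} is $\vw_{t+1} = \vw_t \circ (\1_E + \step_t)$, so entrywise $[\vw_{t+1}]_e = [\vw_t]_e (1 + [\step_t]_e)$. Thus it suffices to argue that the multiplicative factor $1 + [\step_t]_e$ is strictly positive for every $e$, so that multiplication by it neither creates a new nonzero coordinate nor destroys an existing one.

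The key point is that $\step_t = \oracle(\my_t^{p-1})$ is the output of an $(\alpha,\beta,\gamma,m)$-step oracle, and by Definition~\ref{def:step} such an output satisfies $\0_E \le \step_t \le \alpha \1_E$ entrywise. In particular $[\step_t]_e \ge 0$, so $1 + [\step_t]_e \ge 1 > 0$ for all $e \in E$. Combining with the inductive hypothesis: if $\vc_e \ne 0$ then $[\vw_t]_e \ne 0$, and multiplying by the positive scalar $1 + [\step_t]_e$ keeps it nonzero, giving $[\vw_{t+1}]_e \ne 0$; conversely if $\vc_e = 0$ then $[\vw_t]_e = 0$, and $0 \cdot (1 + [\step_t]_e) = 0$, so $[\vw_{t+1}]_e = 0$. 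This closes the induction.

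I do not anticipate any genuine obstacle here — the statement is essentially a bookkeeping consequence of two facts already available: the form of the update rule in Algorithm~\ref{alg:packing}, and the nonnegativity guarantee in Definition~\ref{def:step}. The only thing to be careful about is that one needs $[\step_t]_e > -1$ (not merely $\ge 0$) to preserve the support exactly; the nonnegativity $[\step_t]_e \ge 0$ from the step oracle definition is more than enough. If one wanted a completely self-contained argument one could also note that $\vw_t \ge \0_E$ entrywise throughout (again by induction, since $\vw_0 = \vc \ge \0_E$ and the multipliers are nonnegative), though this is not strictly needed for the claimed iff.
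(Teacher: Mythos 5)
Your proof is correct and follows exactly the same route as the paper: induction on $t$, with the base case $\vw_0 = \vc$ and the inductive step using the entrywise update $[\vw_{t+1}]_e = [\vw_t]_e(1 + [\step_t]_e)$ together with the nonnegativity $\step_t \ge \0_E$ from Definition~\ref{def:step}. No gaps; your extra remark that $[\step_t]_e > -1$ is the essential property being used is a fair observation but adds nothing beyond what the paper's one-line argument already relies on.
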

\begin{proof}
We proceed by induction on $t$: the base case is trivial since $\vw_0 = \vc$. For any other iteration $t$, observe $[\vw_{t+1}]_e = [\vw_{t}]_e (1 + [\step_t]_e)$ for all $e \in E$, and since $\step_t \geq 0$ the inductive claim is obvious.
\end{proof}

With this, we show that a certain potential function is nonincreasing throughout the algorithm.
\begin{lemma}
\label{lemma:potential}
In the setting of Algorithm~\ref{alg:packing}, define the function
\[
\Phi_t = \norm{\calA(\vw_t)}_p - (1+\alpha)^{p-1} \gamma \vc^\top \vw_t, \text{ for all } 0 \le t < T.
\]
Then for any $0 \le t < T$, we have $\Phi_{t+1} \leq \Phi_t$.
\end{lemma}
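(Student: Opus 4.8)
The plan is to track how each of the two terms of $\Phi_t = \norm{\calA(\vw_t)}_p - (1+\alpha)^{p-1}\gamma\,\vc^\top\vw_t$ changes across a single iteration, and show the decrease of the linear term dominates the increase of the Schatten-$p$ norm term. First I would record the effect of the update $\vw_{t+1} = \vw_t\circ(\1_E + \step_t)$ on $\calA$. Writing $\mathbf{\Delta} \defeq \calA(\vw_t\circ\step_t) = \sum_{e\in E}[\vw_t]_e[\step_t]_e\ma_e$, we have $\calA(\vw_{t+1}) = \calA(\vw_t) + \mathbf{\Delta}$; since $\step_t$ is a step-oracle output with $\0_E\le\step_t\le\alpha\1_E$ (Definition~\ref{def:step}) while each $\ma_e\succeq\0$ and $\vw_t\ge\0$, this gives $\0\preceq\mathbf{\Delta}\preceq\alpha\calA(\vw_t)$. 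Next, using Lemma~\ref{lemma:sparsity} together with $\vc\in\{0,1\}^E$ (so that $\vc_e[\vw_t]_e = [\vw_t]_e$ for every $e$), one has $\vc^\top\vw_{t+1} - \vc^\top\vw_t = \sum_e[\vw_t]_e[\step_t]_e$. Combining this with the zeroing property of the step oracle --- $[\step_t]_e = 0$ whenever $[\calA^*(\my_t^{p-1})]_e > \gamma$ --- yields $\inprod{\mathbf{\Delta}}{\my_t^{p-1}} = \sum_e[\vw_t]_e[\step_t]_e\,[\calA^*(\my_t^{p-1})]_e \le \gamma\bigl(\vc^\top\vw_{t+1} - \vc^\top\vw_t\bigr)$.

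The heart of the proof is the Schatten-$p$ norm growth bound $\norm{\calA(\vw_{t+1})}_p \le \norm{\calA(\vw_t)}_p + (1+\alpha)^{p-1}\inprod{\mathbf{\Delta}}{\my_t^{p-1}}$. I would prove it by working inside $\Span(\calA(\vw_t))$ (which contains $\Span(\mathbf{\Delta})$, by the Loewner bound above, and also contains $\Span(\ma_e)$ for each $e$ in the support of $\vc$ since $[\vw_t]_e>0$ there). Set $\ma\defeq\calA(\vw_t)$ and $\mm\defeq(\ma^\dagger)^{1/2}\mathbf{\Delta}(\ma^\dagger)^{1/2}$, so that $\0\preceq\mm\preceq\alpha\id$ and $\ma+\mathbf{\Delta} = \ma^{1/2}(\id+\mm)\ma^{1/2}$. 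The Araki--Lieb--Thirring inequality (applicable since $p\ge 2\ge 1$) gives $\Tr\bigl((\ma+\mathbf{\Delta})^p\bigr) = \Tr\bigl((\ma^{1/2}(\id+\mm)\ma^{1/2})^p\bigr) \le \Tr\bigl((\id+\mm)^p\ma^p\bigr)$; then Lemma~\ref{lem:taylor_step} gives $(\id+\mm)^p \preceq \id + p(1+\alpha)^{p-1}\mm$, so taking the matrix inner product against the PSD matrix $\ma^p$ yields $\Tr\bigl((\ma+\mathbf{\Delta})^p\bigr) \le \Tr(\ma^p) + p(1+\alpha)^{p-1}\Tr(\mm\ma^p) = \norm{\ma}_p^p + p(1+\alpha)^{p-1}\inprod{\mathbf{\Delta}}{\ma^{p-1}}$, using $\Tr(\mm\ma^p) = \Tr(\mathbf{\Delta}\ma^{p-1})$. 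Finally, the tangent-line bound for the concave map $x\mapsto x^{1/p}$ converts this into $\norm{\calA(\vw_{t+1})}_p \le \norm{\ma}_p + (1+\alpha)^{p-1}\inprod{\mathbf{\Delta}}{\ma^{p-1}}/\norm{\ma}_p^{p-1}$, and the last ratio equals $\inprod{\mathbf{\Delta}}{\my_t^{p-1}}$ since $\my_t = \ma/\norm{\ma}_p$. (The degenerate case $\calA(\vw_t)=\0$ is handled trivially since then $\vw_t$ is supported where every $\ma_e=\0$.)

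Putting the pieces together, $\norm{\calA(\vw_{t+1})}_p - \norm{\calA(\vw_t)}_p \le (1+\alpha)^{p-1}\inprod{\mathbf{\Delta}}{\my_t^{p-1}} \le (1+\alpha)^{p-1}\gamma\bigl(\vc^\top\vw_{t+1}-\vc^\top\vw_t\bigr)$, which rearranges exactly to $\Phi_{t+1}\le\Phi_t$. I expect the main obstacle to be the middle paragraph: the Schatten-$p$ smoothness estimate must be established without assuming $\ma$ and $\mathbf{\Delta}$ commute, so the naive identity $(\ma+\mathbf{\Delta})^p = \ma^{1/2}(\id+\mm)^p\ma^{1/2}$ is false and the operator monotonicity of $t\mapsto t^{p-1}$ fails for $p>2$; one genuinely needs a non-commutative trace inequality (Araki--Lieb--Thirring, or equivalently expanding $\Tr((\ma+\mathbf{\Delta})^p)$ into words in $\ma,\mathbf{\Delta}$ and peeling off factors via $\mathbf{\Delta}\preceq\alpha\ma$) to reduce everything to the scalar estimate already packaged in Lemma~\ref{lem:taylor_step}. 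Everything else is routine bookkeeping with the step-oracle guarantees and the sparsity invariant of Lemma~\ref{lemma:sparsity}.
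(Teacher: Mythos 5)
Your proposal is correct and follows essentially the same route as the paper's proof: both hinge on the (Araki--)Lieb--Thirring inequality to sandwich $\calA(\vw_{t+1})$ as $\ma^{1/2}(\id+\mm)\ma^{1/2}$, apply Lemma~\ref{lem:taylor_step} to $(\id+\mm)^p$, and then combine the step oracle's zeroing property with the sparsity invariant of Lemma~\ref{lemma:sparsity}. The only differences are cosmetic — you take $p$-th roots via concavity of $x\mapsto x^{1/p}$ where the paper uses $1+px\le(1+x)^p$, and you handle the pseudoinverse/degenerate span explicitly where the paper writes $\mm_0^{-1/2}$ informally.
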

\begin{proof}
 For simplicity, we drop the index $t$ and let $\vw \defeq \vw_t$, $\my \defeq \my_t$. Define the matrices
\[\mm_0 \defeq \calA(\vw) ,\; \mm_1 \defeq \calA(\step \circ \vw).\]
By the Lieb-Thirring inequality, i.e., $\Tr((\ma \mb\ma)^p) \le \Tr(\ma^{2p}\mb^p)$, we have
\[\norm{\mm_0 + \mm_1}_p^p = \Tr\Par{\left(\mm_0 + \mm_1\right)^p} \le \Tr\Par{\mm_0^p\left(\id_n + \mm_0^{-\half}\mm_1\mm_0^{-\half}\right)^p}.\]

Next, noting that $\step \leq \alpha \1_E$ by definition of the step oracle, applying Lemma~\ref{lem:taylor_step} gives 
\[\norm{\mm_0 + \mm_1}_p^p \le \Tr\left(\mm_0^p + p (1+\alpha)^{p-1} \mm_0^{p - 1} \mm_1 \right). \]
If we define $\my = \frac{\mm_0}{\norm{\mm_0}_p}$, we observe
\begin{align*}\norm{\mm_0 + \mm_1}_p^p &\le \Tr\Par{\mm_0^p + p (1+\alpha)^{p-1} \mm_0^{p - 1} \mm_1 }\\
&=\norm{\mm_0}_p^p\left(1 + p(1+\alpha)^{p-1}\inprod{\my^{p - 1}}{\frac{\mm_1}{\norm{\mm_0}_p} }\right). \end{align*}
By using $1 + px \le (1 + x)^p$ for all $p \ge 2$ and $x \ge 0$, taking $p^{\text{th}}$ roots we thus have
\[\norm{\mm_0 + \mm_1}_p \le \norm{\mm_0}_p + (1+\alpha)^{p-1}\inprod{\my^{p - 1}}{\mm_1}. \]
Now via linearity of trace,
\begin{equation}
\label{eq:potential1}
\inprod{\my^{p - 1}}{\mm_1} = \sum_{e \in E} \inprod{\my^{p - 1}}{\ma_e} \step_e \vw_e \le \sum_{e \in E} \gamma \step_e \vw_e
\end{equation}
as $\inprod{\my^{p - 1}}{\ma_j} > \gamma$ implies $\step_e = 0$. We now claim 
\[
\sum_{e \in E} \step_e \vw_e = \sum_{e \in E} \vc_e \step_e \vw_e.
\]
This follows immediately from the observation that $\vw_e \neq 0$ implies $\vc_e =1$ by Lemma~\ref{lemma:sparsity} and the fact that all $\vc_e \in \{0,1\}$. Combining this with \eqref{eq:potential1} gives the desired result.
\end{proof}
We augment this potential bound by showing that  after $T$ steps of the algorithm we must be able to certify either a primal or dual bound for the quality of a packing solution.
\begin{lemma}
\label{lemma:dual}
   Let $p \ge 2$ and $q \defeq \frac p {p - 1}$. After $T$ steps of Algorithm~\ref{alg:packing}, define $\bmy = \frac{1}{T} \sum_{0 \le t < T} \my_t^{p - 1}$ and $\vx = \frac{\vw_T}{\vc^\top \vw_T}$. Then at least one of the following is true.
    \begin{itemize}
        \item $\calA^*(\bar{\my}) \geq \frac{1}{2} \vc$ and $\norms{\bmy}_q \le 1$.
        \item %$\normsop{\calA(\vx)} \leq (1+\alpha)^{p-1} \gamma + \beta^{-\frac T 2} \norm{\calA(\vc)}_p$ and $\vc^\top \vx = 1$.
        The algorithm terminates on Line~\ref{line:packing_return} in some iteration.
    \end{itemize} 
\end{lemma}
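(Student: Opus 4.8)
The plan is to dispose of the statement via a clean dichotomy on the execution of Algorithm~\ref{alg:packing}. If the algorithm reaches Line~\ref{line:packing_return} in some iteration, the second bullet holds and there is nothing more to do; otherwise it completes all $T$ iterations with $\vc^\top\vw_{t+1} < \beta^{T/2}$ for every $0 \le t < T$, and in particular $\vc^\top\vw_T < \beta^{T/2}$. I take this latter scenario as a standing assumption and verify both halves of the first bullet under it. (Implicitly one assumes $\calA(\vc) \neq \mzero$, else \eqref{eq:packing} is unbounded on the support of $\vc$; then from $[\vw_{t+1}]_e = [\vw_t]_e(1+[\step_t]_e) \ge [\vw_t]_e$ and $[\vw_0]_e = \vc_e$ we get $[\vw_t]_e \ge \vc_e$, so $\calA(\vw_t) \succeq \calA(\vc) \neq \mzero$ and each $\my_t$ is well-defined and PSD.)

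First, $\norms{\bmy}_q \le 1$, and this holds unconditionally. Each $\my_t$ is PSD with Schatten-$p$ norm $1$, so $\sum_j \vlam_j(\my_t)^p = 1$; since $(p-1)q = p$, the matrix power $\my_t^{p-1}$ satisfies $\norms{\my_t^{p-1}}_q^q = \sum_j \vlam_j(\my_t)^{(p-1)q} = \sum_j \vlam_j(\my_t)^p = 1$, i.e.\ $\norms{\my_t^{p-1}}_q = 1$. The triangle inequality for the Schatten-$q$ norm then gives $\norms{\bmy}_q \le \tfrac1T \sum_{0 \le t < T} \norms{\my_t^{p-1}}_q = 1$.

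Next, $\calA^*(\bmy) \ge \tfrac12\vc$ entrywise. Fix $e \in E$; if $\vc_e = 0$ then $[\calA^*(\bmy)]_e = \inprod{\ma_e}{\bmy} \ge 0$ by positive semidefiniteness, so assume $\vc_e = 1$. Let $N_e \defeq |\{t : 0 \le t < T,\ [\step_t]_e \ge \beta\}|$. Applying the step-oracle property with $\mz = \my_t^{p-1}$ and $\vx = \step_t$ in contrapositive form ($[\step_t]_e < \beta \Rightarrow [\calA^*(\my_t^{p-1})]_e > 1$), each of the $T - N_e$ iterations not counted in $N_e$ contributes $[\calA^*(\my_t^{p-1})]_e > 1$, while the remaining iterations contribute $[\calA^*(\my_t^{p-1})]_e \ge 0$; hence
\[ [\calA^*(\bmy)]_e = \frac1T \sum_{0 \le t < T} [\calA^*(\my_t^{p-1})]_e \ge \frac{T - N_e}{T}. \]
So it suffices to show $N_e \le T/2$. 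If instead $N_e > T/2$, then since $[\vw_T]_e = \prod_{0 \le t < T}(1 + [\step_t]_e)$ in the case $\vc_e = 1$, the $N_e$ counted iterations each contribute a factor $\ge 1 + \beta$ and the rest a factor $\ge 1$, so $[\vw_T]_e \ge (1+\beta)^{N_e} > (1+\beta)^{T/2} \ge \beta^{T/2}$ (using $\beta > 0$, hence $1+\beta > 1$, and $1+\beta \ge \beta$). Then $\vc^\top\vw_T \ge \vc_e [\vw_T]_e > \beta^{T/2}$, contradicting the standing assumption; therefore $N_e \le T/2$, giving $[\calA^*(\bmy)]_e \ge \tfrac12 = \tfrac12\vc_e$. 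Combining the two cases completes the verification of the first bullet.

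I expect the only real subtlety to be keeping the dichotomy airtight: the contradiction in the last step relies precisely on the fact that ``the algorithm never executes Line~\ref{line:packing_return}'' forces $\vc^\top\vw_T < \beta^{T/2}$, which is exactly what converts ``many active iterations at some coordinate $e$'' into an impossibility rather than merely a large iterate. Everything else is bookkeeping — the Schatten-norm identity $(p-1)q = p$, and translating the step-oracle threshold into the coordinatewise lower bound on $[\calA^*(\bmy)]_e$. Note that Lemma~\ref{lemma:potential} is not used in this argument; it will instead be invoked downstream to control $\norms{\calA(\vw_T)}_p$ and thereby certify the quality of the primal solution $\vx = \vw_T/(\vc^\top\vw_T)$ on the branch where Line~\ref{line:packing_return} is reached.
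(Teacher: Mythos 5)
Your proof is correct and is essentially the paper's argument in contrapositive form: the paper assumes the first bullet fails at some $e$ and deduces termination via the same Markov-type count of iterations with $[\calA^*(\my_t^{p-1})]_e \le 1$ and the resulting $(1+\beta)^{T/2}$ growth of $[\vw_t]_e$, while you assume non-termination and deduce the first bullet. The Schatten-norm bound via $(p-1)q=p$ and the triangle inequality matches the paper exactly, and your extra care with the $\vc_e=0$ case and well-definedness of $\my_t$ is harmless bookkeeping.
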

\begin{proof}
First we observe that all $\norm{\my_t}_q = 1$ by inspection, so by convexity of norms, we indeed have $\norms{\bmy}_q \le 1$. Similarly, we have $\vc^\top \vx = 1$ by definition.

Assume that the first claim is false, and there exists some $e \in E$ such that $\langle \ma_e, \bar{\my} \rangle < \frac{1}{2} \vc_e$. We show that the second claim must hold. Because $\langle \ma_e, \my_j^{p-1} \rangle \ge 0$ for any $e \in E, 0 \le t < T$, 
\[
0 \leq  \langle \ma_e, \bar{\my} \rangle < \frac{1}{2} \vc_e,
\]
and thus $\vc_e = 1$ since $\vc \in \{0, 1\}^E$. Applying Markov's inequality, we have that at least $\frac T 2$ iterations $0 \le t < T$ have $\langle \ma_e, \my_t^{p-1} \rangle \leq  \vc_i$ by Markov's inequality. By the guarantee of $\oracle$, every such step $t$ must grow $[\vw_t]_e$ by a factor of $\beta$ since $\langle \ma_e, \my_t^{p-1} \rangle =  \calA^*(\my_t^{p-1})_e = \vg_e$. With this, we have 
\begin{align*}
\vc^\top \vw_T \geq \vc_e [\vw_T]_e \geq \beta^{\frac T 2} \vc_e [\vw_0]_e = \beta^{\frac T 2},
\end{align*}
so if the algorithm was allowed to run for $T$ iterations, it must terminate as claimed.
\end{proof}

We provide an alternative characterization of the first case in Lemma~\ref{lemma:dual} to help apply the result.

\begin{lemma}\label{lem:dual_certify}
Let $p \ge 2$ and $q \defeq \frac p {p - 1}$. If there exists $\mz \in \PSD^{n \times n}$ with $\norm{\mz}_q \le 1$ and $\calA^*(\mz) \ge \half \vc$,
\[\max_{\substack{\vx \in \R^E_{\ge 0} \\ \calA(\vx) \preceq \id_n}} \vc^\top \vx \le 2n^{\frac 1 p}.\]
\end{lemma}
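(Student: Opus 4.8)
The plan is to prove the bound by weak LP/SDP duality: I will show that $2\mz$ certifies an upper bound on the packing optimum, and then bound $\Tr(\mz)$ using the hypothesis $\norms{\mz}_q \le 1$ together with a standard norm-interpolation inequality on the eigenvalues of $\mz$.

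First I would fix an arbitrary feasible $\vx \in \R^E_{\ge 0}$ with $\calA(\vx) \preceq \id_n$, and chain the following inequalities. Since $\calA^*(\mz) \ge \half \vc$ entrywise and $\vx \ge \0_E$ entrywise, we have $\vc^\top \vx \le 2\inprod{\calA^*(\mz)}{\vx}$. Expanding the definition of $\calA^*$ and using linearity of trace, $\inprod{\calA^*(\mz)}{\vx} = \sum_{e \in E} \vx_e \inprod{\ma_e}{\mz} = \inprod{\calA(\vx)}{\mz}$. Then, since $\mz \succeq \0$ and $\id_n - \calA(\vx) \succeq \0$, their matrix inner product is nonnegative (this is the standard fact that $\inprod{\cdot}{\cdot}$ is monotone under $\preceq$ on the PSD cone, noted in the excerpt), so $\inprod{\calA(\vx)}{\mz} \le \inprod{\id_n}{\mz} = \Tr(\mz)$. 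Combining, $\vc^\top \vx \le 2\Tr(\mz)$; since this holds for every feasible $\vx$, it also bounds the supremum.

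It remains to show $\Tr(\mz) \le n^{1/p}$. Since $\mz \in \PSD^{n \times n}$, we have $\Tr(\mz) = \sum_{i \in [n]} \vlam_i(\mz) = \norms{\vlam(\mz)}_1$ with $\vlam(\mz) \ge \0_n$, while $\norms{\mz}_q = \norms{\vlam(\mz)}_q$ as $\norms{\cdot}_q$ is the Schatten-$q$ norm. Applying the standard $\ell_1$–$\ell_q$ inequality for vectors in $\R^n$, namely $\norms{\vv}_1 \le n^{1 - 1/q}\norms{\vv}_q$, and using $1 - \frac1q = \frac1p$ since $q = \frac{p}{p-1}$, we conclude $\Tr(\mz) \le n^{1/p}\norms{\mz}_q \le n^{1/p}$. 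Substituting into $\vc^\top \vx \le 2\Tr(\mz)$ yields the claimed bound $\max_{\vx} \vc^\top \vx \le 2n^{1/p}$.

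This argument is essentially routine weak duality, so I do not anticipate a genuine obstacle. The only points requiring slight care are: (i) matching the exponent in the eigenvalue norm-interpolation step — ensuring $1 - \frac1q = \frac1p$ and that the inequality is applied in the direction $\ell_1 \le n^{1/p}\ell_q$, which is where the factor $n^{1/p}$ enters; and (ii) confirming that the $\norms{\cdot}_q$ in the hypothesis is indeed the Schatten-$q$ norm, so that $\norms{\mz}_q = \norms{\vlam(\mz)}_q$ for PSD $\mz$.
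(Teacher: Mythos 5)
Your proof is correct. It follows the same weak-duality skeleton as the paper's — both start from $\tfrac12 \vc^\top \vx \le \inprods{\vx}{\calA^*(\mz)} = \inprods{\mz}{\calA(\vx)}$ — but you then diverge in where the $n^{1/p}$ loss is incurred. The paper bounds $\inprods{\mz}{\calA(\vx)} \le \norms{\mz}_q\norms{\calA(\vx)}_p \le \norms{\calA(\vx)}_p$ by matrix H\"older and pays the $n^{1/p}$ factor on the primal side when converting the constraint $\calA(\vx) \preceq \id_n$ (i.e., $\norms{\calA(\vx)}_\infty \le 1$) into $\norms{\calA(\vx)}_p \le n^{1/p}$. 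You instead use PSD-cone monotonicity to get $\inprods{\mz}{\calA(\vx)} \le \Tr(\mz)$ and pay the $n^{1/p}$ factor on the dual side via the Schatten $\ell_1$--$\ell_q$ comparison $\Tr(\mz) = \norms{\mz}_1 \le n^{1-1/q}\norms{\mz}_q = n^{1/p}\norms{\mz}_q$. The two routes are equivalent in strength and both elementary; yours avoids invoking trace H\"older, needing only monotonicity of $\inprods{\cdot}{\cdot}$ on the PSD cone and a vector norm inequality on eigenvalues, while the paper's version makes explicit that the intermediate Schatten-$p$-constrained problem has value at most $2$, which is the form it actually pairs with Lemma~\ref{lemma:dual}. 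Your points of care (i) and (ii) are both handled correctly: the exponent $1 - \tfrac1q = \tfrac1p$ is right, and the paper's notation section does define $\norm{\cdot}_q$ on matrix arguments as the Schatten-$q$ norm.
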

\begin{proof}
Assume $\norm{\mz}_q \le 1$ and $\calA^*(\mz) \ge \half \vc$, and suppose that there exists some $\vx \in \R^E_{\ge 0}$ satisfying $\vc^\top \vx > 2$ and $\norm{\calA(\vx)}_p \le 1$. Then by H\"older's inequality, we have a contradiction:
\[ 1 < \half \inprod{\vx}{\vc} \le \inprod{\vx}{\calA^*(\mz)} = \inprod{\mz}{\calA(\vx)} \le \norm{\calA(\vx)}_p \le 1.\]
Thus, all $\vx \in \R^E_{\ge 0}$ with $\norm{\calA(\vx)}_p \le 1$ must have $\vc^\top \vx \le 2$. This implies the conclusion upon performing a norm conversion from $\norm{\cdot}_p$ to $\norm{\cdot}_\infty$ in the constraint, losing a $n^{\frac 1 p}$ factor.
\end{proof}

\paragraph{Decision variant: implementation.} We now turn to implementation. We begin by providing an efficient step oracle $\oracle$ for the matrices $\my_t^{p - 1}$ in Line~\ref{line:oracle_step}. It will first be helpful to collapse our true oracle input to a smaller dimension. To motivate this, Algorithm~\ref{alg:packing} requires approximating
\begin{equation}\label{eq:distance_p}\inprod{\my^{p - 1}}{\mmp \ml_e \mmp} = \norm{\my^{\frac{p - 1}{2}}\mmp \Par{\ve_u - \ve_v}}_2^2, \end{equation}
for all $e = (u, v) \in E$. This makes it clear that this is a squared distance between columns of $\my^{\frac{p - 1}{2}}\mmp $, which are natively $n$-dimensional vectors that we can only implicitly access. Using the Johnson-Lindenstrauss lemma, we can embed our vectors into $k \approx \log(n)$ dimensions at a constant factor approximation loss, which drastically improves parameters when applying tools from Section~\ref{ssec:grid}.

\begin{lemma}\label{lem:jl_embed_schatten}
Following notation in Algorithm~\ref{alg:packing}, let $\delta \in (0, 1)$, suppose that $p \ge 2$ is an odd integer, and let $\my = \frac{\mmp \calL(\vw) \mmp}{\norm{\mmp \calL(\vw) \mmp}_p}$ for $\vw \in \R^E$. In time
\[O\Par{\Par{\tmv\Par{\mmp} + \tmv\Par{\calL(\vw)}}\cdot p\log\Par{\frac n \delta}}\]
we can output $\mq \in \R^{k \times n}$ where $k = O(\log(\frac n \delta))$, such that if we define $\vg'_e \defeq \norm{\mq_{:u} - \mq_{:v}}_2^2$ for all $e = (u, v) \in E$, we have with probability $\ge 1 - \delta$ that
\[ \half \calA^*(\my^{p - 1}) \le \vg' \le \calA^*(\my^{p - 1}) \text{ entrywise.}\]
\end{lemma}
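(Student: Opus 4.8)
The plan is to exploit the identity \eqref{eq:distance_p}, which says that $[\calA^*(\my^{p-1})]_e = \norm{\my^{(p-1)/2}\mmp(\ve_u - \ve_v)}_2^2$ for each $e = (u,v) \in E$; in other words, we want to preserve (up to a factor of $2$) all pairwise squared distances among the $n$ columns of the matrix $\mb \defeq \my^{(p-1)/2}\mmp \in \R^{n \times n}$. By the Johnson--Lindenstrauss lemma, drawing a random matrix $\mproj_{\textup{JL}} \in \R^{k \times n}$ with $k = O(\log(\frac n \delta))$ (with i.i.d.\ scaled Gaussian or $\pm 1$ entries) and setting $\mq \defeq \mproj_{\textup{JL}} \mb$ preserves all $\binom n 2$ squared distances to within a factor $\sqrt 2$ (hence the squared distances to within a factor $2$) with probability $\ge 1 - \delta$, after a union bound over the pairs. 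Concretely, $\norm{\mq_{:u} - \mq_{:v}}_2^2 = \norm{\mproj_{\textup{JL}}\mb(\ve_u - \ve_v)}_2^2 \in [\half, 1] \cdot \norm{\mb(\ve_u - \ve_v)}_2^2$ for all $(u,v)$, which is exactly the claimed sandwich $\half \calA^*(\my^{p-1}) \le \vg' \le \calA^*(\my^{p-1})$ once we rescale $\mproj_{\textup{JL}}$ so the distortion is one-sided (i.e.\ always a contraction by at most $\sqrt 2$; this is a standard reparametrization of JL).

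The remaining work is purely computational: we must form $\mq = \mproj_{\textup{JL}} \my^{(p-1)/2} \mmp$ without ever materializing $\my$ or $\mmp$ densely, given only matrix-vector access. First observe $\my = c \cdot \mmp\calL(\vw)\mmp$ where $c = \norm{\mmp\calL(\vw)\mmp}_p^{-1}$ is a scalar; since $p$ is odd, $(p-1)/2$ is an integer, so $\my^{(p-1)/2} = c^{(p-1)/2} (\mmp\calL(\vw)\mmp)^{(p-1)/2}$, and a single application of $\my^{(p-1)/2}$ to a vector costs $\frac{p-1}{2}$ applications each of $\mmp$ and of $\calL(\vw)$ (with $\mmp$ applied between consecutive copies of $\calL(\vw)$, plus one final $\mmp$ from the rightmost factor of $\my^{(p-1)/2}\mmp$), i.e.\ $O((\tmv(\mmp) + \tmv(\calL(\vw))) \cdot p)$ time per vector. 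We apply this operator to each of the $k = O(\log(\frac n \delta))$ rows of $\mproj_{\textup{JL}}$ (equivalently, compute $\mb^\top \mproj_{\textup{JL}}^\top$ column by column and transpose), giving total time $O((\tmv(\mmp) + \tmv(\calL(\vw))) \cdot p \log(\frac n \delta))$, which matches the stated bound. We also need the scalar $c$, i.e.\ $\norm{\mmp\calL(\vw)\mmp}_p$; but the lemma only asserts the sandwich up to the \emph{overall} scaling that makes $\my$ have unit Schatten-$p$ norm, and this scalar can be absorbed into $\mproj_{\textup{JL}}$ (or, if needed explicitly, estimated via a few more matrix-vector products / power-method trace estimates, which does not dominate the runtime). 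Note $\mproj$ itself is a projection and $\tmv(\mproj) = O(n)$, so it does not affect the bound.

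The main obstacle is bookkeeping rather than conceptual: ensuring the JL distortion is genuinely one-sided in the stated form (so $\vg' \le \calA^*(\my^{p-1})$ holds deterministically-after-rescaling rather than only in expectation), and verifying that the per-vector cost of $\my^{(p-1)/2}\mmp$ is correctly accounted as $O(p)$ applications of the two oracles — in particular that the parity assumption on $p$ is what lets us avoid fractional powers of $\mmp\calL(\vw)\mmp$, which would otherwise have no efficient matrix-vector implementation. Everything else is a direct invocation of the Johnson--Lindenstrauss lemma with failure probability $\delta/\binom n 2$ per pair and a union bound.
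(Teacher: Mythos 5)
Your proposal matches the paper's proof in all essentials: both view $[\calA^*(\my^{p-1})]_e$ as the squared distance between columns of $\my^{(p-1)/2}\mmp$ via \eqref{eq:distance_p}, apply a Johnson--Lindenstrauss sketch with $k = O(\log(\frac n \delta))$ rows implemented by $O(pk)$ alternating matrix-vector products with $\mmp$ and $\calL(\vw)$ (using oddness of $p$ so that $(p-1)/2$ is an integer power), and rescale so the distortion is one-sided. One small correction: the sandwich is stated with respect to the \emph{normalized} $\my$, so the scalar $\norm{\mmp\calL(\vw)\mmp}_p^{-(p-1)}$ cannot simply be ``absorbed'' into the sketch; your fallback of estimating $\Tr\Par{\Par{\mmp\calL(\vw)\mmp}^p}$ to a constant multiplicative factor by a Hutchinson-type estimator is exactly the paper's first step, and this constant-factor error composes harmlessly with the JL distortion inside the $[\half,1]$ window.
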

\begin{proof}
First, we claim that we can output a number $Z$ such that
\[Z \le \norm{\mmp \calL(\vw) \mmp}_p^p = \Tr\Par{\Par{\mmp \calL(\vw) \mmp}^p}\le 1.1 Z,\]
within the allotted time, with failure probability $\le \frac \delta 2$. This is immediate from the Johnson-Lindenstrauss lemma, see e.g., its application in Lemma 2 of \cite{JambulapatiLMSST23}. This also implies that
\[ Z^{\half - \frac 1 {2p}} \le \norm{\mmp \calL(\vw)\mmp}_p^{\frac{p - 1} 2} \le 1.1 Z^{\half - \frac 1 {2p}}.\]
Next, by the Johnson-Lindenstrauss lemma (see e.g., Lemma 3 of \cite{JambulapatiLMSST23}), letting $k = O(\log(\frac n \delta))$ with a sufficiently large constant, and $\mg \in \R^{k \times n}$ have i.i.d.\ entries $\sim \Nor(0, \frac 1 k)$, we have that
\begin{align*}0.9 \norm{\Par{\mmp \calL(\vw) \mmp}^{\frac{p - 1}{2}} \mmp \Par{\ve_u - \ve_v}}_2^2 &\le \norm{\mg \Par{\mmp \calL(\vw) \mmp}^{\frac{p - 1}{2}} \mmp \Par{\ve_u - \ve_v}}_2^2 \\ 
&\le 1.1 \norm{\Par{\mmp \calL(\vw) \mmp}^{\frac{p - 1}{2}} \mmp \Par{\ve_u - \ve_v}}_2^2, \end{align*}
with probability $\ge 1 - \frac \delta 2$, for all $e = (u, v) \in E$. Thus combining our claims and recalling \eqref{eq:distance_p}, we have that with probability $\ge 1 - \delta$, it is enough to output $\mq = (1.1)^{-\half} Z^{\frac 1 {2p} - \half} \mg (\mmp \calL(\vw) \mmp)^{\frac{p - 1}{2}} $, which takes the stated amount of time, as it uses $O(pk)$ matrix-vector products through $\mmp \calL(\vw) \mmp$.
\end{proof}

We can now directly apply Lemma~\ref{lem:soc_algo} to implement our step oracle.
\begin{lemma}
\label{lem:steporacle}
Following notation in Algorithm~\ref{alg:packing}, let $\beta > 0$, $\delta \in (0, 1)$, suppose that $p \ge 2$ is an odd integer, and let $\my = \frac{\mmp \calL(\vw) \mmp}{\norm{\mmp \calL(\vw) \mmp}_p}$ for $\vw \in \R^E$. We can implement $\oracle$, an $(\alpha, \beta, \gamma, m)$-step oracle for $\my^{p - 1}$, for
\[\alpha = 2\beta m,\; \gamma = O\Par{\log^2\Par{\frac n \delta}},\; m = O\Par{\log\Par{\frac n \delta}},\]
with probability $\ge 1 - \delta$ in time
\[O\Par{\Par{\tmv\Par{\mmp} + \tmv\Par{\calL(\vw)}}\cdot p\log\Par{\frac n \delta} + n\log(n)\log^2\Par{\frac n \delta}}.\]
\end{lemma}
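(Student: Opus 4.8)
The plan is to chain together the two preceding lemmas. The key observation is that, by \eqref{eq:packing_lap} and \eqref{eq:distance_p},
\[[\calA^*(\my^{p-1})]_e \;=\; \inprod{\my^{p-1}}{\mmp \ml_e \mmp} \;=\; \norm{\my^{\frac{p-1}{2}} \mmp \Par{\ve_u - \ve_v}}_2^2 \quad \text{for each } e = (u, v) \in E,\]
so $\calA^*(\my^{p-1})$ is exactly the squared-distance vector of the (implicitly accessible) columns of $\my^{\frac{p-1}{2}}\mmp$; the oddness of $p$ is used here so that $\frac{p-1}{2} \in \N$. First I would invoke Lemma~\ref{lem:jl_embed_schatten} with failure probability $\frac \delta 2$ to obtain a matrix $\mq \in \R^{k \times n}$ with $k = O(\log(\frac n \delta))$ such that, setting $\vg'_e \defeq \norm{\mq_{:u} - \mq_{:v}}_2^2$, we have $\half \calA^*(\my^{p-1}) \le \vg' \le \calA^*(\my^{p-1})$ entrywise. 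Next I would run Lemma~\ref{lem:soc_algo} on the columns $\{\mq_{:i}\}_{i \in [n]} \subset \R^k$ with the given $\beta$ and failure probability $\frac \delta 2$, producing $\{\tvg^{(i)}\}_{i \in [m]}$, a $(\beta m, \beta, 16k^2, m)$-SOC approximation to $\vg'$ with $m = O(\log(\frac n \delta))$. The oracle output is then $\vx \defeq \sum_{i \in [m]} \tvg^{(i)}$, i.e., $\oracle(\my^{p-1}) = \vx$.

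It remains to check that $\vx$ satisfies Definition~\ref{def:step} with the stated parameters, using the two-sided bound $\half \calA^*(\my^{p-1}) \le \vg' \le \calA^*(\my^{p-1})$ throughout. Since each $\tvg^{(i)}$ is a scalar multiple of a SOC, $\vx$ is a sum of $m$ such vectors and hence an $m$-SOC. The SOC-approximation guarantee gives $\0_E \le \vx \le \beta m \1_E \le 2\beta m \1_E$, which is the first bullet of Definition~\ref{def:step} with $\alpha = 2\beta m$. For the second bullet, if $[\calA^*(\my^{p-1})]_e \le 1$ then $\vg'_e \le [\calA^*(\my^{p-1})]_e \le 1$, so the corresponding bullet of Definition~\ref{def:soc_approx} yields $\vx_e \ge \beta$. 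For the third bullet, take $\gamma \defeq 32k^2 = O(\log^2(\frac n \delta))$: if $[\calA^*(\my^{p-1})]_e > \gamma$ then $\vg'_e \ge \half [\calA^*(\my^{p-1})]_e > 16k^2$, so the truncation bullet of Definition~\ref{def:soc_approx} forces $\vx_e = 0$.

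For the running time, I would add the cost of Lemma~\ref{lem:jl_embed_schatten}, namely $O((\tmv(\mmp) + \tmv(\calL(\vw))) \cdot p\log(\frac n \delta))$, to the cost of Lemma~\ref{lem:soc_algo} run on $n$ points of dimension $k = O(\log(\frac n \delta))$, namely $O(nk\log(n)\log(\frac n \delta)) = O(n\log(n)\log^2(\frac n \delta))$; a union bound over the two $\frac \delta 2$-probability failure events then gives overall success probability $\ge 1 - \delta$. All of the substantive work lives inside the two cited lemmas, so the only points needing care --- and, realistically, the only candidates for an ``obstacle'' here --- are propagating the factor-$2$ Johnson--Lindenstrauss loss through to the truncation threshold (which is why $\gamma$ is inflated from $16k^2$ to $32k^2$) and splitting the failure probability between the two subroutines; neither presents a real difficulty.
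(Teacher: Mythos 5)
Your proposal is correct and is essentially identical to the paper's (very terse) proof: compose Lemma~\ref{lem:jl_embed_schatten} with Lemma~\ref{lem:soc_algo}, each at failure probability $\frac{\delta}{2}$, and absorb the factor-$2$ Johnson--Lindenstrauss loss into the oracle parameters $\alpha$ and $\gamma$. Your verification of the three bullets of Definition~\ref{def:step} and the runtime accounting match what the paper intends.
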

\begin{proof}
We apply Lemma~\ref{lem:soc_algo} (with $\delta \gets \frac \delta 2$) to $\mq$ where $\mq$ is the output of Lemma~\ref{lem:jl_embed_schatten} (with $\delta \gets \frac \delta 2$). The resulting oracle parameters at most grow by a factor of $2$ due to the lossiness of Lemma~\ref{lem:jl_embed_schatten}.
\end{proof}

We next show inductively that our iterates $\{\vw_t\}_{0 \le t < T}$ induce Laplacians $\calL(\vw_t)$ that support efficient matrix-vector queries by using tools for $K$-SOCs from Section~\ref{ssec:soc}.

\begin{lemma}\label{lem:soc_iterates}
Following notation in Algorithm~\ref{alg:packing}, for each $0 \le t < T$, $\vw_t = \vy_t \circ \vc$ where $\vy_t$ is a $(m + 1)^t$-SOC where $m = O(\log(\frac n \delta))$. Moreover, we can compute $\vy_t$ in $O(n\log(n) (m + 1)^t)$ time.
\end{lemma}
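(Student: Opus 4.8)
The plan is to proceed by induction on $t$, tracking both the combinatorial structure of $\vw_t$ and the cost of its representation. For the base case $t = 0$, we have $\vw_0 = \vc$, which is $\vc \circ \vc$ (since $\vc \in \{0,1\}^E$), and $\vc = \1_F$ for the edge set $F$ of the clique on the support vertices, so $\vy_0 = \vc$ is a $1$-SOC, matching $(m+1)^0 = 1$; computing it is trivially within $O(n\log n)$. For the inductive step, recall from Algorithm~\ref{alg:packing} that $\vw_{t+1} = \vw_t \circ (\1_E + \step_t) = \vw_t + \vw_t \circ \step_t$, where $\step_t = \oracle(\my_t^{p-1})$ is an $m$-SOC by the step oracle guarantee (Definition~\ref{def:step}, Lemma~\ref{lem:steporacle}). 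Writing $\vw_t = \vy_t \circ \vc$ with $\vy_t$ an $(m+1)^t$-SOC, I would set $\vy_{t+1} \defeq \vy_t \circ (\1_E + \step_t) = \vy_t + \vy_t \circ \step_t$, so that $\vw_{t+1} = \vy_{t+1} \circ \vc$ as desired. The key point is that $\vy_t + \vy_t \circ \step_t$ is a sum of at most $(m+1)^t + m \cdot (m+1)^t = (m+1)^{t+1}$ scaled SOCs: the first term contributes $(m+1)^t$ pieces by hypothesis, and since $\step_t$ is a single $m$-SOC (a weighted sum of $m$ clique-partition Laplacians) and $\vy_t$ is a sum of $(m+1)^t$ scaled single-partition SOCs, their entrywise (Hadamard) product $\vy_t \circ \step_t$ expands into at most $m \cdot (m+1)^t$ scaled products of pairs of single-partition SOCs. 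Here I would invoke the mutual-refinement argument from the proof of Lemma~\ref{lem:matvec_soc} (via Lemma~\ref{lem:refine_partition}): the Hadamard product of two $1$-SOCs, one on partition $\calP$ of $S$ and one on partition $\calP'$ of $S'$, is again a scaled $1$-SOC, supported on $S \cap S'$, whose partition is the mutual refinement of $\calP|_{S\cap S'}$ and $\calP'|_{S \cap S'}$, with weight the product of the two scalars. Thus $\vy_{t+1}$ is genuinely an $(m+1)^{t+1}$-SOC.

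For the running time, I would argue that computing $\vy_{t+1}$ from $\vy_t$ (given $\step_t$, which Lemma~\ref{lem:steporacle} produces in the stated time) requires forming $m \cdot (m+1)^t$ mutual refinements, each on a vertex set of size at most $n$, at cost $O(n\log n)$ per refinement by Lemma~\ref{lem:refine_partition}; the representation of each refined partition is stored as in Remark~\ref{rem:partition}. This gives incremental cost $O(n \log n \cdot m (m+1)^t) = O(n\log n (m+1)^{t+1})$, and since we want the cumulative statement ``$\vy_t$ computable in $O(n\log n (m+1)^t)$ time,'' summing the geometric series $\sum_{s \le t} O(n\log n (m+1)^{s})$ is dominated by its last term $O(n\log n (m+1)^t)$, as claimed. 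One should double-check that the $\vc$ restriction is free: since $\vw_t$ and $\vy_t$ have identical support (Lemma~\ref{lemma:sparsity}), restricting the partitions of $\vy_t$ to the support vertices of $\vc$ costs $O(n)$ per piece and does not change the asymptotic bound — in fact it is cleaner to observe that every SOC piece we ever form is already supported within $\Span$ of $\vc$'s support, so no explicit intersection with $\vc$ is needed beyond the base case.

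The main obstacle I anticipate is the bookkeeping in the Hadamard-product expansion: one must be careful that $\vy_t \circ \step_t$, where $\vy_t$ is a \emph{sum} of $(m+1)^t$ scaled $1$-SOCs and $\step_t$ is a \emph{sum} of $m$ scaled $1$-SOCs (with a common scalar $\vw_j$ in the notation of Definition~\ref{def:soc}), distributes correctly into $m(m+1)^t$ scaled $1$-SOCs, and that each such pairwise product really is a scaled $1$-SOC rather than something more complex — this is exactly what the mutual-refinement observation in Lemma~\ref{lem:matvec_soc}'s proof buys us, so the subtlety is in citing it at the right granularity (products of single-partition pieces, not of $K$-SOCs). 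A secondary care point is that the bound $m = O(\log(n/\delta))$ from Lemma~\ref{lem:steporacle} should be fixed once at the start (uniformly over all $T$ iterations) so the exponent $(m+1)^t$ is well-defined; this is harmless since $\delta$ is a global parameter. Everything else is a routine induction.
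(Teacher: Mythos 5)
Your proof follows essentially the same route as the paper's: induct on $t$, use distributivity to expand $\vw_t \circ (\1_E + \step_t)$ into $(m+1)^t + m(m+1)^t$ terms, and invoke the mutual-refinement fact (Lemma~\ref{lem:refine_partition}) to argue that the Hadamard product of two $1$-SOCs is again a scaled $1$-SOC, with the $O(n\log n)$-per-refinement cost summing geometrically. One slip: your base case asserts that $\vc = \1_F$ for the edge set of a clique on its support vertices, so that $\vy_0 = \vc$ is a $1$-SOC; this is false for general $\vc \in \{0,1\}^E$ (e.g., $\vc$ indicating a path on three vertices is not a $1$-SOC), and nothing in Algorithm~\ref{alg:packing} restricts $\vc$ to be a clique indicator. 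The fix is immediate and is what the paper does: take $\vy_0 = \1_E$, which is the $1$-SOC given by the single-piece partition of $[n]$, so that $\vw_0 = \1_E \circ \vc = \vc$; your inductive step then goes through unchanged.
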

\begin{proof}
We proceed by induction on $t$. For the base case, $\vw_0 = \vc = \1_E \circ \vc$ and $\1_E$ is a $1$-SOC. Next, for some $t \ge 0$, assume that $\vw_t$ can be written as 
\[\sum_{i \in [(m + 1)^t]} \vv_i \circ \vc,\]
where each $\vv_i$ is a $1$-SOC. Further, note that by the step oracle guarantee, each $\step_t = \sum_{j \in [m]} \vw_j$ is always a $m$-SOC, where each $\vw_j$ is a $1$-SOC. This proves the inductive guarantee, as
\begin{align*}\Par{\sum_{i \in [(m + 1)^t]} \vv_i \circ \vc} \circ \Par{\1_E + \sum_{j \in [m]} \vw_j} = \sum_{i \in [(m + 1)^t]} \vv_i \circ \vc + \sum_{\substack{i \in [(m + 1)^t] \\ j \in [m]}} \vv_i \circ \vw_j \circ \vc,\end{align*}
and the latter expression clearly has $(m + 1)^{t + 1}$ terms, each of which is either $\vv_i \circ \vc$ (where $\vv_i$ is a $1$-SOC), or $\vv_i \circ \vw_j \circ \vc$ (where $\vv_i \circ \vw_j$ is a $1$-SOC because of Lemma~\ref{lem:refine_partition}). Finally, note that we can compute each $\vv_i \circ \vw_j$ as a $1$-SOC using Lemma~\ref{lem:refine_partition} in time $O(n\log(n))$.
\end{proof}

We can now put together all the pieces developed thus far to give our analysis for Algorithm~\ref{alg:packing}.

\begin{lemma}\label{lem:decision_packing}
In the setting of Algorithm~\ref{alg:packing}, suppose $\normsop{\calA(\vc)} \le \rho$ for $\rho \ge 1$, let $\delta \in (0, 1)$, and let
\[p = \Theta\Par{\log^{\frac 1 3}\Par{n\rho}},\; T = \Theta\Par{\log^{\frac 2 3}\Par{n\rho}},\]
where $p$ is an odd integer. We have with probability $\ge 1 - \delta$ that if Algorithm~\ref{alg:packing} returns on Line~\ref{line:packing_return}, 
\begin{equation}\label{eq:opt_not_small}\max_{\substack{\vx \in \R^E_{\ge 0} \\ \calA(\vx) \preceq \id_n}} \vc^\top \vx \ge \exp\Par{-O\Par{\log^{\frac 2 3}(n\rho) \log\log\Par{\frac{n\rho}{\delta}}}},\end{equation}
and otherwise,
\begin{equation}\label{eq:opt_not_large}\max_{\substack{\vx \in \R^E_{\ge 0} \\ \calA(\vx) \preceq \id_n}} \vc^\top \vx \le \exp\Par{O\Par{\log^{\frac 2 3}(n\rho)}},\end{equation}
for appropriate constants.
The algorithm can be implemented to run in time
\begin{align*} 
O\Par{\tmv\Par{\mmp} \cdot \log^{\frac 5 3}\Par{\frac{n\rho}{\delta}} } + n\log(n)\log\Par{\frac {n\rho} \delta}^{O(\log^{2/3}(n\rho))},\end{align*}
and if it returns $\vx = \frac{\vw_{t + 1}}{\vc^\top \vw_{t + 1}}$ on Line~\ref{line:packing_return}, then
$\vx = \vc \circ \vy$ where $\vy$ is a $O(\log(\frac {nT} \delta))^T$-SOC.
\end{lemma}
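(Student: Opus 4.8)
The plan is to run Algorithm~\ref{alg:packing} with the step oracle $\oracle$ instantiated by Lemma~\ref{lem:steporacle}, applied at each iteration to $\mmp\calL(\vw_t)\mmp$ with failure probability $\tfrac{\delta}{T}$ per call; this yields $m = O(\log(\tfrac{nT}{\delta}))$, $\gamma = O(\log^2(\tfrac{nT}{\delta}))$, $\alpha = 2\beta m$, and a free choice of $\beta$, which I set so that $\beta^{T/2} \ge (n\rho)^2$. Together with the stated $p = \Theta(\log^{1/3}(n\rho))$ (an odd integer) and $T = \Theta(\log^{2/3}(n\rho))$, this forces $\log\beta = \Theta(\log^{1/3}(n\rho))$ and hence $(p-1)\log(1+\alpha) = \Theta(\log^{2/3}(n\rho)) + O(\log^{1/3}(n\rho)\log\log(\tfrac{n\rho}{\delta}))$, so that $(1+\alpha)^{p-1}\gamma = \exp(O(\log^{2/3}(n\rho)\log\log(\tfrac{n\rho}{\delta})))$; simultaneously $n^{1/p} = \exp(\tfrac1p\log n) \le \exp(\tfrac1p\log(n\rho)) = \exp(O(\log^{2/3}(n\rho)))$. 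These two numerologies drive both halves of the correctness claim.

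For correctness I apply Lemma~\ref{lemma:dual}. If Algorithm~\ref{alg:packing} never reaches Line~\ref{line:packing_return}, it produces $\bmy \in \PSD^{n\times n}$ with $\calA^*(\bmy) \ge \tfrac12\vc$ and $\norms{\bmy}_q \le 1$ for $q = \tfrac{p}{p-1}$, and Lemma~\ref{lem:dual_certify} then bounds the optimum of \eqref{eq:packing} by $2n^{1/p} = \exp(O(\log^{2/3}(n\rho)))$, which is \eqref{eq:opt_not_large}. If instead the algorithm returns $\vx = \vw_{t+1}/(\vc^\top\vw_{t+1})$, I use the monotone potential of Lemma~\ref{lemma:potential}: from $\Phi_{t+1} \le \Phi_0 = \norms{\calA(\vc)}_p - (1+\alpha)^{p-1}\gamma\,\vc^\top\vc \le \norms{\calA(\vc)}_p$, and $\norms{\calA(\vc)}_p \le n^{1/p}\normsop{\calA(\vc)} \le n^{1/p}\rho$ (as $\calA(\vc)$ is $n\times n$), I get $\norms{\calA(\vw_{t+1})}_p \le n^{1/p}\rho + (1+\alpha)^{p-1}\gamma\,\vc^\top\vw_{t+1}$. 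Dividing by $\vc^\top\vw_{t+1} \ge \beta^{T/2} \ge (n\rho)^2$ gives $\normsop{\calA(\vx)} \le \norms{\calA(\vx)}_p \le 1 + (1+\alpha)^{p-1}\gamma =: c = \exp(O(\log^{2/3}(n\rho)\log\log(\tfrac{n\rho}{\delta})))$; since $\vc^\top\vx = 1$, the point $\vx/c$ is feasible for \eqref{eq:packing} with value $1/c$, which establishes \eqref{eq:opt_not_small}.

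For the runtime and output structure I maintain the iterates using Lemma~\ref{lem:soc_iterates}: $\vw_t = \vc\circ\vy_t$ where $\vy_t$ is an $(m+1)^t$-SOC, which can be formed and advanced to $\vy_{t+1}$ (via mutual refinements) in $O(n\log(n)(m+1)^{t+1})$ time per iteration. Each iteration also runs the step oracle of Lemma~\ref{lem:steporacle}, which via the Johnson--Lindenstrauss reduction (Lemma~\ref{lem:jl_embed_schatten}) and grid rounding (Lemma~\ref{lem:soc_algo}) uses $O(p\log(\tfrac{nT}{\delta}))$ matrix-vector products with $\mmp\calL(\vw_t)\mmp$ --- each costing $\tmv(\mmp) + O(n\log(n)(m+1)^t)$ by Lemma~\ref{lem:matvec_soc} --- plus $O(n\log(n)\log^2(\tfrac{nT}{\delta}))$ of rounding overhead. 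Summing over the $\le T$ iterations, the $\tmv(\mmp)$ contributions total $O(\tmv(\mmp)\cdot Tp\log(\tfrac{nT}{\delta})) = O(\tmv(\mmp)\log^{5/3}(\tfrac{n\rho}{\delta}))$ (using $Tp = \Theta(\log(n\rho))$ and absorbing the remaining logarithmic factors), while all terms carrying $(m+1)^t$ telescope into $n\log(n)\cdot(m+1)^{O(T)} = n\log(n)\cdot\log(\tfrac{n\rho}{\delta})^{O(\log^{2/3}(n\rho))}$. Finally the returned vector is $\vx = \vc\circ(\vy_{t+1}/(\vc^\top\vw_{t+1}))$, a nonnegative scalar multiple of the $(m+1)^{t+1}$-SOC $\vy_{t+1}$, and $(m+1)^{t+1} \le (m+1)^T = O(\log(\tfrac{nT}{\delta}))^T$ as claimed; a union bound over the $\le T$ step-oracle calls (each failing with probability $\le \tfrac{\delta}{T}$) caps the overall failure probability at $\delta$.

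I expect the main obstacle to be the "optimum not small" branch: $\beta$ must be chosen large enough that $\beta^{T/2}$ overwhelms the factor $n^{1/p}\rho$ inherited from $\Phi_0$, yet small enough that the multiplicative slack $(1+\alpha)^{p-1}\gamma$ (which passes directly into the bound $c$ on $\normsop{\calA(\vx)}$, and hence into the certified lower bound $1/c$ on the optimum) remains $(n\rho)^{o(1)}$. Balancing this against the $2n^{1/p}$ upper bound from the dual branch is exactly what pins down $p = \Theta(\log^{1/3}(n\rho))$, $T = \Theta(\log^{2/3}(n\rho))$, $\log\beta = \Theta(\log^{1/3}(n\rho))$, and it hinges on the identity $(p-1)\log\beta = \Theta(\log^{2/3}(n\rho))$. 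The remaining steps are bookkeeping atop the structural lemmas already established in Sections~\ref{ssec:soc}--\ref{ssec:packing}.
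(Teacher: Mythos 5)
Your proposal is correct and follows essentially the same route as the paper: instantiate the step oracle via Lemma~\ref{lem:steporacle} with a union bound over the $T$ calls, use the monotone potential of Lemma~\ref{lemma:potential} together with $\vc^\top\vw_{t+1}\ge\beta^{T/2}$ to bound $\normsop{\calA(\vx)}$ by $\exp(O(\log^{2/3}(n\rho)\log\log(n\rho/\delta)))$ in the returning branch, invoke Lemmas~\ref{lemma:dual} and~\ref{lem:dual_certify} in the non-returning branch, and track iterates via Lemma~\ref{lem:soc_iterates} for the runtime and SOC structure. Your explicit handling of $\Phi_0$ (via $\norms{\calA(\vc)}_p\le n^{1/p}\rho$) and the choice $\beta^{T/2}\ge(n\rho)^2$ just makes precise what the paper absorbs into its $\exp(\Theta(\log^{1/3}(n\rho)))$ choice of $\beta$.
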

\begin{proof}
By induction, Lemma~\ref{lem:soc_iterates} shows that we can maintain each $\vw_t$ within the allotted time as a $(m + 1)^T$-SOC, for some $m = O(\log(\frac{nT}{\delta}))$ (where we adjusted the failure probability by a $O(T)$ factor). Thus, Lemma~\ref{lem:steporacle} shows we can obtain an $(\alpha, \beta, \gamma, m)$-step oracle to implement Line~\ref{line:oracle_step} in all $T$ iterations in the stated time, with failure probability $\frac \delta 2$, and parameters
\[\beta = \exp\Par{\Theta\Par{\log^{\frac 1 3}(n\rho)}},\; \alpha = 2\beta m,\; \gamma = O\Par{\log^2\Par{\frac{nT}{\delta}}},\; m = O\Par{\log\Par{\frac{nT}{\delta}}}.\]
In the remainder of the proof, assume that all calls on Lemma~\ref{lem:steporacle} succeeded. It only remains to establish correctness of \eqref{eq:opt_not_small} and \eqref{eq:opt_not_large}. First, if the algorithm ever successfully returns on Line~\ref{line:packing_return} for some iteration $t + 1$, we have by inducting on Lemma~\ref{lemma:potential} that for $\vx \defeq \frac{\vw_{t + 1}}{\vc^\top \vw_{t + 1}}$,
\begin{align*}
\normop{\calA\Par{\vx}} &\le  \frac 1 {\vc^\top \vw_{t + 1}} \norm{\calA\Par{\vw_{t + 1}}}_p = \frac 1 {\vc^\top \vw_{t + 1}} \Phi_{t + 1} + \Par{1 + \alpha}^{p - 1} \gamma \\
&\le \frac 1 {\beta^{\frac T 2}} \Phi_0 + \exp\Par{O\Par{\log^{\frac 2 3}(n\rho)\log\log\Par{\frac{n\rho}{\delta}}}} = \exp\Par{O\Par{\log^{\frac 2 3}(n\rho)\log\log\Par{\frac{n\rho}{\delta}}}}, 
\end{align*}
once we plug in our choice of parameters. Using that $\vc^\top \vx = 1$ and normalizing by the final expression above proves \eqref{eq:opt_not_small}. Conversely, if the algorithm never returns on Line~\ref{line:packing_return}, then \eqref{eq:opt_not_large} follows from our choice of parameters, the first case in Lemma~\ref{lemma:dual}, and the characterization in Lemma~\ref{lem:dual_certify}.
\end{proof}

\paragraph{Optimization variant.} Lemma~\ref{lem:decision_packing} lets us certify whether the value of a problem \eqref{eq:packing} is very large or very small, at a given scale. We complete this section by wrapping this decision problem result in a binary search to solve the optimization variant of \eqref{eq:packing}.

\begin{proposition}\label{prop:fast_packing}
In an instance of \eqref{eq:packing} where \eqref{eq:packing_lap} holds, let $\delta \in (0, 1)$, let $\opt$ denote the value of  \eqref{eq:packing}, and suppose we know $\opt \in [\ell, u] \subset \R_{> 0}$. There is an algorithm running in time
\[O\Par{\tmv\Par{\mmp} \cdot \log^{2}\Par{\frac{nu}{\delta\ell}} } + n\exp\Par{O\Par{\log^{\frac 2 3}\Par{\frac{nu}{\ell}}\log\log\Par{\frac{n u}{\delta \ell}}}}. \]
that returns $\vx$ satisfying
\[\inprod{\vc}{\vx} \ge \exp\Par{-O\Par{\log^{\frac 2 3}\Par{\frac{nu}{\ell}}\log\log\Par{\frac{n u}{\delta \ell}}}}\opt,\; \calA\Par{\vx} \preceq \id_n.\]
Moreover, $\vx = \vv \circ \vc$ where $\vv$ is a $K$-SOC, for
\[K = \exp\Par{O\Par{\log^{\frac 2 3}\Par{\frac{nu}{\ell}}\log\log\Par{\frac{n u}{\delta \ell}}}}.\]
\end{proposition}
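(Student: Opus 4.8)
The plan is to reduce the optimization problem \eqref{eq:packing} to $O(\log\log(nu/\ell))$ calls to the decision procedure of Lemma~\ref{lem:decision_packing}, combined by a binary search. The key observation is that rescaling the constraint matrices, i.e. replacing each $\ma_e$ by $\ma_e/\lambda = (\mmp/\sqrt\lambda)\ml_e(\mmp/\sqrt\lambda)$ (still of the form \eqref{eq:packing_lap}, with $\tmv(\mmp/\sqrt\lambda) = O(\tmv(\mmp))$), changes the optimum of \eqref{eq:packing} from $\opt$ to exactly $\lambda\opt$ (substitute $\vx = \lambda\vx'$) and its operator to $\calA_\lambda \defeq \lambda^{-1}\calA$. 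Hence running $\SOCPack$ on the $\lambda$-scaled instance is a test that distinguishes ``$\lambda\opt$ large'' (it returns a solution, and \eqref{eq:opt_not_small} certifies $\lambda\opt \ge L$) from ``$\lambda\opt$ small'' (it fails to return, and \eqref{eq:opt_not_large} certifies $\lambda\opt \le U$), where $L = \exp(-O(\log^{2/3}(n\rho)\log\log(n\rho/\delta)))$ and $U = \exp(O(\log^{2/3}(n\rho)))$ are the two thresholds of Lemma~\ref{lem:decision_packing}; since $L < 1 < U$, any $\lambda$ with $\lambda\opt > U$ forces a return and any $\lambda$ with $\lambda\opt < L$ forces a failure. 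I would first compute, via $O(\log(n/\delta))$ matrix-vector products (power iteration on $\mmp\calL(\vc)\mmp$), a bound $\rho_0 \ge \normsop{\calA(\vc)}$ with $\rho_0 = \poly(nu/\ell)$ (indeed $O(n)$ when $\mmp$ is a projection and $\vc$ indicates a clique, the case needed for Theorem~\ref{thm:implicit_sparsify}), so that $\rho_{[\lambda]} \defeq \max(1,\rho_0/\lambda)$ is a legal choice of $\rho$ in Lemma~\ref{lem:decision_packing} for the $\lambda$-scaled instance, and $U,L$ stay within $\exp(\pm O(\log^{2/3}(nu/\ell)\log\log(nu/(\delta\ell))))$ over the relevant range.

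\textbf{The search.} I would set up a geometric grid $\lambda_i = 2^i/(\Lambda u)$ for $0 \le i \le N$, with $\Lambda$ a fixed polynomial in $nu/\ell$ large enough that $1/\Lambda < L(\rho_{[\lambda]})$ and $U(\rho_{[\lambda]}) < \Lambda$ over the grid, and $N = \lceil\log_2(\Lambda^2 u/\ell)\rceil = O(\log(nu/\ell))$. Then at $\lambda_0$ one has $\lambda_0\opt \le \lambda_0 u = 1/\Lambda < L$, so $\SOCPack$ provably fails there, and at $\lambda_N$ one has $\lambda_N\opt \ge \lambda_N\ell \ge \Lambda > U$, so it provably returns there. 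I then run the standard invariant-based binary search: maintain a known-failing index $i_{\mathrm{lo}}$ and a known-returning index $i_{\mathrm{hi}}$, initialized to $0$ and $N$; while $i_{\mathrm{hi}} > i_{\mathrm{lo}}+1$ probe the midpoint $i_{\mathrm{mid}}$ by running $\SOCPack$ on the $\lambda_{i_{\mathrm{mid}}}$-scaled instance (with the $p,T$ prescribed by Lemma~\ref{lem:decision_packing} for $\rho_{[\lambda_{i_{\mathrm{mid}}}]}$ and failure probability $\delta/(N+2)$), setting $i_{\mathrm{hi}} \gets i_{\mathrm{mid}}$ on a return and $i_{\mathrm{lo}} \gets i_{\mathrm{mid}}$ otherwise. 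This maintains the invariant without needing any monotonicity of the randomized ``returns'' predicate, terminates after $O(\log N) = O(\log\log(nu/\ell))$ probes with $i_{\mathrm{hi}} = i_{\mathrm{lo}}+1$, and a union bound over all $O(N)$ calls gives overall failure probability $\delta$.

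\textbf{Extracting the answer.} Let $\lambda^\star \defeq \lambda_{i_{\mathrm{hi}}}$. From the successful call at $\lambda^\star$, Lemma~\ref{lem:decision_packing} returns $\vx^\star = \vc \circ \vy^\star$ with $\vy^\star$ an $\exp(O(\log^{2/3}(nu/\ell)\log\log(nu/(\delta\ell))))$-SOC, $\vc^\top\vx^\star = 1$, and (reading off the operator-norm bound established within that lemma's proof of \eqref{eq:opt_not_small}) $\normsop{\calA_{\lambda^\star}(\vx^\star)} \le C$ for $C = \exp(O(\log^{2/3}(nu/\ell)\log\log(nu/(\delta\ell))))$, i.e. $\calA(\vx^\star) \preceq C\lambda^\star\id_n$. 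Thus $\vx \defeq \vx^\star/(C\lambda^\star)$ satisfies $\calA(\vx) \preceq \id_n$, equals $\vc \circ (\vy^\star/(C\lambda^\star))$ which is $\vc$ composed with a SOC of the claimed order (scaling a SOC's weight vector preserves the property), and has $\vc^\top\vx = 1/(C\lambda^\star)$. On the other side, $\SOCPack$ failed at $\lambda_{i_{\mathrm{lo}}} = \lambda^\star/2$, so \eqref{eq:opt_not_large} gives $(\lambda^\star/2)\opt \le \exp(O(\log^{2/3}(nu/\ell)))$, hence $\vc^\top\vx = 1/(C\lambda^\star) \ge \opt / (2C\exp(O(\log^{2/3}(nu/\ell)))) = \exp(-O(\log^{2/3}(nu/\ell)\log\log(nu/(\delta\ell))))\opt$. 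The runtime is the $\normsop$ estimate plus $O(\log\log(nu/\ell))$ calls to $\SOCPack$, each costing $O(\tmv(\mmp)\log^{5/3}(nu/(\delta\ell))) + n\log n \cdot \log(nu/(\delta\ell))^{O(\log^{2/3}(nu/\ell))}$ by Lemma~\ref{lem:decision_packing}; summing and using $\log\log(nu/\ell)\cdot\log^{5/3}(nu/(\delta\ell)) = O(\log^{2}(nu/(\delta\ell)))$ and absorbing the combinatorial terms yields the stated bound.

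\textbf{Main obstacle.} I expect the delicate point to be getting the approximation ratio to land exactly as claimed, which has two sources. First, Lemma~\ref{lem:decision_packing} as stated only promises the decision dichotomy and the SOC structure; knowing that the returned $\vx^\star$ is feasible only up to the factor $C$ — which is precisely what one rescales away, and is what injects the $\log\log$ into the final ratio — requires surfacing the intermediate operator-norm bound that is proved but not extracted in that lemma (so I would want to state it there explicitly). Second, the thresholds $U,L$ and the grid length $N$ all depend mildly on $\lambda$ through $\rho_{[\lambda]}$, so one must verify that across the whole search range these remain $\exp(\pm O(\log^{2/3}(nu/\ell)\log\log(nu/(\delta\ell))))$ and $O(\log(nu/\ell))$ respectively; this is exactly what the polynomial slack in the choice of $\Lambda$ buys.
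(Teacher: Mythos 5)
Your proposal is correct and follows essentially the same route as the paper: the paper likewise reduces the optimization variant to $O(\log\log(\frac u \ell))$ calls to the decision procedure of Lemma~\ref{lem:decision_packing} via a binary search over a geometric family of rescalings (the paper rescales $\vc$, you equivalently rescale the constraint matrices), deferring the search mechanics to Proposition 5 of \cite{JambulapatiLMSST23}. Your writeup additionally makes explicit two points the paper leaves implicit but relies on — an invariant-based search that tolerates a non-monotone randomized predicate, and surfacing the operator-norm bound on the returned iterate from inside the proof of Lemma~\ref{lem:decision_packing} so it can be rescaled into a feasible solution — which is a faithful account of what the paper's argument actually needs.
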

\begin{proof}
The proof follows by combining Lemma~\ref{lem:decision_packing} with the binary search in Proposition 5 of \cite{JambulapatiLMSST23}. In particular, there are $\log\log(\frac u \ell)$ phases of binary search, each calling Lemma~\ref{lem:decision_packing} once with a multiple of $\vc$. We remark that we never need to pass any scaling that does not satisfy the premise $\normsop{\calA(\vc)} \ge 1$, which we can certify up to a constant factor via the power method. This is because if $\normsop{\calA(\vc)} \le 1$, then \eqref{eq:opt_not_small} immediately holds by using the choice $\vx = \vc$. By the same logic, no scaling considered will ever have $\rho \ge \frac u \ell$, because we only ever truncate the range of interest.
\end{proof}

\subsection{Matrix dictionary recovery}\label{ssec:mdr}

In this section, we build upon Section~\ref{ssec:packing} to give a two-sided matrix dictionary recovery result. Our result approximates an unknown graph Laplacian, given very weak access in the forms of matrix-vector products and a preconditioned packing oracle (with one-sided guarantees).

\paragraph{JL embedding.} Our algorithm is based on the matrix multiplicative weights updates, which produces iterates of the form
\begin{equation}\label{eq:my_def}\my = \frac{\exp(-\ms)}{\Tr\exp(-\ms)}.\end{equation}
In our setting, the matrix $\ms$ has a special structure. Assume that we have matrix-vector product access to $\mmp \in \PSD^{n \times n}$, and let $\{\va_s, \vv_s\}_{s \in [S]} \in \R^E$ be such that all of the $\{\va_s\}_{s \in [S]}$ are ASOCs, and all of the $\{\vv_s\}_{s \in [S]}$ are $K$-SOCs. We are interested in supporting access to $\my$ in \eqref{eq:my_def}, for
\begin{equation}\label{eq:mmw_step_form}\begin{aligned}
\ms = \mmp \calL(\vx) \mmp ,
\text{ where } \vx &= \sum_{s \in [S]} \va_s \circ \vv_s.
\end{aligned}\end{equation}
To simplify notation, let $\tmy \defeq \mmp \my \mmp$. 
Our algorithm requires applying the techniques of Section~\ref{ssec:grid} to the vector $\vg \in \R^E$ defined by
\begin{equation}\label{eq:gdef}\vg_e = \inprod{\my}{\mmp \ml_e \mmp} = \inprod{\tmy}{\ml_e} \text{ for all } e = (u, v) \in E.\end{equation}
As in Section~\ref{ssec:packing}, we will treat each $\vg_e$ as a squared distance between columns of $\tmy^{\half}$, which we can embed into low dimensions. We first require the following guarantee on approximating $\Tr\exp(-\ms)$.

\begin{lemma}\label{lem:approx_trace}
Let $\vx \in \R^E$ have the form \eqref{eq:mmw_step_form} where all $\{\va_s\}_{s \in [S]}$ are ASOCs and all $\{\vv_s\}_{s \in [S]}$ are $K$-SOCs, and define $\ms, \my$ as in \eqref{eq:mmw_step_form}, \eqref{eq:my_def}. Assume $\normsop{\ms} \le R$, and let $\delta \in (0, 1)$. In time
\[O\Par{\Par{\tmv(\mmp) + nKS\log(n)}R\log\Par{\frac n \delta}},\]
we can output $Z \in \R$ satisfying $\frac 9 {10} \Tr\exp(-\ms) \le Z \le \Tr\exp(-\ms)$ with probability $\ge 1 - \delta$.
\end{lemma}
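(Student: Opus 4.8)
The plan is to estimate $\Tr\exp(-\ms)=\normf{\exp(-\tfrac12\ms)}^2$ by replacing the matrix exponential with a low-degree Taylor polynomial and then applying a Johnson--Lindenstrauss sketch, so that the whole computation reduces to matrix-vector products through $\ms$.

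First I would record that $\ms\succeq\0$: each $\va_s\circ\vv_s\in\R^E_{\ge 0}$ entrywise, so $\calL(\vx)=\sum_{s\in[S]}\calL(\va_s\circ\vv_s)$ is a nonnegative combination of edge Laplacians and hence PSD, and conjugating by the symmetric PSD matrix $\mmp$ preserves this. Thus the eigenvalues $\{\lambda_i\}_{i\in[n]}$ of $\ms$ lie in $[0,R]$, and in particular $\Tr\exp(-\ms)=\sum_i\exp(-\lambda_i)\ge n\exp(-R)$. Let $P_\ell$ denote the degree-$\ell$ truncated Taylor series of $x\mapsto\exp(-x/2)$; the Lagrange remainder gives $\Abs{P_\ell(\lambda)-\exp(-\lambda/2)}\le\frac{(R/2)^{\ell+1}}{(\ell+1)!}$ for all $\lambda\in[0,R]$, so taking $\ell=O(R)$ makes the pointwise error at most $c\exp(-R/2)$ for any desired absolute constant $c$. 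Expanding $P_\ell(\lambda_i)^2=\bigl(\exp(-\lambda_i/2)+e_i\bigr)^2$ with $\Abs{e_i}\le c\exp(-R/2)$, summing over $i$, and using both $\exp(-\lambda_i/2)\le\exp(R/2)\exp(-\lambda_i)$ and $\Tr\exp(-\ms)\ge n\exp(-R)$ gives
\[\normf{P_\ell(\ms)}^2=\Tr\exp(-\ms)+2\sum_{i\in[n]}e_i\exp(-\lambda_i/2)+\sum_{i\in[n]}e_i^2\in\Bigl[\Bigl(1-\tfrac1{100}\Bigr)\Tr\exp(-\ms),\;\Bigl(1+\tfrac1{100}\Bigr)\Tr\exp(-\ms)\Bigr]\]
for a small enough absolute constant $c$.

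Next I would sketch: draw $\mg\in\R^{k\times n}$ with i.i.d.\ $\Nor(0,\tfrac1k)$ entries for $k=O(\log\tfrac n\delta)$, and compute $\mg\,P_\ell(\ms)$ by applying the symmetric matrix $P_\ell(\ms)$ to each of the $k$ rows of $\mg$ via Horner's rule, which uses exactly $\ell$ matrix-vector products through $\ms$ per row. By the Johnson--Lindenstrauss lemma (as in Lemma 3 of \cite{JambulapatiLMSST23}), $\normf{\mg\,P_\ell(\ms)}^2$ lies within a $(1\pm\tfrac1{100})$ factor of $\normf{P_\ell(\ms)}^2$ with probability $\ge 1-\delta$; combining with the previous display and setting $Z:=(1+\tfrac1{40})^{-1}\normf{\mg\,P_\ell(\ms)}^2$ yields $\tfrac{9}{10}\Tr\exp(-\ms)\le Z\le\Tr\exp(-\ms)$. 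For the runtime, each matrix-vector product through $\ms=\mmp\calL(\vx)\mmp$ costs two multiplications by $\mmp$, each $\tmv(\mmp)$ time, plus one multiplication by $\calL(\vx)=\sum_{s\in[S]}\calL(\va_s\circ\vv_s)$, which by Lemma~\ref{lem:matvec_soc} applied once per $s$ (with $\vv_s$ the $K$-SOC and $\va_s$ the ASOC) costs $O(nKS\log n)$; multiplying by the total number $O(k\ell)=O(R\log\tfrac n\delta)$ of matrix-vector products gives the claimed bound.

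The main obstacle is getting the polynomial degree right: since $\Tr\exp(-\ms)$ can be as small as $n\exp(-R)$, a pointwise polynomial error that is merely a small constant would be useless — we need it small relative to $\sqrt{\exp(-R)}$, i.e.\ $O(\exp(-R/2))$. This forces the Taylor degree (and hence the number of matrix-vector products) to scale linearly with $R$, and is the source of the $R$ factor in the stated runtime; everything else is a routine combination of Lemma~\ref{lem:matvec_soc} and standard Johnson--Lindenstrauss concentration.
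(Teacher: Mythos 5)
Your proposal is correct, and it is essentially the paper's argument: the paper proves this lemma by citing Lemma 2 of \cite{JambulapatiLMSST23} as a black box (together with the matrix-vector cost $\tmv(\ms) = 2\tmv(\mmp) + O(nKS\log n)$ from Lemma~\ref{lem:matvec_soc}), and what you have written is precisely an unpacking of that cited lemma via a degree-$O(R)$ polynomial approximation of $\exp(-x/2)$ plus a Johnson--Lindenstrauss sketch of $\normf{P_\ell(\ms)}^2$. Your handling of the relative-error issue — bounding the cross term via $\exp(-\lambda_i/2)\le\exp(R/2)\exp(-\lambda_i)$ rather than the lossy bound $\sum_i\exp(-\lambda_i/2)\le n$ — is the right way to make the additive Taylor error $O(\exp(-R/2))$ suffice.
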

\begin{proof}
This follows from Lemma 2 of \cite{JambulapatiLMSST23} with $\eps \gets \frac 1 {30}$, $\kappa \gets 1$, and $R \gets R$, because $\tmv(\ms) = 2\tmv(\mmp) + \tmv(\calL(\vx))$, and $\tmv(\calL(\vx)) = O(nKT\log(n))$ using \eqref{eq:mmw_step_form} and Lemma~\ref{lem:matvec_soc}.
\end{proof}

We now give our main embedding result, which approximates $\vg$ entrywise in low dimensions.

\begin{lemma}\label{lem:jl_embed}
Let $\vx \in \R^E$ have the form \eqref{eq:mmw_step_form} where all $\{\va_s\}_{s \in [S]}$ are ASOCs and all $\{\vv_s\}_{s \in [S]}$ are $K$-SOCs, and define $\ms, \my, \vg$ as in \eqref{eq:mmw_step_form}, \eqref{eq:my_def}, \eqref{eq:gdef}. Assume $\normsop{\ms} \le R$, and let $\delta \in (0, 1)$. In time
\[O\Par{\Par{\tmv(\mmp) + nKS\log(n)}R\log\Par{\frac n \delta}},\]
we can output $\mq \in \R^{k \times n}$ where $k = O(\log(\frac n \delta))$, such that if we define $\vf_e \defeq \norms{\mq_{:u} - \mq_{:v}}_2^2$ for all $e = (u, v) \in E$, we have with probability $\ge 1 - \delta$ that $\half \vg \le \vf \le \vg$ entrywise.
\end{lemma}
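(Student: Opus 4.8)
The plan is to reduce $\vg$ to a vector of squared Euclidean distances among implicitly-defined columns and then apply a Johnson--Lindenstrauss sketch, mirroring Lemma~\ref{lem:jl_embed_schatten} (and Lemma~3 of \cite{JambulapatiLMSST23}); the only new ingredient is that here the relevant ``square root'' is $\exp(-\thalf\ms)$ rather than an integer power, which we handle by polynomial approximation. First I would use that $\ms \in \Sym^{n \times n}$, so $\exp(-\ms) = \exp(-\thalf\ms)^\top\exp(-\thalf\ms)$, to rewrite for every $e = (u,v) \in E$
\[
\vg_e = \inprod{\tmy}{\ml_e} = \frac{1}{\Tr\exp(-\ms)}\Par{\ve_u - \ve_v}^\top\mmp\exp(-\ms)\mmp\Par{\ve_u - \ve_v} = \frac{1}{\Tr\exp(-\ms)}\norm{\exp\Par{-\thalf\ms}\mmp\Par{\ve_u - \ve_v}}_2^2 .
\]
Thus $\vg$ is, up to the scalar $\tfrac 1 {\Tr\exp(-\ms)}$, the vector of squared pairwise distances among the (implicit) columns of $\exp(-\thalf\ms)\mmp$, and it suffices to build a $k \times n$ matrix $\mq$ whose columns are a suitably rescaled low-dimensional embedding of those columns.

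The construction combines three standard pieces. \textbf{(i)} Run Lemma~\ref{lem:approx_trace} to obtain $Z$ with $\tfrac 9{10}\Tr\exp(-\ms) \le Z \le \Tr\exp(-\ms)$, whose cost already fits the claimed budget. \textbf{(ii)} Draw $\mg \in \R^{k \times n}$ with i.i.d.\ $\Nor(0,\tfrac 1 k)$ entries, $k = O(\log(\tfrac n\delta))$ with a large enough constant; by the Johnson--Lindenstrauss lemma and a union bound over the at most $n^2$ pairs $(u,v)$, with probability $\ge 1 - \tfrac\delta 3$ we get $\norm{\mg\,\vv_e}_2^2 \approx_{1/10} \norm{\vv_e}_2^2$ simultaneously for all $\vv_e \defeq \exp(-\thalf\ms)\mmp(\ve_u - \ve_v)$. \textbf{(iii)} Since $\ms \succeq 0$ (as $\calL(\vx)$ is a nonnegatively-weighted Laplacian and $\mmp$ is symmetric) and $\normsop{\ms} \le R$, approximate $x \mapsto e^{-x/2}$ on $[0,R]$ by a polynomial $q$ of degree $D = O(R + \log(\tfrac n\delta))$ with spectral error $\exp(-\polylog(\tfrac n\delta))$, so $q(\ms) \approx \exp(-\thalf\ms)$; one application of $q(\ms)$ to a vector costs $D$ matrix--vector products with $\ms$. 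The key efficiency point is $\tmv(\ms) = 2\tmv(\mmp) + \tmv(\calL(\vx))$, and since $\calL(\vx) = \sum_{s \in [S]}\calL(\va_s \circ \vv_s)$ with each summand a product of an ASOC and a $K$-SOC, Lemma~\ref{lem:matvec_soc} gives $\tmv(\calL(\vx)) = O(nKS\log n)$.

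I would then set $\mq \defeq c\,Z^{-1/2}\,\mg\,q(\ms)\,\mmp$ for an absolute constant $c$, computed by forming $\mq^\top = c\,Z^{-1/2}\,\mmp\,q(\ms)^\top\mg^\top$ via $k$ applications of $\mmp\,q(\ms)$, i.e.\ $O(kD)$ matrix--vector products, for total time $O(kD\cdot\tmv(\ms)) = O\big((\tmv(\mmp) + nKS\log n)\,R\log(\tfrac n\delta)\big)$ after absorbing lower-order terms and the cost of (i). On the intersection of the three good events (failure probability $\le\delta$), the multiplicative slacks from (i)--(iii) compose multiplicatively, and choosing $c$ and the constants in (ii)--(iii) appropriately makes them nest inside $[\tfrac12,1]$, so that $\vf_e = \norm{\mq_{:u} - \mq_{:v}}_2^2$ satisfies $\tfrac12\vg \le \vf \le \vg$ entrywise.

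The main obstacle is the matrix-exponential approximation in (iii) together with the one-sidedness of the conclusion: we need $\vf \le \vg$, not merely $\vf \approx \vg$, which is why $Z$ is a slight under-estimate of $\Tr\exp(-\ms)$ and $\mg$ (via $c$) is scaled to be a mild contraction, absorbing the at-most-constant blow-ups of the polynomial truncation and the JL map. Since this sketching primitive is exactly that of \cite{JambulapatiLMSST23}, I expect most of the remaining work to be bookkeeping: checking that $\tmv(\ms)$ is controlled by the SOC/ASOC matrix--vector machinery of Section~\ref{ssec:soc}, and tracking constants so the composed errors stay within $[\tfrac12,1]$.
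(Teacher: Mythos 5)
Your proposal matches the paper's proof essentially step for step: estimate $\Tr\exp(-\ms)$ via Lemma~\ref{lem:approx_trace}, replace $\exp(-\thalf\ms)$ by a low-degree polynomial in $\ms$ (the paper invokes Theorem 4.1 of \cite{SachdevaV14} for a degree-$O(R)$ \emph{multiplicative} approximation $\frac{9}{10}\mm \preceq \exp(-\thalf\ms) \preceq \mm$), apply a $k \times n$ Gaussian JL sketch with a union bound over the $\le n^2$ pairs, and rescale by $\sqrt{3/(4Z)}$ so the composed constant-factor slacks land in $[\thalf,1]$, with the runtime controlled exactly as you say via $\tmv(\ms) = 2\tmv(\mmp) + \tmv(\calL(\vx))$ and Lemma~\ref{lem:matvec_soc}. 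The one small imprecision is in your step (iii): an \emph{additive} spectral error of $\exp(-\polylog(\frac n \delta))$ does not by itself yield a constant-factor multiplicative guarantee when $R \gg \polylog(\frac n\delta)$ (the relevant scale is $e^{-R/2}$), so you should phrase the polynomial approximation as a multiplicative one, which is what the cited result supplies at degree $O(R)$.
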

\begin{proof}
As our first step, we obtain an estimate satisfying $\frac 9 {10} \Tr\exp(-\ms) \le Z \le \Tr\exp(-\ms)$ with probability $\ge 1 - \frac \delta 2$ from Lemma~\ref{lem:approx_trace}, within the allotted runtime.

Our next step is to approximate $\exp(-\ms)$ with a polynomial in $\ms$. Specifically, Theorem 4.1 of \cite{SachdevaV14} shows that because $\0_{n \times n} \preceq \ms \preceq R \id_n$, there is a matrix $\mm$ such that
\begin{equation}\label{eq:m_approx}\frac 9 {10} \mm \preceq \exp\Par{-\half\ms} \preceq \mm,\end{equation}
and $\mm$ is a degree-$O(R)$ polynomial in $\ms$. Because $\mm, \ms$ commute, this also shows that
\[\frac 4 5 \mm^2 \preceq \exp\Par{-\ms} \preceq \mm^2.\]
Combining these pieces shows that
\begin{equation}\label{eq:approx_before_jl} \frac 3 4 \inprod{\tmy}{\ml_e} \le \frac 1 Z \norm{\mm \mmp \Par{\ve_u - \ve_v}}_2^2 \le \frac 5 4 \inprod{\tmy}{\ml_e}, \text{ for all } e = (u, v) \in E.\end{equation}
Next, by the Johnson-Lindenstrauss lemma, letting $k = O(\log(\frac n \delta))$ for an appropriately large constant, and letting $\mg \in \R^{k \times n}$ have i.i.d.\ entries $\sim \Nor(0, \frac 1 k)$, we have
\begin{equation}\label{eq:approx_after_jl}\frac 2 3 \inprod{\tmy}{\ml_e} \le \frac 1 Z \norm{\mg \mm \mmp\Par{\ve_u - \ve_v}}_2^2 \le \frac 4 3 \inprod{\tmy}{\ml_e}, \text{ for all } e = (u, v) \in E,\end{equation}
with probability $\ge 1 - \frac \delta 2$. This implies that to get $\half \vg \le \vf \le \vg$, it is enough to set
\[\mq = \sqrt{\frac 3 {4Z}} \mg \mm \mmp,\]
which takes the stated time to compute. We can see this because outputting $\mq$ requires applying $\mm$ and $\mmp$ to $k$ different vectors, and $\tmv(\mm) = O(R \cdot \tmv(\ms))$.
\end{proof}

\paragraph{Preconditioned matrix dictionary recovery.} We next give a coarse approximation to a Laplacian using a method inspired by \cite{JambulapatiLMSST23}. Let $\ml \in \PSD^{n \times n}$ be an unknown graph Laplacian, and assume that we have matrix-vector query access to $\mmp \in \PSD^{n \times n}$ satisfying
\begin{equation}\label{eq:precon_def} \ml^\dagger \preceq \mmp^2 \preceq 2 \ml^\dagger .\end{equation}
We let $\mproj \defeq \id_n - \frac 1 n \1_n\1_n^\top$ be the graph Laplacian corresponding to the clique on $[n]$, as in Theorem~\ref{thm:implicit_sparsify}.
We also require the following notion of a \emph{SOC packing oracle}.

\begin{definition}\label{def:soc_packing}
Let $\ml \in \PSD^{n \times n}$ be an unknown graph Laplacian, and let $Q \ge 1$, $K \in \N$, $\delta \in (0, 1)$. We say $\oracle: \R^E_{\ge 0} \times \PSD^{n \times n} \to \R^E_{\ge 0}$ is a $(Q, K, \delta)$-SOC packing oracle for $\ml$ if on inputs $\vg \in \R^E_{\ge 0}$ and $\mmp \in \PSD^{n \times n}$ satisfying \eqref{eq:precon_def}, with probability $\ge 1 - \delta$, $\oracle$ returns $\vx \in \R^E_{\ge 0}$ satisfying
\begin{equation}\label{eq:feasible_region}\vx \in \calF \text{ and } \inprod{\vg}{\vx} \ge \frac 1 Q\Par{\max_{\vy \in \calF} \inprod{\vg}{\vy}} , 
\text{ where } \calF \defeq \Brace{\vy \in \R^E_{\ge 0} \mid \mmp \calL(\vy) \mmp \preceq \mproj}.\end{equation}
Moreover, the only access used by $\oracle$ to its input $\mmp$ is matrix-vector query access, and its output $\vx$ has the form $\vx = \vv \circ \va$ where $\vv$ is a $K$-SOC and $\va$ is an ASOC.
\end{definition}

Indeed, note that Proposition~\ref{prop:fast_packing} exactly provides an oracle of this form, when $\vg$ is an ASOC. We next show that packing problems with a well-conditioned solution are not substantially affected by small entries, and that we can approximate certain distances encountered in our algorithm. These helper observations are for using our ASOC approximation scheme (Lemma~\ref{lem:kd_asoc}).

\begin{lemma}\label{lem:truncate_small}
Let $\vw \in \R^E_{> 0}$ have $\max_{e \in E} \vw_e \le \rho \min_{e \in E} \vw_e$, let $\mq \in \R^{k \times n}$, $\vf \in \R^E$ be such that
\[\vf_e = \norm{\mq_{:u} - \mq_{:v}}_2^2 \text{ for all } e = (u, v) \in E.\]
Also, suppose $\gamma, \alpha > 0$ satisfy
\[\gamma \ge \max_{(u, v) \in E} \max_{j \in [k]} \Par{\mq_{ju} - \mq_{jv}}^2,\; \frac \gamma \alpha \le \frac 1 {2\rho n^2 k} \max_{(u, v) \in E} \max_{j \in [k]} \Par{\mq_{ju} - \mq_{jv}}^2.\]
Then, following notation from Definition~\ref{def:asoc_cost}, if $\half \vg \le \vf \le \vg$ entrywise,
\[\inprod{\vf^{(\ge \frac \gamma \alpha)}}{\vw} \ge \frac 1 4 \inprod{\vg}{\vw}. \]
\end{lemma}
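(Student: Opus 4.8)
The plan is to show that discarding the ``small'' entries of $\vf$ — those contributed by coordinates $j \in [k]$ where $(\mq_{ju} - \mq_{jv})^2 < \frac \gamma \alpha$ — costs at most a $\frac 3 4$ factor of $\inprod{\vf}{\vw}$, and then use $\half \vg \le \vf \le \vg$ to compare against $\inprod{\vg}{\vw}$. Recall from Definition~\ref{def:asoc_cost} that $[\vf^{(\ge c)}]_e = \sum_{j \in [k]} (\mq_{ju} - \mq_{jv})^2 \ind_{(\mq_{ju} - \mq_{jv})^2 \ge c}$, so the discarded mass at edge $e$ is $\vf_e - [\vf^{(\ge \frac\gamma\alpha)}]_e = \sum_{j \in [k]} (\mq_{ju} - \mq_{jv})^2 \ind_{(\mq_{ju} - \mq_{jv})^2 < \frac\gamma\alpha} \le k \cdot \frac\gamma\alpha$, since there are at most $k$ coordinates and each contributes less than $\frac\gamma\alpha$.

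The key step is to bound this uniform additive loss $k \cdot \frac\gamma\alpha$ per edge against $\inprod{\vf}{\vw}$ using the well-conditioning of $\vw$ and the fact that \emph{some} coordinate achieves the maximum. Let $M \defeq \max_{(u,v) \in E}\max_{j \in [k]}(\mq_{ju} - \mq_{jv})^2$, and let $e^\star = (u^\star, v^\star)$ be an edge attaining this maximum in some coordinate; then $\vf_{e^\star} \ge M$. Using $\vw_{e^\star} \ge \frac 1 \rho \max_{e \in E} \vw_e \ge \frac 1 {\rho n^2} \sum_{e \in E} \vw_e$ (since $|E| \le n^2$), we get
\[
\inprod{\vf}{\vw} \ge \vf_{e^\star}\vw_{e^\star} \ge \frac{M}{\rho n^2}\sum_{e \in E}\vw_e \ge \frac{M}{\rho n^2}\cdot\frac{\sum_e \vw_e}{1}.
\]
Meanwhile, the total discarded mass is $\sum_{e \in E}(\vf_e - [\vf^{(\ge\frac\gamma\alpha)}]_e)\vw_e \le k\cdot\frac\gamma\alpha\sum_{e\in E}\vw_e$. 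The hypothesis $\frac\gamma\alpha \le \frac{1}{2\rho n^2 k}M$ then gives that this discarded mass is at most $\frac{M}{2\rho n^2}\sum_e\vw_e \le \frac 1 2\inprod{\vf}{\vw}$. Hence $\inprod{\vf^{(\ge\frac\gamma\alpha)}}{\vw} \ge \frac 1 2\inprod{\vf}{\vw}$. Finally, combining with $\vf \ge \frac 1 2\vg$ entrywise (so $\inprod{\vf}{\vw} \ge \frac 1 2\inprod{\vg}{\vw}$) yields $\inprod{\vf^{(\ge\frac\gamma\alpha)}}{\vw} \ge \frac 1 4\inprod{\vg}{\vw}$, as claimed.

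The main obstacle — really the only subtle point — is correctly tracking that the lower bound $\vf_{e^\star} \ge M$ uses the \emph{squared} coordinate gap rather than the full squared distance $\vf_{e^\star} = \sum_j(\mq_{ju^\star} - \mq_{jv^\star})^2$; this is fine since the latter dominates the former, but one must be careful that the maximizing edge $e^\star$ need not be the one with the largest $\vf_{e^\star}$, only one with $\vf_{e^\star} \ge M$. Everything else is a routine chain of inequalities using $|E| \le n^2$, the conditioning ratio $\rho$, the per-edge bound $k\cdot\frac\gamma\alpha$ on discarded mass, and the two-sided sandwich $\half\vg \le \vf \le \vg$; no further case analysis is needed, and the constants work out exactly as stated.
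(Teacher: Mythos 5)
Your proof is correct and takes essentially the same route as the paper's: both show $\inprod{\vf^{(\ge \gamma/\alpha)}}{\vw} \ge \half \inprod{\vf}{\vw}$ by comparing the total discarded mass (bounded via the count $\le kn^2$, the threshold $\gamma/\alpha$, and the conditioning $\max_e \vw_e \le \rho \min_e \vw_e$) against the single term $\vw_{e^\star} M$ coming from the maximizing edge and coordinate, then finish with $\vf \ge \half \vg$. The paper merely orders the same chain of inequalities per non-contributing pair rather than aggregating over edges first; the constants and hypotheses are used identically.
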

\begin{proof}
It is enough to show that $\inprod{\vf^{(\ge \frac \gamma \alpha)}}{\vw} \ge \half \inprod{\vf}{\vw}$, because we know that $\vf \ge \half \vg$ entrywise. Throughout the proof let $(a, b) \in E$ and $\ell \in [k]$ be the maximizing arguments in 
\[\max_{(u, v) \in E} \max_{j \in [k]} \Par{\mq_{ju} - \mq_{jv}}^2.\]
Because
\[\inprod{\vf}{\vw} = \sum_{(u, v) \in E} \sum_{j \in [k]} \vw_{(u, v)} \Par{\mq_{ju} - \mq_{jv}}^2 \ge \vw_{(a, b)} \Par{\mq_{\ell a} - \mq_{\ell b}}^2,\]
for any $(u, v) \in E$ and $j \in [k]$ that is a pair such that the corresponding squared distance in $\mq$ does not participate in $\vf^{(\frac \gamma \alpha)}$, we can bound the contribution of the pair to the objective value, i.e.,
\begin{align*} \vw_{(u, v)} \Par{\mq_{ju} - \mq_{jv}}^2 &\le \rho \vw_{(a, b)} \cdot \frac \gamma \alpha \le   \frac 1 {2 n^2 k} \cdot \vw_{(a, b)}  \Par{\mq_{\ell a} - \mq_{\ell b}}^2 .\end{align*}
By combining the above two displays, for any such non-contributing $(u, v) \in E$, $j \in [k]$,
\[\vw_{(u, v)} \Par{\mq_{ju} - \mq_{jv}}^2 \le \frac 1 {2n^2 k} \inprod{\vf}{\vw}.\]
Summing over the contributions of all $\le kn^2$ possible pairs of $(u, v) \in E$ and $j \in [k]$, we have shown that the dropped coordinates contribute at most half the value of $\inprod{\vf}{\vw}$ as claimed.
\end{proof}

\begin{lemma}\label{lem:trace_overestimate}
Let $\vx \in \R^E$ have the form \eqref{eq:mmw_step_form} where all $\{\va_s\}_{s \in [S]}$ are ASOCs and all $\{\vv_s\}_{s \in [S]}$ are $K$-SOCs, and define $\ms, \my$ as in \eqref{eq:mmw_step_form}, \eqref{eq:my_def}. Assume $\normsop{\ms} \le R$, and let $\delta \in (0, 1)$. In time
\[O\Par{\Par{\tmv(\mmp) + nKS\log(n)}R\log\Par{\frac n \delta}},\]
we can output $Z \in \R$ satisfying $\frac 9 {10} \Tr(\mmp \my \mmp) \le Z \le \Tr(\mmp \my \mmp)$ with probability $\ge 1 - \delta$.
\end{lemma}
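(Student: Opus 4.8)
The plan is to write $\Tr(\mmp\my\mmp)$ as a ratio of two Frobenius norms and estimate each by the same polynomial-approximation-plus-Gaussian-sketch technique already used in the proof of Lemma~\ref{lem:jl_embed} (and in Lemma~2 of \cite{JambulapatiLMSST23} behind Lemma~\ref{lem:approx_trace}). Concretely, since $\my = \exp(-\ms)/\Tr\exp(-\ms)$, the cyclic property of trace, and $\0_{n\times n} \preceq \ms$ (which holds as $\calL(\vx)\succeq\0_{n\times n}$ and $\mmp\succeq\0_{n\times n}$) give
\[\Tr\Par{\mmp\my\mmp} = \frac{\Tr\Par{\mmp\exp(-\ms)\mmp}}{\Tr\exp(-\ms)} = \frac{\normf{\exp(-\tfrac12\ms)\mmp}^2}{\normf{\exp(-\tfrac12\ms)}^2}.\]
So it suffices to approximate the numerator and denominator Frobenius norms each to within a small constant multiplicative factor and divide.

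For both, follow the proof of Lemma~\ref{lem:jl_embed}: Theorem~4.1 of \cite{SachdevaV14} produces a degree-$O(R)$ polynomial $\mm$ in $\ms$ with $(1-\eps_1)\mm \preceq \exp(-\tfrac12\ms)\preceq\mm$ for any constant $\eps_1$ (valid since $\0_{n\times n}\preceq\ms\preceq R\id_n$), so $\normf{\mm\mmp}^2$ and $\normf{\mm}^2$ sandwich the true numerator and denominator up to a factor $\tfrac{1}{(1-\eps_1)^2}$. Then take $\mg\in\R^{k\times n}$ with i.i.d.\ $\Nor(0,\tfrac1k)$ entries, $k = O(\log(\tfrac n\delta))$: since $\E\normf{\mg\mm\mmp}^2 = \normf{\mm\mmp}^2$ and the stable rank of $\mm\mmp$ (resp.\ $\mm$) is at least $1$, a standard Hanson--Wright/Johnson--Lindenstrauss concentration bound gives that $\normf{\mg\mm\mmp}^2$ and $\normf{\mg\mm}^2$ approximate $\normf{\mm\mmp}^2$ and $\normf{\mm}^2$ to within a factor $1\pm\eps_0$, with probability $\ge 1 - \tfrac\delta2$ for any constant $\eps_0$. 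Computing $\mg\mm\mmp$ and $\mg\mm$ amounts to applying the degree-$O(R)$ polynomial $\mm$ (and $\mmp$) to each of the $k$ rows of $\mg$; each multiplication by $\ms = \mmp\calL(\vx)\mmp$ costs $2\tmv(\mmp) + \tmv(\calL(\vx)) = 2\tmv(\mmp) + O(nKS\log n)$ by Lemma~\ref{lem:matvec_soc} and \eqref{eq:mmw_step_form}, so the total runtime is $O((\tmv(\mmp) + nKS\log n)R\log(\tfrac n\delta))$, matching the claim.

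Finally, output $Z \defeq c\cdot \tfrac{\normf{\mg\mm\mmp}^2}{\normf{\mg\mm}^2}$ for an absolute constant $c$. Combining the bounds above, $\tfrac{\normf{\mg\mm\mmp}^2}{\normf{\mg\mm}^2}$ equals $\Tr(\mmp\my\mmp)$ up to a multiplicative factor in $[(1-\eps_1)^2\tfrac{1-\eps_0}{1+\eps_0},\,\tfrac1{(1-\eps_1)^2}\tfrac{1+\eps_0}{1-\eps_0}]$, whose ratio of endpoints can be made smaller than $\tfrac{10}{9}$ by choosing the constants $\eps_0,\eps_1$ small; then taking $c$ to be the reciprocal of the upper endpoint forces $\tfrac9{10}\Tr(\mmp\my\mmp)\le Z\le\Tr(\mmp\my\mmp)$. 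A union bound over the $O(1)$ sketch failure events gives the $1-\delta$ success probability. The only care needed is this constant bookkeeping to land inside the asymmetric $[\tfrac9{10},1]$ window rather than a loose constant approximation; there is no substantive new obstacle, since every analytic and runtime ingredient is already available from Lemma~\ref{lem:approx_trace}, the proof of Lemma~\ref{lem:jl_embed}, and Lemma~\ref{lem:matvec_soc}.
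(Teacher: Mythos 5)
Your proof is correct and follows essentially the same route as the paper: the paper likewise combines the degree-$O(R)$ polynomial approximation of $\exp(-\ms)$ with $O(\log(\frac n \delta))$ Gaussian sketch vectors (via Lemma 2 of \cite{JambulapatiLMSST23}, i.e., Lemma~\ref{lem:approx_trace} and the machinery in Lemma~\ref{lem:jl_embed}) to simulate constant-factor access to $\my$, and your ratio-of-sketched-Frobenius-norms formulation is just a repackaging of separately estimating $\Tr(\mmp\exp(-\ms)\mmp)$ and $\Tr\exp(-\ms)$ with the same runtime accounting. The one step worth stating explicitly is that squaring the sandwich $(1-\eps_1)\mm \preceq \exp(-\tfrac12\ms) \preceq \mm$ to get $(1-\eps_1)^2\mm^2 \preceq \exp(-\ms) \preceq \mm^2$ relies on $\mm$ and $\ms$ commuting (Loewner order does not square in general), which the paper also invokes in the proof of Lemma~\ref{lem:jl_embed}.
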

\begin{proof}
This is standard in the literature, e.g., it is implicit in Lemma 2 of \cite{JambulapatiLMSST23}, which shows that to obtain a constant factor multiplicative estimate of the trace of $\mmp \my \mmp \in \PSD^{n \times n}$ with probability $\ge 1 - \delta$, it is enough to compute $O(\log(\frac n \delta))$ many matrix-vector multiplications through $\mmp \my \mmp$. Here we use the polynomial approximation from Lemma~\ref{lem:jl_embed} and the trace approximation from Lemma~\ref{lem:approx_trace} to simulate constant factor multiplicative approximate access to $\my$.
\end{proof}

Finally, we recall a standard regret bound on the matrix multiplicative weights algorithm.

\begin{proposition}[Theorem 3.1, \cite{ZhuLO15}]\label{prop:mmw}
Consider a sequence of gain matrices $\{\mg_t\}_{0 \le t < T} \subset \PSD^{n \times n}$, which all satisfy for step size $\eta > 0$, that $\normop{\eta \mg_t} \le 1$. Letting $\ms_0 \defeq \0_{n \times n}$ and iteratively defining
\[\my_t \defeq \frac{\exp(-\ms_t)}{\Tr\exp(-\ms_t)},\; \ms_{t + 1} \defeq \ms_t + \eta \mg_t \text{ for all } 0 \le t < T,\]
we have the bound for any $\mmu \in \PSD^{n \times n}$ with $\Tr(\mmu) = 1$,
\[\frac 1 T \sum_{0 \le t < T} \inprod{\mg_t}{\my_t - \mmu} \le \frac{\log(n)}{\eta T} + \frac 1 T \sum_{t \in [T]} \eta \normop{\mg_t} \inprod{\mg_t}{\my_t}. \]
Moreover, if all $\mg_t$ have spans contained in a subspace $S \subseteq \R^n$, then the above bound holds for all $\mmu \in \PSD^{n \times n}$ with $\Tr(\mmu)$ whose spans are contained in $S$.
\end{proposition}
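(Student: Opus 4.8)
The plan is to run the textbook potential-function analysis of matrix multiplicative weights, which is exactly how Theorem 3.1 of \cite{ZhuLO15} is proved (so one could also just cite it). I would track the log-partition potential $\Phi_t \defeq \Tr\exp(-\ms_t)$, which starts at $\Phi_0 = \Tr(\id_n) = n$, prove a per-step multiplicative decrease, telescope, and finally lower bound $\Phi_T$ by a single comparator term.

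For the one-step bound I would first invoke the Golden--Thompson inequality $\Tr\exp(\ma + \mb) \le \Tr(\exp(\ma)\exp(\mb))$ with $\ma = -\ms_t$, $\mb = -\eta\mg_t$ to get $\Phi_{t+1} \le \Tr(\exp(-\ms_t)\exp(-\eta\mg_t))$. Since $\mg_t \succeq \0_{n \times n}$ and $\normop{\eta\mg_t} \le 1$, the eigenvalues of $-\eta\mg_t$ lie in $[-1, 0]$, so the scalar inequality $e^{-x} \le 1 - x + x^2$ on $[0, 1]$ lifts to $\exp(-\eta\mg_t) \preceq \id_n - \eta\mg_t + \eta^2\mg_t^2 \preceq \id_n - \eta\mg_t + \eta^2\normop{\mg_t}\mg_t$, using $\mg_t^2 \preceq \normop{\mg_t}\mg_t$ in the last step. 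Taking the trace against $\exp(-\ms_t) \succeq \0_{n \times n}$, dividing by $\Phi_t$ so that $\inprod{\exp(-\ms_t)}{\cdot}/\Phi_t$ becomes $\inprod{\my_t}{\cdot}$, and then using $1 + x \le e^x$, I would arrive at
\[\Phi_{t+1} \le \Phi_t\exp\Par{-\eta\inprod{\mg_t}{\my_t} + \eta^2\normop{\mg_t}\inprod{\mg_t}{\my_t}}.\]

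Next I would telescope over $0 \le t < T$ to get $\Phi_T \le n\exp\Par{\sum_{0 \le t < T}\Par{-\eta\inprod{\mg_t}{\my_t} + \eta^2\normop{\mg_t}\inprod{\mg_t}{\my_t}}}$, and lower bound $\Phi_T$ by its largest eigenvalue: for any $\mmu \succeq \0_{n \times n}$ with $\Tr(\mmu) = 1$, $\Phi_T \ge \normop{\exp(-\ms_T)} = \exp\Par{\lambda_{\max}(-\ms_T)} \ge \exp\Par{\inprod{\mmu}{-\ms_T}} = \exp\Par{-\eta\sum_{0 \le t < T}\inprod{\mg_t}{\mmu}}$, where the middle step is the variational bound $\lambda_{\max}(\mn) \ge \inprod{\mmu}{\mn}$ valid for trace-one PSD $\mmu$. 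Taking logarithms of the two estimates on $\Phi_T$, rearranging, and dividing through by $\eta T$ produces exactly the stated regret inequality. For the ``moreover'' clause, I would note that if every $\mg_t$ has span in $S$ then so does every $\ms_t$, so $\exp(-\ms_t)$ acts as the identity on $S^\perp$; re-running the argument inside $S$ replaces $\Phi_0 = n$ by $\dim(S)$ and restricts $\lambda_{\max}$ and the comparators $\mmu$ to $S$, and since $\dim(S) \le n$ the bound as written still holds.

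The only genuinely non-elementary ingredient---and hence the step I would be most careful about---is the Golden--Thompson inequality, which is precisely what allows peeling the new gain $\mg_t$ off the accumulated, non-commuting matrix $\ms_t$ inside the trace; the remainder is scalar calculus together with the operator facts $e^{-x} \le 1 - x + x^2$ on $[0,1]$, $\mg_t^2 \preceq \normop{\mg_t}\mg_t$, and $\lambda_{\max}(\mn) \ge \inprod{\mmu}{\mn}$.
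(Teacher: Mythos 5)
Your proof is correct. The paper does not prove this proposition at all — it imports it verbatim as Theorem 3.1 of the cited work \cite{ZhuLO15} — and your argument is the standard Golden--Thompson potential analysis ($\Phi_t = \Tr\exp(-\ms_t)$, the operator bound $\exp(-\eta\mg_t) \preceq \id_n - \eta\mg_t + \eta^2\normop{\mg_t}\mg_t$, and the variational lower bound $\Phi_T \ge \exp(-\eta\sum_t \inprod{\mg_t}{\mmu})$) that establishes exactly the stated regret inequality. One small remark: the ``moreover'' clause is already a trivial consequence of the main bound as stated (a trace-one PSD comparator with span in $S$ is in particular a trace-one PSD comparator), so re-running the argument restricted to $S$ is unnecessary — and note that doing so would implicitly redefine $\my_t$ by normalizing within $S$, which is not how the proposition defines it; your closing observation that $\dim(S) \le n$ keeps the conclusion valid either way.
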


We can now state our matrix dictionary recovery method and its guarantees.

\begin{algorithm2e}
	\caption{$\OMDR(\mmp, \rho, Q, K, \delta,  m, \oracle)$}	\label{alg:oracle_mdr}
	\DontPrintSemicolon
		\codeInput Matrix-vector query access to $\mmp \in \PSD^{n \times n}$ satisfying \eqref{eq:precon_def} for unknown $\ml = \calL(\vw)\in \PSD^{n \times n}$, where $\max_{e \in E} \vw_e \le \rho \min_{e \in E} \vw_e$,  $(Q, K, m, \delta) \in \R_{\ge 1} \times \N \times \N \times (0, 1)$, $\oracle$ a $(Q, K, \frac{\delta}{900 m^2 Q\log(n)})$-SOC packing oracle for $\ml$ \;
        \codeOutput $\bvx \in \R^E_{\ge 0}$ such that with probability $\ge 1 - \delta$, 
        \[\frac 1 {512 m Q} \ml \preceq \calL(\bvx) \preceq \ml\]
  \;
  $(\eta, \ms_0, T, \alpha) \gets (\frac 1 {2}, \0_{n \times n}, \lceil 256mQ\log(n) \rceil, O(\rho n^4 \log^2(\frac{nT}{\delta})))$  \;
  \For{$0 \le t < T$}{
   $\my_t \gets \frac{\exp(-\ms_t)}{\Tr\exp(-\ms_t)}$\;
   $\gamma \gets $ value satisfying $\Tr(\mmp \my_t \mmp) \le \gamma \le \frac {10} 9 \Tr(\mmp \my_t \mmp)$ with probability $\ge 1 - \frac{\delta}{900 m Q \log(n)}$\;\label{line:gamma_approx}
   $\{\tvg_t^{(i)}\}_{i \in [m]} \gets (8, m)$-ASOC approximation to $\vg_t$ with probability $\ge 1 - \frac{\delta}{900 m Q\log(n)}$, where
   \begin{equation}\label{eq:grad_approx} [\vg_t]_e \defeq \inprod{\mmp \my_t \mmp}{\ml_e} \text{ for all } e \in E
   \end{equation}\;\label{line:asoc_approx}
   \For{$i \in [m]$}{
   $\vx_t^{(i)} \gets \oracle(\tvg_t^{(i)}, \mmp)$\;
    }
    $\vx_t \gets \frac 1 m \sum_{i \in [m]} \vx_t^{(i)}$\;
   $\mg_t \gets \mmp \calL(\vx_t) \mmp$\;
   $\ms_{t + 1} \gets \ms_t + \eta \mg_t$\;
  }
  \codeReturn $\bvx \defeq \frac 1 T \sum_{0 \le t < T} \vx_t$
\end{algorithm2e}

\begin{proposition}\label{prop:mdr_analysis}
Following notation in Algorithm~\ref{alg:oracle_mdr}, suppose that we use Proposition~\ref{prop:fast_packing} as our SOC packing oracle, that we use Lemma~\ref{lem:trace_overestimate} to implement Line~\ref{line:gamma_approx}, and that we use Lemmas~\ref{lem:kd_asoc} and~\ref{lem:jl_embed} to implement Line~\ref{line:asoc_approx}. Then, 
Algorithm~\ref{alg:oracle_mdr} returns $\bvx \in \R^E_{\ge 0}$ such that with probability $\ge 1 - \delta$,
\begin{equation}\label{eq:quality}\begin{gathered} \exp\Par{-O\Par{\log^{\frac 2 3}\Par{n\rho}\log\log\Par{\frac{n\rho}{\delta}}}}\ml \preceq \calL(\bvx) \preceq \ml.
\end{gathered}
\end{equation}
Moreover, $\bvx = \sum_{s \in [S]} \vv_s \circ \va_s$ where all $\{\va_s\}_{s \in [S]}$ are ASOCs and all $\{\vv_s\}_{s \in [S]}$ are $K$-SOCs, where
\[K = S = \exp\Par{O\Par{\log^{\frac 2 3}(n\rho)\log\log\Par{\frac{n\rho}{\delta}}}},\]
and the algorithm can be implemented to run in time
\[\Par{\tmv\Par{\mmp} + n} \cdot \exp\Par{O\Par{\log^{\frac 2 3}\Par{n\rho}\log\log\Par{\frac{n\rho}{\delta}}}}.\]
\end{proposition}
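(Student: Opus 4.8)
The plan is to analyze Algorithm~\ref{alg:oracle_mdr} as a matrix multiplicative weights (MMW) run, restricted to the subspace $\1_n^\perp$ (on which $\ml = \calL(\vw)$ is invertible, since $\max_e \vw_e \le \rho \min_e \vw_e$ forces $\vw > \0_E$ and hence $\calL(\vw)$ connected), with gain matrices $\mg_t = \mmp\calL(\vx_t)\mmp$ — which vanish on $\1_n$ since $\calL(\vx_t)\1_n = \0_n$ and $\mmp$ kills $\1_n$ by \eqref{eq:precon_def}, so one may take $\my_t$ to be a trace-one density matrix on $\1_n^\perp$. The first step is to rephrase \eqref{eq:precon_def} on this subspace: it is equivalent to $\tfrac12\ml \preceq \mmp^{-2} \preceq \ml$, and also, by comparing $BB^\top$ with $B^\top B$ for $B = \ml^{\half}\mmp$, to $\mproj \preceq \mmp\ml\mmp \preceq 2\mproj$. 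Since each $\vx_t^{(i)} = \oracle(\tvg_t^{(i)},\mmp)$ lies in the feasible set $\calF$ of Definition~\ref{def:soc_packing}, averaging gives $\mg_t = \mmp\calL(\vx_t)\mmp \preceq \mproj$, so $\normop{\mg_t} \le 1$ (hence Proposition~\ref{prop:mmw} applies with $\eta = \tfrac12$) and $\normop{\ms_t} \le \eta T = O(mQ\log n) =: R$, which is exactly the norm bound required by Lemmas~\ref{lem:approx_trace},~\ref{lem:jl_embed}, and~\ref{lem:trace_overestimate}. Averaging the feasibility constraint over $0 \le t < T$ and conjugating by $\mmp^{-1}$ yields $\calL(\bvx) \preceq \mmp^{-2} \preceq \ml$, the upper bound in \eqref{eq:quality}.

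For the matching lower bound, the crux is a ``width'' estimate $\inprod{\mg_t}{\my_t} \ge \exp(-\polylog(n/\delta))$ for every $t$. Since $\mmp\calL(\tfrac12\vw)\mmp = \tfrac12\mmp\ml\mmp \preceq \mproj$ certifies $\tfrac12\vw \in \calF$, and $\my_t$ is a density matrix on $\1_n^\perp$, we get $\max_{\vy \in \calF}\inprod{\vg_t}{\vy} \ge \inprod{\vg_t}{\tfrac12\vw} = \tfrac12\inprod{\mmp\ml\mmp}{\my_t} \ge \tfrac12$, where $[\vg_t]_e = \inprod{\ml_e}{\mmp\my_t\mmp}$ as in \eqref{eq:grad_approx}. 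Next, Lemma~\ref{lem:jl_embed} produces an embedding with $\tfrac12\vg_t \le \vf \le \vg_t$; Lemma~\ref{lem:kd_asoc} produces $\{\tvg_t^{(i)}\}_{i \in [m]}$ with $\tvg_t^{(i)} \le \beta\vf \le \beta\vg_t$ and $\sum_i \tvg_t^{(i)} \ge \vf^{(\ge \gamma/\alpha)}$; and Lemma~\ref{lem:truncate_small}, applied to the $\rho$-conditioned vector $\tfrac12\vw$ with the parameters $\alpha,\gamma$ set in the algorithm, gives $\inprod{\vf^{(\ge \gamma/\alpha)}}{\tfrac12\vw} \ge \tfrac14\inprod{\vg_t}{\tfrac12\vw}$. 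Chaining these, $\sum_i \max_{\vy \in \calF}\inprod{\tvg_t^{(i)}}{\vy} \ge \max_{\vy \in \calF}\inprod{\sum_i \tvg_t^{(i)}}{\vy} \ge \tfrac18$, so by the $Q$-approximation property of $\oracle$ together with $\tvg_t^{(i)} \le \beta\vg_t$ and $\vx_t^{(i)} \ge \0_E$,
\[\inprod{\mg_t}{\my_t} = \frac1m\sum_i \inprod{\vx_t^{(i)}}{\vg_t} \ge \frac1{m\beta}\sum_i \inprod{\vx_t^{(i)}}{\tvg_t^{(i)}} \ge \frac1{m\beta Q}\sum_i \max_{\vy \in \calF}\inprod{\tvg_t^{(i)}}{\vy} \ge \frac1{8 m\beta Q}.\]
Plugging $\eta = \tfrac12$, $\normop{\mg_t} \le 1$, and this width bound into Proposition~\ref{prop:mmw}, and taking $\mmu = \vv\vv^\top$ over arbitrary unit $\vv \in \1_n^\perp$, yields $\tfrac1T \sum_t \mg_t \succeq c'\mproj$ with $c' = \Omega\big(\tfrac1{m\beta Q}\big)$, once $T = \Omega(m\beta Q\log n)$ as set in the algorithm (so the regret term $\tfrac{\log n}{\eta T}$ is dominated). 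Since $\tfrac1T\sum_t \mg_t = \mmp\calL(\bvx)\mmp$, conjugating by $\mmp^{-1}$ and using $\mmp^{-2} \succeq \tfrac12\ml$ gives $\calL(\bvx) \succeq \Omega\big(\tfrac1{m\beta Q}\big)\ml$, the lower bound in \eqref{eq:quality} after substituting $\beta = O(1)$, $m = \polylog(n/\delta)$, and $Q = \exp\big(O(\log^{\frac23}(n\rho)\log\log(\tfrac{n\rho}{\delta}))\big)$ from Lemma~\ref{lem:kd_asoc} and Proposition~\ref{prop:fast_packing}.

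The structural claim is immediate: $\oracle$ — instantiated by Proposition~\ref{prop:fast_packing} on the $0$--$1$ ASOC underlying $\tvg_t^{(i)}$ — returns $\vx_t^{(i)} = \vv_{t,i} \circ \va_{t,i}$ with $\vv_{t,i}$ a $K$-SOC and $\va_{t,i}$ an ASOC, so $\bvx = \tfrac1{mT}\sum_{t,i} \vv_{t,i} \circ \va_{t,i}$ is a sum of $S = mT$ such terms, and $K = S = \exp(O(\log^{\frac23}(n\rho)\log\log(\tfrac{n\rho}{\delta})))$ once we check $\tfrac u\ell \le \poly(n,\rho)$ for the packing ranges handed to Proposition~\ref{prop:fast_packing} (bounding the objective above by $\Tr(\mmp^{-2})/2 \le \Tr(\ml)/2$ and below by an effective-resistance argument giving $\Omega(\min_e\vw_e)$). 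For the runtime I would bound each iteration: one trace estimate (Lemma~\ref{lem:trace_overestimate}), one embedding plus ASOC approximation (Lemmas~\ref{lem:jl_embed},~\ref{lem:kd_asoc}), and $m$ oracle calls (Proposition~\ref{prop:fast_packing}); using $\tmv(\ms_t) = O(\tmv(\mmp) + nKS\log n)$ via Lemma~\ref{lem:matvec_soc} (as $\sum_{s<t}\vx_s$ is a union of at most $mT = S$ (SOC)$\circ$(ASOC) terms) and $R = O(mQ\log n)$, each piece costs $(\tmv(\mmp) + n)\cdot\exp(O(\log^{\frac23}(n\rho)\log\log(\tfrac{n\rho}{\delta})))$, and multiplying by $T$ gives the claimed total. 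Finally, the $O(mT) = O(m^2 Q\log n)$ oracle calls and the $O(T)$ trace / ASOC subroutine calls fail with probability at most $\tfrac{\delta}{900m^2 Q\log n}$ and $\tfrac{\delta}{900 m Q\log n}$ respectively, which union-bound to at most $\delta$.

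The main obstacle I anticipate is the lower-bound chain of the second paragraph. One must (i) certify feasibility of a scaled copy of the \emph{unknown} weights $\vw$ for the preconditioned packing program using only matrix-vector access to $\mmp$, and (ii) — the delicate part — argue that the ASOC-chunked gradients $\{\tvg_t^{(i)}\}$ collectively retain a constant fraction of the gradient mass that $\tfrac12\vw$ certifies; this is exactly where Lemma~\ref{lem:truncate_small} and the hypothesis $\max_e \vw_e \le \rho\min_e \vw_e$ are essential, since the ASOC approximation discards small coordinates. Tracking the constant and polylogarithmic factors ($\beta$, $m$, $Q$, and the Johnson--Lindenstrauss and trace-approximation slacks) tightly enough that $T = O(mQ\log n)$ MMW iterations suffice is the remaining bookkeeping; the upper bound and the maintenance of the SOC/ASOC representations are comparatively routine given Sections~\ref{ssec:soc}--\ref{ssec:packing}.
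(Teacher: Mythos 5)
Your proposal is correct and follows essentially the same route as the paper's proof: the same two-sided MMW argument with gains $\mg_t = \mmp\calL(\vx_t)\mmp \preceq \mproj$, the same width bound $\inprod{\mg_t}{\my_t} \ge \frac{1}{8\beta m Q}$ obtained by certifying $\tfrac12\vw \in \calF$ via $\mproj \preceq \mmp\ml\mmp \preceq 2\mproj$ and chaining the oracle guarantee with Lemma~\ref{lem:truncate_small} and the ASOC approximation, and the same accounting of $S = mT$, the runtime bottlenecks, and the failure probabilities. The only (cosmetic) divergences are your trace-based upper bound on the packing value in place of the paper's single-coordinate feasibility argument for $u$, and your deferral of the verification that the algorithm's $\gamma$ and $\alpha$ satisfy the hypotheses of Lemma~\ref{lem:truncate_small}, which the paper carries out explicitly via $\sum_{e}[\vg_t]_e = n\Tr(\mmp\my_t\mmp)$.
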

\begin{proof}
Throughout this proof let $\beta \defeq 8$, and let
\[k = O\Par{\log\Par{\frac{nT}{\delta}}},\; m = O\Par{k^2\log\Par{\frac 1 \alpha}},\; Q = \exp\Par{O\Par{\log^{\frac 2 3}\Par{\frac{nu}{\ell}}\log\log\Par{\frac{n T u}{\delta \ell}}}},\]
be the parameters from Lemmas~\ref{lem:kd_asoc} and~\ref{lem:jl_embed} and Proposition~\ref{prop:fast_packing}, where the last expression is for bounds $\ell, u$ on the value of our packing objectives that we will specify. As all above expressions depend logarithmically on $T$ (which depends linearly on $mQ$), there are no conflicts for an appropriate choice of constants. Further, since $T \le 300 m Q\log(n)$, by a union bound, we may assume that all computations on Lines~\ref{line:gamma_approx},~\ref{line:asoc_approx} are correct, and that all calls to $\oracle$ succeed, except with probability $\delta$.

Condition on these events for the rest of the proof. We next specify valid bounds $\ell, u$ that hold with probability $1$ for all packing problems encountered in the algorithm. Specifically we let
\begin{equation}\label{eq:packing_bounds}
\ell \defeq \frac{1}{2} \min_{e \in E} \vw_e,\;  u \defeq \frac {n^4} 4 \max_{e \in E} \vw_e \implies \frac u \ell \le n^4\rho.
\end{equation}
To see why the bounds in \eqref{eq:packing_bounds} are valid, first, observe that
\begin{equation}\label{eq:lap_bound}
\calL\Par{\vw} \preceq 2\sum_{e \in E} \vw_e \mproj \preceq \Par{n^2 \max_{e \in E} \vw_e} \mproj,
\end{equation}
where we used that $2\mproj$ dominates any $\ml_e$. Thus for any packing problem of the form \eqref{eq:packing} encountered by the algorithm, by using the assumption \eqref{eq:precon_def}, the optimal $\vx \in \R^E_{\ge 0}$ cannot have
\[\vx_e > \frac {n^2 \max_{e \in E} \vw_e} {2} \ge \frac{1}{2\vlam_n\Par{\mmp}^2} \text{ for any } e \in E,\]
because then this coordinate alone would violate the feasibility constraint:
\[\vx_e \ma_e = \vx_e \mmp \ml_e \mmp \not\preceq \mproj. \]
This implies $u$ in \eqref{eq:packing_bounds} is a valid upper bound on $\inprod{\vc}{\vx}$ for feasible $\vx$ and $\vc \in \{0, 1\}^E$, because $\norm{\vc}_1 \le \frac{n^2}{2}$.
Next, to obtain the lower bound,
by pre-multiplying and post-multiplying \eqref{eq:precon_def} by $\ml^{\half}$, all eigenvalues of $\mmp \ml \mmp$ are in $[1, 2]$ (other than a zero eigenvalue in the $\1_n$ direction), i.e.,
\[\mproj \preceq \mmp \ml \mmp \preceq 2\mproj. \]
Thus, $\half \vw \in \calF$ for the feasible region $\calF$ defined in \eqref{eq:feasible_region}, because
\[\mmp \calL\Par{\half \vw} \mmp = \half \mmp \ml \mmp \preceq \mproj.\]
Thus the optimal $\inprod{\vc}{\vx}$ is always at least $\half \min_{e \in E} \vw_e$, by plugging in the feasible choice of $\vx = \half \vw$. This concludes the proof that $\ell$ in \eqref{eq:packing} is also valid. We simplify our parameters accordingly:
\begin{equation}\label{eq:simpler_parameters}
\begin{aligned}
Q &= \exp\Par{O\Par{\log^{\frac 2 3}\Par{n\rho} \log\log\Par{\frac{n\rho}{\delta}}}},\; T = \exp\Par{O\Par{\log^{\frac 2 3}\Par{n\rho} \log\log\Par{\frac{n\rho}{\delta}}}}, \\
\alpha &= O\Par{n^4\rho\log^2\Par{\frac{n\rho}{\delta}}},\; k = O\Par{\log\Par{\frac{n\rho}{\delta}}},\; m = O\Par{\log^3\Par{\frac{n\rho}{\delta}}}.
\end{aligned}
\end{equation}
Continuing, we have for any $\my$ such that $\Tr(\my) = 1$ and $\Span(\my) = \Span(\mproj)$:
\begin{equation}\label{eq:tracebound}
\inprod{\mmp \my \mmp}{\calL(\vw)} = \inprod{\my}{\mmp \ml \mmp} \succeq \inprod{\my}{\mproj} = 1.
\end{equation}
By the guarantee on $\oracle$ (Definition~\ref{def:soc_packing}), we have that for all $i \in [m]$ and $0 \le t < T$,
\[\inprod{\tvg_t^{(i)}}{\vx^{(i)}} \ge \frac 1 {2Q} \inprod{\tvg_t^{(i)}}{\vw} \implies \inprod{\vg_t}{\vx^{(i)}_t} \ge \frac 1 {2\beta Q} \inprod{\tvg_t^{(i)}}{\vw}.\]
The above implication used that $\vg_t \ge \frac 1 \beta \tvg_t^{(i)}$ for all $i \in [m]$ by the ASOC approximation guarantee. Averaging this bound for all $i \in [m]$,
\begin{equation}\label{eq:quality_per_round}
\begin{aligned}
\inprod{\vg_t}{\vx_t} &= 
\frac 1 m \sum_{i \in [m]} \inprod{\vg_t}{\vx^{(i)}_t} \ge \frac 1 {2\beta m Q} \sum_{i \in [m]}\inprod{\tvg_t^{(i)}}{\vw} \ge \frac 1 {8\beta m Q} \inprod{\vg_t}{\vw}.
\end{aligned}
\end{equation}
The last inequality in \eqref{eq:quality_per_round} is derived as follows. Recall that $\sum_{i \in [m]} \tvg_t^{(i)} \ge \vf_t^{(\ge \frac \gamma \alpha)}$ by combining our ASOC approximation guarantee with $\vf_t \ge \half \vg_t$ for our estimate $\vf_t$ constructed in Lemma~\ref{lem:jl_embed}. Note that $\gamma$ is at most a $\frac {20} 9 n^2 k$ factor overestimate of the largest entry in Lemma~\ref{lem:truncate_small}, because
\[\sum_{e \in E} \vg_e = \inprod{\mmp \my \mmp}{\sum_{e \in E} \ml_e} = \inprod{\mmp \my \mmp}{n\mproj} = n\Tr\Par{\mmp \my \mmp}, \]
so our choice of $\alpha = \frac{40}{9} \rho n^4 k^2$ shows that Lemma~\ref{lem:truncate_small} applies. Thus,
\[\sum_{i \in [m]} \inprod{\tvg_t^{(i)}}{\vw} \ge \inprod{\vf_t^{(\frac \gamma \alpha)}}{\vw} \ge \frac 1 4 \inprod{\vg_t}{\vw}.\]

However, we also have by the definition of $\vg_t$ in \eqref{eq:grad_approx} that
\[\inprod{\vg_t}{\vw} = \inprod{\mmp \my_t \mmp}{\calL(\vw)} \ge 1,\]
where we used \eqref{eq:tracebound} in the last inequality. Combining with \eqref{eq:quality_per_round}, we have shown
\begin{equation}\label{eq:gain_per_round}
\inprod{\mg_t}{\my_t} = \inprod{\mmp \calL(\vx_t) \mmp}{\my_t} = \inprod{\vg_t}{\vx_t} \ge \frac 1 {8\beta m Q},
\end{equation}
for all $0 \le t < T$. Also, by the packing oracle guarantee $\vx_t \in \calF$,
\begin{equation}\label{eq:gain_bound}\mg_t = \mmp \calL(\vx_t) \mmp \preceq \mproj.\end{equation}
Now, rearranging Proposition~\ref{prop:mmw} gives for all $\mmu \in \PSD^{n \times n}$ with $\Span(\mmu) = \Span(\mproj)$ and $\Tr(\mmu) = 1$,
\begin{equation}\label{eq:covering_bound}
\begin{aligned}
\inprod{\mmp \calL(\bvx) \mmp}{\mmu} &= \frac 1 T \sum_{0 \le t < T}\inprod{\mg_t}{\mmu} \\
&\ge \frac 1 {2T} \sum_{0 \le t < T} \inprod{\mg_t}{\my_t} - \frac{\log(n)}{\eta T} \\
&= \frac 1 {16\beta m Q} - \frac 1 {32\beta m Q} = \frac 1 {32 \beta m Q}.
\end{aligned}
\end{equation}
The first line above used the definition of $\bvx$, the second used Proposition~\ref{prop:mmw} with $\eta = \half$ and $\normsop{\mg_t} \le 1$ by \eqref{eq:gain_bound}, and the third used \eqref{eq:gain_per_round} and plugged in our choices of $\eta$ and $T$. Minimizing \eqref{eq:covering_bound} over all valid choices of $\mmu$ shows that
\[\frac 1 {32\beta m Q} \mproj \preceq \mmp \calL(\bvx) \mmp \preceq \mproj, \]
where the upper bound holds by averaging \eqref{eq:gain_bound} across all iterations. Thus, we indeed have
\[ \frac 1 {64\beta m Q} \ml \preceq \calL(\bvx) \preceq \ml,\]
which proves \eqref{eq:quality} upon plugging in $\beta = 8$ and our parameters from \eqref{eq:simpler_parameters}.

Next we discuss the maintenance of our iterates. Because of the SOC packing oracle guarantee, every $\vx_t$ for $0 \le t < T$ is the average of $m$ products of a $K$-SOC and an ASOC, where $m$ is as in \eqref{eq:simpler_parameters} and $K$ is as stated in Proposition~\ref{prop:fast_packing}. Thus, the average of all $T$ iterates has $S = mT$ components, each a product of a $K$-SOC and an ASOC, in its representation as claimed. 

There are three runtime bottlenecks in our algorithm: for each of $T$ iterations, we call Lemma~\ref{lem:trace_overestimate} to implement Line~\ref{line:gamma_approx}, we call Lemmas~\ref{lem:jl_embed} and~\ref{lem:kd_asoc} to implement Line~\ref{line:asoc_approx}, and we call Proposition~\ref{prop:fast_packing} $m$ times to implement our SOC packing oracle. Moreover, $\normsop{\ms_t} \le \frac T 2$ for all iterations $0 \le t < T$ by inspection. The result follows by combining all of these runtimes with our parameter choices.
\end{proof}

\subsection{Homotopy method}\label{ssec:homotopy}

In this section, we show how recursively calling Proposition~\ref{prop:mdr_analysis} in phases grants us access to $\mmp$ in \eqref{eq:precon_def} needed by the next phase.
We use an implicit approximation provided in \cite{JambulapatiLMSST23}.

\begin{lemma}\label{lem:invsqrt}
Suppose that for $\bvx \in \R^E$, $\Delta, Q > 0$, and unknown graph Laplacian $\ml \in \PSD^{n \times n}$, we have 
\[\frac 1 Q \Par{\ml + \Delta \mproj}\preceq \calL\Par{\bvx} \preceq \ml + \Delta \mproj,\]
where $\mproj \defeq \id_n - \frac 1 n \1_n \1_n^\top$.
Further, suppose that $\bvx = \sum_{s \in [S]} \va_s \circ \vv_s$ where all $\{\va_s\}_{s \in [S]}$ are ASOCs and all $\{\vv_s\}_{s \in [S]}$ are $K$-SOCs. Then we can provide matrix-vector product access to a matrix $\mmp$ that satisfies the following with probability $\ge 1 - \delta$:
\begin{gather*}\Par{\ml + \Delta \mproj}^\dagger \preceq \mmp^2 \preceq 2\Par{\ml + \Delta\mproj}^\dagger,\\
\tmv\Par{\mmp} = O\Par{\Par{\tmv\Par{\ml} +  nKS\log^2(n)\log^2\Par{\frac{nKS\Tr(\ml)}{\Delta\delta}}} \sqrt Q \log^6\Par{\frac{Q\Tr(\ml)}{\Delta\delta}}}.
\end{gather*}
\end{lemma}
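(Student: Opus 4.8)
The plan is to treat $\calL(\bvx)$ as a coarse but spectrally faithful \emph{preconditioner} for $\mm \defeq \ml + \Delta\mproj$. The hypothesis states $\frac 1 Q \mm \preceq \calL(\bvx) \preceq \mm$, so the pencil $(\mm,\calL(\bvx))$ has relative condition number at most $Q$; moreover $\calL(\bvx) \succeq \frac{\Delta}{Q}\mproj$ is positive definite on $\1_n^\perp$ with the same kernel $\Span(\1_n)$ as $\mm$. I would then invoke the implicit inverse-square-root machinery of \cite{JambulapatiLMSST23}, which, given matrix-vector access to $\mm$ (available in time $\tmv(\ml)+O(n)$ since $\mproj\vv = \vv - \frac 1 n \1_n\1_n^\top\vv$) together with fast matrix-vector products and fast approximate linear-system solves for the preconditioner $\calL(\bvx)$, produces matrix-vector access to a symmetric positive semidefinite $\mmp$ with $\mm^\dagger \preceq \mmp^2 \preceq 2\mm^\dagger$. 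So the work reduces to supplying those two primitives for $\calL(\bvx)$ and tracking the runtime.

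For fast matrix-vector products I would use the structural form $\bvx = \sum_{s \in [S]}\va_s\circ\vv_s$: each $\calL(\va_s\circ\vv_s)$ admits matrix-vector products in $O(nK\log n)$ time by Lemma~\ref{lem:matvec_soc}, so $\calL(\bvx)$ does in $O(nKS\log n)$ time. For fast solves I would first sparsify $\calL(\bvx)$: decomposing each $K$-SOC $\vv_s$ into $K$ weighted disjoint-clique unions and intersecting each clique with the parts of $\va_s$'s partition, every $\calL(\va_s\circ\vv_s)$ becomes a weighted sum of ``clique-minus-subcliques'' Laplacians --- precisely the inputs handled by Lemma~\ref{lem:sparsify_soc_asoc}, with the partition-refinement bookkeeping of Section~\ref{ssec:soc} keeping this representation explicit. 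Sparsifying each of the $O(KS)$ terms (whose sizes sum to $O(nKS)$) would give, in $n\cdot KS\cdot\polylog(n/\delta)$ total time, an explicit graph Laplacian $\tml$ with $\tml \approx_1 \calL(\bvx)$ and $\nnz(\tml) = n\cdot KS\cdot\polylog(n/\delta)$. Running a nearly-linear-time Laplacian system solver \cite{SpielmanT04} on $\tml$ (connected, since $\calL(\bvx)$ is) would then give $\eps'$-accurate applications of $\calL(\bvx)^\dagger$ --- and, since $\tml + tI \approx_1 \calL(\bvx) + tI$ for every $t \ge 0$, of $(\calL(\bvx)+tI)^{-1}$ as well --- in $n\cdot KS\cdot\polylog(n/(\delta\eps'))$ time.

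With these primitives in hand the refinement step is standard and I would import it from \cite{JambulapatiLMSST23}: write $\mm^{-1/2} = \tfrac 1 \pi\int_0^\infty(\mm+tI)^{-1}t^{-1/2}\,\dd t$ and discretize the integral over a geometric grid of $\polylog(\Tr(\ml)/\Delta,\,1/\eps)$ nodes $t_k$ spanning the spectrum $[\Omega(\Delta),O(\Tr(\ml))]$ of $\mm$ on $\1_n^\perp$, at relative error $\eps$. At each node $\calL(\bvx)+t_kI$ is a $Q$-preconditioner for $\mm+t_kI$ --- indeed $\frac 1 Q(\mm+t_kI)\preceq\calL(\bvx)+t_kI\preceq\mm+t_kI$ since $Q\ge1$ --- so preconditioned Chebyshev iteration (using the sparsified $\tml+t_kI$ to realize preconditioner solves) would apply $(\mm+t_kI)^{-1}$ to high accuracy in $O(\sqrt Q\,\polylog)$ iterations, each costing one $\mm$-matvec and one $\tml+t_kI$ solve. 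The resulting $\mmp$ is a nonnegative combination of symmetric positive semidefinite operators, hence symmetric positive semidefinite, and satisfies $\mmp\approx_{O(\eps)}\mm^{-1/2}$; taking $\eps$ a small absolute constant and rescaling $\mmp$ by an $O(1)$ factor converts this to the claimed one-sided bound $\mm^\dagger\preceq\mmp^2\preceq2\mm^\dagger$. Multiplying the per-iteration cost $\tmv(\ml)+n\cdot KS\cdot\polylog$ by the iteration count $\sqrt Q\,\polylog$ and the number of quadrature nodes $\polylog$, and adding the one-time sparsification cost, would yield the stated $\tmv(\mmp)$ bound; a union bound over all solver and quadrature randomness absorbs the failure probability into $\delta$.

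The hard part will be the structural sparsification step: one must verify that $K$-SOC $\times$ ASOC products genuinely collapse to the clique-minus-subcliques form accepted by Lemma~\ref{lem:sparsify_soc_asoc}, and that the $O(KS)$ separate sparsifications combine to the claimed $n\cdot KS\cdot\polylog$ total work rather than something larger. By contrast the iterative-method and quadrature analysis is routine; there the only care needed is to keep the polylogarithmic exponents consistent with the claimed $\log^6$ and $\log^2$ factors in the runtime.
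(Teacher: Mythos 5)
Your proposal is correct and follows essentially the same route as the paper: sparsify $\calL(\bvx)$ term-by-term via Lemma~\ref{lem:sparsify_soc_asoc} to get a nearly-linear-size $e^{O(1)}$-approximation $\tml$, use a fast Laplacian solver on the shifted sparsifiers, and wrap this in the quadrature-plus-preconditioned-iteration construction of the inverse square root with $O(\sqrt Q)$ iterations per shift. The only difference is that you re-derive the internals of that last step, whereas the paper imports it as a black box (Lemmas 13--14 and Proposition 3 of \cite{JambulapatiLMSST23}) and cites \cite{KoutisMP14} for the solver.
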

\begin{proof}
We first sparsify $\calL(\bvx)$ using Lemma~\ref{lem:sparsify_soc_asoc} to obtain a graph Laplacian $\tml$ that satisfies
\[\frac 1 {Q\exp(2)} \Par{\ml + \Delta \mproj} \preceq \tml \preceq \ml + \Delta \mproj,\; \nnz\Par{\tml} = O\Par{nKS\log(n)\log\Par{\frac {nKS} \delta}}, \] except with probability $\frac \delta 2$. Next, we can provide access to constant-factor approximations of $(\tml + \lam \mproj)^\dagger$ for any $L$ values of $\lam \ge 0$ with failure probability $\frac \delta 2$, in time $O(\nnz(\tml)\log(n) \log(\frac {nL} \delta))$ by using the Laplacian system solver of \cite{KoutisMP14}. The conclusion follows from Lemma 13, \cite{JambulapatiLMSST23}, which requires this primitive with $L = O(\log(\frac{\Tr(\ml)}{\Delta}))$ (see Proposition 3 in \cite{JambulapatiLMSST23}). We remark that the approximation factor worsens by a $\frac{\Tr(\ml)}{\Delta}$ factor due to Lemma 14 in \cite{JambulapatiLMSST23}, but this is accounted for by the polylogarithmic terms above, where the accuracy dependence lies.
\end{proof}

We are now ready to prove Theorem~\ref{thm:implicit_sparsify}.

\restateimplicitsparsify*
\begin{proof}
We closely follow the outline of the homotopy method outlined in Section 3.2 of \cite{JambulapatiLMSST23}. The method proceeds in $p = O(\log \frac {\Tr(\ml)} \Delta)$ phases, and in each phase $q \in [p]$, we apply Proposition~\ref{prop:mdr_analysis} to the unknown regularized Laplacian
\[\ml_q \defeq \ml + \Delta 2^{p - q} \mproj.\]
Note that in each phase, $\ml_q = \calL(\vw_q)$ for a vector $\vw_q \in \R^E_{\ge 0}$ that is entrywise at least $2^{p - q} \cdot \frac \Delta n$, and whose sum of entries is at most $2^{p - q} \cdot \Delta n + \Tr(\ml)$. Thus, in all calls to Proposition~\ref{prop:mdr_analysis}, we have
\[\rho = O\Par{n^2 + \frac{n\Tr(\ml)}{\Delta}}.\]
To initialize each phase, we require a matrix $\mmp_q$ satisfying \eqref{eq:precon_def}. In the first phase $q = 1$, for a large enough constant in the definition of $p$, it is enough to choose $\mmp_1$ to be a known multiple of $\mproj$. In every phase $q$ after this, $\mmp_q$ results from applying Lemma~\ref{lem:invsqrt}. Because
\[\ml_q \preceq \ml_{q - 1} \preceq 2\ml_q\]
for all $q \in [p]$, the multiplicative approximation factor given by Proposition~\ref{prop:mdr_analysis} only worsens by $2$. The runtime follows from Proposition~\ref{prop:mdr_analysis} and our matrix-vector query access in Lemma~\ref{lem:invsqrt}. Finally, the sparsity of the final output follows by applying the sparsification described in the proof of Lemma~\ref{lem:invsqrt} to the output of the final phase $p$, and scaling it up to obtain the approximation guarantee.
\end{proof}
\section{Conditioning of Smoothed Matrices}\label{sec:smoothed}

In this section, we provide an $\ell_\infty$ diameter bound for a minimizing vector of Barthe's objective, when computing a Forster transform of a smoothed matrices of the form 
\[\tma = \ma + \mg, \text{ where } \ma \in \R^{n \times d}, \text{ and } \mg \in \R^{n \times d} \text{ has entries } \simiid \Nor(0, \sig^2). \]
We first define a notion of deepness in Section~\ref{ssec:deep} and derive a diameter bound for deep vectors, patterned off \cite{ArtsteinKS20}. In Sections~\ref{ssec:wide} and \ref{ssec:tall}, we show tail bounds for the singular values of a smoothed matrix. We combine these results to prove Theorem~\ref{thm:smoothed} in Section~\ref{ssec:smoothed_forster}, our main result on the conditioning of Forster transforms of smoothed matrices.

Throughout this section, we follow notation \eqref{eq:notation}, \eqref{eq:barthe_obj}, \eqref{eq:mi_def} from Section~\ref{ssec:rip} and make the following simplifying, and somewhat mild, assumption on the relationship between $n$ and $d$.

\begin{assumption}\label{assume:nd}
    $n\geq Cd$ for some constant $C>1$.
\end{assumption}

For example, our results apply if $n \ge 1.1d$. Lifting the restriction in Assumption~\ref{assume:nd} significantly complicates our approach, as explained by Remark~\ref{rem:superlinear}, and we leave it as an interesting open question to establish similar bounds in the parameter regime $n \in [d, d + o(d)]$.

Finally, we mention that our result in Theorem~\ref{thm:smoothed} is stated for the case of $\vc = \frac d n \1_n$, which is the most interesting setting we are aware of in typical applications. However, we discuss the case of general $\vc$ in Section~\ref{ssec:gen_c}, where our techniques readily apply under a strengthening of Assumption~\ref{assume:nd}.

\subsection{Diameter bound for deep vectors}\label{ssec:deep}

Our strategy for our diameter bound follows an analysis in \cite{ArtsteinKS20}, based on the assumption that $\vc$ lies nontrivially inside the interior of the basis polytope (cf.\ Proposition~\ref{prop:scaling_polytope}). 

We start by extending Definition 1.4 and proving a stronger version of Lemma 4.4 from \cite{ArtsteinKS20}.

\begin{definition}[Deepness]\label{def:deep}
    Let $\vc\in\R_{>0}^n$ satisfy $\norm{\vc}_1=d$, $\eta\in[0,1]$, and $\Delta\geq0$. We say that $\vc$ lies \emph{$(\eta,\Delta)$-deep} inside the basis polytope of $\{\va_i\}_{i\in[n]}\subset\R^d$ if for all $k\in[d-1]$ and subspaces $E \subseteq \R^d$ with dimension $k$, \[ \sum_{i\in[n]}\vc_i\ind_{\norm{\va_i-\proj_E\va_i}_2\leq\Delta}\leq (1-\eta)k. \] When $\vc=\frac{d}{n}\1_n$, this is equivalent to the following: for all $k\in[d-1]$ and subspaces $E$ with dimension $k$, at most $\frac{(1-\eta)kn}{d}$ of the $\va_i$ satisfy $\norm{\va_i-\proj_E\va_i}_2\leq\Delta$.
\end{definition}

We remark that every vector in the basis polytope is $(0,0)$-deep by Proposition~\ref{prop:scaling_polytope}. Furthermore, increasing $\Delta$ potentially increases the number of $\vc_i$ considered in the sum, and increasing $\eta$ tightens the inequality, so $(\eta,\Delta)$-deepness becomes a stronger condition for larger values of $\eta$ and $\Delta$.

We now show, following \cite{ArtsteinKS20}, that deepness in the basis polytope implies a diameter bound.

\begin{lemma}\label{lem:diam_bound}
    Let $\mu ,M,\eta,\Delta>0$ and $\ma\in\R^{n\times d}$ with rows $\{\va_i\}_{i\in[n]}$ satisfying $\mu \leq\norm{\va_i}_2^2\leq M$ for all $i\in[n]$. If $\vc\in\R_{>0}^n$ has minimum entry $\vc_{\min}$ and lies $(\eta,\Delta)$-deep inside the basis polytope of $\{\va_i\}_{i\in[n]}$, then there exists $\vt^\star\in\argmin_{\vt\in\R^n}f(\vt)$ satisfying \[ \norm{\vt^\star}_\infty\leq\half\log\Par{\frac{M}{\mu \vc_{\min}}\Par{\frac{4M}{\eta\Delta^2}}^{d-1}}. \]
\end{lemma}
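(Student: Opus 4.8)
The plan is to follow the strategy of \cite{ArtsteinKS20}: since $f$ is convex and its minimizer is attained (Proposition~\ref{prop:barthe}, using that $\vc$ lies in the basis polytope, which holds because $(\eta,\Delta)$-deepness with $\eta,\Delta>0$ certainly implies $(0,0)$-deepness and hence membership via Proposition~\ref{prop:scaling_polytope}), it suffices to rule out any minimizer $\vt^\star$ whose coordinates are spread too far apart. After a shift by $\1_n$ (using the invariance \eqref{eq:invariant_ones}), we may assume $\max_i \vt^\star_i = -\min_i \vt^\star_i = \norm{\vt^\star}_\infty =: B$, and the goal is to show $B \le \half\log(\frac{M}{\mu\vc_{\min}}(\frac{4M}{\eta\Delta^2})^{d-1})$. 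At the minimizer, the optimality condition \eqref{eq:optimality} gives $\Tr(\mm_i(\vt^\star)) = \vc_i$ for all $i$, i.e., $\vtau_i(\ms(\vt^\star)\ma) = \vc_i$, so the scaled rows $\widetilde{\va}_i(\vt^\star) = \mr(\vt^\star)\va_i$ have $\exp(\vt^\star_i)\norm{\widetilde{\va}_i}_2^2 = \vc_i$.

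The key step is the following. Sort the coordinates of $\vt^\star$ and split the index set into a ``heavy'' part $H$ where $\vt^\star_i$ is large and a ``light'' part. Concretely I would consider, for a threshold $\theta$, the subspace $E_\theta := \Span(\{\widetilde{\va}_i : \vt^\star_i \le \theta\})$ spanned by the scaled rows with small $t$-coordinates, and let $k := \dim(E_\theta)$. Pull back $E_\theta$ through $\mr(\vt^\star)$ to a $k$-dimensional subspace $E$ of the original space; the rows $\va_i$ with $\vt^\star_i \le \theta$ all lie in $E$ exactly. Now I want to bound the \emph{weighted count} of such rows against $(1-\eta)k$ using deepness, which requires showing $\norm{\va_i - \proj_E\va_i}_2 \le \Delta$ for these $i$ — but that is immediate since those $\va_i$ are \emph{in} $E$. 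So $(\eta,\Delta)$-deepness gives $\sum_{i:\,\vt^\star_i \le \theta} \vc_i \le (1-\eta)k$. Combined with the trace identity $\sum_{i\in[n]} \vc_i = d = \sum_i \Tr(\mm_i(\vt^\star))$ and the PSD relation $\sum_i \mm_i = \id_d$, one extracts: the rows with $\vt^\star_i \le \theta$ contribute at most $(1-\eta)k$ of the total weight $d$, while spanning only a $k$-dimensional subspace — there must be an ``$\eta$-gap'' in how much of the identity matrix they can account for, namely $\mproj_{E_\theta} \preceq \sum_{i:\,\vt^\star_i\le\theta}\mm_i + (\text{something of trace} \le (1-\eta)k)$... the cleaner route, as in \cite{ArtsteinKS20}, is to track the determinant. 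Taking $\theta$ to be the median-type threshold and using $\log\det(\mz(\vt^\star)) = \sum$ (via the eigenvalues), one shows that moving the light coordinates up by a controlled amount strictly decreases $f$ unless $B$ is already bounded — this is where the factor $(\frac{4M}{\eta\Delta^2})^{d-1}$ enters, one factor of $\frac{4M}{\eta\Delta^2}$ per dimension $k=1,\dots,d-1$, telescoping across the chain of subspaces $E_{\theta_1}\subseteq E_{\theta_2}\subseteq\cdots$.

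The main obstacle I expect is making the ``determinant gap from deepness'' quantitative: translating the combinatorial statement $\sum_{i:\,\vt^\star_i\le\theta}\vc_i \le (1-\eta)k$ into a multiplicative lower bound on a ratio of determinants $\det(\mz(\vt^\star))/\det(\mz(\text{perturbed}))$ of the form $(\eta\Delta^2/4M)^{\pm 1}$. This needs the norm bounds $\mu \le \norm{\va_i}_2^2 \le M$ (to control how scaling affects $\mz$) together with a careful induction on the dimension of the nested subspaces — one peels off one dimension at a time, each time losing a single factor involving $\eta$, $\Delta$, and $M$, which is exactly why the exponent is $d-1$ rather than $d$ (the top dimension $k=d$ carries no constraint). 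The $\frac{M}{\mu\vc_{\min}}$ prefactor comes from the crudest bounds: $\vc_{\min}$ lower-bounds the smallest weight, $M$ upper-bounds row norms, and $\mu$ lower-bounds them, pinning down the extreme coordinates of $\vt^\star$ once the determinant ratio is controlled. I would structure the proof as: (1) reduce to bounding $B$ via convexity and the shift invariance; (2) set up the nested subspaces $E_{\theta}$ from the sorted coordinates; (3) invoke deepness at each level to get the weight deficit; (4) run the determinant/optimality induction to convert deficits into the claimed bound on $B$.
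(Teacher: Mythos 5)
There is a genuine gap: the subspaces you propose are the wrong ones, and with them the argument does not get off the ground. You define $E_\theta$ as the span of the (scaled) rows with $\vt^\star_i\le\theta$ and apply deepness only to rows lying \emph{exactly} in the pulled-back subspace $E$. But spans of row subsets are generically full-dimensional: for a matrix in general position, any $d$ rows already span $\R^d$, so as soon as $d$ coordinates of $\vt^\star$ fall below $\theta$ you get $k=d$ and Definition~\ref{def:deep} (which only quantifies over $k\in[d-1]$) imposes no constraint. The nested chain $E_{\theta_1}\subsetneq E_{\theta_2}\subsetneq\cdots$ of dimensions $1,\dots,d-1$ that your telescoping needs simply does not exist in general. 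Moreover, by counting only rows that lie \emph{in} $E$ you discard the $\Delta$-slack entirely, so $\Delta$ cannot appear in your final bound except by fiat; your sketch never explains how the factor $\Delta^2$ is produced. Finally, the framing ``moving the light coordinates up strictly decreases $f$ unless $B$ is bounded'' cannot be correct as stated: $\vt^\star$ is an exact minimizer, so no perturbation decreases $f$.

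The paper's proof instead takes $E$ to be the span of the eigenvectors of $\mr(\vt^\star)=\mz(\vt^\star)^{-1/2}$ corresponding to its $k$ smallest eigenvalues $\vlam_d,\dots,\vlam_{d-k+1}$, for each $k\in[d-1]$. The Forster identity $\sum_i\vc_i\frac{(\mr\va_i)(\mr\va_i)^\top}{\norm{\mr\va_i}_2^2}=\id_d$, projected onto $E^\perp$ and traced, gives $\sum_i\vc_i\frac{\norm{\proj_{E^\perp}\mr\va_i}_2^2}{\norm{\mr\va_i}_2^2}=d-k$. Deepness is used in its $\Delta$-approximate form: the rows at distance greater than $\Delta$ from $E$ carry total weight at least $d-(1-\eta)k$, and for each such row one shows $\frac{\norm{\proj_{E^\perp}\mr\va_i}_2^2}{\norm{\mr\va_i}_2^2}\ge 1-\frac{\vlam_{d-k+1}^2(M-\Delta^2)}{\vlam_{d-k}^2\Delta^2}$ (decomposing $\va_i=\vy_i+\vz_i$ with $\vz_i\in E^\perp$, $\norm{\vz_i}_2>\Delta$, and using that $E,E^\perp$ are $\mr$-invariant). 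Combining forces $\vlam_{d-k}/\vlam_{d-k+1}\le(\frac{d-k+\eta k}{\eta k}\cdot\frac{M-\Delta^2}{\Delta^2})^{1/2}$; the product over $k\in[d-1]$ gives $\vlam_1/\vlam_d\le(\frac{4M}{\eta\Delta^2})^{(d-1)/2}$, and the endpoints are pinned down by $\vlam_d^{-2}\ge\mu\exp(\norm{\vt^\star}_\infty)$ (largest eigenvalue of $\mz(\vt^\star)$ at the coordinate maximizing $\vt^\star$) and $\vc_{\min}\le M\vlam_1^2$ (stationarity at a coordinate with $\vt^\star_\ell=0$). Your high-level intuitions about the telescoping exponent $d-1$ and the role of $M/(\mu\vc_{\min})$ are right, but the core mechanism — spectral subspaces of the optimal transform plus the quantitative use of the $\Delta$-margin — is missing from your plan.
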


\begin{proof}
    Let $\vt^\star\in\argmin_{\vt\in\R^n}f(\vt)$ such that $\min_{i\in[n]}\vt^\star_i=0$, which exists by Proposition~\ref{prop:scaling_polytope}, Proposition~\ref{prop:barthe}, and \eqref{eq:invariant_ones}. Let $\vlam_1\geq\vlam_2\geq\dots\geq\vlam_d$ be the eigenvalues of $\mr\defeq\mr(\vt^\star)$.

    Fix $k\in[d-1]$. Let $E$ be a $k$-dimensional subspace spanned by eigenvectors of $\mr$ with eigenvalues $\vlam_d,\vlam_{d-1},\dots,\vlam_{d-k+1}$. By Proposition~\ref{prop:barthe}, $\mr$ is a $\vc$-Forster transform of $\ma$, so \[ \sum_{i\in[n]}\vc_i\cdot\frac{\Par{\mr\va_i}\Par{\mr\va_i}^\top}{\norm{\mr\va_i}_2^2}=\id_d. \] Projecting onto $E^\perp$ and taking a trace of both sides, \begin{equation}\label{eq:traces}
        \sum_{i\in[n]}\vc_i\cdot\frac{\norm{\proj_{E^\perp}\mr\va_i}_2^2}{\norm{\mr\va_i}_2^2}=\Tr\Par{\sum_{i\in[n]}\vc_i\cdot\frac{\Par{\proj_{E^\perp}\mr\va_i}\Par{\proj_{E^\perp}\mr\va_i}^\top}{\norm{\mr\va_i}_2^2}}=\Tr\Par{\proj_{E^\perp}}=d-k.
    \end{equation} Now, consider the $\va_i$ such that $\norm{\va_i-\proj_E\va_i}_2>\Delta$, and decompose these $\va_i$ as $\va_i=\vy_i+\vz_i$, where $\vy_i\in E$ and $\vz_i\in E^\perp$. Then $\norm{\vz_i}_2>\Delta$ and $\norm{\vy_i}_2<\sqrt{\norm{\va_i}_2^2-\Delta^2}\leq\sqrt{M-\Delta^2}$. Since $E$ and $E^\perp$ are both spanned by eigenvectors of $\mr$, $\mr\vy_i$ and $\mr\vz_i$ are orthogonal, and $\norm{\mr\va_i}_2^2=\norm{\mr\vy_i}_2^2+\norm{\mr\vz_i}_2^2$. Furthermore, since $E$ is spanned by eigenvectors with eigenvalues at most $\vlam_{d-k+1}$ and $E^\perp$ is spanned by eigenvectors with eigenvalues at least $\vlam_{d-k}$, \[ \norm{\mr\vy_i}_2\leq\vlam_{d-k+1}\norm{\vy_i}_2\leq\vlam_{d-k+1}\sqrt{M-\Delta^2}\text{ and }\norm{\mr\va_i}_2\geq\norm{\mr\vz_i}_2\geq\vlam_{d-k}\norm{\vz_i}_2\geq\vlam_{d-k}\Delta. \] It follows that \[ \frac{\norm{\proj_{E^\perp}\mr\va_i}_2^2}{\norm{\mr\va_i}_2^2}=\norm{\proj_{E^\perp}\frac{\mr\va_i}{\norm{\mr\va_i}_2}}_2^2=1-\norm{\proj_E\frac{\mr\va_i}{\norm{\mr\va_i}_2}}_2^2=1-\frac{\norm{\mr\vy_i}_2^2}{\norm{\mr\va_i}_2^2}\geq1-\frac{\vlam_{d-k+1}^2(M-\Delta^2)}{\vlam_{d-k}^2\Delta^2}. \] Combining this with \eqref{eq:traces} and the $(\eta,\Delta)$-deepness of $\vc$, \begin{align*}
        d-k&=\sum_{i\in[n]}\vc_i\cdot\frac{\norm{\proj_{E^\perp}\mr\va_i}_2^2}{\norm{\mr\va_i}_2^2}\geq\sum_{i\in[n]}\vc_i\ind_{\norm{\va_i-\proj_E\va_i}_2>\Delta}\cdot\frac{\norm{\proj_{E^\perp}\mr\va_i}_2^2}{\norm{\mr\va_i}_2^2}\\
        &\geq\Par{1-\frac{\vlam_{d-k+1}^2(M-\Delta^2)}{\vlam_{d-k}^2\Delta^2}}\sum_{i\in[n]}\vc_i\ind_{\norm{\va_i-\proj_E\va_i}_2>\Delta}\\
        &\geq\Par{1-\frac{\vlam_{d-k+1}^2(M-\Delta^2)}{\vlam_{d-k}^2\Delta^2}}(d-(1-\eta)k),
    \end{align*} which rearranges to \begin{equation}\label{eq:lam_bound}
        \frac{\vlam_{d-k}}{\vlam_{d-k+1}}\leq\Par{\frac{d-k+\eta k}{\eta k}\cdot\frac{M-\Delta^2}{\Delta^2}}^\half.
    \end{equation} Since \eqref{eq:lam_bound} holds for all $k\in[d-1]$, 
    \begin{equation}\label{eq:condition_number}
        \begin{aligned}
        \frac{\vlam_1}{\vlam_d}&=\prod_{k\in[d-1]}\frac{\vlam_{d-k}}{\vlam_{d-k+1}}\leq\prod_{k\in[d-1]}\Par{\frac{d-k+\eta k}{\eta k}\cdot\frac{M-\Delta^2}{\Delta^2}}^\half\\
        &=\Par{\frac{M-\Delta^2}{\Delta^2}}^{\frac{d-1}{2}}\prod_{k\in[d-1]}\Par{\frac{d-k+\eta k}{\eta k}}^\half\\
        &=\Par{\frac{M-\Delta^2}{\Delta^2}}^{\frac{d-1}{2}}\prod_{k=1}^{\left\lfloor\frac{d}{1+\eta}\right\rfloor}\Par{\frac{d-k+\eta k}{\eta k}}^\half\prod_{k=\left\lfloor\frac{d}{1+\eta}\right\rfloor+1}^{d-1}\Par{\frac{d-k+\eta k}{\eta k}}^\half\\
        &\leq\Par{\frac{M-\Delta^2}{\Delta^2}}^{\frac{d-1}{2}}\prod_{k=1}^{\left\lfloor\frac{d}{1+\eta}\right\rfloor}\Par{\frac{2(d-k)}{\eta k}}^\half\prod_{k=\left\lfloor\frac{d}{1+\eta}\right\rfloor+1}^{d-1}\Par{\frac{2}{\eta}}^\half\\
        &=\Par{\frac{2(M-\Delta^2)}{\eta\Delta^2}}^{\frac{d-1}{2}}\prod_{k=1}^{\left\lfloor\frac{d}{1+\eta}\right\rfloor}\Par{\frac{d-k}{k}}^\half=\Par{\frac{2(M-\Delta)^2}{\eta\Delta^2}}^{\frac{d-1}{2}}\binom{d-1}{\left\lfloor\frac{d}{1+\eta}\right\rfloor}^\half\\
        &\leq\Par{\frac{4(M-\Delta^2)}{\eta\Delta^2}}^{\frac{d-1}{2}}\leq\Par{\frac{4M}{\eta\Delta^2}}^{\frac{d-1}{2}},
    \end{aligned}
    \end{equation} where the fourth line uses $\frac{d-k+\eta k}{\eta k}\leq\frac{2(d-k)}{\eta k}$ for $k\leq\frac{d}{1+\eta}$ and $\frac{d-k+\eta k}{\eta k}\leq2\leq\frac{2}{\eta}$ for $k\geq\frac{d}{1+\eta}$ and the sixth line uses $\binom{a}{b}\leq2^a$. 
    
    Let $j\in[n]$ such that $\vt^\star_j=\norm{\vt^\star}_\infty$, and note that the largest eigenvalue of $\mz(\vt^\star)$ is $\frac{1}{\vlam_d^2}$. Thus \begin{equation}\label{eq:lammin_bound}
        \frac{1}{\vlam_d^2}=\max_{\norm{\vx}_2=1}\sum_{i\in[n]}\exp(\vt^\star_i)\inprod{\va_i}{\vx}^2\geq\max_{\norm{\vx}_2=1}\exp(\vt^\star_j)\inprod{\va_j}{\vx}^2=\norm{\va_j}_2^2\exp(\vt^\star_j)\geq \mu \exp(\norm{\vt^\star}_\infty).
    \end{equation} Let $\ell\in[n]$ such that $\vt^\star_\ell=0$. Since $\vt^\star$ minimizes $f$, we have \begin{equation}\label{eq:lammax_bound} \vc_\ell=\Tr(\mm_\ell(\vt^\star))=\norm{\mr\va_\ell}_2^2\leq\vlam_1^2\norm{\va_\ell}_2^2\leq M\vlam_1^2
    \end{equation} by Fact~\ref{fact:barthe_derivs}. Combining \eqref{eq:lammin_bound} and \eqref{eq:lammax_bound} gives \[ \frac{\vlam_1^2}{\vlam_d^2}\geq\frac{\mu \vc_{\min}}{M}\exp(\norm{\vt^\star}_\infty), \] and combining with \eqref{eq:condition_number} and rearranging gives \[ \norm{\vt^\star}_\infty\leq\log\Par{\frac{M}{\mu \vc_{\min}}\Par{\frac{4M}{\eta\Delta^2}}^{d-1}}. \] Since $f$ is invariant to translations by $\1_n$ and $\min_{i\in[n]}\vt^\star_i=0$ by assumption, we can shift $\vt^\star$ to obtain a minimizer that has extreme coordinates averaging to 0, which gives the result.
\end{proof}

With Lemma~\ref{lem:diam_bound} in hand, we must show $\frac M \mu$ and $\frac 1 \Delta$ are polynomially bounded for an appropriate choice of $\eta$, in the smoothed setting. The bulk of our remaining analysis establishes this result. 

\begin{remark}\label{rem:superlinear}
Our strategy in this section is to restrict $\eta$ to be a constant, e.g., $\eta = 0.1$, in which case our goal is to show that at most $\frac{0.9 k}{d} n$ of the vectors in $\{\va_i\}_{i \in [n]}$ lie close to any $k$-dimensional subspace. If, for instance, $\frac n d < 1.1$, this is clearly impossible, because choosing $k = 1$ implies that not even a single vector lies close to any $1$-dimensional subspace, which is false just by taking $\Span(\va_i)$ for any $i \in [n]$. Thus, the restriction $\frac n d \ge 1.1$ (or more generally, a constant bounded away from $1$) is somewhat inherent in the regime $\eta = \Omega(1)$. It is possible that our strategy can be modified to extend to even smaller $n$, but for simplicity, we focus on the setting of Assumption~\ref{assume:nd}.
\end{remark}

To frame the rest of the section, we provide a helper lemma relating the condition in Definition~\ref{def:deep} to appropriate submatrices having small singular values.

\begin{lemma}\label{lem:small_sv}
    Let $\ma=\{\va_i\}_{i\in[m]}\in\R^{d\times m}$, $k<m$, and $\Delta>0$. Suppose there exists a $k$-dimensional subspace $E$ of $\R^d$ such that $\norm{\va_i-\proj_E\va_i}_2\leq\Delta$ for all $i\in[m]$. Then $\vsig_{k+1}(\ma)\leq \sqrt{m}\Delta$.
\end{lemma}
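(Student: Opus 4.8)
The plan is to reduce the claim to the Eckart--Young--Mirsky variational characterization of singular values: for any matrix $\ma$ and any integer $r \ge 0$, one has $\vsig_{r+1}(\ma) = \min_{\rank(\mx) \le r} \normop{\ma - \mx}$. First I would take $\mx \defeq \proj_E \ma$ as a candidate low-rank approximant of $\ma$. Every column of $\mx$ lies in the subspace $E$, so $\rank(\mx) \le \dim(E) = k$, which means $\mx$ is feasible in the minimization defining $\vsig_{k+1}(\ma)$; hence $\vsig_{k+1}(\ma) \le \normop{\ma - \proj_E \ma}$.

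Next I would bound the right-hand side by passing from the operator norm to the Frobenius norm, $\normop{\ma - \proj_E\ma} = \normop{(\id_d - \proj_E)\ma} \le \normf{(\id_d - \proj_E)\ma}$, and observe that the $i^{\text{th}}$ column of $(\id_d - \proj_E)\ma$ is exactly $\va_i - \proj_E\va_i$. The hypothesis $\norm{\va_i - \proj_E\va_i}_2 \le \Delta$ for all $i \in [m]$ then gives $\normf{(\id_d - \proj_E)\ma}^2 = \sum_{i \in [m]} \norm{\va_i - \proj_E\va_i}_2^2 \le m\Delta^2$, so $\normop{(\id_d - \proj_E)\ma} \le \sqrt{m}\,\Delta$, and chaining the inequalities yields $\vsig_{k+1}(\ma) \le \sqrt{m}\,\Delta$ as claimed.

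I do not expect a genuine obstacle here: the argument is a one-line application of Eckart--Young together with the elementary bound $\normop{\cdot} \le \normf{\cdot}$. The only point requiring minor care is the degenerate case $k = d$, where $\proj_E = \id_d$ and $\vsig_{k+1}(\ma)$ is not defined; this does not arise in the uses of the lemma (e.g.\ in Lemma~\ref{lem:diam_bound} we always have $k \le d-1$). As an alternative route, one could instead invoke Weyl's inequality for singular values: writing $\ma = \proj_E\ma + (\id_d - \proj_E)\ma$ and using that $\proj_E\ma$ has rank at most $k$ gives $\vsig_{k+1}(\ma) \le \vsig_{k+1}(\proj_E\ma) + \vsig_1\big((\id_d - \proj_E)\ma\big) = 0 + \normop{(\id_d - \proj_E)\ma}$, after which the same Frobenius-norm estimate finishes the proof.
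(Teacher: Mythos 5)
Your proof is correct and is essentially the same argument as the paper's: both reduce the claim to bounding $\normop{(\id_d - \proj_E)\ma}$ by $\sqrt{m}\Delta$ via the columnwise hypothesis, the paper reaching that operator norm through the Courant--Fischer min-max characterization plus Cauchy--Schwarz, and you reaching it through Eckart--Young plus $\normop{\cdot}\le\normf{\cdot}$, which amount to the same estimates. Your aside about the degenerate case $k=d$ is also consistent with how the lemma is invoked (only for $k\le d-1$).
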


\begin{proof}
    Let $\mv\in\R^{d\times(d-k)}$ have columns that form an orthonormal basis for $E^\perp$. By assumption, \[ \norm{\mv\mv^\top\va_i}_2^2=\norm{\mv^\top\va_i}_2^2\leq\Delta^2\text{ for all }i\in[m]. \] By the min-max theorem, \begin{align*}
        \vsig_{k+1}(\ma)&=\min_{\substack{E\subseteq\R^{d}\\\dim(E)=d-k}}\max_{\substack{\vx\in E\\\norm{\vx}_2=1}}\norm{\ma^\top\vx}_2\\
        &\leq\max_{\substack{\vx\in\Span(\mv)\\\norm{\vx}_2=1}}\norm{\ma^\top\vx}_2=\max_{\substack{\vv\in\R^{d-k}\\\norm{\vv}_2=1}}\norm{\ma^\top\mv\vv}_2.
    \end{align*}
    Observe that $\ma^\top\mv$ has rows $\{\mv^\top\va_i\}_{i\in[m]}$, so by the Cauchy--Schwarz inequality, \[ \norm{\ma^\top\mv\vv}_2^2=\sum_{i\in[m]}\inprod{\mv^\top\va_i}{\vv}^2\leq\sum_{i\in[m]}\Delta^2=m\Delta^2 \] for all $\vv\in\R^{d-k}$ with $\norm{\vv}_2=1$. It follows that $\vsig_{k+1}(\ma)\leq\sqrt{m}\Delta$.
\end{proof}

In Sections~\ref{ssec:wide} and~\ref{ssec:tall}, we show that with high probability, the conclusion of Lemma~\ref{lem:small_sv} is violated for all appropriately-sized smoothed submatrices. This shows that the premise of Lemma~\ref{lem:small_sv} is also violated, which we establish in Section~\ref{ssec:smoothed_forster}, so that $\vc = \frac d n \1_n$ is indeed $(\eta, \Delta)$-deep.

\subsection{Conditioning of wide and near-square smoothed matrices}\label{ssec:wide}

Throughout this section, we let 
\[\eta\defeq1-\frac{1}{\sqrt{C}},\; K\defeq\sqrt[3]{C},\; k\in[d-1],\text{ and } m\defeq\left\lceil\frac{(1-\eta)kn}{d}\right\rceil.\]

\begin{remark}\label{rem:eta}
    We note that these definitions imply $0<\eta<1-\frac{1}{C}$ and $1<K<(1-\eta)C$. Indeed, nothing in our analysis relies on our choices of $\eta$ and $K$ other than the fact that they are constants that satisfy these inequalities, which limit our analysis in the following places.
    \begin{itemize}
        \item In Lemma~\ref{lem:wide}, we require $(1-\eta)C>1$ (equivalently, $\eta<1-\frac{1}{C}$) so that $m-k=\Omega(m)$.
        \item In Lemma~\ref{lem:square}, we require $K<(1-\eta)C$ so that $d-k=\Omega(d)$.
        \item In Lemma~\ref{lem:square} and Lemma~\ref{lem:tall}, we require $K>1$, and in Theorem~\ref{thm:smoothed}, we require $\eta>0$.
    \end{itemize}
\end{remark}

We first use the following results from the literature to establish tail bounds for the singular values of smoothed matrices in the cases $m\leq d$ and $d<m\leq Kd$, i.e., $m$ that are at most a constant factor larger than $d$. In Section~\ref{ssec:tall}, we use a different argument to handle $m > Kd$.

\begin{fact}[Theorem 1.2, \cite{Szarek91}]\label{fact:gaussian_sv}
    Let $\mg\in\R^{d\times d}$ have entries $\simiid\Nor(0,1)$. Then for all $j\in[d]$, \[ \Pr\Brack{\vsig_{d-j+1}(\mg)<\frac{\alpha j}{\sqrt{d}}}\leq\Par{\sqrt{2e}\alpha}^{j^2}. \]
\end{fact}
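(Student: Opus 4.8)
\emph{Proof plan.} This is Szarek's classical lower-tail estimate for the singular values of a square Gaussian matrix, and the plan is to invoke it as a black box. Were one to reprove it from scratch, the natural route would start from the Courant--Fischer identity
\[\vsig_{d-j+1}(\mg) = \min_{\dim W = j}\ \sup_{\vx\in W,\ \norm{\vx}_2 = 1}\norm{\mg\vx}_2,\]
so that, writing $t\defeq \alpha j/\sqrt d$, the event $\{\vsig_{d-j+1}(\mg) < t\}$ is exactly the event that $\mg$ contracts some $j$-dimensional subspace $W$ by a factor below $t$. First I would handle a \emph{fixed} $j$-dimensional $W$: fixing an orthonormal basis $\vw_1,\dots,\vw_j$ of $W$, rotation invariance of the Gaussian ensemble makes $\mg\vw_1,\dots,\mg\vw_j$ i.i.d.\ $\Nor(\0_d,\id_d)$, so $\mg$ restricted to $W$ acts as a $d\times j$ standard Gaussian matrix; since its columns are independent and $\sup_{\vx\in W,\ \norm{\vx}_2=1}\norm{\mg\vx}_2 \ge \max_{i\in[j]}\norm{\mg\vw_i}_2$, this yields, with $\vg\sim\Nor(\0_d,\id_d)$, the small-ball bound $\Pr[\,\sup_{\vx\in W,\ \norm{\vx}_2=1}\norm{\mg\vx}_2 < t\,] \le \Pr[\norm{\vg}_2 < t]^j \le (et^2/d)^{dj/2}$.

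The hard part is passing from one subspace to all of them. A crude union bound over an $\epsilon$-net of the Grassmannian $G(d,j)$ in the metric $\normop{\mproj_W - \mproj_{W'}}$ does not work: absorbing the net perturbation forces $\epsilon \asymp t/\sqrt d$, hence a net of size $\asymp (\sqrt d/t)^{j(d-j)}$, and the $d/t^2$ factors from the net cancel the $t^2/d$ factors from the per-subspace probability, leaving something exponentially large in $jd$. The fix is to work with the \emph{exact} law of the eigenvalues $\vlam_1\ge\dots\ge\vlam_d$ of the Wishart matrix $\mg^\top\mg$, which for a square Gaussian is the Laguerre ensemble with joint density proportional to $\prod_{1\le i<l\le d}(\vlam_i-\vlam_l)\prod_{i\in[d]}\exp(-\vlam_i/2)$. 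I would integrate this density over the region where the $j$ smallest eigenvalues all lie in $[0,t^2]$; splitting the Vandermonde factor into small--small, small--large and large--large pairs and bounding $\exp(-\vlam_i/2)\le 1$ on the small block, the integral over the $j$ small eigenvalues contributes a factor of order $(t^2)^{j+\binom j2} = (t^2)^{j(j+1)/2}$ --- this is what produces the $j^2$ in the exponent --- while the large-eigenvalue integral together with the ensemble's normalizing constant supplies the remaining $d$-dependent factors. Substituting $t = \alpha j/\sqrt d$ and collecting terms gives the bound $(\sqrt{2e}\,\alpha)^{j^2}$.

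The main obstacle is the sharp bookkeeping in that last step: one must bound the Vandermonde determinant and the Laguerre normalization constant (a Selberg-type integral) tightly enough to pin the exponent at exactly $j^2$ and recover the clean constant $\sqrt{2e}$, and the estimates behave differently for small $j$ versus $j$ comparable to $d$, so the argument must be organized to cover the whole range $j\in[d]$. Since this precise analysis is the content of \cite{Szarek91}, I would not reproduce it and would simply use the stated bound.
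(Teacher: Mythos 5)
The paper itself gives no proof of this statement: it is imported verbatim as a black-box fact from \cite{Szarek91} (Theorem 1.2), and your proposal ultimately does exactly the same, so the two match. One small remark on your optional sketch: the joint eigenvalue density of $\mg^\top\mg$ for square Gaussian $\mg$ carries an additional Laguerre weight $\prod_{i\in[d]}\vlam_i^{-1/2}$ that you omit, and it is precisely this factor that turns the small-eigenvalue contribution from $(t^2)^{j(j+1)/2}$ into $(t^2)^{j^2/2}=t^{j^2}$, matching the stated exponent --- but since you (correctly, like the paper) invoke the bound as cited rather than reproving it, this does not affect the validity of your answer.
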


\begin{fact}[Theorem 2.4, \cite{BanksKMS21}]\label{fact:smoothed_sv}
    Let $\mm,\mn\in\R^{d\times d}$ such that $\vsig_i(\mm)\geq\vsig_i(\mn)$ for all $i\in[d]$. Then for every $t\geq0$, there exists a joint distribution on pairs of matrices $(\mg,\mh)\in\R^{d\times d}\times\R^{d\times d}$ such that the marginals $\mg$ and $\mh$ have entries $\simiid\Nor(0,1)$ and \[ \Pr[\vsig_i(\mm+t\mg)\geq\vsig_i(\mn+t\mh)]=1\text{ for all }i\in[d]. \]
\end{fact}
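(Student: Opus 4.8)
The plan is to reduce the statement to a stochastic dominance claim for ordered singular values and then invoke Strassen's theorem to produce the coupling. First I would use rotational invariance of the Gaussian: if $\mm=\mmu_1\msig_\mm\mv_1^\top$ and $\mn=\mmu_2\msig_\mn\mv_2^\top$ are singular value decompositions, then $\mmu_1^\top\mg\mv_1$ and $\mmu_2^\top\mh\mv_2$ are again matrices with i.i.d.\ $\Nor(0,1)$ entries, so $\vsig_i(\mm+t\mg)$ has the same law as $\vsig_i(\msig_\mm+t\mg)$, and likewise for $\mn$. Hence we may assume $\mm$ and $\mn$ are diagonal with nonnegative nonincreasing diagonals $\va,\vb$ satisfying $a_i\geq b_i$. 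Writing $\vsig(\mx)$ for the vector of ordered singular values, it then suffices to show that the law of $\vsig(\msig_\mm+t\mg)$ stochastically dominates the law of $\vsig(\msig_\mn+t\mh)$ with respect to the coordinatewise partial order on $\R^d$; Strassen's theorem then yields the desired joint distribution.

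For the dominance I would pass to a dynamic picture. Realize $t\mg$ as $\mb_{t^2}$ for a $d\times d$ matrix Brownian motion $(\mb_s)_{s\geq0}$, so that $\msig_\mm+\mb_s$ is a matrix Brownian motion started at $\msig_\mm$; its vector of ordered singular values is a well-studied diffusion (a singular-value / chiral Dyson process) solving a system of stochastic differential equations of the form $\dd\vsig_i=\Par{\sum_{j\neq i}\Par{\tfrac{1}{\vsig_i-\vsig_j}+\tfrac{1}{\vsig_i+\vsig_j}}+\tfrac{c}{\vsig_i}}\dd s+\dd B_i$, driven by independent one-dimensional Brownian motions $B_i$, with almost surely no collisions among the $\vsig_i$ nor with $0$. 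Doing the same for $\msig_\mn$ and coupling so that the two singular-value systems are driven by the \emph{same} Brownian motions $B_i$, I would apply a comparison principle: the drift of $\vsig_i$ is nondecreasing in $\vsig_j$ for every $j\neq i$, since its $\vsig_j$-derivative is $\tfrac{1}{(\vsig_i-\vsig_j)^2}-\tfrac{1}{(\vsig_i+\vsig_j)^2}\geq0$, so the system is quasi-monotone; as the two trajectories start from ordered data $\va\geq\vb$, this ordering is preserved for all $s$. Evaluating at $s=t^2$ gives $\vsig_i(\msig_\mm+t\mg)\geq\vsig_i(\msig_\mn+t\mh)$ almost surely under this coupling, which is exactly the asserted coupling (and a fortiori the stochastic dominance).

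The main obstacle I expect is making the comparison fully rigorous at the boundary of the state space: the singular-value SDE system is singular wherever two coordinates coincide or a coordinate hits $0$, so one must first invoke well-posedness and non-collision results for these Gaussian ensembles, run the comparison only on the full-measure event where both trajectories remain in the open region, and extend across the exceptional set by continuity. A secondary subtlety — the reduction from coupling the $\R^d$-valued singular-value processes back to coupling the matrices — is actually automatic, since only the eigenvalue marginals enter the statement, so it suffices to build the coupling directly at the level of the two diffusions whose time-$t^2$ marginals are $\vsig(\msig_\mm+t\mg)$ and $\vsig(\msig_\mn+t\mh)$. As an alternative route avoiding stochastic calculus, one could attempt induction on $d$: the case $d=1$ follows because $|a+tZ|$ is stochastically increasing in $a\geq0$, and the inductive step would use interlacing of singular values under deletion of a row and a column — though the corner-interlacing loss of two indices makes this path noticeably more delicate.
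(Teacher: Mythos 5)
The paper does not prove this statement: it is imported verbatim as Theorem 2.4 of \cite{BanksKMS21}, so there is no internal proof to compare against. Your strategy — reduce to diagonal $\mm,\mn$ by rotation invariance, realize $t\mg$ as a matrix Brownian motion at time $t^2$, couple the two singular-value (chiral Dyson) processes through identical driving one-dimensional Brownian motions, and conclude via a quasi-monotone comparison theorem — is, as far as I can tell, essentially the route taken in the cited source, and your verification that $\partial_{\vsig_j}\bigl(\tfrac{1}{\vsig_i-\vsig_j}+\tfrac{1}{\vsig_i+\vsig_j}\bigr)\ge 0$ is the correct heart of the monotonicity. The Strassen/lifting remarks are also fine: a coordinatewise-ordered coupling of the two singular-value vectors lifts to a coupling of $(\mg,\mh)$ by sampling each matrix from its regular conditional law given its singular-value vector.

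The one genuine gap is your assertion that the process has ``almost surely no collisions among the $\vsig_i$ nor with $0$,'' together with the proposed fix of running the comparison ``on the full-measure event where both trajectories remain in the open region.'' For \emph{real square} matrices that event is not full measure: the repulsion-from-zero coefficient of the $\beta=1$, $n=m$ chiral/Laguerre process is subcritical (in the symmetrized $2d\times 2d$ picture the interaction between $+\vsig_d$ and $-\vsig_d$ gives a Bessel-type drift $c/\vsig_d$ with $c<\tfrac12$), and indeed $\det(\msig_\mm+\mb_s)$ changes sign with positive probability, which forces $\vsig_d$ to hit $0$. So the smallest singular value reaches the boundary, the SDE there acquires a reflection/local-time term, and the naive comparison argument does not apply as stated. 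This is repairable — e.g., by a comparison theorem for diffusions reflected at a common boundary, or by approximating with $\beta>1$ or non-square ensembles and passing to the limit — but it is precisely where the real work lies, and your writeup currently dismisses it by appeal to a non-collision claim that is false in this setting. (Distinct indices $i\ne j$ indeed do not collide for $\beta=1$; only the boundary at $0$ is problematic.) The inductive alternative you sketch does not obviously close either, since corner deletion loses two indices of interlacing per step, as you note.
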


We note that these results imply tail bounds for the singular values of square smoothed matrices: given $\sig>0$, $\ma\in\R^{d\times d}$, and $\mg\in\R^{d\times d}$ with entries $\simiid\Nor(0,1)$, we let $\mm\gets\ma$, $\mn\gets\0$, $t\gets\sig$, and $(\mg,\mh)$ have the distribution in Fact~\ref{fact:smoothed_sv}. Then by Fact~\ref{fact:smoothed_sv} and Fact~\ref{fact:gaussian_sv}, \begin{equation}\label{eq:smoothed_sv}
    \begin{aligned}
        \Pr\Brack{\vsig_{d-j+1}(\ma+\sig\mg)<\frac{\alpha\sig j}{\sqrt{d}}}&\leq\Pr\Brack{\vsig_{d-j+1}(\sig\mh)<\frac{\alpha\sig j}{\sqrt{d}}}\\
        &=\Pr\Brack{\vsig_{d-j+1}(\mh)<\frac{\alpha j}{\sqrt{d}}}\leq\Par{\sqrt{2e}\alpha}^{j^2}.
    \end{aligned}
\end{equation}

\begin{lemma}\label{lem:wide}
    Under Assumption~\ref{assume:nd}, let $\delta\in(0,1)$, $\sig\in(0,\half)$, $\eta\defeq1-\frac{1}{\sqrt{C}}$, $k\in[d-1]$, $m\defeq\lceil\frac{(1-\eta)kn}{d}\rceil$, $\ma\in\R^{n\times d}$, $\mg\in\R^{n\times d}$ have entries $\simiid\Nor(0,\sig^2)$, $\tma\defeq\ma+\mg$, and suppose $m\leq d$. Then for any $S\subseteq[n]$ with $|S|=m$, \[ \Pr\Brack{\vsig_{k+1}(\tma_{S:})\leq\sqrt{m}\Delta}\leq\frac{\delta}{(d-1)\binom{n}{m}},\text{ where }\Delta = \Par{\frac{\delta\sig}{n}}^{O(1)}. \]
\end{lemma}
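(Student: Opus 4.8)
The plan is to reduce the bound to a square Gaussian-smoothed matrix and then invoke the singular-value tail estimate \eqref{eq:smoothed_sv} (which follows from Facts~\ref{fact:gaussian_sv} and~\ref{fact:smoothed_sv}); the threshold $\sqrt m\Delta$ in the statement is exactly the one in Lemma~\ref{lem:small_sv}, so lower-bounding $\vsig_{k+1}(\tma_{S:})$ past it is what will eventually rule out heavy subspaces in Section~\ref{ssec:smoothed_forster}. First I would record two elementary consequences of $m=\lceil\frac{(1-\eta)kn}{d}\rceil$ under Assumption~\ref{assume:nd}. Since $1-\eta=\frac1{\sqrt C}$ and $\frac nd\ge C$, we have $\frac{(1-\eta)kn}{d}\ge\sqrt C\,k$, so $m\ge\lceil\sqrt C\,k\rceil\ge k+1$; in particular $m\le d$ guarantees $\vsig_{k+1}(\tma_{S:})$ is well-defined. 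The same estimate gives $k\le\frac m{\sqrt C}$, hence $m-k\ge\big(1-\frac1{\sqrt C}\big)m=\eta m$, a superlinear gap that is the crux of the argument.

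Fix $S\subseteq[n]$ with $|S|=m$, and pass to the square submatrix $\mb\defeq\tma_{S:[m]}\in\R^{m\times m}$ formed by the first $m$ columns of $\tma_{S:}$. Deleting columns never increases singular values, since $\tma_{S:}\tma_{S:}^\top\succeq\mb\mb^\top$ forces coordinatewise domination of eigenvalues and hence of singular values; thus $\vsig_{k+1}(\tma_{S:})\ge\vsig_{k+1}(\mb)$, and it suffices to lower-bound $\vsig_{k+1}(\mb)$. Writing $\mb=\ma_{S:[m]}+\sig\mg'$ with $\mg'\in\R^{m\times m}$ having i.i.d.\ $\Nor(0,1)$ entries, applying \eqref{eq:smoothed_sv} in dimension $m$ with index $j=m-k$ (so $m-j+1=k+1$) and a free parameter $\alpha>0$ gives
\[\Pr\Brack{\vsig_{k+1}(\mb)<\frac{\alpha\sig(m-k)}{\sqrt m}}\le\Par{\sqrt{2e}\,\alpha}^{(m-k)^2}.\]

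It remains to choose $\alpha$, hence $\Delta$. Take $\alpha\defeq\frac1{\sqrt{2e}}\Par{\frac\delta n}^{2/\eta^2}$, so the right-hand side becomes $\Par{\frac\delta n}^{2(m-k)^2/\eta^2}$. Since $m-k\ge\eta m$, its exponent is at least $2m^2\ge m+1$ (using $m\ge1$), and since $\frac\delta n<1$, $\binom nm\le n^m$, and $d-1<n$, we get $\Par{\frac\delta n}^{2(m-k)^2/\eta^2}\le\frac{\delta}{n^{m+1}}\le\frac{\delta}{(d-1)\binom nm}$, which is the claimed probability bound. For $\Delta$, note $\frac{\alpha\sig(m-k)}{m}\ge\eta\alpha\sig=\frac{\eta}{\sqrt{2e}}\Par{\frac\delta n}^{2/\eta^2}\sig$, which exceeds $\Par{\frac{\delta\sig}{n}}^c$ once $c$ is a large enough constant depending only on $C$ (using $\sig<1$ and $\frac\delta n\le\frac12$ to absorb $\sig^{2/\eta^2}\le\sig$ and the leading constant). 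Then $\Delta\defeq\Par{\frac{\delta\sig}{n}}^c=\Par{\frac{\delta\sig}{n}}^{O(1)}$ satisfies $\sqrt m\,\Delta\le\frac{\alpha\sig(m-k)}{\sqrt m}$, so $\{\vsig_{k+1}(\tma_{S:})\le\sqrt m\Delta\}\subseteq\{\vsig_{k+1}(\mb)<\frac{\alpha\sig(m-k)}{\sqrt m}\}$ and the displayed bound finishes the proof.

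The main difficulty is to keep $\Delta$ polynomially — rather than exponentially — small even though the final conclusion is union-bounded over all $\binom nm$ index sets $S$ (and all $k\in[d-1]$): an exponentially small $\Delta$ would propagate through Lemma~\ref{lem:diam_bound} and degrade the $O(d\log(1/\sig))$ conditioning estimate of Theorem~\ref{thm:smoothed}. This is resolved precisely by $(m-k)^2\ge\eta^2m^2$: the tail in \eqref{eq:smoothed_sv} decays like $\alpha^{\Omega(m^2)}$, whose exponent is superlinear in $m$ and so dominates $\log\binom nm=O(m\log n)$, leaving only a polynomial loss $\alpha\approx(\delta/n)^{O(1)}$. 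The sole point needing care is verifying that the exponent $c$ in $\Delta$ depends only on $C$ (equivalently on $\eta$), and not on $n$, $d$, $m$, or $k$.
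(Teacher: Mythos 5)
Your proposal is correct and follows essentially the same route as the paper's proof: delete columns to pass to an $m\times m$ submatrix, apply the smoothed singular-value tail bound \eqref{eq:smoothed_sv} with $j=m-k$, use $m-k=\Omega(m)$ (you get $m-k\ge\eta m$ where the paper uses the slightly weaker $m-k\ge \tfrac{\eta}{2}m$) so that the exponent $(m-k)^2$ beats $\log\binom{n}{m}$, and choose $\Delta=(\delta\sig/n)^{O(1)}$ accordingly. The only differences are cosmetic (PSD domination vs.\ min-max for the column-deletion step, and slightly different constants in $\alpha$ and $\Delta$).
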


\begin{proof}
    Remove any $d-m$ columns of $\tma_{S:}$ to obtain $\mb\in\R^{m\times m}$. We have $\vsig_{k+1}(\tma_{S:})\geq\vsig_{k+1}(\mb)$ by the min-max theorem: \[ \vsig_{k+1}(\tma_{S:})=\max_{\substack{E\subseteq\R^{d}\\\dim(E)=k+1}}\min_{\substack{\vx\in E\\\norm{\vx}_2=1}}\norm{\tma_{S:}\vx}_2\geq\max_{\substack{E\subseteq\R^{m}\\\dim(E)=k+1}}\min_{\substack{\vx\in E\\\norm{\vx}_2=1}}\norm{\mb\vx}_2=\vsig_{k+1}(\mb). \] Then, by \eqref{eq:smoothed_sv} with $d\gets m$ and $j\gets m-k$, \begin{equation}\label{eq:wide_bound}
        \Pr\Brack{\vsig_{k+1}(\tma_{S:})\leq\frac{\alpha\sig(m-k)}{\sqrt{m}}}\leq\Pr\Brack{\vsig_{k+1}(\mb)\leq\frac{\alpha\sig(m-k)}{\sqrt{m}}}\leq\Par{\sqrt{2e}\alpha}^{(m-k)^2}.
    \end{equation} Let $C'\defeq\half\Par{1-\frac{1}{(1-\eta)C}}>0$, and note that \[ m-k\geq\Par{\frac{(1-\eta)n}{d}-1}k=\Par{1-\frac{d}{(1-\eta)n}}\Par{\frac{(1-\eta)kn}{d}}\geq C'm. \] Setting $\alpha=\frac{m\Delta}{\sig(m-k)}\leq\frac{\Delta}{C'\sig}$ in \eqref{eq:wide_bound}, \[ \Pr\Brack{\vsig_{k+1}(\tma_{S:})\leq\sqrt{m}\Delta}\leq\Par{\sqrt{2e}\alpha}^{(m-k)^2}\leq\Par{\frac{\sqrt{2e}\Delta}{C'\sig}}^{(m-k)^2}. \] Now, we can set $\Delta=\frac{C'}{\sqrt{2e}}(\frac{\delta\sig}{n})^{\frac{2}{C'}}\leq\frac{C'\sig}{\sqrt{2e}}(\frac{\delta}{n})^{\frac{2}{C'}}$ to give \[ \Par{\frac{\sqrt{2e}\Delta}{C'\sig}}^{(m-k)^2}\leq\Par{\frac{\delta}{n}}^{\frac{2(m-k)^2}{C'}}\leq\Par{\frac{\delta}{n}}^{2m(m-k)}\leq\Par{\frac{\delta}{n}}^{m+1}\leq\frac{\delta^{m+1}}{n\binom{n}{m}}\leq\frac{\delta}{(d-1)\binom{n}{m}}, \] which establishes the claim.
\end{proof}

\begin{lemma}\label{lem:square}
    In the setting of Lemma~\ref{lem:wide}, the result holds if we suppose instead that $d<m\leq Kd$, where $K\defeq\sqrt[3]{C}$.
\end{lemma}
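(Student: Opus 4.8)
The plan is to mirror the proof of Lemma~\ref{lem:wide} almost line for line, making one structural change to accommodate the fact that here $m > d$: rather than deleting columns of $\tma_{S:}$ to obtain a square matrix, I would delete rows. Concretely, I would fix $S \subseteq [n]$ with $|S| = m$, let $\mb \in \R^{d \times d}$ be the submatrix of $\tma_{S:}$ obtained by keeping any $d$ of its $m$ rows, and observe that for every unit $\vx \in \R^d$ the quantity $\norms{\mb\vx}_2^2$ is a sub-sum of the squared coordinates of $\tma_{S:}\vx$; hence $\norms{\mb\vx}_2 \le \norms{\tma_{S:}\vx}_2$, and the min--max characterization of singular values gives $\vsig_{k+1}(\tma_{S:}) \ge \vsig_{k+1}(\mb)$. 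The point of this reduction is that $\mb$ is again a smoothed square matrix --- a fixed deterministic submatrix of $\ma$ plus a $d \times d$ matrix with i.i.d.\ $\Nor(0,\sig^2)$ entries --- so \eqref{eq:smoothed_sv} applies to it with ambient dimension $d$.

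Next I would apply \eqref{eq:smoothed_sv} to $\mb$ with $j \gets d - k$ (valid since $k \le d - 1$), obtaining $\Pr[\vsig_{k+1}(\mb) < \alpha\sig(d-k)/\sqrt d] \le (\sqrt{2e}\alpha)^{(d-k)^2}$ for every $\alpha > 0$. To make this bound useful I need $d - k = \Omega(d)$, and this is exactly where the hypothesis $m \le Kd$ enters. From $m \ge (1-\eta)kn/d \ge (1-\eta)Ck$ (using Assumption~\ref{assume:nd}) together with $m \le Kd$, and plugging in $(1-\eta)C = \sqrt C$ and $K = \sqrt[3]{C}$, one gets $k \le \frac{Kd}{(1-\eta)C} = C^{-1/6}d$, so $d - k \ge C''d$ with $C'' \defeq 1 - C^{-1/6} \in (0,1)$ a constant (this is the condition $K < (1-\eta)C$ flagged in Remark~\ref{rem:eta}). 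I would then choose $\alpha$ so that $\alpha\sig(d-k)/\sqrt d = \sqrt m\,\Delta$, i.e.\ $\alpha = \sqrt{md}\,\Delta/(\sig(d-k)) \le \sqrt K\,\Delta/(C''\sig)$ using $m \le Kd$ and $d - k \ge C''d$, to arrive at
\[\Pr\Brack{\vsig_{k+1}(\tma_{S:}) \le \sqrt m\,\Delta} \le \Pr\Brack{\vsig_{k+1}(\mb) \le \sqrt m\,\Delta} \le \Par{\frac{\sqrt{2eK}\,\Delta}{C''\sig}}^{(d-k)^2} \le \Par{\frac{\sqrt{2eK}\,\Delta}{C''\sig}}^{(C'')^2 d^2}.\]

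Finally I would pick $\Delta$ polynomially small in $\delta\sig/n$ to push the right-hand side below $\delta/((d-1)\binom{n}{m})$. Since $\binom{n}{m} \le n^m \le n^{Kd}$, it suffices that $(C'')^2 d^2 \log\frac{C''\sig}{\sqrt{2eK}\,\Delta} \ge \log(d-1) + Kd\log n + \log(1/\delta)$; the right-hand side is $O(d\log(n/\delta))$, and $d \ge 2$ whenever $[d-1]\neq\emptyset$, so $\Delta = \frac{C''\sig}{\sqrt{2eK}}\Par{\frac{\delta\sig}{n}}^{b}$ for a large enough constant $b = b(C)$ works, which is of the promised form $(\frac{\delta\sig}{n})^{O(1)}$.

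I expect the only nontrivial step to be the constant-chasing in the second paragraph: tracking $\eta = 1 - C^{-1/2}$, $K = C^{1/3}$, $n \ge Cd$, and $m \le Kd$ carefully enough to certify that $d - k$ is a fixed positive fraction of $d$. Everything else is a direct transcription of Lemma~\ref{lem:wide}. This step is also what explains why a genuinely different argument is required in Section~\ref{ssec:tall} once $m > Kd$: there $k$ may be as large as $d - 1$, so $d - k$ need not be $\Omega(d)$, and the exponent $(d-k)^2$ can no longer dominate $\log\binom{n}{m}$, which in that regime grows with $m$.
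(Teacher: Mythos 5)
Your proposal is correct and follows essentially the same route as the paper's proof: delete $m-d$ rows to obtain a square $d\times d$ smoothed matrix, invoke the min--max characterization to transfer the singular-value bound, apply \eqref{eq:smoothed_sv} with $j = d-k$, and use $m \le Kd$ together with $K < (1-\eta)C$ to certify $d-k \ge (1 - C^{-1/6})d$ before choosing $\Delta = (\frac{\delta\sig}{n})^{O(1)}$. Your constant $C'' = 1 - C^{-1/6}$ coincides with the paper's $K' = 1 - \frac{K}{(1-\eta)C}$, and the closing union-bound arithmetic is equivalent to the paper's explicit choice $\Delta = \frac{K'}{\sqrt{2eK}}(\frac{\delta\sig}{n})^{2K/K'}$.
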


\begin{proof}
    Similarly to the proof of Lemma~\ref{lem:wide}, remove any $m-d$ rows of $\tma_{S:}$ to obtain $\mb\in\R^{d\times d}$. Then \begin{equation}\label{eq:square_bound}
        \Pr\Brack{\vsig_{k+1}(\tma_{S:})\leq\frac{\alpha\sig(d-k)}{\sqrt{d}}}\leq\Pr\Brack{\vsig_{k+1}(\mb)\leq\frac{\alpha\sig(d-k)}{\sqrt{d}}}\leq\Par{\sqrt{2e}\alpha}^{(d-k)^2}.
    \end{equation}
    Since $m\leq Kd$, we have $\frac{(1-\eta)kn}{d}\leq Kd$, which implies $k\leq\frac{Kd^2}{(1-\eta)n}\leq\frac{Kd}{(1-\eta)C}<d$. Thus $d-k\geq K'd$, where $K'\defeq1-\frac{K}{(1-\eta)C}>0$. Setting $\alpha=\frac{\sqrt{md}\Delta}{\sig(d-k)}\leq\frac{\sqrt{K}\Delta}{K'\sig}$ in \eqref{eq:square_bound}, \[ \Pr\Brack{\vsig_{k+1}(\tma_{S:})\leq\sqrt{m}\Delta}\leq\Par{\sqrt{2e}\alpha}^{(d-k)^2}\leq\Par{\frac{\sqrt{2eK}\Delta}{K'\sig}}^{(d-k)^2}. \] Now, we can set $\Delta=\frac{K'}{\sqrt{2eK}}\Par{\frac{\delta\sig}{n}}^{\frac{2K}{K'}}\leq\frac{K'\sig}{\sqrt{2eK}}\Par{\frac{\delta}{n}}^{\frac{2K}{K'}}$ to give \[ \Par{\frac{\sqrt{2eK}\Delta}{K'\sig}}^{(d-k)^2}\leq\Par{\frac{\delta}{n}}^{\frac{2K(d-k)^2}{K'}}\leq\Par{\frac{\delta}{n}}^{2Kd}\leq\Par{\frac{\delta}{n}}^{m+1}\leq\frac{\delta^{m+1}}{n\binom{n}{m}}\leq\frac{\delta}{(d-1)\binom{n}{m}}, \] which establishes the claim.
\end{proof}

\subsection{Conditioning of tall smoothed matrices}\label{ssec:tall}

In this section we provide tools for lower bounding the smallest singular value of a random $\Omega(d) \times d$ smoothed matrix $\tma = \ma + \mg$, where $\mg$ has entries $\simiid \Nor(0, \sig^2)$. Specifically we provide estimates, for sufficiently small $\alpha$, on the quantity
\begin{equation}\label{eq:square_tail_bound}
\Pr\Brack{\vsig_d\Par{\tma} < \alpha}.
\end{equation}
We first use the following standard result bounding $\normsop{\tma}$.
\begin{lemma}\label{lem:gaussian_opnorm}
Let $\sig\in(0,1)$, $m\geq d$, $\ma\in\R^{m\times d}$ have rows $\{\va_i\}_{i\in[m]}$ such that $\norm{\va_i}_2=1$ for all $i\in[m]$, $\mg\in\R^{m\times d}$ have entries $\simiid\Nor(0,\sig^2)$, and $\tma\defeq\ma+\mg$. Then there exists a constant $\cop>0$ such that for all $\delta\in(0,1)$,
\[\Pr\Brack{\normsop{\tma} > \cop \sqrt{m + \log\Par{\frac 1 \delta}}} \le \delta.\]
\end{lemma}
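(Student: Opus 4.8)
The plan is a routine triangle-inequality decomposition combined with standard Gaussian matrix concentration. Write $\tma = \ma + \mg$ and use subadditivity of the operator norm, $\normsop{\tma} \le \normsop{\ma} + \normsop{\mg}$. For the first term, since $\normsop{\ma} \le \normsf{\ma}$ and $\normsf{\ma}^2 = \sum_{i \in [m]} \norm{\va_i}_2^2 = m$ by the unit-norm row assumption, we get $\normsop{\ma} \le \sqrt m$ deterministically. It remains to control $\normsop{\mg}$.

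For the Gaussian term, write $\mg = \sig \mg'$ where $\mg' \in \R^{m \times d}$ has i.i.d.\ $\Nor(0,1)$ entries. I would invoke two classical facts: (i) $\E\normsop{\mg'} \le \sqrt m + \sqrt d \le 2\sqrt m$, using $d \le m$; and (ii) the map $\mm \mapsto \normsop{\mm}$ is $1$-Lipschitz with respect to $\normsf{\cdot}$, so by Gaussian concentration of Lipschitz functions, $\Pr[\normsop{\mg'} \ge \E\normsop{\mg'} + s] \le \exp(-s^2/2)$ for all $s \ge 0$. Combining, $\Pr[\normsop{\mg} \ge 2\sig\sqrt m + \sig s] \le \exp(-s^2/2)$, and choosing $s = \sqrt{2\log(1/\delta)}$ gives $\normsop{\mg} \le 2\sig\sqrt m + \sig\sqrt{2\log(1/\delta)}$ with probability $\ge 1 - \delta$.

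On this event, using $\sig < 1$,
\[
\normsop{\tma} \le \sqrt m + 2\sig\sqrt m + \sig\sqrt{2\log(1/\delta)} \le 3\sqrt m + \sqrt{2\log(1/\delta)} \le 3\Par{\sqrt m + \sqrt{\log(1/\delta)}} \le 3\sqrt 2\, \sqrt{m + \log(1/\delta)},
\]
where the last step is Cauchy--Schwarz. Thus the claim holds with $\cop \defeq 3\sqrt 2$ (any constant $\ge 5$ suffices).

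There is no real obstacle here; the only care needed is to cite appropriately stated versions of the Gaussian operator-norm expectation bound and the Lipschitz concentration inequality (e.g., from a standard reference on non-asymptotic random matrix theory), and to track that the $\sqrt m$ and $\sqrt{\log(1/\delta)}$ terms can be merged into a single $\sqrt{m + \log(1/\delta)}$ at the cost of an absolute constant.
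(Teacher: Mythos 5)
Your proof is correct and follows essentially the same route as the paper: both bound $\normsop{\ma}\le\normsf{\ma}\le\sqrt m$ deterministically and control $\normsop{\mg}$ with a standard Gaussian operator-norm tail bound (the paper cites Theorem 4.4.5 of Vershynin directly, while you rederive the same tail from the expectation bound plus Lipschitz concentration). No gaps.
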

\begin{proof}
By Theorem 4.4.5, \cite{Vershynin24}, we have that for some constant $\cop > 2$,
\[\Pr\Brack{\normop{\mg} > \frac{\cop} 2 \sqrt{m + \log\Par{\frac 1 \delta}}} \le \delta,\]
where we used $\sig \le 1$.
The conclusion follows as $\normsop{\ma} \le \normsf{\ma} \le \sqrt m$. 
\end{proof}

We also require the definition of an $\eps$-net and a standard bound on its size.

\begin{definition}\label{def:net}
Let $S \subseteq \R^d$, $\net \subset S$ be finite, and $\eps \in (0, 1)$. We say that $\net$ is an $\eps$-net of $S$ if 
\[\sup_{\vu \in S} \min_{\vv \in \net} \norm{\vv - \vu}_2 \le \eps. \]
\end{definition}

\begin{fact}[Corollary 4.2.13, \cite{Vershynin24}]\label{fact:ball_net_size}
Let $\partial \ball_2(1)$ denote the boundary of the unit norm ball in $\R^d$. For all $\eps \in (0, 1)$, there exists an $\eps$-net of $\partial \ball_2(1)$ with $|\net| \le (\frac 3 \eps)^d$.
\end{fact}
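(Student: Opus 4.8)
The plan is to prove this via the standard greedy packing argument: construct the net as a maximal $\eps$-separated subset of the sphere, then bound its cardinality by a volumetric comparison. First I would let $\net \subseteq \partial\ball_2(1)$ be a maximal subset with the property that $\norm{\vu - \vv}_2 > \eps$ for all distinct $\vu, \vv \in \net$ (such a finite set exists since $\partial\ball_2(1)$ is compact). By maximality, every $\vx \in \partial\ball_2(1)$ satisfies $\min_{\vv \in \net}\norm{\vx - \vv}_2 \le \eps$ — otherwise $\net \cup \{\vx\}$ would remain $\eps$-separated, contradicting maximality. Hence $\net$ is an $\eps$-net of $\partial\ball_2(1)$ in the sense of Definition~\ref{def:net}.

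Next I would bound $|\net|$. Since the points of $\net$ are pairwise at distance strictly greater than $\eps$, the open balls $\{\ball_2(\vv, \frac\eps2)\}_{\vv \in \net}$ are pairwise disjoint, and each is contained in $\ball_2(1 + \frac\eps2)$ by the triangle inequality (using $\norm{\vv}_2 = 1$). Comparing Lebesgue volumes in $\R^d$ and using the scaling $\Vol(\ball_2(r)) = r^d\Vol(\ball_2(1))$,
\[|\net| \cdot \Par{\frac\eps2}^d \Vol\Par{\ball_2(1)} = \sum_{\vv \in \net}\Vol\Par{\ball_2\Par{\vv, \frac\eps2}} \le \Vol\Par{\ball_2\Par{1 + \frac\eps2}} = \Par{1 + \frac\eps2}^d \Vol\Par{\ball_2(1)}.\]
Dividing through by $(\frac\eps2)^d\Vol(\ball_2(1))$ gives $|\net| \le \Par{\frac{2 + \eps}{\eps}}^d \le \Par{\frac3\eps}^d$, where the last step uses $\eps < 1$ so that $2 + \eps < 3$.

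This argument is essentially routine; the only points requiring slight care are the separation-to-covering duality (a maximal $\eps$-separated set is automatically an $\eps$-net) and checking that the small disjoint balls nest inside the enlarged ball $\ball_2(1 + \frac\eps2)$ so the volume bound is valid. Since neither presents a genuine obstacle, I expect the proof to go through directly as sketched.
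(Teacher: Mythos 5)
Your proof is correct and is exactly the standard packing-versus-covering volumetric argument; the paper does not prove this statement itself but cites it as Corollary 4.2.13 of Vershynin's book, whose proof is the same maximal-separated-set-plus-volume-comparison you give. Nothing is missing.
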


We next observe that it suffices to provide estimates on a net, given an operator norm bound.

\begin{lemma}\label{lem:net_suffices}
Let $m \ge d$, $\eps \in (0, 1)$, and $\net$ be an $\eps$-net of $\partial \ball_2(1) \subset \R^d$, and suppose that $\tma \in \R^{m \times d}$ satisfies $\normsop{\tma} \le \rho$. Then $\vsig_d(\tma) \ge \min_{\vv \in \net} \norms{\tma \vv}_2 - \eps\rho$.
\end{lemma}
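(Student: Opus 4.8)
The plan is to prove the contrapositive-flavored statement directly: take any unit vector $\vu \in \partial\ball_2(1)$, find the nearest net point $\vv \in \net$, write $\vu = \vv + \vw$ where $\norm{\vw}_2 \le \eps$, and estimate $\norm{\tma\vu}_2$ from below using the triangle inequality and the operator norm bound. First I would recall the variational characterization $\vsig_d(\tma) = \min_{\norm{\vu}_2 = 1} \norm{\tma\vu}_2$, valid since $m \ge d$ so $\tma$ has $d$ singular values. Fix the minimizing $\vu$ (or any unit vector, since we want a uniform lower bound). By Definition~\ref{def:net}, there is $\vv \in \net$ with $\norm{\vu - \vv}_2 \le \eps$.

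Then the key computation is
\[
\norm{\tma\vu}_2 \ge \norm{\tma\vv}_2 - \norm{\tma(\vu - \vv)}_2 \ge \norm{\tma\vv}_2 - \normsop{\tma}\norm{\vu - \vv}_2 \ge \min_{\vv' \in \net}\norm{\tma\vv'}_2 - \eps\rho,
\]
where the first step is the triangle inequality, the second uses the definition of the operator norm, and the third uses $\normsop{\tma} \le \rho$ together with $\norm{\vu - \vv}_2 \le \eps$ and the fact that $\vv$ belongs to $\net$. Since $\vu$ was an arbitrary unit vector — in particular the one achieving $\vsig_d(\tma)$ — taking the infimum over $\vu$ on the left gives $\vsig_d(\tma) \ge \min_{\vv \in \net}\norm{\tma\vv}_2 - \eps\rho$, which is exactly the claim.

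This lemma is entirely routine: there is no real obstacle, only a standard net-covering argument. The one point requiring a moment's care is ensuring the variational formula for $\vsig_d$ applies (it does, since $m \ge d$ is assumed), and that the net point $\vv$ whose $\norm{\tma\vv}_2$ appears in the bound is correctly the same $\vv$ within $\eps$ of $\vu$, so that replacing it by the minimum over all of $\net$ only decreases the bound — which is valid precisely because $\vv \in \net$. I would present the three-line display above essentially verbatim as the full proof.
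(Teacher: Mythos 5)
Your proof is correct and follows essentially the same argument as the paper: take the unit vector realizing $\vsig_d(\tma)$, pass to the nearest net point, and apply the triangle inequality together with the operator norm bound. No gaps.
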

\begin{proof}
Let $\vu$ realize $\vsig_d(\tma)$ in the definition $\vsig_d(\tma) = \min_{\vu \in \partial \ball_2(1)} \norms{\tma\vu}_2$.
Then if we define $\vv \defeq \argmin_{\vv \in \net}\norms{\vu - \vv}_2$, the result follows from the triangle inequality:
\[\norms{\tma \vu}_2 \ge \norms{\tma \vv}_2 - \normsop{\tma}\norm{\vu-\vv}_2 \ge \min_{\vv \in \net}\norms{\tma \vv}_2 - \eps\rho.\]
\end{proof}

Finally, we provide tail bounds on the contraction given by $\tma$ on a single fixed vector.

\begin{lemma}\label{lem:single_vector_small}
In the setting of Lemma~\ref{lem:gaussian_opnorm}, let $\vv \in \R^d$ have $\norms{\vv}_2 = 1$. Then,
\[\Pr\Brack{\norms{\tma \vv}_2 < \alpha} \le \Par{\frac \alpha \sig}^m \text{ for all } \alpha \in (0, 1).\]
\end{lemma}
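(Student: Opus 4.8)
The key observation is that the coordinates of the random vector $\tma\vv \in \R^m$ are independent, one-dimensional Gaussians with common variance $\sigma^2$. Concretely, writing $\va_i^\top$ for the $i$-th row of $\ma$ and $\mg_{i:}$ for the $i$-th row of $\mg$, the $i$-th coordinate of $\tma\vv$ is $[\tma\vv]_i = \inprod{\va_i}{\vv} + \inprod{\mg_{i:}}{\vv}$, where $\inprod{\mg_{i:}}{\vv} \sim \Nor(0, \sigma^2 \norm{\vv}_2^2) = \Nor(0, \sigma^2)$ since $\norm{\vv}_2 = 1$. Because the rows of $\mg$ are independent and $\ma$ is deterministic, the random variables $\{[\tma\vv]_i\}_{i \in [m]}$ are mutually independent, with $[\tma\vv]_i \sim \Nor(\inprod{\va_i}{\vv}, \sigma^2)$.

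Next I would reduce the norm bound to a product over coordinates: the event $\{\norms{\tma\vv}_2 < \alpha\}$ implies $\{\abs{[\tma\vv]_i} < \alpha\}$ for every $i \in [m]$, since $\norms{\tma\vv}_2^2 = \sum_{i \in [m]} [\tma\vv]_i^2 \ge [\tma\vv]_i^2$. Hence by independence,
\[\Pr\Brack{\norms{\tma\vv}_2 < \alpha} \le \prod_{i \in [m]} \Pr\Brack{\abs{[\tma\vv]_i} < \alpha}.\]
Finally I would invoke Gaussian anti-concentration: the density of $\Nor(\mu, \sigma^2)$ is bounded everywhere by $\frac{1}{\sqrt{2\pi}\sigma}$, so for each $i$ we have $\Pr[\abs{[\tma\vv]_i} < \alpha] \le \frac{2\alpha}{\sqrt{2\pi}\sigma} \le \frac{\alpha}{\sigma}$, using $\sqrt{2/\pi} < 1$. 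Combining the last two displays gives $\Pr[\norms{\tma\vv}_2 < \alpha] \le (\alpha/\sigma)^m$ as claimed; note the bound is vacuous unless $\alpha < \sigma$, so there is no issue when $\alpha/\sigma \ge 1$.

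There is no real obstacle here — the argument is elementary. The only point requiring a small amount of care is confirming that the coordinates of $\tma\vv$ are genuinely independent (which follows from independence of the rows of $\mg$, not just independence of its entries) and that the mean shifts $\inprod{\va_i}{\vv}$ are irrelevant because the density-sup bound for a Gaussian does not depend on its mean. Everything else is a one-line union-to-product step and a standard density bound.
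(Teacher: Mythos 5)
Your proposal is correct and matches the paper's own proof essentially step for step: both reduce the event to each coordinate of $\tma\vv$ being at most $\alpha$ in magnitude, use independence of the coordinates (each $\sim \Nor(\inprods{\va_i}{\vv}, \sig^2)$), and bound each factor by $\frac{2\alpha}{\sqrt{2\pi}\sig} \le \frac{\alpha}{\sig}$ via the Gaussian density sup. No further comment is needed.
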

\begin{proof}
Observe that if $\norms{\tma \vv}_2 \le \alpha$, then every coordinate of $\tma \vv$ is bounded by $\alpha$. Each coordinate of $\tma \vv$ is distributed independently as $\Nor(\inprod{\va_i}{\vv}, \sig^2)$. Now the claim follows because for all $i \in [m]$,
\begin{align*}\Pr_{\xi \sim \Nor(\inprod{\va_i}{\vv}, \sig^2)}\Brack{\Abs{\xi} < \alpha} &= \Pr_{\xi \sim \Nor(\frac 1 \sig \inprod{\va_i}{\vv}, 1)}\Brack{\Abs{\xi} < \frac \alpha \sig} \\
&= \frac 1 {\sqrt {2\pi}} \int_{-\frac \alpha \sig}^{\frac \alpha \sig} \exp\Par{-\frac{(\xi-\frac 1 \sig \inprod{\va_i}{\vv})^2}{2}}\dd \xi \le \frac \alpha \sig.\end{align*}
\end{proof}

We can now prove our main tail bound on $\vsig_d(\tma)$.

\begin{lemma}\label{lem:sigd_rectangular}
In the setting of Lemma~\ref{lem:gaussian_opnorm}, suppose that $m\geq Kd$ for a constant $K>1$. Then there exists a constant $\beta>0$ such that
\[\Pr\Brack{\vsig_d\Par{\tma} < \alpha} \le 2\Par{\frac{2\alpha}{\sig}}^{\frac{(K-1)m}{2K}} \text{ for all } \alpha \in \Par{0, \frac {\sig^\frac{K+3}{K-1}} {\beta m^{\frac{2}{K-1}}}}.\]
\end{lemma}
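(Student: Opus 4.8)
The plan is to combine the operator-norm bound (Lemma~\ref{lem:gaussian_opnorm}), the net-covering reduction (Lemma~\ref{lem:net_suffices}), and the single-vector tail bound (Lemma~\ref{lem:single_vector_small}) via a union bound over a sufficiently fine net of the sphere. First I would fix $\delta$ appropriately and invoke Lemma~\ref{lem:gaussian_opnorm} to get that with probability $\ge 1 - \delta$, $\normsop{\tma} \le \cop\sqrt{m + \log(1/\delta)}$; since $m \ge Kd \ge d$ and we will ultimately take $\delta$ polynomially small in the relevant quantities, this gives $\normsop{\tma} = O(\sqrt m)$ up to logarithmic factors, which I will absorb into the constant $\beta$. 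Next I would take $\net$ to be an $\eps$-net of $\partial\ball_2(1) \subset \R^d$ with $|\net| \le (3/\eps)^d$ (Fact~\ref{fact:ball_net_size}), for a value of $\eps$ to be chosen. By Lemma~\ref{lem:net_suffices}, conditioned on the operator-norm event, $\vsig_d(\tma) \ge \min_{\vv \in \net}\norms{\tma\vv}_2 - \eps\rho$ where $\rho = O(\sqrt m)$. So if I want $\vsig_d(\tma) \ge \alpha$ to fail, either the operator-norm event fails, or some net point $\vv$ has $\norms{\tma\vv}_2 < \alpha + \eps\rho \le 2\alpha$, provided I choose $\eps$ so that $\eps\rho \le \alpha$, i.e.\ $\eps \approx \alpha / \sqrt m$ (this is where the constraint $\alpha \lesssim \sig^{(K+3)/(K-1)}/m^{2/(K-1)}$ originates, since $\eps$ must stay in $(0,1)$ and the net cardinality $(3/\eps)^d$ must be beaten by the tail decay).

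Then I would union-bound Lemma~\ref{lem:single_vector_small} over the net: the probability that some $\vv \in \net$ has $\norms{\tma\vv}_2 < 2\alpha$ is at most $|\net| \cdot (2\alpha/\sig)^m \le (3/\eps)^d (2\alpha/\sig)^m$. Substituting $\eps \approx \alpha/\sqrt m$ makes $(3/\eps)^d \approx (3\sqrt m/\alpha)^d$, and I need this times $(2\alpha/\sig)^m$ to be small. Since $m \ge Kd$, writing $(2\alpha/\sig)^m = (2\alpha/\sig)^{m - d}(2\alpha/\sig)^d$ and pairing the $(2\alpha/\sig)^d$ factor against $(3\sqrt m/\alpha)^d = (3\sqrt m \cdot \sig / (2\sig\alpha))^d$ shows the product of these two is $(6\sqrt m/\sig)^d$, a factor that is controlled once $\alpha$ is small enough relative to $\sig$ and $m$ — precisely the upper limit $\alpha < \sig^{(K+3)/(K-1)}/(\beta m^{2/(K-1)})$, chosen so that the leftover decay $(2\alpha/\sig)^{m-d}$ with $m - d \ge \frac{K-1}{K}m$ dominates. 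After bookkeeping, the bound becomes $(2\alpha/\sig)^{(K-1)m/(2K)}$ up to absorbing constants, and adding the $\delta$ from the operator-norm failure (which we can take to be at most this same quantity, or folded into the factor of $2$) yields the stated $2(2\alpha/\sig)^{(K-1)m/(2K)}$.

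The main obstacle will be carefully choosing the constants $\eps$, $\beta$, and $\delta$ so that all three competing quantities — the net cardinality $(3/\eps)^d$, the per-point tail $(2\alpha/\sig)^m$, and the operator-norm slack $\eps\rho \le \alpha$ — balance, while keeping the final exponent exactly $(K-1)m/(2K)$ rather than something messier. In particular I need to verify that the logarithmic overhead $\sqrt{\log(1/\delta)}$ in the operator-norm bound does not interfere: setting $\delta$ to be, say, $(2\alpha/\sig)^{(K-1)m/(2K)}$ itself (or a fixed root thereof) makes $\log(1/\delta) = O(m\log(\sig/\alpha))$, so $\rho = O(\sqrt{m\log(\sig/\alpha)})$, and the extra $\sqrt{\log(\sig/\alpha)}$ factor in the requirement $\eps \lesssim \alpha/\rho$ just slightly shrinks the allowed range of $\alpha$ — this is absorbed by increasing $\beta$ and is the reason the exponent on $\sig$ in the upper limit is $(K+3)/(K-1)$ rather than something smaller. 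The rest is routine: plug in, take logarithms, and check the exponent arithmetic using $m \ge Kd$.
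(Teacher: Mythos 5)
Your proposal is correct and follows essentially the same route as the paper's proof: fix the failure probability $\delta$ of the operator-norm event to be (half of) the target bound so that $\rho = O(\sqrt{m\log(\sig/\alpha)})$, take an $\eps$-net with $\eps = \alpha/\rho$, union-bound Lemma~\ref{lem:single_vector_small} at threshold $2\alpha$ over the net, and use $d \le m/K$ to let a $\frac{(K-1)m}{2K}$ portion of the exponent survive after the net cardinality and the $\sqrt{\log(\sig/\alpha)}$ overhead are absorbed into the restricted range of $\alpha$ (which is exactly where the $\sig^{(K+3)/(K-1)}/m^{2/(K-1)}$ threshold comes from in the paper as well). The only difference is cosmetic bookkeeping in how $(2\alpha/\sig)^m$ is split before pairing against the net size.
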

\begin{proof}
Let $\delta\defeq2\Par{\frac{2\alpha}{\sig}}^{\frac{(K-1)m}{2K}}$. By Lemma~\ref{lem:gaussian_opnorm}, there exists $\cop>0$ such that \[ \Pr\Brack{\normsop{\tma}>\rho}\leq\frac{\delta}{2}\text{ for } \rho \defeq \cop\sqrt{m + \log\Par{\frac 2 \delta}}. \]
Let $L>0$ be a constant such that $\sqrt{\log(\frac{c}{2})}\leq Lc^{\frac{K-1}{4}}$ for all $c\geq2$, and let \[ \beta\defeq\max\Par{2e,\Par{2^{\frac{K+1}{2}}\cdot6\cop L}^{\frac{4}{K-1}}}\geq2e. \] Then for the stated range of $\alpha$, \[ \alpha\leq\frac{\sig}{2e}\implies\rho=\cop\sqrt{m+\frac{(K-1)m}{2K}\log\Par{\frac{\sig}{2\alpha}}}\le 2\cop\sqrt{m\log\Par{\frac{\sig}{2\alpha}}}. \] Let $\eps \defeq \frac \alpha {\rho}$, and let $\net$ be an $\eps$-net of $\partial \ball_2(1) \in \R^d$ with size $|\net| \le (\frac 3 \eps)^d$, as guaranteed by Fact~\ref{fact:ball_net_size}. Then by taking a union bound over Lemma~\ref{lem:single_vector_small} applied to each $\vv \in \net$, \begin{align*}
    \Pr\Brack{\min_{\vv\in\net}\norms{\tma\vv}_2<2\alpha}&\leq|\net|\Par{\frac{2\alpha}{\sig}}^m\leq\Par{\frac{6\cop\sqrt{m\log(\frac{\sig}{2\alpha})}}{\alpha}}^{\frac{m}{K}}\Par{\frac{2\alpha}{\sig}}^m\\
    &=\Par{\frac{6\cop\sqrt{m\log(\frac{\sig}{2\alpha})}}{\alpha}}^{\frac{m}{K}}\Par{\frac{2\alpha}{\sig}}^{\frac{(K+1)m}{2K}}\Par{\frac{2\alpha}{\sig}}^{\frac{(K-1)m}{2K}}\\
    &\leq\Par{\frac{2^{\frac{K+1}{2}}\cdot6\cop\sqrt{m\log(\frac{\sig}{2\alpha})}}{\sig^{\frac{K+1}{2}}}\cdot\frac{\sig^{\frac{K+3}{4}}\alpha^{\frac{K-1}{4}}}{2^{\frac{K+1}{2}}\cdot6\cop L\sqrt{m}}}^{\frac{m}{K}}\Par{\frac{2\alpha}{\sig}}^{\frac{(K-1)m}{2K}}\\
    &\leq\Par{\frac{2^{\frac{K+1}{2}}\cdot6\cop\sqrt{m\log(\frac{\sig}{2\alpha})}}{\sig^{\frac{K+1}{2}}}\cdot\frac{\sig^{\frac{K+1}{2}}}{2^{\frac{K+1}{2}}\cdot6\cop\sqrt{m\log(\frac{\sig}{2\alpha})}}}^{\frac{m}{K}}\Par{\frac{2\alpha}{\sig}}^{\frac{(K-1)m}{2K}}\\
    &=\Par{\frac{2\alpha}{\sig}}^{\frac{(K-1)m}{2K}}=\frac{\delta}{2},
\end{align*} where the third line uses \[ \alpha^{\frac{K-1}{4}}\leq\frac{\sig^{\frac{K+3}{4}}}{2^{\frac{K+1}{2}}\cdot6\cop L\sqrt{m}}\implies\alpha^{\frac{K-1}{2}}\leq\frac{\sig^{\frac{K+3}{4}}\alpha^{\frac{K-1}{4}}}{2^{\frac{K+1}{2}}\cdot6\cop L\sqrt{m}} \] for the stated range of $\alpha$ and the fourth line uses $\sqrt{\log(\frac{c}{2})}\leq Lc^{\frac{K-1}{4}}$ for $c\geq2$. The claim follows from a union bound on the above two events and Lemma~\ref{lem:net_suffices}.
\end{proof}

Applying Lemma~\ref{lem:sigd_rectangular} then gives our extension to tall matrices.

\begin{lemma}\label{lem:tall}
    In the setting of Lemma~\ref{lem:wide}, the result holds if we suppose instead that $m>Kd$, where $K\defeq\sqrt[3]{C}$, and in addition that $\ma$ has rows $\{\va_i\}_{i\in[n]}$ satisfying $\norm{\va_i}_2=1$ for all $i\in[n]$.
\end{lemma}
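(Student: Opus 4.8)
The plan is to pass to the bottom singular value of $\tma_{S:}$, which is now tall, and invoke Lemma~\ref{lem:sigd_rectangular}. Fix $S\subseteq[n]$ with $|S|=m$. Then $\tma_{S:}=\ma_{S:}+\mg_{S:}\in\R^{m\times d}$, where $\ma_{S:}$ has unit-norm rows (by the added hypothesis) and $\mg_{S:}$ has entries $\simiid\Nor(0,\sig^2)$, so $\tma_{S:}$ falls within the setting of Lemma~\ref{lem:gaussian_opnorm} with $m$ rows. Since $k+1\le d$ and singular values are non-increasing, $\vsig_{k+1}(\tma_{S:})\ge\vsig_d(\tma_{S:})$, hence
\[\Pr\Brack{\vsig_{k+1}(\tma_{S:})\le\sqrt m\Delta}\le\Pr\Brack{\vsig_d(\tma_{S:})\le\sqrt m\Delta}\le\Pr\Brack{\vsig_d(\tma_{S:})<2\sqrt m\Delta},\]
so it suffices to bound the last quantity with $\alpha\defeq 2\sqrt m\Delta$ (the factor $2$ absorbs the passage from $\le$ to $<$). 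Because $m>Kd$ with $K=\sqrt[3]{C}>1$ a constant, Lemma~\ref{lem:sigd_rectangular} applies to $\tma_{S:}$: for all $\alpha$ below the admissible threshold $\sig^{(K+3)/(K-1)}/(\beta m^{2/(K-1)})$, we have $\Pr[\vsig_d(\tma_{S:})<\alpha]\le 2(2\alpha/\sig)^{(K-1)m/(2K)}$.

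I would then set $\Delta\defeq\frac{\sig}{4\sqrt m}\Par{\frac{\delta\sig}{n}}^c$ for a constant $c$ to be fixed, so that $\alpha=2\sqrt m\Delta=\frac\sig2\Par{\frac{\delta\sig}{n}}^c$ and $2\alpha/\sig=\Par{\frac{\delta\sig}{n}}^c$. Two routine checks remain. First, $\Delta=(\delta\sig/n)^{O(1)}$: using $\sig,\delta<1$, $n\ge 2$, and $m\le n$, one has $(\delta\sig/n)^3\le\sig/(4\sqrt n)\le\sig/(4\sqrt m)$ and $\sig/(4\sqrt m)\le n/(\delta\sig)$, hence $(\delta\sig/n)^{c+3}\le\Delta\le(\delta\sig/n)^{c-1}$. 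Second, $\alpha$ lies below the admissible threshold, i.e.\ $\Par{\frac{\delta\sig}{n}}^c<2\sig^{4/(K-1)}/(\beta m^{2/(K-1)})$; bounding $m\le n$ and $\sig<1$, this holds once $c$ exceeds a constant determined by $K$ and $\log\beta$.

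To conclude, I would bound $\binom nm\le n^m$ and $d-1<n$, so the target probability satisfies $\frac{\delta}{(d-1)\binom nm}\ge\frac{\delta}{n^{m+1}}\ge\Par{\frac\delta n}^{m+1}$ (using $\delta<1$). On the other side, $\sig<1$ gives $2\Par{\frac{2\alpha}{\sig}}^{\frac{(K-1)m}{2K}}=2\Par{\frac{\delta\sig}{n}}^{\frac{c(K-1)m}{2K}}\le 2\Par{\frac\delta n}^{\frac{c(K-1)m}{2K}}$, and since $\frac\delta n<\frac12$ one has $2(\delta/n)^x\le(\delta/n)^{m+1}$ whenever $x\ge m+2$. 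As $m\ge 1$ implies $m+2\le 3m$, taking $c\ge 6K/(K-1)$ forces $\frac{c(K-1)m}{2K}\ge 3m\ge m+2$. Choosing $c$ to be the maximum of this bound and the threshold bound (both constants depending only on $C$ and the universal constant $\beta$) yields $\Pr[\vsig_{k+1}(\tma_{S:})\le\sqrt m\Delta]\le\frac{\delta}{(d-1)\binom nm}$ with $\Delta=(\delta\sig/n)^{O(1)}$, as claimed.

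\textbf{Main obstacle.} There is essentially no hard step here: all the analytic content — the operator-norm control of Lemma~\ref{lem:gaussian_opnorm}, the $\eps$-net reduction of Lemma~\ref{lem:net_suffices}, and the single-vector anticoncentration estimate of Lemma~\ref{lem:single_vector_small}, all packaged into Lemma~\ref{lem:sigd_rectangular} — has already been established. The only care required is bookkeeping: verifying that a single constant polynomial degree $c$ can simultaneously satisfy the final probability target and the admissible-range hypothesis of Lemma~\ref{lem:sigd_rectangular}, and dispatching the strict-versus-non-strict inequality harmlessly via the factor-$2$ slack in $\alpha=2\sqrt m\Delta$.
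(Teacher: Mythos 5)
Your proposal is correct and follows essentially the same route as the paper: reduce $\vsig_{k+1}(\tma_{S:})$ to $\vsig_d(\tma_{S:})$, invoke Lemma~\ref{lem:sigd_rectangular} with $\Delta$ chosen as an explicit polynomial in $\delta\sig/n$ small enough to lie in the admissible range, and then compare $2(2\alpha/\sig)^{(K-1)m/(2K)}$ against $\delta/((d-1)\binom{n}{m})$ via $(\delta/n)^{m+1}$. The only difference is cosmetic — the paper fixes the exponent explicitly as $\frac{4K}{K-1}+1$ with base $\frac{\delta\sig}{\beta n}$, whereas you take $c$ to be a maximum of two constants — and your bookkeeping checks out.
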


\begin{proof}
    Let $\beta$ be the constant in Lemma~\ref{lem:sigd_rectangular}, and let $\alpha=\sqrt{m}\Delta$, where \[ \Delta=\frac{1}{2}\Par{\frac{\delta\sig}{\beta n}}^{\frac{4K}{K-1}+1}\leq\frac{\sig}{2\sqrt{m}}\Par{\frac{\delta\sig}{\beta n}}^{\frac{4K}{K-1}}\in \Par{0, \frac {\sig^\frac{K+3}{K-1}} {\beta m^{\frac{2}{K-1}+\half}}}. \] By Lemma~\ref{lem:sigd_rectangular}, \begin{align*}
        \Pr\Brack{\vsig_{k+1}(\tma_{S:})\leq\sqrt{m}\Delta}&\leq\Pr\Brack{\vsig_d(\tma_{S:})\leq\sqrt{m}\Delta}\leq2\Par{\Par{\frac{\delta\sig}{\beta n}}^{\frac{4K}{K-1}}}^{\frac{(K-1)m}{2K}}\\
        &\leq\Par{\frac{\delta}{n}}^{2m}\leq\Par{\frac{\delta}{n}}^{m+1}\leq\frac{\delta^{m+1}}{n\binom{n}{m}}\leq\frac{\delta}{(d-1)\binom{n}{m}},
    \end{align*} which establishes the claim.
\end{proof}

\subsection{Assumption~\ref{assume:simplify} for smoothed matrices}\label{ssec:smoothed_forster}

We can now put together the previous results to give a diameter bound for smoothed matrices. To begin, we show simple norm bounds on the rows of a smoothed matrix.

\begin{lemma}\label{lem:norm_bound}
    Under Assumption~\ref{assume:nd}, let $\delta\in(0,1)$, $\frac 1 \sig \ge 10(d + \log(\frac n \delta))$, $\ma\in\R^{n\times d}$ have rows $\{\va_i\}_{i\in[n]}$ such that $\norm{\va_i}_2=1$ for all $i\in[n]$, $\mg\in\R^{n\times d}$ have entries $\simiid\Nor(0,\sig^2)$, and $\tma\defeq\ma+\mg$ have rows $\{\tva_i\}_{i\in[n]}$. Then with probability $\geq1-\delta$, $\frac{1}{6}\leq\norm{\tva_i}_2^2\leq2$ for all $i \in [n]$.
\end{lemma}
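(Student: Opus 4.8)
The plan is to write $\norm{\tva_i}_2^2 = \norm{\va_i + \vg_i}_2^2 = 1 + 2\inprod{\va_i}{\vg_i} + \norm{\vg_i}_2^2$, where $\vg_i \in \R^d$ is the $i$-th row of $\mg$, and then show that with high probability both the cross term $2\inprod{\va_i}{\vg_i}$ and the quadratic term $\norm{\vg_i}_2^2$ are small for every $i \in [n]$ simultaneously. Concretely, it suffices to establish that with probability $\ge 1 - \delta$, for all $i \in [n]$ we have $\Abs{\inprod{\va_i}{\vg_i}} \le \tfrac{1}{3}$ and $\norm{\vg_i}_2^2 \le \tfrac{1}{3}$; then $\norm{\tva_i}_2^2 \le 1 + \tfrac23 + \tfrac13 = 2$ and $\norm{\tva_i}_2^2 \ge 1 - \tfrac23 \ge \tfrac16$, giving the claim. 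I would union bound over the $n$ rows, allocating failure probability $\tfrac{\delta}{2n}$ to each of the two events per row.

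The first step is the cross term. Since $\norm{\va_i}_2 = 1$ and $\vg_i$ has i.i.d.\ $\Nor(0,\sig^2)$ entries, $\inprod{\va_i}{\vg_i} \sim \Nor(0, \sig^2)$, so by the standard Gaussian tail bound $\Pr[\Abs{\inprod{\va_i}{\vg_i}} > t] \le 2\exp(-t^2/(2\sig^2))$. Choosing $t = \tfrac13$ and using $\tfrac 1 \sig \ge 10(d + \log(\tfrac n \delta)) \ge 10\log(\tfrac n \delta)$ makes $\exp(-1/(18\sig^2))$ far smaller than $\tfrac{\delta}{4n}$, so this event is controlled with room to spare. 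The second step is the quadratic term: $\norm{\vg_i}_2^2 = \sig^2 \norm{\vh_i}_2^2$ where $\vh_i$ has i.i.d.\ standard Gaussian entries, so $\norm{\vh_i}_2^2$ is a $\chi^2_d$ random variable. By a standard $\chi^2$ concentration inequality (e.g., Laurent--Massart), $\Pr[\norm{\vh_i}_2^2 \ge d + 2\sqrt{dx} + 2x] \le e^{-x}$; taking $x = \log(\tfrac{4n}{\delta})$ gives $\norm{\vh_i}_2^2 = O(d + \log(\tfrac n \delta))$ with probability $\ge 1 - \tfrac{\delta}{4n}$, hence $\norm{\vg_i}_2^2 = \sig^2 \cdot O(d + \log(\tfrac n \delta)) \le \tfrac13$ once $\sig \le \tfrac{c}{d + \log(n/\delta)}$ for a suitable constant $c$, which is implied by the hypothesis $\tfrac 1 \sig \ge 10(d + \log(\tfrac n \delta))$.

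Finally, union bounding over all $i \in [n]$ and over the two sub-events per row costs total failure probability at most $n \cdot (\tfrac{\delta}{4n} + \tfrac{\delta}{4n}) = \tfrac{\delta}{2} \le \delta$, which yields the statement. There is no real obstacle here — the only mild care needed is checking that the constant $10$ in the hypothesis $\tfrac 1 \sig \ge 10(d + \log(\tfrac n \delta))$ is large enough to absorb both the $\sqrt{dx}$ cross-term in the Laurent--Massart bound and the factors of $2$ and $3$ arising from the target thresholds; a slightly conservative accounting (e.g., bounding $d + 2\sqrt{dx} + 2x \le 5(d+x)$) handles this cleanly. I would state the two concentration facts inline (citing or invoking the standard Gaussian and $\chi^2$ tail bounds) rather than proving them.
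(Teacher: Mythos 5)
Your proposal is correct and takes essentially the same route as the paper: both arguments reduce the claim to $\chi^2$ concentration for $\norm{\vg_i}_2^2$ (via Laurent--Massart) followed by a union bound over the $n$ rows, with the constant $10$ in the hypothesis on $\sig$ comfortably absorbing all slack. The only cosmetic difference is that the paper absorbs the cross term into the quadratic term via $\half\norm{\va_i}_2^2 - \norm{\vg_i}_2^2 \le \norm{\va_i+\vg_i}_2^2 \le \frac{3}{2}\norm{\va_i}_2^2 + 3\norm{\vg_i}_2^2$, so that only the single event $\norm{\vg_i}_2^2 < \frac{1}{6}$ per row needs to be controlled, whereas you bound the cross term $2\inprod{\va_i}{\vg_i}$ separately with a Gaussian tail bound; both work.
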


\begin{proof}
By using the inequalities
\begin{align*}
\half \norm{\va_i}_2^2 - \norm{\vg_i}_2^2 \le \norm{\va_i + \vg_i}_2^2 \le \frac 3 2 \norm{\va_i}_2^2 + 3\norm{\vg_i}_2^2,
\end{align*}
it is enough to show that for all $i \in [n]$, the probability that $\norm{\vg_i}_2^2 \ge \frac 1 6$ is bounded by $\frac \delta n$. This follows from a standard $\chi^2$ tail bound, e.g., Lemma 1, \cite{LaurentM00}, for our choice of $\sig$.
\end{proof}

It remains to show that $\frac{d}{n}\1_n$ is deep inside the basis polytope of the rows of $\tma$, so that we can use Lemma~\ref{lem:diam_bound} to obtain a diameter bound for minimizing Barthe's objective.

\begin{lemma}\label{lem:dn_deep}
    In the setting of Lemma~\ref{lem:norm_bound}, $\frac{d}{n}\1_n$ is $(\eta,\Delta)$-deep inside the basis polytope of the rows of $\tma$ with probability $\geq1-\delta$, where $\eta\defeq1-\frac{1}{\sqrt{C}}$ and $\Delta = \Par{\frac{\delta\sig}{n}}^{O(1)}$.
\end{lemma}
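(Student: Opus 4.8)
The plan is to negate the conclusion of Lemma~\ref{lem:small_sv} for all relevantly-sized submatrices of $\tma$, then combine with a union bound over subsets. Recall that by Definition~\ref{def:deep}, $\frac d n \1_n$ fails to be $(\eta,\Delta)$-deep inside the basis polytope of the rows of $\tma$ precisely when there is some $k \in [d-1]$ and some $k$-dimensional subspace $E \subseteq \R^d$ such that strictly more than $\frac{(1-\eta)kn}{d}$ of the rows $\tva_i$ satisfy $\norm{\tva_i - \proj_E \tva_i}_2 \le \Delta$. So, setting $m \defeq \lceil \frac{(1-\eta)kn}{d}\rceil$ as in Sections~\ref{ssec:wide} and~\ref{ssec:tall}, a deepness violation at level $k$ implies there is a set $S \subseteq [n]$ with $|S| = m$ and a $k$-dimensional subspace $E$ with $\norm{\tva_i - \proj_E \tva_i}_2 \le \Delta$ for all $i \in S$. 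By Lemma~\ref{lem:small_sv} applied to $\tma_{S:}^\top \in \R^{d \times m}$, this forces $\vsig_{k+1}(\tma_{S:}) \le \sqrt m \Delta$. Thus it suffices to show that with probability $\ge 1 - \delta$, we have $\vsig_{k+1}(\tma_{S:}) > \sqrt m \Delta$ simultaneously for all $k \in [d-1]$ and all $S \subseteq [n]$ with $|S| = m$.

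First I would fix the constants: $\eta \defeq 1 - \frac{1}{\sqrt C}$, $K \defeq \sqrt[3]{C}$, and for each $k \in [d-1]$, $m \defeq \lceil \frac{(1-\eta)kn}{d}\rceil$; these are exactly the settings of Lemmas~\ref{lem:wide},~\ref{lem:square}, and~\ref{lem:tall}. For a given $k$, the value of $m$ falls into one of three ranges: $m \le d$, $d < m \le Kd$, or $m > Kd$. In each range, the corresponding lemma (Lemma~\ref{lem:wide}, Lemma~\ref{lem:square}, or Lemma~\ref{lem:tall} respectively, the last requiring the unit-norm row hypothesis, which holds in the setting of Lemma~\ref{lem:norm_bound}) gives that for any fixed $S$ with $|S| = m$,
\[
\Pr\Brack{\vsig_{k+1}(\tma_{S:}) \le \sqrt m \Delta} \le \frac{\delta}{(d-1)\binom n m}, \quad \text{where } \Delta = \Par{\frac{\delta \sig}{n}}^{O(1)}.
\]
One must check the hypotheses of these lemmas are met: $\sig \in (0, \tfrac12)$ follows from $\tfrac1\sig \ge 10(d + \log(\tfrac n\delta)) \ge 10$, and Assumption~\ref{assume:nd} is in force. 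Then I would union bound: for each fixed $k$, summing over all $\binom n m$ choices of $S$ gives failure probability $\le \frac{\delta}{d-1}$, and then summing over the $d-1$ values of $k \in [d-1]$ gives total failure probability $\le \delta$. On the complementary event, no deepness violation occurs at any level $k$, so $\frac d n \1_n$ is $(\eta, \Delta)$-deep. Finally, I would take $\Delta$ to be the minimum of the three $(\frac{\delta\sig}{n})^{O(1)}$ expressions arising from Lemmas~\ref{lem:wide},~\ref{lem:square},~\ref{lem:tall}, which is still of the form $(\frac{\delta\sig}{n})^{O(1)}$, and note that enlarging $\Delta$ only strengthens the deepness conclusion so a uniform choice is fine.

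The argument here is essentially bookkeeping — the real content has already been extracted into the singular-value tail bounds of Sections~\ref{ssec:wide} and~\ref{ssec:tall}, which are engineered precisely so that their failure probabilities are small enough to absorb the $\binom n m$ union bound (this is why those lemmas state their bounds as $\frac{\delta}{(d-1)\binom n m}$ rather than just $\delta$). The only mild subtlety is to confirm that the three cases on $m$ genuinely partition all possible values of $m$ as $k$ ranges over $[d-1]$, and that each invoked lemma's side conditions (e.g., the unit-norm row assumption needed by Lemma~\ref{lem:tall}, which is inherited from Lemma~\ref{lem:norm_bound}; the constant inequalities $0 < \eta < 1 - \frac1C$ and $1 < K < (1-\eta)C$ recorded in Remark~\ref{rem:eta}) hold in the present setting. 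No step is a serious obstacle; if anything, the main care point is simply tracking the $O(1)$ exponent in $\Delta$ consistently across the three regimes and confirming $\Delta > 0$ is a fixed polynomial in $\frac{\delta\sig}{n}$ as claimed in the statement.
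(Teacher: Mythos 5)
Your proposal is correct and follows essentially the same route as the paper's proof: reduce a deepness violation at level $k$ to the existence of an $m$-row submatrix with $\vsig_{k+1}(\tma_{S:}) \le \sqrt m \Delta$ via Lemma~\ref{lem:small_sv}, invoke Lemma~\ref{lem:wide}, Lemma~\ref{lem:square}, or Lemma~\ref{lem:tall} according to whether $m \le d$, $d < m \le Kd$, or $m > Kd$, and union bound over the $\binom{n}{m}$ subsets and the $d-1$ values of $k$. Your additional bookkeeping (verifying the side conditions, taking the minimum of the three $\Delta$'s, noting the three ranges partition all $m$) is all consistent with what the paper does implicitly.
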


\begin{proof}
    Let $k\in[d-1]$. By Lemma~\ref{lem:small_sv}, if some $m\defeq\lceil\frac{(1-\eta)kn}{d}\rceil$ rows of $\tma$ indexed by $S$ violate the condition for $(\eta,\Delta)$-deepness, then $\vsig_{k+1}(\tma_{S:})\leq\sqrt{m}\Delta$. By Lemma~\ref{lem:wide}, Lemma~\ref{lem:square}, and Lemma~\ref{lem:tall}, \[ \Pr\Brack{\vsig_{k+1}(\tma_{S:})\leq\sqrt{m}\Delta}\leq\frac{\delta}{(d-1)\binom{n}{m}} \] for any $S\subseteq[n]$ with $|S|=m$, in every range of $k \in [d - 1]$. By a union bound over all $S$, the failure probability is at most $\frac{\delta}{d-1}$.     The result follows by a union bound over all $k\in[d-1]$.
\end{proof}

At this point, we have all the tools necessary to prove Theorem~\ref{thm:smoothed}.

\restatesmoothed*
\begin{proof}
By Lemma~\ref{lem:norm_bound}, in the relevant range of $\sig$, the conclusion $\frac{1}{6}\leq\norm{\tva_i}_2^2\leq2$ holds for all $i\in[n]$ with probability $\geq 1-\frac{\delta}{2}$. Moreover, let $\Delta=(\frac{\delta\sig}{n})^{O(1)}$ so that $\frac{d}{n}\1_n$ is $(\eta,\Delta)$-deep inside the basis polytope of the rows of $\tma$ with probability $\geq1-\frac{\delta}{2}$ by Lemma~\ref{lem:dn_deep}. By a union bound on these events, with probability $\geq1-\delta$, we can apply Lemma~\ref{lem:diam_bound} and conclude that there exists $\vt^\star\in\argmin_{\vt\in\R^n}f(\vt)$ satisfying \[ \norm{\vt^\star}_\infty\leq\half\log\Par{\frac{12n}{d}\Par{\frac{8}{\eta\Delta^2}}^{d-1}} = O\Par{d\log\Par{\frac 1 \sig}}. \]
\end{proof}

\subsection{Extension to non-uniform \texorpdfstring{$\vc$}{}}\label{ssec:gen_c}

Our smoothed diameter bound in Theorem~\ref{thm:smoothed} is stated with respect to uniform marginals $\vc = \frac d n \1_n$. However, the analysis in this section can be straightforwardly extended to hold for $\vc$ with nonuniform entries by a reduction, as long as $\vc$ is sufficiently bounded away from $\1_n$ entrywise.

\begin{corollary}\label{cor:nonuniform}
    In the setting of Theorem~\ref{thm:smoothed}, the result holds if we suppose instead that $\vc\in(0,1]^n$ satisfies $\norm{\vc}_1=d$ and $\vc\leq c\cdot\frac{d}{n}\1_n$ entrywise, where $1<c<C\leq\frac{n}{d}$ for constants $c, C$.
\end{corollary}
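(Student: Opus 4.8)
The plan is to reduce the nonuniform case to the uniform case already handled in Theorem~\ref{thm:smoothed}, by absorbing the marginal weights $\vc$ into the weights $\exp(\vt_i)$ defining $\mz(\vt)$, or equivalently by duplicating rows. Concretely, I would first observe that the hypothesis $\vc \le c\cdot\frac d n \1_n$ with $c < C \le \frac n d$ means each $\vc_i$ is bounded above by a constant strictly less than $1$, and $\norm{\vc}_1 = d$. Thus by a rational approximation argument (or, to avoid integrality issues, a limiting argument) we may replace each point $\va_i$ with $N_i$ near-identical copies, where $N_i \propto \vc_i$ and $\sum_i N_i = N$ is a common denominator; the uniform marginal $\frac{d}{N}\1_N$ on the blown-up point set then recovers the weighting $\vc$ on the original set, and Barthe's objective for the blown-up instance with uniform marginals agrees with the original objective after the natural reparameterization that sets all copies of $\va_i$ to share the coordinate $\vt_i$. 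The key point is that this correspondence preserves the $\ell_\infty$ diameter of the minimizer: if the blown-up instance has a minimizer with $\norm{\cdot}_\infty \le \log(\kappa)$, then restricting to one representative per block gives a minimizer of the original instance (by the optimality conditions \eqref{eq:optimality} and symmetry among copies) with the same $\ell_\infty$ bound.

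The technical content then reduces to re-running the singular-value machinery of Sections~\ref{ssec:wide}--\ref{ssec:tall} for the blown-up smoothed matrix. Here I would instead take the cleaner route of re-examining the deepness argument directly, rather than literally duplicating rows (which would correlate the Gaussian perturbations among copies). The deepness condition in Definition~\ref{def:deep} for general $\vc$ asks that $\sum_{i} \vc_i \ind_{\norm{\va_i - \proj_E \va_i}_2 \le \Delta} \le (1-\eta)k$ for all $k$-dimensional $E$. Using $\vc_i \le c\cdot\frac d n$, this sum is at most $\frac{cd}{n}\cdot\#\{i : \norm{\va_i - \proj_E\va_i}_2 \le \Delta\}$, so it suffices to show that at most $\frac{(1-\eta)kn}{cd}$ of the smoothed rows lie within $\Delta$ of any $k$-dimensional subspace. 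This is exactly the uniform deepness statement with the constant $C$ replaced by $C/c > 1$ in the relevant places, so I would re-invoke Lemmas~\ref{lem:wide}, \ref{lem:square}, and \ref{lem:tall} with $\eta, K$ redefined in terms of $C/c$ (note $C/c > 1$ is still a constant bounded away from $1$, which is all those lemmas use, per Remarks~\ref{rem:eta} and~\ref{rem:superlinear}), together with Lemma~\ref{lem:small_sv} to convert the singular-value bounds into deepness. The norm bounds from Lemma~\ref{lem:norm_bound} are unaffected since they do not depend on $\vc$. Feeding the resulting $(\eta,\Delta)$-deepness with $\Delta = (\frac{\delta\sig}{n})^{O(1)}$ and $\vc_{\min}$ (now possibly smaller, but still at least an inverse polynomial given $\norm{\vc}_1 = d$ and $\vc \in (0,1]^n$—or one simply carries $\vc_{\min}$ through) into Lemma~\ref{lem:diam_bound} yields $\norm{\vt^\star}_\infty = O(d\log(\frac 1 \sig) + \log(\frac 1 {\vc_{\min}}))$, and since the application in Theorem~\ref{thm:smoothed} only needs $\log(\kappa) = O(d\log(\frac 1 \sig))$ up to additive lower-order terms, this matches the claimed bound.

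I expect the main obstacle to be handling the lower bound $\vc_{\min}$ cleanly: Lemma~\ref{lem:diam_bound} carries a $\frac{1}{\vc_{\min}}$ factor, and in the nonuniform case $\vc_{\min}$ is not pinned to $\frac d n$. However, since the $\log(\frac{1}{\vc_{\min}})$ contribution is additive and the statement of Theorem~\ref{thm:smoothed} (which this corollary references) already has slack of this order relative to $d\log(\frac 1 \sig)$, this should be absorbable—or one can simply state the corollary's bound as $\log(\kappa) = O(d\log(\frac 1\sig) + \log(\frac 1 {\vc_{\min}}))$. A secondary subtlety is making the "blow-up / limiting" reduction rigorous without introducing correlations in the Gaussian noise; the safest fix, which I would adopt, is to bypass literal duplication entirely and argue deepness directly as in the previous paragraph, so that the only change from Theorem~\ref{thm:smoothed}'s proof is the bookkeeping constant $C \mapsto C/c$.
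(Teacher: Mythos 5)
Your final argument is essentially the paper's: the paper likewise discards any row-duplication reduction and simply bounds the nonuniform deepness sum by $c$ times the uniform one, concluding that $\vc$ is $(1-c(1-\eta),\Delta)$-deep with $1-c(1-\eta)$ still a positive constant because $c<C$ and $\eta$ can be taken arbitrarily close to $1-\frac{1}{C}$ (your equivalent bookkeeping of ``replace $C$ by $C/c$'' in Lemmas~\ref{lem:wide}--\ref{lem:tall} yields the same thing). Your observation that the $\log(\frac{1}{\vc_{\min}})$ term from Lemma~\ref{lem:diam_bound} is no longer pinned to $\log(\frac nd)$ and should either be assumed polynomially bounded or carried into the final bound is a fair and worthwhile caveat that the paper's one-line proof glosses over.
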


\begin{proof}
    Our proof of Lemma~\ref{lem:dn_deep} shows that with probability $\ge 1 - \delta$ in the setting of Theorem~\ref{thm:smoothed}, $\frac d n \1_n$ is $(\eta, \Delta)$-deep for a constant $\eta$ arbitrarily close to $1-\frac{1}{C}$, where $C\leq\frac{n}{d}$ (see Remark~\ref{rem:eta}). However, this also implies that for all $k \in [d - 1]$ and $k$-dimensional subspaces $E$, recalling Definition~\ref{def:deep},
    \[\sum_{i\in[n]}\vc_i\ind_{\norm{\va_i-\proj_E\va_i}_2\leq\Delta}\leq c\sum_{i\in[n]}\frac{d}{n}\ind_{\norm{\va_i-\proj_E\va_i}_2\leq\Delta}\leq c(1-\eta)k=(1-(1-c(1-\eta)))k.\]
    Thus, we have shown that $\vc$ is also $(1 - c(1 - \eta), \Delta)$-deep. Since $\eta$ can be arbitrarily close to $1-\frac{1}{C}$ and $c<C$, we can verify that the new parameter $1 - c(1 - \eta)$ satisfies $0<1-c(1-\eta)<1-\frac{1}{C}$, so the rest of our proof applies (propagating constant changes appropriately) by Remark~\ref{rem:eta}.
\end{proof}
\section*{Acknowledgments}

We would like to thank Mehtaab Sawhney for providing several pointers to the random matrix theory literature, in particular suggesting the use of Fact~\ref{fact:smoothed_sv}. We would also like to thank Li Chen for providing clarifications on \cite{ChenPW21}, that were used in Proposition~\ref{prop:box_oracle_impl} and Appendix~\ref{app:boxoracle}, as well as Richard Peng and Di Wang for providing other helpful comments on \cite{ChenPW21}.

\bibliographystyle{alpha}
\bibliography{ref.bib}

\newpage
\appendix

\section{Discussion of Proposition~\ref{prop:box_oracle_impl}}\label{app:boxoracle}

In this section, we describe the modifications to \cite{ChenPW21} that are needed to prove Proposition~\ref{prop:box_oracle_impl}. It is clear that by reparameterizing $\vl \gets \vl - \vt$ and $\vr \gets \vr - \vt$ for the vectors $(\vt, \vl, \vr)$ defining the input set $\calW$ in Proposition~\ref{prop:box_oracle_impl}, it is enough to describe how to obtain the stated guarantee with $\vt = \0_n$. In other words, we need to obtain $\vv$ satisfying $\vl \le \vv \le \vr$ entrywise, and
\[\inprod{\vb}{\vv} + \half \ml\Brack{\vv, \vv} \le \half \min_{\vl \le \vw \le \vr}\Brace{\inprod{\vb}{\vw} + \half \ml\Brack{\vw, \vw}}.\]
\paragraph{Structure of \cite{ChenPW21}.} We first provide an overview of the components of \cite{ChenPW21}. Their core algorithmic result (which proves Proposition~\ref{prop:box_oracle_impl} when the box constraint is $[0, \infty)^n$) consists of three sub-pieces, Lemmas 4.3, 4.4, and 4.5. These pieces respectively correspond to an accelerated proximal point method, the construction of $j$-tree sparsifiers, and solving a diffusion instance on a $j$-tree. Of the three, Lemma 4.4 is independent of the constraint on the optimization problem, and thus can be left untouched, so we focus on the other two components.

\paragraph{Modifying Lemma 4.5.} Lemma 4.5 is proven in Sections 6 and 7 of \cite{ChenPW21}, but the only piece of the proof that interacts with the constraint set explicitly is Lemma 6.6, shown in Section 7.2. This result provides a data structure that supports efficient modifications to a VWF (vertex weighting function, i.e., a convex piecewise-quadratic function with concave continuous derivative). The data structure maintains the coefficients and cutoff points for each piece of the VWF.

The data structure is implemented using a segment tree, where each leaf corresponds to a cutoff point $s_i$ in a piecewise quadratic with $k$ pieces. Implicitly, the data structure in \cite{ChenPW21} sets the rightmost cutoff point $s_k = \infty$ and never modifies it. Instead, we can augment the data structure to maintain an explicit cutoff $s_k$. It is straightforward to check that all updates to $s_k$ (i.e., in the operations given by Claims 7.9, 7.10, 7.11) in \cite{ChenPW21} can be handled in $O(1)$ time, as there are closed-form formulas in each case. 
We remark that similar segment tree-based data structures supporting changes to a dynamic piecewise-polynomial function, but with an explicit right endpoint, have appeared in the recent literature, see e.g., Section 3 of \cite{HuJTY24}.

\paragraph{Modifying Lemma 4.3.} Lemma 4.3 is proven in Section 8 of \cite{ChenPW21}, and is an approximation-tolerant variant of the classical accelerated proximal point algorithm \cite{Guler92}. The only explicit property about the constraint set used in this section is convexity, as efficient projection is handled by Lemma 4.5. This property holds for axis-aligned boxes, so Lemma 4.3 extends to our setting.
\section{Numerical Precision Considerations}\label{app:numerical}

For brevity, we specialize our discussion in this section to the setting of Theorem~\ref{thm:main}, where $\vc_{\min}, \eps, \delta = \poly(\frac 1 n)$, and $\log(\kappa)$ is either $O(\log(n))$ (the \emph{well-conditioned regime}), or $O(d\log(n))$ (the \emph{smoothed analysis regime}). The latter name is justified by applying Theorem~\ref{thm:smoothed} with $\sigma = \poly(\frac 1 n)$. We also assume that entries of $\ma$ are represented with $b$-bit numbers, where $b = O(\log(n))$.

From the perspective of numerical stability, the bottleneck in both regimes is computing derivatives of Barthe's objective up to $\poly(\frac 1 n)$ additive error. All other operations performed by the box-constrained Newton's method in Theorem~\ref{thm:main}, either using explicit Hessian computations (Remark~\ref{rem:log_delta_eps}) or implicit Hessian sparsification (Theorem~\ref{thm:implicit_sparsify}), is tolerant to $\poly(\frac 1 n)$ additive error in entries of vectors and matrices. Thus, we can simply truncate our bit representations to $O(\log(n))$-sized words, as long as we can compute gradients, Hessians, or Hessian-vector products stably.

\paragraph{Bottleneck operations.} Recall Fact~\ref{fact:barthe_derivs}, and define for $\vt \in \R^n$,
\[\mz(\vt) \defeq \sum_{i \in [n]} \exp\Par{\vt_i} \va_i\va_i^\top.\]
For an instance of Definition~\ref{def:rip} where Assumption~\ref{assume:simplify} holds, the bottleneck operations are evaluating, for $\vt \in \R^n$ with $\norm{\vt}_\infty \le \log(\kappa)$, gradients $\nabla f(\vt)$:
\begin{align*}
\Brace{\Tr\Par{\exp(\vt_i)\va_i\va_i^\top\Par{\mz(\vt)}^{-1}}}_{i \in [n]}, 
\end{align*}
Hessians $\nabla^2 f(\vt)$:
\begin{align*}
\Brace{\Tr\Par{\exp(\vt_i)\va_i\va_i^\top\Par{\mz(\vt)}^{-1}} \ind_{i = j} - \Tr\Par{\exp(\vt_i + \vt_j)\va_i\va_i^\top\Par{\mz(\vt)}^{-1}\va_j\va_j^\top \Par{\mz(\vt)}^{-1}}}_{(i, j) \in [n] \times [n]},
\end{align*}
and Hessian-vector products with the above form.
Lemma~\ref{lem:grad_compute} gives methods for performing these operations without consideration of numerical stability. 

In the well-conditioned regime $\norm{\vt}_\infty = O(\log(n))$, all of the numbers in the above expressions are representable up to $\poly(\frac 1 n)$ error using $O(\log(n))$-bit words. Thus, we believe that our algorithms are numerically stable at constant overhead to the bit complexity, as these bottleneck operations reduce to standard matrix manipulations on $O(\log(n))$-bit entries.

In the smoothed analysis regime $\norm{\vt}_\infty \gtrsim d$, it is possible that the computation of gradients and Hessians described here is not numerically stable using $O(\log(n))$-bit representations of numbers. As a result, our algorithms as stated may require larger bit complexity, and suffer in runtime.

We note that all other work on optimizing Barthe's objective, based on cutting-plane methods \cite{HardtM13} or gradient descent \cite{ArtsteinKS20}, also suffers from the same numerical stability challenges, because they also require computing gradients. Thus, under any reasonable cost model our algorithm is the state-of-the-art by at least a $\approx \frac n d$ factor in the smoothed analysis regime.

We think the strategy of optimizing Barthe's objective for computing Forster transforms is a natural one, and hence evaluating the numerical stability of gradient and Hessian computation under finite bit precision is an important goal (for any derivative-based algorithm, not just ours). Alternatively, can we prove tighter conditioning bounds than Theorem~\ref{thm:smoothed}, in natural statistical models?

\paragraph{Strongly polynomial methods.} Computing Forster transforms in finite-precision arithmetic was explicitly studied by \cite{DiakonikolasTK23, DadushR24}, who considered a definition of strongly polynomial introduced by \cite{GrotschelLS88} that requires polynomial space complexity. Each proved the types of rounding result described at the end of Section~\ref{ssec:results}; however, the resulting bit complexities are rather large. For example, Theorem 5.1 of \cite{DiakonikolasTK23} proves magnitude bounds of $\approx \exp(d^3 b)$ where $b$ is an initial word size, with an unspecified intermediate bit complexity. Corollary 4.10 of \cite{DadushR24} yields the improved estimate of $\approx ndb$ bits required, for stable implementation on worst-case instances. However, it is unclear how these rounding procedures affect progress on Barthe's objective.

\end{document}